\documentclass[12pt]{article}
\usepackage[english]{babel}

\usepackage{fullpage}
\usepackage{cite}

\usepackage{booktabs}
\usepackage{multirow}
\usepackage{makecell}
\usepackage{xcolor}
\usepackage{pifont}
\usepackage{subfigure}

\usepackage{newpxtext,newpxmath}

\let\coloneqq\relax

\usepackage[utf8]{inputenc}
\usepackage{amsthm}
\usepackage{amssymb}
\usepackage{amsmath}
\usepackage{bbold}
\usepackage{bbm}
\usepackage[pdftex, backref=page]{hyperref}
\usepackage{braket}
\usepackage{dsfont}
\usepackage{mathdots}
\usepackage{mathtools}
\usepackage{enumerate}
\usepackage[shortlabels]{enumitem}
\usepackage{csquotes}
\usepackage{stmaryrd}
\usepackage[cal=boondox]{mathalfa}
\usepackage{graphicx}
\usepackage{stackengine}
\usepackage{scalerel}
\usepackage{tensor}       
\usepackage{array}
\usepackage{makecell}
\newcolumntype{x}[1]{>{\centering\arraybackslash}p{#1}}
\usepackage{tikz}
\usepackage{pgfplots}
\usetikzlibrary{shapes.geometric, shapes.misc, positioning, arrows, arrows.meta, decorations.pathreplacing, decorations.pathmorphing, patterns, angles, quotes, calc}
\usepackage{booktabs}
\usepackage{xfrac}
\usepackage{siunitx}
\usepackage{centernot}
\usepackage{comment}
\usepackage{chngcntr}
\usepackage{caption}
\usepackage{subcaption}

\newtheorem{thm}{Theorem}
\newtheorem*{thm*}{Theorem}
\newtheorem{prop}[thm]{Proposition}
\newtheorem*{prop*}{Proposition}
\newtheorem{lemma}[thm]{Lemma}
\newtheorem*{lemma*}{Lemma}
\newtheorem{cor}[thm]{Corollary}
\newtheorem*{cor*}{Corollary}

\newtheorem*{cj*}{Conjecture}
\newtheorem{Def}[thm]{Definition}
\newtheorem*{Def*}{Definition}

\newtheorem*{question*}{Question}

\newtheorem*{problem*}{Problem}
\newtheorem{assumption}[thm]{Assumption}
\newtheorem*{assumption*}{Assumption}

\makeatletter
\def\thmhead@plain#1#2#3{%
  \thmname{#1}\thmnumber{\@ifnotempty{#1}{ }\@upn{#2}}%
  \thmnote{ {\the\thm@notefont#3}}}
\let\thmhead\thmhead@plain
\makeatother

\theoremstyle{definition}
\newtheorem{rem}[thm]{Remark}
\newtheorem*{note}{Note}

\newcommand{\bb}{\begin{equation}\begin{aligned}\hspace{0pt}}
\newcommand{\bbb}{\begin{equation*}\begin{aligned}}
\newcommand{\ee}{\end{aligned}\end{equation}}
\newcommand{\eee}{\end{aligned}\end{equation*}}
\newcommand*{\coloneqq}{\mathrel{\vcenter{\baselineskip0.5ex \lineskiplimit0pt \hbox{\scriptsize.}\hbox{\scriptsize.}}} =}

\newcommand{\eqt}[1]{\stackrel{\mathclap{\scriptsize \mbox{#1}}}{=}}
\newcommand{\leqt}[1]{\stackrel{\mathclap{\scriptsize \mbox{#1}}}{\leq}}

\newcommand{\geqt}[1]{\stackrel{\mathclap{\scriptsize \mbox{#1}}}{\geq}}
\newcommand{\ketbra}[1]{\ket{#1}\!\!\bra{#1}}

\newcommand{\sumno}{\sum\nolimits}

\newcommand{\e}{\varepsilon}
\renewcommand{\epsilon}{\varepsilon}

\newcommand{\ludo}[1]{{\color{Blues5seq5} #1}}

\newcommand{\id}{\mathds{1}}

\newcommand{\ve}{\varepsilon}

\DeclareMathOperator{\Tr}{Tr}
\DeclareMathOperator{\rk}{rk}

\DeclareMathOperator{\co}{conv}

\DeclareMathAlphabet{\pazocal}{OMS}{zplm}{m}{n}

\DeclareMathOperator{\supp}{supp}

\newcommand{\HH}{\mathcal{H}}

\newcommand{\EE}{\pazocal{E}}

\newcommand{\XX}{\mathcal{X}}

\newcommand{\PP}{\pazocal{P}}

\newcommand{\lsmatrix}{\left(\begin{smallmatrix}}
\newcommand{\rsmatrix}{\end{smallmatrix}\right)}

\newcommand{\rel}[3]{#1\big(#2\,\big\|\,#3\big)}
\newcommand{\Rel}[3]{#1\Big(#2\,\Big\|\,#3\Big)}

\stackMath

\stackMath

\makeatletter
\newcommand*\rel@kern[1]{\kern#1\dimexpr\macc@kerna}
\newcommand*\widebar[1]{%
  \begingroup
  \def\mathaccent##1##2{%
    \rel@kern{0.8}%
    \overline{\rel@kern{-0.8}\macc@nucleus\rel@kern{0.2}}%
    \rel@kern{-0.2}%
  }%
  \macc@depth\@ne
  \let\math@bgroup\@empty \let\math@egroup\macc@set@skewchar
  \mathsurround\z@ \frozen@everymath{\mathgroup\macc@group\relax}%
  \macc@set@skewchar\relax
  \let\mathaccentV\macc@nested@a
  \macc@nested@a\relax111{#1}%
  \endgroup
}

\counterwithin*{equation}{part}
\counterwithin*{thm}{part}
\counterwithin*{figure}{part}

\tikzset{meter/.append style={draw, inner sep=10, rectangle, font=\vphantom{A}, minimum width=30, line width=.8, path picture={\draw[black] ([shift={(.1,.3)}]path picture bounding box.south west) to[bend left=50] ([shift={(-.1,.3)}]path picture bounding box.south east);\draw[black,-latex] ([shift={(0,.1)}]path picture bounding box.south) -- ([shift={(.3,-.1)}]path picture bounding box.north);}}}
\tikzset{roundnode/.append style={circle, draw=black, fill=gray!20, thick, minimum size=10mm}}
\tikzset{squarenode/.style={rectangle, draw=black, fill=none, thick, minimum size=10mm}}

\definecolor{Blues5seq1}{RGB}{239,243,255}
\definecolor{Blues5seq2}{RGB}{189,215,231}
\definecolor{Blues5seq3}{RGB}{107,174,214}
\definecolor{Blues5seq4}{RGB}{49,130,189}
\definecolor{Blues5seq5}{RGB}{8,81,156}

\definecolor{Greens5seq1}{RGB}{237,248,233}
\definecolor{Greens5seq2}{RGB}{186,228,179}
\definecolor{Greens5seq3}{RGB}{116,196,118}
\definecolor{Greens5seq4}{RGB}{49,163,84}
\definecolor{Greens5seq5}{RGB}{0,109,44}

\definecolor{Reds5seq1}{RGB}{254,229,217}
\definecolor{Reds5seq2}{RGB}{252,174,145}
\definecolor{Reds5seq3}{RGB}{251,106,74}
\definecolor{Reds5seq4}{RGB}{222,45,38}
\definecolor{Reds5seq5}{RGB}{165,15,21}

\allowdisplaybreaks

\usepackage[most,breakable]{tcolorbox}
\newenvironment{boxedthm}[1]%
	{\expandafter\ifstrequal\expandafter{#1}{orange}{\begin{tcolorbox}[colback=red!15,colframe=orange!15,breakable,enhanced]}{\begin{tcolorbox}[colback=Blues5seq1,colframe=Blues5seq5,breakable,enhanced]}}%
	{\end{tcolorbox}}

\newenvironment{boxedprop}[1]%
	{\expandafter\ifstrequal\expandafter{#1}{orange}{\begin{tcolorbox}[colback=red!15,colframe=orange!15,breakable,enhanced]}{\begin{tcolorbox}[colback=white,colframe=Aquamarine,breakable,enhanced]}}%
	{\end{tcolorbox}}

\newenvironment{boxedstep}[1]%
	{\expandafter\ifstrequal\expandafter{#1}{orange}{\begin{tcolorbox}[colback=red!15,colframe=orange!15,breakable,enhanced]}{\begin{tcolorbox}[colback=white,colframe=Apricot,breakable,enhanced]}}%
	{\end{tcolorbox}}

\newenvironment{boxedassumpt}[1]%
	{\expandafter\ifstrequal\expandafter{#1}{orange}{\begin{tcolorbox}[colback=red!15,colframe=orange!15,breakable,enhanced]}{\begin{tcolorbox}[colback=white,colframe=SeaGreen,breakable,enhanced]}}%
	{\end{tcolorbox}}

\usepackage{authblk}
\usepackage[dvipsnames]{xcolor}

\newcommand{\RR}{\pazocal{R}}

\renewcommand{\EE}[1]{\underset{\scaleobj{.8}{#1}}{\mathds{E}\,}}
\renewcommand{\PP}[1]{\underset{\scaleobj{.8}{#1}}{\mathds{P}\,}}

\allowdisplaybreaks 
\usepackage{graphicx,import,xcolor,transparent} 
\usepackage{centernot}

\newcommand{\modifica}[1]{ #1} 

\usepackage{caption} 
\newcommand{\mycaption}[2]{\caption[#1]{%
{\centering #1 \par}\vspace{0.4em}
        \normalfont #2}
        }

\usepackage{subcaption}
\usepackage{tabularray}
\UseTblrLibrary{booktabs} 
\usepackage{pifont}

\numberwithin{equation}{section}

\let\ludo\relax

\begin{document}

\title{Generalised quantum Stein's lemma \\ more robust than ever}

\author[1]{Filippo Girardi\thanks{filippo.girardi@sns.it}}
\author[1]{Kuan-Yi Lee\thanks{kuanyi.lee@sns.it}}
\author[2,3,4]{\\ Masahito Hayashi}
\author[1]{Ludovico Lami}
\affil[1]{Scuola Normale Superiore, Piazza dei Cavalieri 7, 56126 Pisa, Italy}
\affil[2]{School of Data Science, The Chinese University of Hong Kong,
Shenzhen,\newline Longgang District, Shenzhen, 518172, China}
\affil[3]{International Quantum Academy (SIQA), Futian District, Shenzhen 518048, China}
\affil[4]{Graduate School of Mathematics, Nagoya University, Nagoya, 464-8602, Japan}

\date{}

\setcounter{Maxaffil}{0}
\renewcommand\Affilfont{\itshape\small}

\maketitle
\begin{abstract}
The generalised quantum Stein's lemma is a key result in quantum hypothesis testing, and connects this fundamental primitive of quantum information processing with quantum resource manipulation, a task that is central for technological applications. Prior works have proved this statement in the idealised setting of independent and identically distributed (i.i.d.)\ sequences of quantum states, and recent extensions consider also sources that are `close' to i.i.d., according to the strict notion put forth by Mazzola, Sutter, and Renner. For several applications, however, one would need to consider yet more general sources. We establish a version of the generalised quantum Stein's lemma that is conceptually much simpler and general, as it applies to any source that is asymptotically close to an i.i.d.\ state with respect to the normalised quantum Wasserstein distance of order 1. As an immediate consequence, we solve the Stein exponent of a scenario where the null hypothesis is arbitrarily varying, expressing it in terms of i.i.d.\ Stein exponents corresponding to arbitrary states in the convex hull of the null hypothesis base set.
\end{abstract}

\newpage
\tableofcontents
\newpage

\section{Introduction}

Quantum hypothesis testing is one of the most fundamental tasks in quantum information theory. In its standard independent and identically distributed (i.i.d.) setting, given $n$ copies of an unknown system, one must decide whether it was prepared as $\rho^{\otimes n}$ (null hypothesis) or as $\sigma^{\otimes n}$ (alternative hypothesis). Under any fixed constraint on the type I error $\epsilon \in (0,1)$, the quantum Stein lemma characterises the optimal exponent of the type II error with the Umegaki relative entropy $D(\rho\|\sigma)$~\cite{Hiai1991,Ogawa2000}.

A central generalisation replaces the single i.i.d.\ alternative by a sequence of composite alternative sets $(\mathcal S^{(n)})_{n\geq 1}$ (see Section~\ref{sec:intro_GQSL} for a technical introduction).
Under suitable structural assumptions, the resulting generalised quantum Stein lemma (GQSL) identifies the optimal type II error exponent against the set of composite alternative with the regularised relative entropy $D^{\infty}(\rho\|\mathcal S)$~\cite{Hayashi2025, Lami_2025}. A paradigmatic consequence of the GQSL is the reversibility of a wide class of resource theories -- including entanglement theory -- under asymptotically resource non-generating operations~\cite{BrandaoPlenio2, Hayashi2025, Lami_2025}.  

The GQSL was originally formulated and claimed by Brandão and Plenio~\cite{Brandao2010}. After a gap was identified in the original argument~\cite{gap}, following different approaches two rigorous proofs were obtained independently by Hayashi and Yamasaki~\cite{Hayashi2025} and by Lami~\cite{Lami_2025}. Recently, Mazzola, Sutter and Renner~\cite{Mazzola_2026_2} recovered the original proof by closing the gap.\smallskip

The original formulation of the GQSL assumes an exact i.i.d.\ structure for the null hypothesis, covering the idealised case of perfectly stable and memoryless source, while realistic procedures may in general suffer from correlations and defects among the copies of the systems~\cite{Mazzola_2026, almost_iid}. This raises a natural question.
\begin{center}
    \textit{How far can the i.i.d.\ assumption on the null hypothesis be relaxed without altering the Stein exponent, i.e.\ without compromising the performance on the detection of the underlying hypothesis?}
\end{center} 
A first restricted robustness result was obtained by Lami~\cite{Lami_2025}, who showed the exact GQSL exponent remains unchanged when the i.i.d.\ null hypothesis is replaced by an constant-size defect state~\cite{Brandao2010} with a fixed total number of defects. This result foreshadowed a systematic development to more general almost i.i.d.\ sources.

Three notions have since emerged: first, Mazzola--Sutter--Renner~\cite{Mazzola_2026} provided a generalisation of constant-size defect states, which we will refer to as MSR model; then Wasserstein and weakly almost i.i.d.\ sources were introduced~\cite{almost_iid} as broader notions of deviations from the i.i.d.\ regime. These notions provide progressively weaker forms of local and global control of the deviations of a sequence of states $(\rho_n)_{n\geq 1}$ from the idealised i.i.d.\ structure $(\rho^{\otimes n})_{n\geq 1}$ and lead to different operational guarantees correspondingly (see, e.g.,~\cite{a_tale}).\smallskip

This hierarchy of models of sources has only partially been explored in the setting of composite hypothesis testing problems yet. For the MSR model, a robust version of the GQSL has been established under the Brand\~ao-Plenio axioms on the alternative hypothesis~\cite{Mazzola_2026_2}. For the broader class of Wasserstein and weakly almost i.i.d.\ sources, the robustness has only been established in the setting where the alternative hypothesis is i.i.d.\ (or a MSR deviation from the i.i.d.\ setting)~\cite{a_tale,datta2026entropyconcentrationuniversaltypicality}; more generally, it was shown that it is possible to design tests that can attain the exact exponent of the corresponding i.i.d.\ problem without knowing the specific deviation from the i.i.d.\ regime (such tests are called \emph{universal} within a class of almost i.i.d.\ sources).

As such, the distinction between the robustness for a specific source and universality of unknown source is operationally important. An individual source statement allows the tests to dependent on the actual null sequence $(\rho_n)_{n\geq 1}$; by contrast, in the universal testing one, the null source is only known to belong to a prescribed family, and a single test must satisfy the type I error constraint for all states in that family.

In this work, we establish a universal GQSL for a `Wasserstein equiconvergent null source,' namely, a \ludo{source of null hypotheses that approximate uniformly $\rho^{\otimes n}$, as \mbox{$n\to\infty$}, with respect to the} normalised Wasserstein distance. For every such null family, we show the existence of universal tests achieving the same regularised error exponent of the standard GQSL. Therefore, the ignorance of the precise null source within the \ludo{equiconvergent} family
\ludo{causes no performance loss} in the asymptotic exponent. Robustness for individual Wasserstein almost i.i.d.\ source follows thereafter as a special case (Section~\ref{sec:GQSL}).
In addition, we show that the Wasserstein robust GQSL has immediate implications in other composite hypothesis testing problems, such as the compound i.i.d.\ and the arbitrarily-varying settings (Section~\ref{sec:applications}).

\subsection{Preliminary notation}
Let $\mathcal H\cong\mathbb C^d$ be a finite-dimensional
Hilbert space, and, for every $n\in\mathbb N$, let $[n]:=\{1,\ldots,n\},~~\mathcal H_n:=\mathcal H^{\otimes n}$ For a subset $S\subseteq[n]$, we write
\bb
    S^c\coloneqq[n]\setminus S,
    \qquad
    \mathcal H_S\coloneqq\bigotimes_{i\in S}\mathcal H_i.
\ee

We denote by $\mathcal L(\mathcal H_n)$ the space of linear operators on $\mathcal H_n$; whenever $A_S\in\mathcal L(\mathcal H_S)$, we note $A_S$ with the operator on $\mathcal H_n$ that acts as $A_S$ on the subsystems indexed by $S$ and as the identity on $S^c$. Thus, for $|\psi\rangle\in\mathcal H_n$, we write
\bb
    A_S|\psi\rangle \coloneqq \bigl(A_S\otimes \id_{S^c}\bigr)|\psi\rangle,
\ee
where the tensor factors are understood in their ambient order in $[n]$.

Furthermore, for a self-adjoint operator $X$ and $a\in\mathbb R$, we denote by
\bb
    \{X\geq a\} \coloneqq \id_{[a,\infty)}(X),
    \qquad
    \{X>a\} \coloneqq \id_{(a,\infty)}(X)
\ee
the corresponding spectral projectors. We use the analogous
notation for "$\le,~<$" parts.

For self-adjoint operators $A$ and $B$, we write
\bb
    \{A\geq B\} \coloneqq \{A-B\geq0\},
    \qquad
    \{A>B\} \coloneqq \{A-B>0\}.
\ee
For instance, $\{\rho_n\geq e^{nr}\sigma_n\}$ denotes the projector onto the non-negative spectral subspace of $\rho_n-e^{nr}\sigma_n$. Outside braces, $A\geq B$ denotes the usual Löwner order.

\subsection{Generalised quantum Stein's lemma: an axiomatic framework}\label{sec:intro_GQSL}

In order to formally introduce the rigorous statement of the generalised quantum Stein's lemma (GQSL) in the case of an i.i.d.\ null hypothesis, we need some preliminary notation and definitions, which we will also frequently use in the remainder of the paper.
For a composite alternative hypothesis \(\mathcal S^{(n)}\), define the type II hypothesis testing error as
\begin{align}
\rel{\beta_\epsilon}{\rho_n}{\mathcal S^{(n)}}
\coloneqq
\min_{0\le E\le \id}
\left\{ \max_{\sigma_n\in\mathcal S^{(n)}}\Tr[E\sigma_n]: \Tr[(\id-E)\rho_n]\le\epsilon
\right\}.
\label{eq:beta-composite}
\end{align}

Later on in the paper, we shall also use a composite version of the null hypothesis.  Let
\(\mathcal F^{(n)}\) be a nonempty set of states on \(\mathcal H^{\otimes n}\).
Define
\begin{align}
\rel{\beta_\epsilon}{\mathcal F^{(n)}}{\mathcal S^{(n)}}
\coloneqq
\min_{0\le E\le \id}
\left\{
    \max_{\sigma_n\in\mathcal S^{(n)}}\Tr[E\sigma_n]:
    \sup_{\rho_n\in\mathcal F^{(n)}}
    \Tr[(\id-E)\rho_n]\le\epsilon
\right\}.
\label{eq:beta-composite-null-and-alt}
\end{align}

Thus the type I error  
constraint is imposed uniformly over the composite null set $\mathcal F^{(n)}$, while the type II error is measured in the worst case over the alternative set \(\mathcal S^{(n)}\).

Let us now recall the definition and the fundamental properties of the hypothesis testing relative entropy. 
For a state $\rho_n$ and a composite alternative $\mathcal S^{(n)}$, set
\bb
\rel{D_H^\epsilon}{\rho_n}{\mathcal S^{(n)}}
= \inf_{\sigma_n \in \mathcal S^{(n)}} D_H^\epsilon(\rho_n\|\sigma_n)\, ;
\ee
As for null set $\mathcal F^{(n)}$ and alternative set $\mathcal S^{(n)}$, define it as
\bb
  \rel{D_H^\epsilon}{\mathcal F^{(n)}}{\mathcal S^{(n)}}
  \coloneqq \inf_{\substack{\rho_n \in \mathcal F^{(n)}\\ \sigma_n \in \mathcal S^{(n)}}} D_H^\epsilon (\rho_n \|\sigma_n)\, .
\ee

Intuitively speaking, the following minimax result guarantees that testing two unknown states taken from two convex sets is at most as difficult as testing the worst pair of \emph{known} states from the sets.

\begin{lemma}[{(Composite hypothesis-testing minimax identity~\cite[Lemma~31]{fang2025generalized})}]\label{lem:minmax_fang}
Let $\mathcal F^{(n)}$ and $\mathcal S^{(n)}$ be nonempty convex sets of states. Then,
\bb
    \rel{\beta_\varepsilon}{\mathcal F^{(n)}}{\mathcal S^{(n)}}
    =
    \sup_{\substack{\rho\in\mathcal F^{(n)}\\
    \sigma\in\mathcal S^{(n)}}}
    \beta_\varepsilon(\rho\|\sigma).
\ee
Consequently,
\bb
    \rel{D_H^\varepsilon}{\mathcal F^{(n)}}{\mathcal S^{(n)}}
    =
    - \log \rel{\beta_\varepsilon}{\mathcal F^{(n)}}{\mathcal S^{(n)}} .
\ee
\end{lemma}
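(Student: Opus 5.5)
The plan is to prove the identity through a minimax theorem applied to a bilinear payoff on compact convex sets. I will write $\mathcal T\coloneqq\{E:0\le E\le\id\}$ for the set of tests; it is convex and compact because the dimension is finite.

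\textbf{Trivial inequality.} Fix $\rho\in\mathcal F^{(n)}$ and $\sigma\in\mathcal S^{(n)}$. Any test $E$ that is feasible for the composite problem satisfies $\Tr[(\id-E)\rho]\le\varepsilon$. It is therefore feasible for the pair $(\rho,\sigma)$, and $\Tr[E\sigma]\le\max_{\sigma'}\Tr[E\sigma']$. Taking the infimum over feasible $E$ and then the supremum over $(\rho,\sigma)$ gives
\[
\sup_{\rho,\sigma}\beta_\varepsilon(\rho\|\sigma)\le\rel{\beta_\varepsilon}{\mathcal F^{(n)}}{\mathcal S^{(n)}}.
\]

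\textbf{Reverse inequality.} I would remove the type I constraint with a Lagrange multiplier. By LP/SDP duality, for a single pair,
\[
\beta_\varepsilon(\rho\|\sigma)=\sup_{\mu\ge0}\min_{E\in\mathcal T}\Big(\Tr[E\sigma]+\mu\big(\Tr[(\id-E)\rho]-\varepsilon\big)\Big).
\]
Strong duality holds because $E=\id$ is strictly feasible when $\varepsilon>0$. For the composite problem, the constraint $\sup_\rho\Tr[(\id-E)\rho]\le\varepsilon$ and the objective $\max_\sigma\Tr[E\sigma]$ are convex in $E$. Introduce the cone of unnormalised null states $\{\mu\rho:\mu\ge0,\ \rho\in\mathcal F^{(n)}\}$. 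Using Sion's minimax theorem, the composite value becomes
\[
\rel{\beta_\varepsilon}{\mathcal F^{(n)}}{\mathcal S^{(n)}}=\min_{E\in\mathcal T}\ \sup_{\sigma\in\mathcal S^{(n)},\,\mu\ge0,\,\rho\in\mathcal F^{(n)}}L(E;\sigma,\mu,\rho),
\]
where
\[
L(E;\sigma,\mu,\rho)=\Tr[E\sigma]+\mu\big(\Tr[(\id-E)\rho]-\varepsilon\big).
\]
This identity holds because the supremum over $\mu$ is $+\infty$ on infeasible tests.

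The function $L$ is affine in $E$. In the reparametrisation $(\sigma,\mu\rho)$ it is jointly affine in the dual variables, and the set $\{\mu\rho\}$ is convex because $\mathcal F^{(n)}$ is convex. Since $\mathcal T$ is compact, Sion's theorem lets me swap to $\sup\min$. For fixed $(\sigma,\mu,\rho)$, the inner $\min_E$ is exactly the Lagrangian dual of the single-pair problem, so it is bounded by $\beta_\varepsilon(\rho\|\sigma)$. Hence the composite value is at most $\sup_{\rho,\sigma}\beta_\varepsilon(\rho\|\sigma)$.

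\textbf{Expected obstacle.} The main difficulty is the unbounded multiplier $\mu$: the function that is $+\infty$ on infeasible tests is not an admissible payoff as it stands. I would handle this in two steps. First, note that a test never helps by having $\Tr[(\id-E)\rho]<\varepsilon$ strictly in the worst case, and restrict to $\mu\le M$. Then let $M\to\infty$, using compactness of $\mathcal T$ and continuity of $E\mapsto\sup_\rho\Tr[(\id-E)\rho]$ to show the truncated values converge to the constrained one. Alternatively, I could skip multipliers and apply Sion directly to $\min_{E\in\mathcal T}\max_\sigma\Tr[E\sigma]$ after handling the constraint by a separating-hyperplane argument. The cases $\varepsilon=0$ or non-closed sets would need a short separate remark, or passing to closures.

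\textbf{Consequence.} The second display follows at once. By definition $D_H^\varepsilon(\rho\|\sigma)=-\log\beta_\varepsilon(\rho\|\sigma)$, so
\[
\inf_{\rho,\sigma}D_H^\varepsilon(\rho\|\sigma)=-\log\sup_{\rho,\sigma}\beta_\varepsilon(\rho\|\sigma),
\]
because $-\log$ is decreasing and continuous. Combining this with the first identity gives the claim.
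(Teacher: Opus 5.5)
The paper gives no proof of this lemma; it imports it from \cite[Lemma~31]{fang2025generalized}. Your proposal is a correct, self-contained proof along standard minimax lines. It penalises the type~I constraint to get $\min_{E\in\mathcal T}\sup L$, reparametrises by the convex cone $\{\mu\rho:\mu\ge0,\ \rho\in\mathcal F^{(n)}\}$ so that $L$ is bilinear, swaps via Sion's theorem over the compact set $\mathcal T$, and bounds $\min_E L$ pair by pair.

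Two simplifications are worth making. First, the \emph{expected obstacle} is not really there. Sion's theorem needs compactness only of the minimising set $\mathcal T$, together with a real-valued payoff that is convex/continuous in $E$ and concave/continuous in the other variable; you already have all of this. $L$ is finite at every point of $\mathcal T\times\big(\mathcal S^{(n)}\times\{\mu\rho\}\big)$. The value $+\infty$ appears only after taking the supremum over an unbounded convex set, and the theorem allows this. So you can drop the truncation $\mu\le M$ and the limit $M\to\infty$, whose justification you only sketched.

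Second, you never actually use strong duality or Slater's condition for a single pair. The bound $\min_E L(E;\sigma,\mu,\rho)\le\beta_\varepsilon(\rho\|\sigma)$ is weak duality: plug in an optimal test for the pair $(\rho,\sigma)$, whose penalty term $\mu\big(\Tr[(\id-E)\rho]-\varepsilon\big)$ is non-positive. Hence the argument works verbatim for $\varepsilon=0$ and for non-closed convex $\mathcal F^{(n)},\mathcal S^{(n)}$, reading $\max_\sigma$ as $\sup_\sigma$. No separate remark is needed.

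A minor point of attribution: writing the composite value as $\min_E\sup L$ is just the penalty identity, not an application of Sion. Sion enters only at the swap.
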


In order to prove the generalised quantum Stein's lemma, we need to make some assumptions on the sequence of families of alternative hypotheses $(\mathcal{S}^{(n)})_{n\geq 1}$, otherwise the statement of interest may not hold (for instance, without convexity of $\mathcal S^{(n)}$ it is possible to identify a counterexample~\cite{Hiai_2009}).

Here we recall some of the main requirements that were considered in the literature -- although not all of them are necessary in order to prove the GQSL.
\begin{enumerate}[left=2.5em]
\item[(CC)]  \emph{Convexity and topological closedness.} For each $n$, \(\mathcal S^{(n)}\) is a compact, convex
    subset of the state space on \(\mathcal H^{\otimes n}\).
\item[(TR)]  \emph{Closedness under tracing out subsystems.} For all $m,n\geq 1$ and for all $\sigma_{n+m}\in \mathcal S^{(n+m)}$, the state $({\rm Id}_n\otimes\Tr_m) [\sigma_{n+m}]$ belongs to $\mathcal S^{(n)}$. 
\item[(TEN)] \emph{Closedness under tensor products.} The family is closed under tensor products: if
    \(\sigma_n\in\mathcal S^{(n)}\) and
    \(\omega_m\in\mathcal S^{(m)}\), then $\sigma_n\otimes\omega_m\in \mathcal S^{(n+m)}$.
\item[(PER)] \emph{Closedness under permutations of subsystems.} For all states \(\sigma_n\in\mathcal S^{(n)}\) and permutations $\pi\in S_n$, the permuted state $U_\pi \sigma_n U_\pi^\dagger$ belongs to \(\mathcal S^{(n)}\).
\item[(FR)] \emph{Existence of a full-rank
state.} There exists a full-rank state \(\sigma_{\rm full}\in\mathcal S^{(1)}\).
\item[(REP)] \emph{Replacer stability.} There exists a full-rank state \(\sigma_{\rm full}\in\mathcal S^{(1)}\)
such that, for every $n$, every \(\sigma_n\in\mathcal S^{(n)}\), and every site \(i\in[n]\), the state
$\pazocal R_i^{(\sigma_{\rm full})}(\sigma_n)$ belongs to $\mathcal S^{(n)}$, where \(\pazocal R_i^{(\sigma_{\rm full})}\) denotes the channel that replaces the
$i$-th tensor factor by $\sigma_{\rm full}$ and leaves all other tensor factors unchanged.
\end{enumerate}

The first five conditions are known as Brand\~ao-Plenio axioms, as they were originally introduced in the initial formulation of the GQSL~\cite{Brandao2010}. 

If one assumes all the five Brand\~ao-Plenio axioms, the replacer stability automatically follows.
Indeed, given $\sigma_n \in \mathcal S^{(n)}$ and fix a site $i\in [n]$, then (1) append a full-rank state $\sigma_n \otimes \sigma_{\rm full} \in \mathcal S^{(n+1)}$, (2) trace out the $i$-th subsystem, and (3) permute the full-rank state $\sigma_{\rm full}$ into the $i$-th position. The resulting state is $\pazocal R_{i}^{\sigma_{\rm full}}(\sigma_n)$.

\begin{Def}
For an $n$-partite state $\rho_n$ on \(\mathcal H^{\otimes n}\), define its relative entropy from the alternative set \(\mathcal S^{(n)}\) by
\bb
  \rel{D}{\rho_n}{\mathcal S^{(n)}}
  \coloneqq
  \min_{\sigma_n\in\mathcal S^{(n)}} D(\rho_n\|\sigma_n),
\ee
the minimum is attained because \(\mathcal S^{(n)}\) is compact and the relative entropy is lower semicontinuous.
For a single-site state \(\rho\) on \(\mathcal H\), define
\bb
D^\infty(\rho\|\mathcal S)
\coloneqq\lim_{n\to\infty} \frac1n \rel{D}{\rho^{\otimes n}}{\mathcal S^{(n)}} ,
\ee
as its regularised relative entropy.
\end{Def}
Note that the limit exists under Assumption~\ref{ass:standing-Sn}. Indeed, setting $a_n \coloneqq D(\rho^{\otimes n}\|\mathcal S^{(n)})$, 
the tensor-product closure of \(\mathcal S\) implies subadditivity, $a_{n+m}\le a_n+a_m$; together with the Fekete's lemma gives $\displaystyle{\lim_{n\to\infty}\tfrac{a_n}{n}=\inf_n\tfrac{a_n}{n}}$.

Now we have all the notation and definitions needed to formally state the generalised quantum Stein's lemma.

\begin{thm}[(Generalised quantum Stein's lemma~\cite{Hayashi2025, Lami_2025})]\label{thm:gqsl_iid}
Suppose that the sequence of alternative hypotheses $\mathcal S=(\mathcal{S}^{(n)})_{n\geq 1}$, with $\mathcal{S}^{(n)}\subseteq\mathcal{D}(\mathcal{H}^{\otimes n})$, satisfies the five Brand\~ao-Plenio axioms. Then~\cite{Lami_2025}
\begin{align}
\lim_{n\to\infty}\frac 1n
\rel{D_H^\epsilon}{\rho^{\otimes n}}{\mathcal S^{(n)}} =
D^{\infty}(\rho \|\mathcal S)\qquad \forall \,\epsilon\in (0,1).
\end{align}
The same result holds if we only assume (CC), (TEN) and (FR)~\cite{Hayashi2025}.
\end{thm}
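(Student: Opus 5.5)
This statement is the full generalised quantum Stein's lemma, so I would follow the structure of Lami's proof~\cite{Lami_2025} under the five Brand\~ao-Plenio axioms. The argument splits into a converse and an achievability part; the achievability part is where all the difficulty lies.

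\emph{Converse (weak form).} Fix $\epsilon\in(0,1)$ and $\sigma_n\in\mathcal S^{(n)}$. Data processing for the relative entropy under the binary measurement $\{E,\id-E\}$ gives the standard bound
\bb
D_H^\epsilon(\rho^{\otimes n}\|\sigma_n)\le \frac{D(\rho^{\otimes n}\|\sigma_n)+h(\epsilon)}{1-\epsilon}.
\ee
Minimising over $\sigma_n$ and taking $\limsup\tfrac1n$, we get $\limsup\tfrac1n D_H^\epsilon\le D^\infty(\rho\|\mathcal S)/(1-\epsilon)$. To remove the $1/(1-\epsilon)$ factor I would apply the same bound to blocks: replace $\rho$ by $\rho^{\otimes k}$ and $\mathcal S$ by the family $(\mathcal S^{(kn)})_n$. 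The $\epsilon$-dependence then survives only through a vanishing correction. Together with the achievability bound for all $\epsilon$, and monotonicity of $D_H^\epsilon$ in $\epsilon$, this pins down the limit. In practice I would prove the strong converse directly instead, in the form $\limsup\tfrac1n D_H^\epsilon\le D^\infty$ for every $\epsilon<1$. The route is a sandwiched Rényi bound
\bb
D_H^\epsilon(\rho\|\sigma)\le \widetilde D_\alpha(\rho\|\sigma)+\frac{\alpha}{\alpha-1}\log\frac{1}{1-\epsilon},\qquad \alpha>1,
\ee
followed by $\alpha\to1^+$ after regularisation. The exchange of limits is justified because the regularised sandwiched quantities are continuous at $\alpha=1$ under (TEN) and (FR), since the full-rank state bounds the max-relative entropy.

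\emph{Achievability.} Here I need $\liminf\tfrac1n D_H^\epsilon(\rho^{\otimes n}\|\mathcal S^{(n)})\ge D^\infty(\rho\|\mathcal S)$ for every $\epsilon>0$. I would follow the blurring strategy.
\begin{enumerate}
\item By Lemma~\ref{lem:minmax_fang}, it suffices to bound $\beta_\epsilon(\rho^{\otimes n}\|\sigma_n)$ uniformly over $\sigma_n\in\mathcal S^{(n)}$. Using (PER) and convexity, I may take $\sigma_n$ permutation invariant.
\item Use (REP), which follows from the axioms as shown above, together with convexity to replace $\sigma_n$ by a ``blurred'' state in $\mathcal S^{(n)}$. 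This state mixes in replacements of an $O(\delta n)$-size random set of sites by $\sigma_{\rm full}$. The blurred state has its spectrum spread over type classes, and it dominates $\sigma_n$ times $e^{-O(\delta n\log(1/\lambda_{\min}))}$.
\item Use the resulting operator inequality and pinching with respect to the type decomposition to show $\Tr[\rho^{\otimes n}\{\rho^{\otimes n}\ge e^{nr}\tilde\sigma_n\}]\to1$ for every $r<D^\infty(\rho\|\mathcal S)-O(\delta)$. Concretely, I would argue by contradiction: if the test failed with non-vanishing probability, then by the information-spectrum characterisation and additivity over blocks one could build states in $\mathcal S^{(n)}$ achieving a relative entropy from $\rho^{\otimes n}$ smaller than $nD^\infty$, which is impossible because $D^\infty=\inf_n a_n/n$ by Fekete's lemma. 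I would then let $\delta\to0$.
\end{enumerate}

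The main obstacle is step~3: turning the failure of a single $\epsilon$-test into a global entropy deficit. This is precisely the gap in~\cite{Brandao2010}. My first attempt would be Lami's approach: exploit the fact that blurring makes the type-II error continuous in $\epsilon$ at an exponential scale, so that the exponent at a fixed $\epsilon$ equals the exponent at $\epsilon\to1$. Combined with the strong converse, this closes the argument. For the weaker axioms (CC), (TEN), (FR), I would instead defer to the Hayashi--Yamasaki tangent-space argument~\cite{Hayashi2025}.
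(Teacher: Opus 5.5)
\textbf{The gap: achievability.} It sits at your step~3, which you yourself flag as the obstacle and then defer to \cite{Lami_2025,Hayashi2025}.

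Your contradiction sketch (a test failing with non-vanishing probability $\Rightarrow$ a state of $\mathcal S^{(n)}$ with relative entropy below $nD^\infty(\rho\|\mathcal S)$) has the right shape. However, no such state is ever constructed. The blurring property you do state cannot supply one: $\tilde\sigma_n\ge e^{-O(\delta n)}\sigma_n$ only lets you trade $\sigma_n$ for $\tilde\sigma_n$ at the same $\epsilon$.

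What a blurring proof actually has to establish is an $\epsilon$-independence (blowing-up) statement. A test that accepts $\rho^{\otimes n}$ with some small fixed probability $\mu$, and has type-II error $e^{-nr}$ against the blurred state, must be enlargeable to one that accepts $\rho^{\otimes n}$ with probability at least $1-\epsilon$ and has type-II error $e^{-n(r-O(\delta))}$ against $\sigma_n$. All the work lies there, and it is absent.

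Your closing sentence also pairs this with the wrong ingredient. The strong converse is an upper bound. What closes achievability is the lower bound at type-I error tending to one,
$\liminf_n\frac1n D_H^{1-\mu}(\rho^{\otimes n}\|\mathcal S^{(n)})\ge D^\infty(\rho\|\mathcal S)-O(\sqrt\mu)$.
This follows from Lemma~\ref{lemma:wsc} together with continuity of $D(\cdot\|\sigma)$ for $\sigma\ge e^{-\kappa n}\id$, which you can arrange by mixing in $\sigma_{\rm full}^{\otimes n}$.

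\textbf{How the paper supplies the missing construction.} The paper only cites this theorem. Still, its proof of Theorem~\ref{thm:W1_GQSL}, read with $\rho_n=\rho^{\otimes n}$, is a complete proof; the final appeal to Proposition~\ref{prop:3.1} then becomes vacuous, so only (CC), (TEN), (FR) are used. It supplies exactly the construction you lack, without blurring: the failed $\epsilon$-test is exploited by placing its own worst-case alternative inside a mixture.

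Concretely, assuming $R^\epsilon<D^\infty(\rho\|\mathcal S)$, Lemma~\ref{lem:one-step} takes $\sigma_n'=\frac13\big(\sigma_n^*+\bar\sigma^{(m)}_{n,\delta}+\sigma_{\rm full}^{\otimes n}\big)\in\mathcal S^{(n)}$. Here $\sigma_n^*$ maximises $\beta_\epsilon(\rho^{\otimes n}\|\cdot)$ over $\mathcal S^{(n)}$, and $\bar\sigma^{(m)}_{n,\delta}$ is a padded product of a near-optimal full-rank block state. After logarithmic rounding and pinching ($n+1$ blocks), the three components play separate roles:
\begin{itemize}
\item the $\sigma_n^*$ component keeps the pinched $\rho^{\otimes n}$-mass above likelihood ratio $e^{n(R^\epsilon+\delta')}$ at most $1-\epsilon$ along a subsequence;
\item the block strong converse makes the mass above $e^{n(D^\infty+\epsilon_0+\delta')}$ vanish;
\item $\sigma_{\rm full}^{\otimes n}$ caps the normalised log-likelihood by $2\kappa$.
\end{itemize}
The three-region decomposition then gives $\liminf_n\frac1n D(\rho^{\otimes n}\|\sigma_n')\le R^\epsilon+(1-\eta)\big(D^\infty(\rho\|\mathcal S)-R^\epsilon\big)<D^\infty(\rho\|\mathcal S)$. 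This contradicts the definition of $D^\infty$ in a single step.

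\textbf{Your converse.} The sandwiched R\'enyi route is correct and is a genuinely different, shorter route than the paper's padded block-product strong converse. The $\alpha\to1^+$ exchange, however, is justified by commuting infima, not by a max-relative-entropy bound. By (TEN) and Fekete, $\widetilde D^\infty_\alpha=\inf_n\frac1n\min_{\sigma_n}\widetilde D_\alpha(\rho^{\otimes n}\|\sigma_n)$, and $\inf_{\alpha>1}\widetilde D_\alpha=D$ pointwise. Drop the blocking remark: blocking does not remove the factor $1/(1-\epsilon)$, and monotonicity of $\epsilon\mapsto D_H^\epsilon$ runs the wrong way for transferring an $\epsilon\to0$ bound to fixed $\epsilon$.
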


The main goal of this paper is to prove Theorem~\ref{thm:gqsl_iid} when the i.i.d.\ source $(\rho^{\otimes n})_{n\geq 1}$ is replaced by any arbitrary sequence $(\rho_n)_{n\geq 1}$ such that 
\bb
    \lim_{n\to\infty}\frac 1n \|\rho_n-\rho^{\otimes n}\|_{W_1}=0,
\ee
where $\|\,\cdot\,\|_{W_1}$ is the quantum Wasserstein distance of order 1~\cite{De_Palma_2021}, which we are going to introduce in detail in the following section.

\subsection{The quantum Wasserstein distance of order 1}
The Hamming distance between two strings of fixed length $x^n,y^n\in\mathcal{X}^n$, which counts the number of coordinates in which they differ, is a natural metric for the quantification of the amount of defects that a source of noise produces on an uncorrupted encoding of classical information. Leveraging the theory of optimal transportation, the Hamming distance can be lifted to a metric on the space of probability distributions on fixed-length strings $\mathcal{P}(\mathcal{X}^n)$. More precisely, we are going to consider the extension given by the Wasserstein distance of order 1 -- known as Ornstein’s $\bar d$-distance~\cite{ornstein1973application} -- which has multiple applications in ergodic theory and information theory~\cite{gray2011entropy}. Its crucial feature is its sensitivity to local perturbations: probability distributions concentrated on strings differing only in a few coordinates remain close in $W_1$ distance, even when they may be perfectly distinguishable in total variation distance.

A seminal work by De Palma et al.~\cite{De_Palma_2021} introduced a generalisation to the quantum case -- called the \emph{quantum Wasserstein distance of order 1} -- which found applications in several areas of quantum information theory, including quantum differential privacy~\cite{hirche2023quantumdifferentialprivacyinformation}, equivalence of ensembles in quantum statistical mechanics~\cite{De_Palma_2022,De_Palma_2025}, learning of many-body quantum states~\cite{De_Palma_2024,Rouz__2024,Rouz__2024b}, rapid thermalization for geometrically local Hamiltonians~\cite{Bardet_2024,Kochanowski_2025,bakshi2025dobrushinconditionquantummarkov}, quantum machine learning through quantum generative adversarial networks~\cite{Kiani_2022}, and limitations of variational quantum algorithms~\cite{De_Palma_2023}. Finally, it turned out to be a foundational tool to describe the properties of almost i.i.d.\ states~\cite{almost_iid, Mazzola_2026_2, a_tale}.

As a generalization of the Hamming distance from classical strings to quantum states of multipartite quantum systems, the construction is based on the notion of neighbouring states. Two states $\rho$ and $\sigma$ of the $n$-partite system $A_1\ldots A_n$ are called \emph{neighbouring} whenever there exists a subsystem $A_i$ such that $\Tr_{A_i}\rho=\Tr_{A_i}\sigma$, namely the two states become indistinguishable upon discarding subsystem $A_i$. The convex hull of differences between neighbouring states, seen as a unit ball, uniquely identifies the quantum $W_1$ norm, and the associated quantum $W_1$ distance is the metric induced by this norm. More precisely,~\cite{De_Palma_2021} give the following definition.

\begin{Def}[(Quantum Wasserstein distance of order 1)]
Let $\mathcal{H}_A$ be a finite-dimensional Hilbert space and let $n\geq 1$. Then, for all $\rho,\,\sigma\in \mathcal{D}(\mathcal{H}_A^{\otimes n})$, the \emph{quantum Wasserstein distance of order 1} is defined as
\bb
    \|\rho-\sigma\|_{W_1} \coloneqq \min \Bigg\{ \;\sum_{i=1}^n c_i\quad  &\text{such that}\quad  && c_i\geq 0, \qquad  \rho-\sigma=\sum_{i=1}^n c_i\left(\tau^{(i)}-\eta^{(i)}\right) \\
    & \text{with}&& \tau^{(i)},\eta^{(i)}\in \mathcal{D}(\mathcal{H}^{\otimes n}_A), \quad \Tr_{A_i}\tau^{(i)}=\Tr_{A_i}\eta^{(i)}&\Bigg\}.
\ee
\end{Def}

We recall some simple properties that we will frequently use in this work.
\begin{itemize}
    \item \textbf{Data-processing under single system channels. } The Wasserstein distance of order 1 -- already in the fully classical case -- does not satisfy the data-processing inequality for arbitrary channels. However, if $\Lambda_k$ is a quantum channel acting on a single system $k\in[n]$, then~\cite[Proposition~3]{De_Palma_2021}
    \bb
        \|\Lambda_k(\rho_n)-\Lambda_k(\sigma_n)\|_{W_1}\leq \|\rho_n-\sigma_n\|_{W_1} \qquad \forall \rho_n,\sigma_n\in\mathcal{D}(\mathcal{H}^{\otimes n}).
    \ee
    \item \textbf{Data-processing under permutation twirl.} Since the quantum Wasserstein distance of order 1 is invariant under permutations~\cite[Proposition~3]{De_Palma_2021} and satisfies the triangle inequality, it is simple to show that
    \bb
        \|\pazocal P(\rho_n)-\pazocal P(\sigma_n)\|_{W_1}\leq \|\rho_n-\sigma_n\|_{W_1} \qquad \forall \rho_n,\sigma_n\in\mathcal{D}(\mathcal{H}^{\otimes n}),
    \ee
    where $\pazocal P(\,\cdot\,)\coloneqq \frac 1 {n!}\sum_{\pi\in S_n} U_{\pi}\,\cdot\, U_\pi^\dagger$ is the permutation twirl.
    \item \textbf{Relation with the trace distance and additivity.} In the case $n=1$, the quantum Wasserstein distance of order 1 coincides with the trace distance~\cite[Proposition~2]{De_Palma_2021}:
    \bb
        \|\rho-\sigma\|_{W_1}=\frac 12 \|\rho-\sigma\|_1\qquad \forall \rho,\sigma\in\mathcal{D}(\mathcal{H}).
    \ee
    Moreover, it is additive under tensor products~\cite[Proposition~4]{De_Palma_2021}:
    \bb
        \|\rho_n\otimes\rho_m'-\sigma_n\otimes\sigma_m'\|_{W_1}=\|\rho_n-\sigma_n\|_{W_1}+\|\rho_m'-\sigma_m'\|_{W_1},
    \ee
    for all $\rho_n,\sigma_n\in\mathcal{D}(\mathcal{H}^{\otimes n})$ and $\rho_m',\sigma_m'\in\mathcal{D}(\mathcal{H}^{\otimes m})$. As an immediate consequence, we derive the identity
    \bb\label{eq:W_tr}
        \frac 1n\|\rho^{\otimes n}-\sigma^{\otimes n}\|_{W_1}=\frac 12\|\rho-\sigma\|_1\qquad \forall \rho,\sigma\in \mathcal{D}(\mathcal{H}).
    \ee
\end{itemize}

The original work by De Palma et al. provides a continuity bound of the von Neumann entropy in terms of the quantum Wasserstein distance of order 1~\cite[Theorem 1]{De_Palma_2021}. The optimal continuity bound was then derived by De Palma and Trevisan in a following work as follows.
\begin{thm}[{(Continuity of the von Neumann entropy~\cite[Theorem 9.1]{De_Palma_2023b})}]\label{thm:SW}
For any $n\geq 1$ and any $\rho_n,\,\sigma_n\in\mathcal{D}\left(\mathcal{H}^{\otimes n}\right)$ we have
\begin{equation}
    \frac{1}{n}\left|S(\rho_n) - S(\sigma_n)\right| \le h\left(w_n\right) + w_n\ln\left(\left(\dim\mathcal{H}\right)^2-1\right)\,,\quad \text{where}\quad w_n\coloneqq \frac1n \left\|\rho_n-\sigma_n\right\|_{W_1}
\end{equation}
and $h(x) \coloneqq -x\ln x - \left(1-x\right)\ln\left(1-x\right)$ is the binary entropy function, with $x\in[0,1]$.
\end{thm}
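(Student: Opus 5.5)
The plan is a Fano-type argument. First reduce to the regime $w_n\le 1-1/d^2$; above it, the bound is at least $\ln d$, which is the maximal value of $\frac1n|S(\rho_n)-S(\sigma_n)|$. Then realise the passage from $\rho_n$ to $\sigma_n$ as a mixture of local \emph{error operators} whose classical label has small Shannon entropy. The factor $d^2-1$, where $d=\dim\mathcal H$, strongly suggests the generalised Pauli (Weyl) basis $\{W_\alpha\}_{\alpha=0}^{d^2-1}$ of $\mathcal L(\mathcal H)$, with $W_0=\id$. It has $d^2-1$ non-trivial elements per site, and the number of strings $\alpha^n$ with Hamming weight $k$ is $\binom nk (d^2-1)^k$. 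A standard bound on the entropy of a distribution with bounded expected Hamming weight then gives $n\,[h(w)+w\ln(d^2-1)]$. This is exactly the right-hand side, so the whole proof should reduce to bounding the entropy difference by the entropy of such an error register.

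\textbf{Step 1 (a coupling).} Take an optimal decomposition $\rho_n-\sigma_n=\sum_i c_i(\tau^{(i)}-\eta^{(i)})$ with $\sum_i c_i=n w_n$ and $\Tr_{A_i}\tau^{(i)}=\Tr_{A_i}\eta^{(i)}$. The states $\tau^{(i)}$ and $\eta^{(i)}$ agree off site $i$, so they are related by a channel acting on site $i$ alone: for instance, the channel that traces out $A_i$ and re-prepares it conditioned through a purification. From this I want a joint state $\omega$ on $A^n\otimes X$ with $X$ classical. Both marginals should be expressible: $\Tr_X\omega$ should be close to $\rho_n$, and applying a controlled Pauli string $W_{\alpha^n}$, with $\alpha^n$ read from $X$, followed by discarding $X$, should yield $\sigma_n$. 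The distribution of $\alpha^n$ should have expected weight $\le n w_n$.

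A cleaner route is to twirl each single-site replacement channel over the Weyl group. Any replacement on site $i$ becomes a Pauli-diagonal channel $\sum_\alpha p_\alpha W_\alpha\cdot W_\alpha^\dagger$ on that site, and its non-identity weight is controlled by the coefficient $c_i$. The local-channel data-processing property of $W_1$ recalled above keeps this twirling harmless on the Wasserstein side.

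\textbf{Step 2 (entropy bound).} Granted a representation $\sigma_n=\sum_{\alpha^n}q(\alpha^n)\,W_{\alpha^n}\tilde\rho_{\alpha^n}W_{\alpha^n}^\dagger$ whose average state is $\sum_{\alpha^n}q(\alpha^n)\tilde\rho_{\alpha^n}=\rho_n$, two standard inequalities finish the argument. Concavity of the entropy together with unitary invariance gives the lower side. The upper bound $S(\sum_x q_x\omega_x)\le \sum_x q_x S(\omega_x)+H(q)$ gives
\[
S(\sigma_n)\le S(\rho_n)+H(q).
\]
Maximising $H(q)$ under the constraint $\mathbb E_q[\mathrm{wt}(\alpha^n)]\le n w_n$ gives the product distribution: each site is non-trivial with probability $w_n$, and uniform over the $d^2-1$ non-trivial Paulis when it is. Hence
\[
H(q)\le n\,[h(w_n)+w_n\ln(d^2-1)],
\]
valid for $w_n\le 1-1/d^2$. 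Exchanging the roles of $\rho_n$ and $\sigma_n$ yields the absolute value.

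\textbf{Main obstacle.} The crux is Step 1: showing that $W_1$-closeness actually implies a coupling with expected number of non-trivial local errors at most $\|\rho_n-\sigma_n\|_{W_1}$. The decomposition only expresses the \emph{difference} $\rho_n-\sigma_n$ as a signed combination of neighbouring differences. The $\tau^{(i)}$ need not be related to $\rho_n$ itself, so one cannot simply chain channels.

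First I would try the dual (Lipschitz) formulation of $W_1$ together with a convexity and concavity argument for the entropy. The idea is to define $f(t)=\sup\{\frac1n(S(\rho_n)-S(\sigma_n)):\frac1n\|\rho_n-\sigma_n\|_{W_1}\le t\}$, show that $f$ is concave and subadditive via tensorisation and additivity of $W_1$, and then reduce to the single-site case. For a single site, $W_1$ is the trace distance. There I would apply the Weyl-twirl form of the Fannes–Audenaert argument, with $d^2-1$ in place of $d-1$, coming from the depolarising-type coupling.

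If this reduction fails, the fallback is a telescoping argument. Write $S(\rho_n)-S(\sigma_n)$ as a sum of single-site conditional-entropy differences along a suitable interpolating path of states. Then control each term by a single-site bound weighted by the corresponding $c_i$, and use concavity of $t\mapsto h(t)+t\ln(d^2-1)$ to sum the terms.
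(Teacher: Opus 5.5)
\textbf{There is a genuine gap, and Step~1 is not merely unproven: it is false in general.} Take $n=1$, $d=2$, $\rho=\ketbra{0}$ and $\sigma=\ketbra{\theta}$ with $\ket{\theta}=\cos\theta\ket{0}+\sin\theta\ket{1}$, $0<\theta<\pi/2$. Because $\rho$ is pure, every $\tilde\rho_\alpha$ in $\sum_\alpha q(\alpha)\tilde\rho_\alpha=\rho$ must equal $\rho$. Hence $\sum_\alpha q(\alpha)W_\alpha\rho W_\alpha^\dagger$ is diagonal in the computational basis and can never equal $\sigma$, whatever weight constraint you impose. Twirling the local replacement channels over the Weyl group does not help, because the twirled channel no longer maps $\rho_n$ to $\sigma_n$, so the entropy you control is that of the wrong state.

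\textbf{The fallbacks fail as well.}

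Tensorisation and additivity of $W_1$ only give $f_{n+m}\bigl(\tfrac{nt_1+mt_2}{n+m}\bigr)\ge\tfrac{nf_n(t_1)+mf_m(t_2)}{n+m}$. That is a lower bound on the worst case, whereas the theorem needs an upper bound.

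No reduction to one site can work at all. It would give $f_n\le f_1$, i.e.\ the sharp single-site Fannes--Audenaert bound $h(t)+t\ln(d-1)$ for every $n$. For $n=2$, however, the maximally entangled state $\Phi$ and $\id/d^2$ are neighbouring, so $w_2=\tfrac12(1-d^{-2})$ while $\tfrac12|S(\Phi)-S(\id/d^2)|=\ln d$. For $d=2$ this exceeds $h(3/8)\approx0.662$. So the constant $d^2-1$ is a genuinely multi-site effect.

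The telescoping route hits the same wall as Step~1. The optimal decomposition of $\rho_n-\sigma_n$ is a signed combination and supplies no path of states with neighbouring consecutive elements.

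\textbf{What is missing.} You should use $W_1$ only through the dual formula $\Tr[(\rho_n-\sigma_n)H]\le\|H\|_L\,\|\rho_n-\sigma_n\|_{W_1}$. Your product Pauli-error distribution should then build a reference state, not a coupling.

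For $0<w=w_n\le1-d^{-2}$, let $\mathcal N_w(X)=(1-w)X+\frac{w}{d^2-1}\sum_{\alpha\ne0}W_\alpha XW_\alpha^\dagger$ and $\omega\coloneqq\mathcal N_w^{\otimes n}(\rho_n)$. Then:
\begin{itemize}
\item $S(\sigma_n)\le-\Tr[\sigma_n\ln\omega]$.
\item $\omega\ge(1-w)^n\rho_n$ gives $-\Tr[\rho_n\ln\omega]=S(\rho_n)+D(\rho_n\|\omega)\le S(\rho_n)+D_{\max}(\rho_n\|\omega)\le S(\rho_n)-n\ln(1-w)$.
\item On a site $x$ one has $\mathcal N_w(X)=\bigl(1-\tfrac{wd^2}{d^2-1}\bigr)X+\tfrac{wd}{d^2-1}\,\id_x\otimes\Tr_xX$ and $X\le d\,\id_x\otimes\Tr_xX$. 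Hence $\tfrac{wd}{d^2-1}K\le\omega\le d(1-w)K$ with $K=\id_x\otimes\Tr_x\omega$ acting trivially on $x$. Operator monotonicity of $\ln$ then gives $\partial_x\ln\omega\le\ln\frac{(1-w)(d^2-1)}{w}$.
\end{itemize}
Combining these, $S(\sigma_n)-S(\rho_n)\le n\bigl[-\ln(1-w)+w\ln\tfrac{(1-w)(d^2-1)}{w}\bigr]=n\bigl[h(w)+w\ln(d^2-1)\bigr]$. Exchanging the two states gives the absolute value. Your preliminary reduction to $w_n\le1-d^{-2}$ is correct and covers the remaining range.

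\textbf{Relation to the paper.} The paper does not prove this theorem; it imports it from De Palma--Trevisan. The tools it reproduces from that source run on the same duality mechanism. For pure $\rho_n=\ketbra{\psi}$ and $\pazocal V=\mathrm{span}\{\ket{\psi}\}$, combine Gibbs' variational principle, the bound $\|D_{\pazocal V}\|_L\le1$, and the fattening count $\binom nj(d^2-1)^j$. This gives $S(\sigma_n)\le\beta nw_n+n\ln\bigl(1+(d^2-1)e^{-\beta}\bigr)$, which, optimised over $\beta$, is exactly the claimed bound. In both arguments, $d^2-1$ enters through a dimension or partition-function count paired with a Lipschitz observable, never through a coupling.
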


The following crucial continuity bound was proved by Mazzola, Sutter, and Renner.
\begin{prop}[{(Continuity of the relative entropy of resource\cite[Proposition~3.1]{Mazzola_2026_2})}]\label{prop:3.1} Assume that a family of states $\mathcal{S}^{(n)}\subseteq \mathcal{D}(\mathcal{H}^{\otimes n})$ satisfies (CC) and (REP) among the conditions listed in Section~\ref{sec:intro_GQSL}. Then, for all $\rho_n,\rho'_n\in\mathcal{D}(\mathcal{H}^{\otimes n})$ such that $w_n\coloneqq\frac 1n \|\rho_n-\rho'_n\|_{W_1}\leq \frac 12$, we have
    \bb
        \left|\frac{1}{n}\rel{D}{\rho_n}{\mathcal{S}^{(n)}} - \frac{1}{n} \rel{D}{\rho'_n}{\mathcal{S}^{(n)}}\right| \leq 3h_2(w_n)+6w_n\log \frac d{\lambda_{\min}(\sigma_{\rm full})},
    \ee
    where $d\coloneqq \dim \mathcal{H}$, and $\lambda_{\min}(\sigma_{\rm full})>0$ denotes the smallest eigenvalue of $\sigma_{\rm full}$.
\end{prop}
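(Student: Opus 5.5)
The plan is to prove the two one-sided inequalities separately. By symmetry it suffices to show $\frac1n D(\rho_n\|\mathcal S^{(n)}) \le \frac1n D(\rho'_n\|\mathcal S^{(n)}) + 3h_2(w_n)+6w_n\log\frac{d}{\lambda_{\min}(\sigma_{\rm full})}$. I would split the relative entropy as $D(\rho_n\|\sigma_n) = -S(\rho_n) - \Tr[\rho_n\log\sigma_n]$. The entropy difference $S(\rho'_n)-S(\rho_n)$ is controlled directly by Theorem~\ref{thm:SW}, which gives at most $n\big(h_2(w_n)+w_n\log(d^2-1)\big)$. The remaining work is to control the cross term $-\Tr[(\rho_n-\rho'_n)\log\sigma_n]$ for a suitably chosen $\sigma_n\in\mathcal S^{(n)}$. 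A completely arbitrary minimiser of $D(\rho'_n\|\mathcal S^{(n)})$ will not do, because $\log\sigma_n$ may be unbounded.

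\textbf{Step 1 (local replacer cost).} First I would establish a one-site estimate. For every state $\tau_n$, every $\sigma_n\in\mathcal S^{(n)}$ and every site $i$,
\[
D\big(\tau_n\,\big\|\,\pazocal R_i^{(\sigma_{\rm full})}(\sigma_n)\big)\le D(\tau_n\|\sigma_n)+\log\frac{d}{\lambda_{\min}(\sigma_{\rm full})}.
\]
The proof is short. Since $\pazocal R_i(\sigma_n)=\Tr_i\sigma_n\otimes\sigma_{\rm full}\ge\lambda_{\min}\,\Tr_i\sigma_n\otimes\id_i$ and the logarithm is operator monotone, we get $D(\tau_n\|\pazocal R_i\sigma_n)\le -S(\tau_n)-\Tr[\Tr_i\tau_n\log\Tr_i\sigma_n]-\log\lambda_{\min}$. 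Then $S(\tau_n)\ge S(\Tr_i\tau_n)-\log d$ and data processing $D(\Tr_i\tau_n\|\Tr_i\sigma_n)\le D(\tau_n\|\sigma_n)$ finish it. By (REP), the new state still lies in $\mathcal S^{(n)}$.

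\textbf{Step 2 (smoothing the optimiser).} Let $\sigma'_n$ attain $D(\rho'_n\|\mathcal S^{(n)})$, which is possible by (CC). Replace it by the smoothed state $\tilde\sigma_n\coloneqq\Phi^{\otimes n}(\sigma'_n)$, where $\Phi=(1-w_n)\,{\rm Id}+w_n\,\pazocal R^{(\sigma_{\rm full})}$. By (REP) and convexity, $\tilde\sigma_n\in\mathcal S^{(n)}$. Expanding $\Phi^{\otimes n}$ gives a binomial mixture over subsets $T\subseteq[n]$ of replaced sites. Using joint convexity together with Step~1 on the typical subsets ($|T|\approx nw_n$), plus a mixing-entropy correction of order $nh_2(w_n)$, I expect
\[
D(\rho'_n\|\tilde\sigma_n)\le D(\rho'_n\|\mathcal S^{(n)})+n\big(h_2(w_n)+w_n\log\tfrac{d}{\lambda_{\min}}\big).
\]
At the same time, every site of $\tilde\sigma_n$ now carries a weight-$w_n$ component $\sigma_{\rm full}$. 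Here the hypothesis $w_n\le\frac12$ enters, since it keeps the binary entropies monotone and the mixture weights well defined.

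\textbf{Step 3 (Lipschitz bound for the cross term; main obstacle).} Write $\rho_n-\rho'_n=\sum_i c_i(\tau^{(i)}-\eta^{(i)})$ with $\sum_i c_i=nw_n$ and $\Tr_i\tau^{(i)}=\Tr_i\eta^{(i)}$. The goal is
\[
\big|\Tr[(\tau^{(i)}-\eta^{(i)})\log\tilde\sigma_n]\big|\lesssim \log\frac{d}{\lambda_{\min}}+\frac{h_2(w_n)}{w_n},
\]
so that the cross term costs at most $n\big(2h_2(w_n)+5w_n\log\frac d{\lambda_{\min}}\big)$. Adding this to the contributions of Theorem~\ref{thm:SW} and Step~2 gives the claimed constants $3$ and $6$. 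This is the step I expect to be hardest. The operator $\log\tilde\sigma_n$ is not a sum of local terms, so the bound cannot be read off site by site.

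My first attempt would avoid bounding $\log\tilde\sigma_n$ directly and argue at the level of relative entropies instead. I would write $\tilde\sigma_n=(1-w_n)\,\Phi_{i^c}(\sigma'_n)+w_n\,\pazocal R_i(\Phi_{i^c}(\sigma'_n))$. For neighbouring $\tau,\eta$ one has $\pazocal R_i(\tau)=\pazocal R_i(\eta)$. Combining the operator inequality $\tilde\sigma_n\ge w_n\,\pazocal R_i(\cdots)$, the estimate of Step~1, and the almost-convexity bound $D(\sum_j p_j\omega_j\|\sigma)\ge\sum_j p_j D(\omega_j\|\sigma)-h(p)$ should let me compare $D(\tau\|\tilde\sigma_n)$ with $D(\eta\|\tilde\sigma_n)$ up to the stated per-site error. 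If this pairwise route fails, the fallback is a hybrid argument. I would first couple $\rho_n$ and $\rho'_n$ so that they differ on about $nw_n$ sites, then replace all of those sites by $\sigma_{\rm full}$ in one go and pay Step~1 once per replaced site.
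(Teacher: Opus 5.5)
\textbf{There is a genuine gap: the decisive step is not proved.} The paper does not prove this proposition; it imports it from Mazzola--Sutter--Renner, so your argument has to stand on its own.

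Steps~1 and~2 are sound. In fact Step~2 needs no typicality or mixing-entropy correction: convexity of $D(\rho'_n\|\cdot)$ and iterating Step~1 give a cost of $nw_n\log\frac{d}{\lambda_{\min}}$ directly. The decisive estimate, Step~3, however, is only announced as a goal. Your ``first attempt'' is phrased as something that ``should'' work. Your fallback presupposes a coupling under which $\rho_n$ and $\rho'_n$ differ on about $nw_n$ sites. The quantum $W_1$ distance provides no such coupling: it is defined through decompositions of $\rho_n-\rho'_n$ into differences of neighbouring pairs $\tau^{(i)},\eta^{(i)}$, which are arbitrary global states. Your bookkeeping also does not close. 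The three contributions $h_2(w_n)+w_n\log(d^2-1)$ (Theorem~\ref{thm:SW}), $h_2(w_n)+w_n\log\frac{d}{\lambda_{\min}}$ (Step~2) and $2h_2(w_n)+5w_n\log\frac{d}{\lambda_{\min}}$ (Step~3) add up to $4h_2(w_n)+\dots$, which exceeds the claimed bound.

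\textbf{The missing idea.} You do not need $\log\tilde\sigma_n$ to be a sum of local terms. By the dual formula~\eqref{eq:W1-duality}, it suffices that for each site $x$, $\log\tilde\sigma_n$ is uniformly close to some operator acting trivially on $x$. Let $A_x$ be $\sigma'_n$ with $\Phi$ applied on every site except $x$. Then $\tilde\sigma_n=(1-w_n)A_x+w_n\,\Tr_xA_x\otimes\sigma_{\rm full}$, with $\sigma_{\rm full}$ at site $x$. Twirling site $x$ with a unitary $1$-design gives $A_x\le d\,\Tr_xA_x\otimes\id_x$, hence
\[
w_n\lambda_{\min}(\sigma_{\rm full})\,\Tr_xA_x\otimes\id_x\;\le\;\tilde\sigma_n\;\le\;d\,\Tr_xA_x\otimes\id_x ,
\]
where $\Tr_xA_x\ge w_n^{n-1}\sigma_{\rm full}^{\otimes(n-1)}$ is full rank. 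Operator monotonicity of the logarithm then yields $\partial_x\log\tilde\sigma_n\le\log\frac{d}{w_n\lambda_{\min}}$ for every $x$. Therefore
$|\Tr[(\rho_n-\rho'_n)\log\tilde\sigma_n]|\le nw_n\log\frac{d}{w_n\lambda_{\min}}\le nh_2(w_n)+nw_n\log\frac{d}{\lambda_{\min}}$.

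Your relative-entropy route can also be closed, using $\Tr_x\tilde\sigma_n=\Tr_xA_x$ and monotonicity under $\Tr_x$:
$D(\eta\|\tilde\sigma_n)\le D(\eta\|\Tr_xA_x\otimes\sigma_{\rm full})+\log\frac1{w_n}\le D(\Tr_x\tau\|\Tr_x\tilde\sigma_n)+\log\frac{d}{w_n\lambda_{\min}}\le D(\tau\|\tilde\sigma_n)+\log\frac{d}{w_n\lambda_{\min}}$.
This costs an extra $|S(\tau)-S(\eta)|\le 2\log d$ per unit weight.

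\textbf{Completing the bound.} For Step~2, the cleanest route is $\tilde\sigma_n\ge(1-w_n)^n\sigma'_n$, which costs $n\log\frac{1}{1-w_n}\le nh_2(w_n)$. This last inequality holds exactly when $w_n\le\frac12$, which is where that hypothesis enters. Combining with Theorem~\ref{thm:SW} gives $3h_2(w_n)+w_n\log(d^2-1)+w_n\log\frac{d}{\lambda_{\min}}\le 3h_2(w_n)+2w_n\log\frac{d}{\lambda_{\min}}$, using $\lambda_{\min}\le 1/d$. This implies the statement.
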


As a simple corollary of Proposition~\ref{prop:3.1}, we immediately derive the continuity of the regularised relative entropy of resource under (CC) and (REP).

\begin{cor}\label{cor:3.1} Assume that a sequence of families of states $(\mathcal{S}^{(n)})_{n\geq 1}$ satisfies (CC) and (REP) among the conditions listed in Section~\ref{sec:intro_GQSL}. Then, if $(\rho_k)_{k\geq 1}\subseteq \mathcal{D}(\mathcal{H})$ converges to $\rho$, i.e.\ $\displaystyle{\lim_{k\to \infty} \|\rho_k-\rho\|_1=0}$, then
\bb
    \lim_{k\to\infty}D^{\infty}(\rho_k\|\mathcal{S})=D^\infty(\rho\|\mathcal{S}).
\ee
\end{cor}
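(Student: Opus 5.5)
We apply Proposition~\ref{prop:3.1} to the pair $\rho_k^{\otimes n}$ and $\rho^{\otimes n}$ for fixed $k$ and arbitrary $n$, and then let $n\to\infty$ before letting $k\to\infty$. The key input is the identity \eqref{eq:W_tr}: the normalised Wasserstein distance between the two i.i.d.\ states is independent of $n$,
\bb
    w_n^{(k)} \coloneqq \frac 1n \big\|\rho_k^{\otimes n}-\rho^{\otimes n}\big\|_{W_1} = \frac12 \|\rho_k-\rho\|_1 \eqqcolon \delta_k .
\ee
Since $\delta_k\to 0$, there is $k_0$ such that $\delta_k\le \frac12$ for all $k\ge k_0$. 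For such $k$ the hypothesis of Proposition~\ref{prop:3.1} holds for every $n$.

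Under (CC) and (REP), Proposition~\ref{prop:3.1} then yields, for all $n\ge 1$ and all $k\ge k_0$,
\bb
    \left|\frac1n \rel{D}{\rho_k^{\otimes n}}{\mathcal S^{(n)}} - \frac1n \rel{D}{\rho^{\otimes n}}{\mathcal S^{(n)}}\right| \le 3h_2(\delta_k) + 6\delta_k \log\frac{d}{\lambda_{\min}(\sigma_{\rm full})} .
\ee
The right-hand side does not depend on $n$. Taking $n\to\infty$ therefore gives the same bound for $\big|D^\infty(\rho_k\|\mathcal S) - D^\infty(\rho\|\mathcal S)\big|$. Since $h_2(x)\to 0$ and $x\log(\cdot)\to0$ as $x\to 0^+$, and $\lambda_{\min}(\sigma_{\rm full})>0$ is fixed, letting $k\to\infty$ proves the claim.

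The only real obstacle is whether the limits $D^\infty(\rho_k\|\mathcal S)$ and $D^\infty(\rho\|\mathcal S)$ exist. The paper obtains their existence through Fekete's lemma using (TEN), which is not among the assumptions (CC) and (REP). To avoid relying on it, I would run the argument separately with $\limsup_n$ and with $\liminf_n$ in place of $\lim_n$. The uniform-in-$n$ bound above passes to both, so both $\limsup_n$ and $\liminf_n$ of $\frac1n \rel{D}{\rho_k^{\otimes n}}{\mathcal S^{(n)}}$ converge to the corresponding quantities for $\rho$. Whenever the limits exist, for instance when (TEN) also holds, this gives exactly the stated convergence.
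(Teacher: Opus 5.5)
Your proposal is correct and is essentially the paper's proof: you use \eqref{eq:W_tr} to get the $n$-independent normalised distance $\frac12\|\rho_k-\rho\|_1$, apply Proposition~\ref{prop:3.1} uniformly in $n$, and let $n\to\infty$ and then $k\to\infty$. You also make explicit two points the paper leaves implicit. First, you check that $\delta_k\le\frac12$ for large $k$, which is the hypothesis Proposition~\ref{prop:3.1} needs. Second, you pass the uniform bound through $\limsup_n$ and $\liminf_n$, because existence of the limit defining $D^\infty$ relies on (TEN), not on (CC) and (REP).
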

\begin{proof}
    By~\eqref{eq:W_tr} we have
    \bb
        w_k \coloneqq \frac 1n\|\rho_k^{\otimes n}-\rho^{\otimes n}\|_{W_1}=\frac 12 \|\rho_k-\rho\|_1 = o(1)\quad \text{for} \quad k\to \infty,
    \ee
    whence, by Proposition~\ref{prop:3.1},
    \bb
        \lim_{k\to\infty}\big|D^\infty(\rho_k\|\mathcal{S})-D^\infty(\rho\|\mathcal S)\big|&=\lim_{k\to\infty}\lim_{n\to\infty}\left|\frac{1}{n}D(\rho_k^{\otimes n}\|\mathcal{S}^{(n)})- \frac{1}{n}D(\rho^{\otimes n}\|\mathcal{S}^{(n)})\right|\\
        &\leq \lim_{k\to\infty} \left(3h_2(w_k)+6w_k\log \frac d{\lambda_{\min}(\sigma_{\rm full})}\right)=0,
    \ee
    which concludes the proof.
\end{proof}

\subsection{Almost i.i.d.\ sources}

In this work, we are going to study the behaviour of Wasserstein almost i.i.d.\ sources~\cite{almost_iid} in composite asymmetric hypothesis testing. 

\begin{Def}
    Let $\rho\in\mathcal{D}(\mathcal{H})$ be a quantum state on a  Hilbert space $\mathcal{H}$. A sequence $(\rho_n)_n$ of states $\rho_n\in\mathcal{D}(\mathcal{H}^{\otimes n})$ is said a \emph{Wasserstein almost i.i.d.\ source along $\rho$} whenever  
    \bb
        \lim_{n\to\infty}\frac 1 n \|\rho_n-\rho^{\otimes n}\|_{W_1}=0.
    \ee
    We denote by $\pazocal{A}_{\rho}^{W_1}$ the set of all the Wasserstein almost i.i.d.\ sources along $\rho$.
\end{Def}

There are other two notions of almost i.i.d.\ sources that were recently introduced in the literature, but which we will not consider in this paper. We recall them in order to make a brief comparison with Wasserstein almost i.i.d.\ sources. Let $\rho \in \mathcal{D}(\mathcal{H}_A)$ be a reference state, and let $(\rho_n)_{n\geq 1}$ be a sequence of states with $\rho_n \in \mathcal{D}(\mathcal{H}_A^{\otimes n})$.
\begin{itemize}
    \item \textbf{Constant-size defect almost i.i.d.\ sources}~\cite{Brandao2010, RennerPhD}. Let \[\pazocal{V}^n_r(\mathcal{H}_{AE}, \ket{\psi})\coloneqq\{U_\pi(\ket{\psi}^{\otimes n-r} \otimes \ket{\omega^{(r)}}): \pi \in S_n, \ket{\omega^{(r)}} \in \mathcal{H}^{\otimes r}_{AE}\}\quad\text{for}\quad 0\le r\leq n.\] We say that $(\rho_n)_n$ is a \emph{constant-size defect almost i.i.d.\ source along} $\rho$ whenever there exist a fixed, independent of $n$ integer $r$, a purification $\ket{\psi_\rho}_{AE}$ of $\rho$ and, a permutation-invariant purification $\ket{\Gamma_n}_{A^n E^n}$ of $\rho_n^{A^n}$ such that $\ket{\Gamma_n}_{A^n E^n}\in \mathrm{span}\,\pazocal{V}^n_{r}(\mathcal{H}_{AE},\ket{\psi_\rho}_{AE})$ for every $n\geq r$.
    \item \textbf{Mazzola--Sutter--Renner (MSR) almost i.i.d.\ sources}~\cite{Mazzola_2026, almost_iid}. For $r\leq n$, let $\pazocal{V}^n_r(\mathcal{H}_{AE}, \ket{\psi})$ as above. We say that $(\rho_n)_n$ is an \emph{MSR almost i.i.d.\ source along} $\rho$ whenever there exist a purification $\ket{\psi_\rho}_{AE}$ of $\rho$ and an extension $\rho_n^{A^n E^n}$ of $\rho_n^{A^n}$ such that $\supp(\rho_n^{A^n E^n}) \subseteq  \mathrm{span}\,\pazocal{V}^n_{r_n}(\mathcal{H}_{AE},\ket{\psi_\rho}_{AE})$ for a sequence $(r_n)_n$ of integers $r_n\leq n$ satisfying $\displaystyle{\lim_{n\to\infty}\tfrac {r_n}n=0}$.
    \item \textbf{Weakly almost i.i.d.\ sources}~\cite{almost_iid}. We say that $(\rho_n)_n$ is a \emph{weakly almost i.i.d.\ source along $\rho$} if  
    \[
        \limsup_{n\to \infty} \EE{\substack{I\subseteq [n]\\ |I|=k}} \big\|(\rho_n)_I - \rho^{\otimes I}\big\|_1 = 0\qquad \forall k \in \mathbb{N}_+ ,
    \]
    where $I$ is a uniformly random subset of $[n]$ with size $k$.
\end{itemize}

The reason why we do not introduce a further notion of almost i.i.d.\ sources according to the trace distance, namely
\bb
    \lim_{n\to\infty} \left\|\rho_n-\rho^{\otimes n}\right\|_1 = 0\,,
\ee
is that such a notion of convergence, on the one hand, is way more restrictive and less physically motivated than the metrisation in terms of the $W_1$ distance~\cite{almost_iid, a_tale}; on the other hand, it does not necessarily provide better guarantees on the robustness of some relevant protocols in quantum Shannon theory~\cite{a_tale}.

It is possible to prove a hierarchical structure among the three notions of almost i.i.d.\ sources above defined. Clearly, every almost-power source is a MSR source. More generally, calling $\pazocal{A}_\rho^{\rm CSD}$, $\pazocal{A}_\rho^{\rm MSR}$, and $\pazocal{A}_\rho^{w}$ the set of all constant-size defect, MSR and weakly almost i.i.d.\ sources, respectively, we have\cite[Section~3]{almost_iid}
\bb
    \pazocal A_\rho^{\rm CSD}\subseteq\pazocal A_\rho^{\rm MSR} \subseteq \pazocal A_\rho^{W_1} \subseteq \pazocal A_\rho^{w}.
\ee

For our purposes, Wasserstein almost i.i.d.\ sources represent the most natural setting to consider.
\begin{itemize}
    \item With respect to the set of MSR almost i.i.d. sources (for which a generalised quantum Stein's lemma was proved in~\cite{Mazzola_2026_2}), $W_1$ sources provide a stronger guarantee on the robustness of composite asymmetric hypothesis testing. The $W_1$ goes beyond the sharp constraint on the tail of defects implied in the definition of MSR sources~\cite[Section~3.3.2]{almost_iid} and allows for a broader class of applications of the robust GQSL as a crucial technical tool in hypothesis testing problems (see Section~\ref{sec:applications}).
    \item Differently from the extension from MSR to $W_1$ sources, the relaxation to weakly almost i.i.d.\ sources would provide a less substantial generalisation of the robustness of composite hypothesis testing. Indeed, the only difference between weakly and Wasserstein almost i.i.d.\ sources is the asymptotic continuity of the entropy~\cite[Section~3.4]{almost_iid}, namely $\displaystyle{\lim_{n\to\infty}\tfrac{S(\rho_n)}{n}=S(\rho)}$. Furthermore, any hypothetical generalisation to the case of weak sources cannot be proved together with a converse inequality, as the second counterexample in~\cite[Remark~15]{a_tale} shows.
\end{itemize}

\subsection{Comparison with previous works}

\begin{table}[t]
    \centering 
    \begin{talltblr}[
     note{a} = {\small This property follows from the other assumptions. },
     note{b} = {\small Note that, by definition, (FR) immediately follows from (REP).},
     caption = {Comparison of the formulations of the generalised quantum Stein's lemma available in the literature and in our work. Note that the first five conditions, from (CC) to (FR), are known as Brand\~ao--Plenio axioms~\cite{Brandao2010}.},
     label = table,
    ]{
        width = \textwidth, 
        colspec = {X[2.8,m,c] X[1.2,m,c] X[0.55,m,c] X[0.55,m,c] X[0.55,m,c] X[0.55,m,c] X[0.55,m,c] X[0.55,m,c]},
    }
        \toprule
         & \vspace{-0.7em}\textbf{Null} & \SetCell[c=6]{c}{\textbf{Alternative hypothesis}} \\
        \textbf{Work} & \textbf{hypothesis} & (CC) & (TR) & (TEN) & (PER) & (FR) & (REP) \\
        \midrule
        Hayashi--Yamasaki 
        ~\cite{Hayashi2025} & i.i.d. & \ding{109} &  & \ding{109} & & \ding{109} & \\
        Lami ~\cite{Lami_2025} & constant-size defect & \ding{109} & \ding{109} & \ding{109} & \ding{109} & \ding{109} & (\ding{109}\TblrNote{a}\;\,) \\
        Mazzola--Sutter--Renner~\cite{Mazzola_2026_2} & MSR & \ding{109} & \ding{109} & \ding{109} & \ding{109} & \ding{109} & (\ding{109}\TblrNote{a}\;\,) \\
        This work (Theorem~\ref{thm:W1_GQSL}) & $W_1$ & \ding{109} &  & \ding{109} &  & (\ding{109}\TblrNote{b}\;\,) & \ding{109} \\
        \bottomrule
    \end{talltblr}
\end{table}

As summarised in Table~\ref{table}, previous literature covers the GQSL in two directions. On one hand, Hayashi--Yamasaki's work~\cite{Hayashi2025} establishes the GQSL under minimal assumptions on the alternative hypothesis (CC, TEN, FR), while the null hypothesis is always restricted to an i.i.d.\ source. On the other hand, Lami~\cite{Lami_2025} and Mazzola--Sutter--Renner~\cite{Mazzola_2026_2} extend the null hypothesis to constant-size defect states and MSR almost i.i.d.\ states, respectively, under more restrictive assumptions on the alternative hypothesis, namely, the five Brand\~ao--Plenio axioms~\cite{Brandao2010}.\smallskip

Theorem~\ref{thm:W1_GQSL} \ludo{combines the best of both worlds, so to speak: it} establishes the GQSL under a stronger request of robustness -- namely, the stability of the Stein's exponent under Wasserstein almost i.i.d.\ perturbations of the null hypothesis -- while only requiring the composite alternative set to satisfy Assumptions~\ref{ass:standing-Sn}, which are convexity and closedness (CC), closure under tensor product (TR), and stability under the action of the replacer channel on any site by a full-rank state $\sigma_{\rm full}\in \mathcal{S}^{(1)}$ (REP) -- which, by definition, automatically implies (FR). As such, our theorem simultaneously extends the admissible null source and weakens the structure assumptions on alternative family among all the robust settings explored so far.
\smallskip

More specifically, our main result in Theorem~\ref{thm:W1_GQSL} has two operationally distinct parts:
\begin{enumerate}
    \item[$\bullet$] \textbf{Individual sources statement.} Given a specific $W_1$ almost i.i.d.\ state $(\rho_n)_{n\geq 1}$ along $\rho$, and the test can be constructed dependent on $(\rho_n)_{n\geq 1}$. Our result shows that the optimal type II exponent of such a task coincides \ludo{with that of the GQSL}, namely, $D^{\infty}(\rho\|\mathcal S)$. Therefore, replacing the exact i.i.d.\ null source by an arbitrary $W_1$ almost i.i.d.\ source leaves the exponent unchanged, on both the achievability and converse.
    \item[$\bullet$] \textbf{Existence of universal tests.} Suppose that, under the null hypothesis, we only have the guarantee that the sequence of states $(\rho_n)_{n\geq 1}$ is any arbitrary $W_1$ almost i.i.d.\ source along a fixed state $\rho$,\footnote{With uniform rate of converge to $\rho^{\otimes n}$ in normalized $W_1$ distance  (see Definition~\ref{def:W1-equiconvergent}).} while the exact nature of the source, i.e.\ the specific sequence $(\rho_n)_{n\geq 1}$ is unknown, as one would expect in any noisy setting. We prove that such a task still attains the exponent $D^{\infty}(\rho\|\mathcal S)$. The resulting test might dependent on the null set $\mathcal F^{(n)}$, alternative set $\mathcal S^{(n)}$, and the type I threshold $\epsilon \in (0,1)$, but not on which particular element inside $\mathcal F^{(n)}$. This provides a sense of the test is universal within an equiconvergent family.
\end{enumerate}
We note that this simultaneous extension is not a direct consequence of the i.i.d.\ proof. The information spectrum of the null state $(\rho_n)_n$ is no longer govern by the standard i.i.d.\ asymptotic equipartition property, and the global pinching is not contractive under normalized $W_1$ distance. Therefore, a new information spectrum and block-converse statements that are stable under $W_1$ almost i.i.d.\ is required.\smallskip

Furthermore, the propositions used in the proof of our main result in order to show Wasserstein stability may also be independent interests (Proposition~\ref{prop:info-spectrum-W1},~\ref{prop:pinching-extension} and~\ref{prop:padded-block-product-strong-converse}). In particular, Proposition~\ref{prop:info-spectrum-W1} establishes a strong generalisation of~\cite[Proposition 21]{almost_iid}. The latter states that, for any arbitrary $W_1$ almost i.i.d.\ source $(\rho_n)_n$ along $\rho$, the the random variable $-\frac 1n\log \rho_n$ converges in expectation (with respect to the spectral measure of $\rho_n$, see Section~\ref{sec:info_spec}) to the von Neumann entropy of $\rho$, namely
\bb
    \lim_{n\to\infty}\EE{\rho_n}\Big[\!-\frac{1}{n}\log\rho_n\Big]\coloneqq \lim_{n\to\infty}-\frac{1}{n}\Tr[\rho_n\log\rho_n]=\lim_{n\to\infty}\frac{S(\rho_n)}n=S(\rho).
\ee
Proposition~\ref{prop:info-spectrum-W1} strengthens the previous result to the convergence in probability (see Section~\ref{sec:info_spec}):
\bb
    -\frac 1n \log \rho_n \xrightarrow{\mathsf P_{\rho_n}} S(\rho)\qquad \text{i.e.}\qquad \lim_{n\to\infty}\Tr\left[\rho_n \left\{
        \left\lvert-\tfrac1n\log \rho_n-S(\rho)\right\rvert>\delta
    \right\}\right]=0
\ee
for all $\delta>0$.\smallskip

Beyond the main robustness theorem, moreover, in Section~\ref{sec:applications} we give two relevant applications of Theorem~\ref{thm:W1_GQSL} to hypothesis testing problems with composite null and alternative hypotheses, which are summarized in Table~\ref{table_2}. 
\begin{enumerate}
    \item The first application is a revisited proof -- yielding a strict generalisation -- of the compound i.i.d.\ null hypothesis setting considered in~\cite[Theorem 1]{lami2026universalquantumresourcedistillation}. Indeed, our Theorem~\ref{thm:iid} provides the same asymptotic exponent under Assumption~\ref{ass:standing-Sn}, thereby removing the assumptions of closure under partial traces (TR) and permutations (PER) imposed in the previous formulation.  
    In addition, Theorem~\ref{thm:almost_iid} extends this result to a compound $W_1$ almost i.i.d.\ null hypothesis with the base set $\pazocal R_1$.
    \item The second application is a complementary result to Theorem~\ref{thm:iid}, where we consider an arbitrarily varying null hypothesis, namely, the single site may vary across the tensor factors. Under Assumption~\ref{ass:standing-Sn}, and under the additional closure of the alternative sets under permutation twirl, Theorem~\ref{thm:av} allows arbitrary tensor products of
    states from $\pazocal R_1$ and their convex mixtures (see Eq.~\eqref{eq:arbitrary_product}). In contrast to the compound setting, the resulting exponent is governed by $\inf_{\rho\in \co\!(\pazocal R_1)} D^\infty(\rho\|\mathcal S)$, where the optimisation over the convex hull reflects the freedom to vary the composition of the product across sites.
\end{enumerate}

\newpage
\section{A Wasserstein-robust generalised quantum Stein's lemma}\label{sec:GQSL}

\begin{boxedassumpt}{}
\begin{assumption}[(Conditions on the alternative sets)]
\label{ass:standing-Sn}
The family \((\mathcal S^{(n)})_{n\ge1}\)  satisfies the following
conditions.
\begin{enumerate}[label=(A\arabic*)]
    \item For each $n$, \(\mathcal S^{(n)}\) is a convex, closed
    subset of the state space on \(\mathcal H^{\otimes n}\).

    \item The family is closed under tensor products: if
    \(\sigma_n\in\mathcal S^{(n)}\) and
    \(\omega_m\in\mathcal S^{(m)}\), then
    \bb
        \sigma_n\otimes\omega_m\in \mathcal S^{(n+m)}.
    \ee
    \item There exists a full-rank state \(\sigma_{\rm full}\in\mathcal S^{(1)}\) such that, for every $n$, every \(\sigma_n\in\mathcal S^{(n)}\), and
every site \(i\in\{1,\ldots,n\}\),
\begin{align}
    \pazocal R_i^{(\sigma_{\rm full})}(\sigma_n)\in \mathcal S^{(n)}.
    \label{eq:replacer-stability}
\end{align}
Here \(\pazocal R_i^{(\sigma_{\rm full})}\) denotes the channel that replaces the
$i$-th tensor factor by $\sigma_{\rm full}$ and leaves all other tensor
factors unchanged.
\end{enumerate}
\end{assumption}
\end{boxedassumpt}

Note that we do not require permutational symmetry and partial trace closure.

 Whenever some source of noise acts on the ideal i.i.d.\ state $\rho^{\otimes n}$, we typically do not have a perfect characterisation of the perturbed state $\rho_n$, but we only assume that it is $\delta_n$-close to $\rho^{\otimes n}$, say, in normalised Wasserstein distance. For instance, we may expect our perturbed state $\rho_n$ to belong to the set
\bb
    \mathcal{F}^{(n)}\coloneqq\left\{\rho_n\in\mathcal{D}(\mathcal{H}^{\otimes n}): \frac{1}{n}\|\rho_n-\rho^{\otimes n}\|_{W_1}\leq \delta_n\right\}.
\ee
Therefore, for practical testing purposes, it is fundamental to understand whether we can design a sequence of tests $(E_n)_n$ that perform at least as well as in the idealised setting (i.e.\ as in Theorem~\ref{thm:gqsl_iid}) under the convergence constraint $\delta_n\to 0$, for all sequences $\rho_n\in \mathcal{F}^{(n)}$ and $\sigma_n\in\mathcal{S}^{(n)}$, namely
\bb\label{eq:univ}
    \liminf_{n\to\infty}-\frac{1}{n}\log \Tr[E_n\sigma_n]&\geq D^\infty(\rho\|\mathcal{S}),\\
    \Tr[E_n\rho_n]&\geq 1-\epsilon & \forall n\geq 1.
\ee
Formally speaking, given a general sequence of null hypotheses $(\mathcal F^{(n)})_{n\geq 1}$ that are \emph{convex}, if we are able to prove that
\bb
  \liminf_{n\to\infty}\frac 1n
D_H^\epsilon\left(\mathcal F^{(n)}\middle\|\mathcal S^{(n)}\right)\eqt{?} D^{\infty}(\rho\| \mathcal{S}),
\ee
then a sequence of test $(E_n)_n$ -- which we call universal -- satisfying~\eqref{eq:univ} exists due to Lemma~\ref{lem:minmax_fang}, because of the very definition of $\beta_\epsilon$ in~\eqref{eq:beta-composite-null-and-alt}. For our purposes, we need the following definition.

\begin{Def}\label{def:W1-equiconvergent} A sequence \((\mathcal F^{(n)})_{n\ge1}\) of nonempty
subsets of the state space on \(\mathcal H^{\otimes n}\) is said to be a \emph{Wasserstein equiconvergent} source along $\rho$ if 
\bb
\lim_{n\to\infty}
\sup_{\rho_n\in\mathcal F^{(n)}}\frac 1n\|\rho_n-\rho^{\otimes n}\|_{W_1}=0.
\ee
\end{Def}

When a family $(\mathcal{F}_n)_n$ constitutes an equiconvergent source along a state $\rho$, we call \emph{$W_1$-robust} tests the corresponding sequence of universal tests $E_n$, as they allow to achieve the optimal type II error exponent under Wasserstein almost i.i.d.\ perturbations that are uniformly small at every $n$.

\ludo{We are now ready to state the main result of this work:}

\begin{boxedthm}{}
\begin{thm}[(Generalised quantum Stein's lemma for $W_1$ almost i.i.d.\ sources)]
\label{thm:W1_GQSL}
Suppose that the sequence of alternative hypotheses $\mathcal S=(\mathcal{S}^{(n)})_{n\geq 1}$, with $\mathcal{S}^{(n)}\subseteq\mathcal{D}(\mathcal{H}^{\otimes n})$, satisfies Assumption~\ref{ass:standing-Sn}. 
\begin{enumerate}
\item \emph{(Existence of $W_1$-robust tests).} If \((\mathcal F^{(n)})_{n\ge1}\) is a sequence of compact, convex families that forms a Wasserstein equiconvergent source along $\rho$, then
\begin{align}
\lim_{n\to\infty}\frac 1n
\rel{D_H^\epsilon}{\mathcal F^{(n)}}{\mathcal S^{(n)}} =D^{\infty}(\rho \|\mathcal S)\qquad \forall \,\epsilon\in (0,1)\, .
\label{eq:W1-stable-main-composite-null}
\end{align}
\item \emph{(Robustness for individual sources).} In particular, if \((\rho_n)_n\) is an individual Wasserstein almost i.i.d.\ source along $\rho$, then
\begin{align}
\lim_{n\to\infty}\frac 1n
\rel{D_H^\epsilon}{\rho_n}{\mathcal S^{(n)}} = 
D^{\infty}(\rho \|\mathcal S)\qquad \forall \,\epsilon\in (0,1)\, .
\label{eq:W1-stable-main}
\end{align}
\end{enumerate}
\end{thm}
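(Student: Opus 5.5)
The plan is to prove part 1; part 2 then follows by taking $\mathcal F^{(n)}=\{\rho_n\}$, which is trivially compact, convex and equiconvergent. By Lemma~\ref{lem:minmax_fang} it suffices to show two things. The first is the \emph{converse}
\[
\limsup_n \tfrac1n \sup_{\rho_n\in\mathcal F^{(n)}}\rel{D_H^\epsilon}{\rho_n}{\mathcal S^{(n)}}\le D^\infty(\rho\|\mathcal S).
\]
This bounds the infimum defining $D_H^\epsilon(\mathcal F^{(n)}\|\mathcal S^{(n)})$ from above. The second is the \emph{achievability}
\[
\liminf_n \tfrac1n \inf_{\rho_n\in\mathcal F^{(n)}}\rel{D_H^\epsilon}{\rho_n}{\mathcal S^{(n)}}\ge D^\infty(\rho\|\mathcal S).
\]
It holds uniformly over the family, because $D_H^\epsilon(\mathcal F\|\mathcal S)=\inf_{\rho_n}\inf_{\sigma_n}$. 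Uniformity is handled by passing to an arbitrary sequence $\rho_n\in\mathcal F^{(n)}$ (nearly) attaining the infimum. Such a sequence is an individual Wasserstein almost i.i.d.\ source with $w_n\coloneqq\frac1n\|\rho_n-\rho^{\otimes n}\|_{W_1}\to0$. So both directions reduce to statements about individual sources, provided every estimate depends on $\rho_n$ only through $w_n$.

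\textbf{Converse.} Use the weak-converse-type bound
\[
(1-\epsilon)\,D_H^\epsilon(\rho_n\|\sigma_n)\le D(\rho_n\|\sigma_n)+h_2(\epsilon).
\]
Minimising over $\sigma_n$ and applying Proposition~\ref{prop:3.1} gives
\[
\tfrac1n D(\rho_n\|\mathcal S^{(n)})\le \tfrac1n D(\rho^{\otimes n}\|\mathcal S^{(n)})+o(1)\to D^\infty(\rho\|\mathcal S).
\]
This yields only the bound $D^\infty/(1-\epsilon)$, so a strong converse is needed. I would blockify: fix $k$, write $n=mk+r$, and take $\sigma=\sigma_k^{*\otimes m}\otimes\sigma_{\rm full}^{\otimes r}\in\mathcal S^{(n)}$ by (A2), with $\sigma_k^*$ optimal for $\rho^{\otimes k}$. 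The strong converse for $\rho_n$ against this product state needs a Rényi/information-spectrum bound. Concretely, I need $\Tr[\rho_n\{\rho_n\ge e^{n(D_k/k+\delta)}\sigma\}]\to0$, which would follow if $-\frac1n\log\rho_n\to S(\rho)$ in probability. For the cross term $\frac1n\log\sigma$, which is a sum of block observables, one can use $W_1$ closeness: $W_1$-Lipschitz observables have expectation close to the i.i.d.\ value, and variance control comes from the pinching. This is the role of the announced Propositions~\ref{prop:info-spectrum-W1} and \ref{prop:padded-block-product-strong-converse}, which I would prove first. The info-spectrum convergence would be proved by combining $S(\rho_n)/n\to S(\rho)$ (Theorem~\ref{thm:SW}) with a one-sided tail bound via pinching in the eigenbasis of $\rho^{\otimes n}$.

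\textbf{Achievability.} Here I would follow the Hayashi--Yamasaki route, which needs only (CC), (TEN) and (FR), and replace each i.i.d.\ ingredient by its $W_1$-stable analogue. Iteratively, suppose the exponent is $r<D^\infty$. Build tests $\{\mathcal P_\sigma(\rho_n)\ge e^{nr}\sigma\}$ with pinching against a universal permutation-type state. Proposition~\ref{prop:pinching-extension} should guarantee that the pinched state remains information-spectrum close to $\rho_n$ up to $O(\log n)$ terms. The continuity of $D^\infty$ (Corollary~\ref{cor:3.1}) and Proposition~\ref{prop:3.1} with (A3) would transfer the resource-distance estimates from $\rho^{\otimes n}$ to $\rho_n$.

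The main obstacle I expect is the achievability step. The Hayashi--Yamasaki argument uses i.i.d.\ AEP and typicality of the null in a way that is not $W_1$-contractive, since global pinching increases $W_1$. My first attempt would avoid pinching $\rho_n$ altogether. Instead I would bound $\Tr[\rho_n E]$ for the i.i.d.-designed test $E$ via a $W_1$ gentle-measurement estimate, after smoothing $E$ to be a local, i.e.\ Lipschitz, observable. If that fails, I would fall back on the information-spectrum convergence in probability to control the type I error directly.
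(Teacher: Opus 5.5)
The genuine gap is the achievability half, which the proposal does not actually argue: each route you list runs into an obstruction you leave open.

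Pinching against a universal permutation-invariant state only dominates permutation-invariant states. Assumption~\ref{ass:standing-Sn} contains neither (PER) nor twirl-closure, so $\sigma_n$ cannot be symmetrised inside $\mathcal S^{(n)}$.

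A $W_1$ ``gentle measurement'' needs $\|E\|_L\,\|\rho_n-\rho^{\otimes n}\|_{W_1}\to0$. For a projective test this essentially means fattening its range by $\Theta(n)$ sites, as in Lemma~\ref{lem:quantum-blowing-up}. Nothing in the proposal controls the type-II error of the fattened test uniformly over $\mathcal S^{(n)}$.

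Convergence in probability of $-\frac1n\log\rho_n$ says nothing about the \emph{relative} spectrum of $\rho_n$ against a composite alternative. That relative spectrum is precisely the hard part of the GQSL.

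The missing idea is the paper's one-shot gap contraction (Lemma~\ref{lem:one-step}), used inside a proof by contradiction.
Assume $R^\epsilon\coloneqq\liminf_n\frac1n D_H^\epsilon(\rho_n\|\mathcal S^{(n)})<D^\infty(\rho\|\mathcal S)$. Take $\sigma_n'=\frac13\big(\sigma_n^*+\bar\sigma_{n,\delta}^{(m)}+\sigma_{\rm full}^{\otimes n}\big)\in\mathcal S^{(n)}$. Here $\sigma_n^*$ is the worst-case alternative for $\rho_n$, and $\bar\sigma_{n,\delta}^{(m)}$ is a full-rank padded block product of rate at most $D^\infty(\rho\|\mathcal S)+\epsilon_0$.
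Log-round $\sigma_n'$ and pinch $\rho_n$ in the at most $n+1$ eigenspaces of the rounded state. Then the $D_H^\epsilon$-rate is at most $R^\epsilon$, because $\widetilde\sigma_n'\ge\frac{e^{-\kappa}}{3}\sigma_n^*$. The $D_H^{1-\epsilon_1}$-rate is at most $D^\infty(\rho\|\mathcal S)+\epsilon_0$ by Proposition~\ref{prop:padded-block-product-strong-converse}, so the $W_1$-stable block strong converse is needed for achievability too.
A three-region split of the log-likelihood ratio then gives $\liminf_n\frac1n D(\rho_n\|\sigma_n')\le R^\epsilon+(1-\eta)\big(D^\infty(\rho\|\mathcal S)-R^\epsilon\big)<D^\infty(\rho\|\mathcal S)$. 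This contradicts Proposition~\ref{prop:3.1}, which under (A3) gives $\frac1n D(\rho_n\|\mathcal S^{(n)})\to D^\infty(\rho\|\mathcal S)$. All testing and pinching act on $\rho_n$ itself; $\rho^{\otimes n}$ re-enters only after minimising over $\mathcal S^{(n)}$.

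The rest of your proposal is closer to the paper. Your reduction of part~1 to individual sources is exactly the paper's. The proviso that estimates depend on $\rho_n$ only through $w_n$ is unnecessary, since a near-optimal $\rho_n^*\in\mathcal F^{(n)}$ is itself an individual Wasserstein almost i.i.d.\ source. Your converse follows the paper's route of a padded block alternative plus a $W_1$-stable strong converse, but three points need repair.

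First, $\sigma_k^*$ must be made full rank, e.g.\ replaced by $(1-\delta)\sigma_k^*+\delta\,\sigma_{\rm full}^{\otimes k}\in\mathcal S^{(k)}$ (Lemma~\ref{lem:full-rank-approx-S-m}). Otherwise a single defect can make $\rho_n$ orthogonal to $\sigma$, and the resulting bound $D_H^\epsilon(\rho_n\|\sigma)=+\infty$ is vacuous.

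Second, knowing that $-\frac1n\log\rho_n$ and $-\frac1n\log\sigma$ each concentrate under $\rho_n$ does not control $\Tr[\rho_n\{\rho_n\ge e^{nt}\sigma\}]$ when they do not commute. Pinching $\rho_n$ against $\sigma$ alone also does not preserve $\Tr[A\rho_n]$ for a given test $A$.
The paper handles this in three moves. It reduces to projector tests $A_n$ (Lemma~\ref{lem:projection-to-DH}). It binary-pinches and log-rounds $\sigma$ relative to $A_n$ (Lemma~\ref{lem:binary-rounding}). It then pinches $\rho_n$ in the common refinement, which has $O(n)$ outcomes (so Proposition~\ref{prop:pinching-extension} applies) and leaves $\Tr[A_n\rho_n]$ unchanged.
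The concentration of the block observable comes from the marginals, not from pinching: after tracing out the padding, the blocked source is weakly almost i.i.d.\ along $\rho^{\otimes k}$.

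Third, for Proposition~\ref{prop:info-spectrum-W1}, convergence in mean plus one tail would indeed suffice. That tail, however, needs a genuinely $W_1$-specific input, such as the paper's fattening and blowing-up estimates (Lemmas~\ref{lem:distance-cutoffs-Lipschitz} and~\ref{lem:quantum-blowing-up}). The typical subspace of $\rho^{\otimes n}$ can carry no $\rho_n$-weight at all: take $\rho=|0\rangle\langle0|$ and $\rho_n=|1\rangle\langle1|\otimes\rho^{\otimes(n-1)}$. So working in the eigenbasis of $\rho^{\otimes n}$ alone does not suffice.
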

\end{boxedthm}

\begin{proof}
Section~\ref{sec:big_proof} is devoted to the proof of Theorem~\ref{thm:W1_GQSL}. In Section~\ref{sec:info_spec} we establish the core technical statements needed in the proof: the extension of the information-spectrum results used in~\cite{Hayashi2025} to Wasserstein almost i.i.d.\ sources. In Section~\ref{sec:individual} we prove the first part of Theorem~\ref{thm:W1_GQSL}, i.e.\ the robustness of the generalised quantum Stein's lemma for individual sources. Finally in Section~\ref{sec:universality} we complete our proof by tackling the universal extension of the previous result for equiconvergent sources.
\end{proof}

\section{Applications}\label{sec:applications}

\begin{table}[t]
    \centering 
    \begin{talltblr}[
     note{a} = {\small This property follows from the other assumptions.},
     note{b} = {\small Note that, by definition, (FR) immediately follows from (REP). },
     note{c} = {\small More precisely, we only assume closure under permutation twirl.},
     caption = {Extensions of the generalised quantum Stein's lemma to a composite null hypothesis.},
     label = table_2,
    ]{
        width = \textwidth, 
        colspec = {X[1.1,m,c] X[1.7,m, c] X[0.6,m,c] X[0.6,m,c] X[0.6,m,c] X[0.6,m,c] X[0.6,m,c] X[0.6,m,c]}, 
    }
        \toprule
         & \vspace{-0.7em}\textbf{Null} & \SetCell[c=6]{c}{\textbf{Alternative hypothesis}} \\
        \textbf{Work} & \textbf{hypothesis} & (CC) & (TR) & (TEN) & (PER) & (FR) & (REP) \\
        \midrule
        Lami~\cite{lami2026universalquantumresourcedistillation} & compound i.i.d.  & \ding{109} & \ding{109} & \ding{109} & \ding{109} & \ding{109} & (\ding{109}\TblrNote{a}\;\,) \\
        Theorem~\ref{thm:iid} & compound i.i.d. & \ding{109} &  & \ding{109} &  & (\ding{109}\TblrNote{b}\;\,) & \ding{109} \\
        Theorem~\ref{thm:almost_iid} & compound $W_1$~almost~i.i.d. & \ding{109} &  & \ding{109} & \ding{109}\TblrNote{c} & (\ding{109}\TblrNote{b}\;\,) & \ding{109} \\
        Theorem~\ref{thm:av} & arbitrarily varying & \ding{109} &  & \ding{109} & \ding{109}\TblrNote{c} & (\ding{109}\TblrNote{b}\;\,) & \ding{109} \\
        \bottomrule
    \end{talltblr}
\end{table}

Throughout this section we are going to frequently use the smoothed max-divergence and its connection with the hypothesis testing relative entropy. We therefore begin with a brief recap of the fundamental definitions and properties of $D_{\max}$ and its smoothings. 
\begin{Def}[(Max-relative entropy)] Let $\rho,\sigma \in \mathcal{D}(\mathcal{H})$ be two states. Then, we define their max-relative entropy as~\cite{Datta-alias}
\bb
    D_{\rm max}(\rho\|\sigma) \coloneqq \min \left\{ \lambda : \rho \le \exp[\lambda]\, \sigma \right\}.
\ee
\end{Def}
The max-relative entropy captures the worst-case comparison between $\rho$ and $\sigma$, and it serves as a (loose) upper bound of the relative entropy, which is taking an average.
To bring this upper bound closer to the relative entropy, the smoothed versions were introduced~\cite{RennerPhD, Datta08}:
\bb
    D_{\rm max}^{\epsilon}(\rho\|\sigma) 
    &\coloneqq
    \min_{\rho' :\, \frac 12 \| \rho-\rho' \|_1\le \epsilon} D_{\rm max}(\rho'\|\sigma) \\
     D_{\rm max}^{\epsilon,P}(\rho\|\sigma)
    &\coloneqq
    \min_{\rho' :\, P(\rho,\rho') \le \epsilon} D_{\rm max}(\rho'\|\sigma), \\
\ee
where the first line is defined by the trace distance, and the second one is by the purified distance, i.e.\ $ P(\rho,\sigma):= \sqrt{1- \| \sqrt{\rho}\sqrt{\sigma} \|_1^2}$.
By the Fuchs--van de Graaf inequalities,
\bb
    1-\sqrt{1-P^2(\rho,\sigma)}
    \le 
    \frac12 \| \rho-\sigma \|_1
    \le
    P(\rho,\sigma),
\ee
we have 
\bb\label{eq:Dmax_ineq}
    D_{\rm max}^{\sqrt{\epsilon (2-\epsilon)},P}(\rho\|\sigma)
    \le
    D^\epsilon_{\rm max}(\rho\|\sigma)
    \le
    D_{\rm max}^{\epsilon,P}(\rho\|\sigma).
\ee
The tight relation between $D_{\max}^\epsilon$ and $D_H^\epsilon$ is given by the following result.
\begin{lemma}[{(Weak/strong converse duality between $D^\ve_{\max}$ and $D^{1-\ve}_H$~\cite[Theorem 12]{tight-relations})}] \label{lemma:wsc}
Let $\rho,\sigma\in\mathcal{D}(\mathcal{H})$ be two arbitrary quantum states. Then, for $\epsilon \in (0,1)$, and all $\mu\in (0,\epsilon]$, it holds that
\bb
D_{\max}^{\sqrt{\epsilon}}(\rho\|\sigma)  + \log \frac1\epsilon \leq D^{1-\epsilon}_H(\rho\|\sigma) &\leq D_{\max}^{\epsilon-\mu}(\rho\|\sigma) + \log \frac{1}\mu.
\ee
For commuting $\rho$ and $\sigma$, the previous lower bound can be strengthened as
\bb
D_{\max}^{\epsilon}(\rho\|\sigma)  + \log \frac1\epsilon \leq D^{1-\epsilon}_H(\rho\|\sigma).
\ee
\end{lemma}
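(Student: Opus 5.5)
The upper bound is a short data-processing argument, and the lower bound comes from semidefinite-programming duality for $\beta$. Throughout I use the convention $D^{1-\epsilon}_H(\rho\|\sigma)=-\log\beta_{1-\epsilon}(\rho\|\sigma)$. Here $\beta_{1-\epsilon}(\rho\|\sigma)=\min\{\Tr[E\sigma]:0\le E\le\id,\ \Tr[E\rho]\ge\epsilon\}$, so only tests that accept $\rho$ with probability at least $\epsilon$ are admissible. I would treat the two inequalities separately.

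\emph{Upper bound.} Let $\lambda=D_{\max}^{\epsilon-\mu}(\rho\|\sigma)$, and pick $\rho'$ with $\frac12\|\rho-\rho'\|_1\le\epsilon-\mu$ and $\rho'\le e^{\lambda}\sigma$. Take any admissible test $E$, so $\Tr[E\rho]\ge\epsilon$. Since $0\le E\le\id$, we have $|\Tr[E(\rho-\rho')]|\le\frac12\|\rho-\rho'\|_1$ (both are normalised states), hence $\Tr[E\rho']\ge\mu$. Then $\mu\le\Tr[E\rho']\le e^{\lambda}\Tr[E\sigma]$, so $\Tr[E\sigma]\ge\mu e^{-\lambda}$. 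Minimising over $E$ and taking $-\log$ gives $D_H^{1-\epsilon}(\rho\|\sigma)\le\lambda+\log\frac1\mu$. This step is routine.

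\emph{Lower bound.} Set $\beta:=\beta_{1-\epsilon}(\rho\|\sigma)$. I must produce a state $\rho'$ within trace distance $\sqrt\epsilon$ of $\rho$ satisfying $\rho'\le\frac{\epsilon}{\beta}\sigma$, since $\log\frac{\epsilon}{\beta}=D_H^{1-\epsilon}-\log\frac1\epsilon$.

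\begin{enumerate}
\item Write $\beta$ as a semidefinite program. Its dual is
\[
\beta=\max\{\epsilon t-\Tr Y:\ t\ge0,\ Y\ge0,\ t\rho\le\sigma+Y\},
\]
with strong duality because strict feasibility is clear. Fix optimal $(t,Y)$, so that $\Tr Y=\epsilon t-\beta$.
\item Define the contraction $G:=\sigma^{1/2}(\sigma+Y)^{-1/2}$, which satisfies $\|G\|_\infty\le1$.
\item Set $\tilde\rho:=G\rho G^\dagger$. The constraint $t\rho\le\sigma+Y$ gives $\tilde\rho\le\sigma/t$.
\item Normalise: $\rho'=\tilde\rho/\Tr\tilde\rho$ satisfies $\rho'\le\sigma/(t\Tr\tilde\rho)$.
\item It then remains to show two things: $t\Tr\tilde\rho\ge\beta/\epsilon$ (up to the stated constants), and that $\rho'$ is $\sqrt\epsilon$-close to $\rho$.
\end{enumerate}

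For the closeness I would use a gentle-measurement estimate. The fidelity satisfies $F(\rho,\tilde\rho)\ge|\Tr[\rho G]|^2$, and I would bound $1-\mathrm{Re}\Tr[\rho G]$ in terms of $\Tr[\rho(\sigma+Y)^{-1/2}Y(\sigma+Y)^{-1/2}]\le\Tr Y/t$ using $t\rho\le\sigma+Y$. The resulting purified-distance control would then be converted to trace distance with the Fuchs--van de Graaf inequalities, as in~\eqref{eq:Dmax_ineq}.

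I expect this last balancing act to be the main obstacle: the smoothing parameter must come out exactly as $\sqrt\epsilon$ while the prefactor remains exactly $\epsilon/\beta$. The square root should arise from the fidelity-to-trace-distance conversion. If the direct route does not close, I would instead try choosing $t$ suboptimally, for example taking $t$ so that $\Tr Y$ is a fixed fraction of $\epsilon t$, and trading a little in the exponent for the closeness.

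\emph{Commuting case.} When $\rho$ and $\sigma$ commute the problem is classical. I would take the Neyman--Pearson test, and take $\rho'$ to be the truncation $\min\{\rho,\frac{\epsilon}{\beta}\sigma\}$ pointwise, suitably renormalised or completed. The tail mass removed by the truncation is at most $\epsilon$ directly, with no fidelity step, so the square root disappears and the smoothing parameter is $\epsilon$ instead of $\sqrt\epsilon$.
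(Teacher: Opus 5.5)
The paper gives no proof of this lemma; it quotes it from \cite{tight-relations}, so your proposal has to stand on its own. Your upper bound is correct and complete. The SDP set-up for the lower bound is also sound: the dual, the contraction $G=\sigma^{1/2}(\sigma+Y)^{-1/2}$, and $\tilde\rho\le\sigma/t$. The prefactor you leave open is immediate. With $\epsilon':=\Tr Y/t\le\epsilon$, one has $1-\Tr\tilde\rho=\Tr[\rho(\sigma+Y)^{-1/2}Y(\sigma+Y)^{-1/2}]\le\epsilon'$, hence $t\Tr\tilde\rho\ge t-\Tr Y\ge t-\Tr Y/\epsilon=\beta/\epsilon$.

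The genuine gap is the closeness estimate, which is the entire content of the $\sqrt{\epsilon}$, and the route you sketch does not reach it. First, $\Tr[\rho(\sigma+Y)^{-1/2}Y(\sigma+Y)^{-1/2}]=1-\Tr[\rho G^\dagger G]$ only measures the norm lost under $G$. For a non-Hermitian contraction it says nothing about the rotation, so by itself it cannot bound $1-\mathrm{Re}\Tr[\rho G]$; a unitary $G$ has $G^\dagger G=\id$ and arbitrary $\mathrm{Re}\Tr[\rho G]$. Second, even granting $\mathrm{Re}\Tr[\rho G]\ge 1-\epsilon$, the bound $F(\rho,\rho')\ge F(\rho,\tilde\rho)\ge|\Tr[\rho G]|^2\ge(1-\epsilon)^2$ only gives smoothing $\sqrt{\epsilon(2-\epsilon)}$. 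That is strictly worse than the claimed $\sqrt{\epsilon}$, so the normalisation has to be exploited.

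The missing ingredient is the Powers--St{\o}rmer inequality. With pseudo-inverses, $(\id-G)(\sigma+Y)^{1/2}=(\sigma+Y)^{1/2}-\sigma^{1/2}$. Together with $t\rho\le\sigma+Y$ this gives
\[
\|(\id-G)\sqrt{\rho}\|_2^2\le\tfrac1t\Tr\big[(\sqrt{\sigma+Y}-\sqrt{\sigma})^2\big]\le\tfrac1t\Tr Y=\epsilon'.
\]
Write $a=\mathrm{Re}\Tr[\rho G]$ and $b=\Tr\tilde\rho$. The left-hand side equals $1-2a+b$, so $a\ge(1-\epsilon'+b)/2$. Therefore $F(\rho,\rho')=F(\rho,\tilde\rho)/b\ge a^2/b\ge(1-\epsilon'+b)^2/(4b)\ge 1-\epsilon'$ by AM--GM. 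Hence $\tfrac12\|\rho-\rho'\|_1\le P(\rho,\rho')\le\sqrt{\epsilon}$, and no suboptimal choice of $t$ is needed.

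Alternatively, replace $G$ by the positive contraction $(\sigma+Y)^{-1}\#\sigma$ (matrix geometric mean). It satisfies $G(\sigma+Y)G=\sigma$ and $0\le G\le\id$, and the same trace argument gives $\Tr[\rho G^2]\ge 1-\epsilon'$. The normalised gentle-measurement bound $F\ge(\Tr[\rho G])^2/\Tr[\rho G^2]\ge\Tr[\rho G^2]$ then gives the same conclusion.

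In the commuting case, ``at most $\epsilon$ directly'' also hides the one real step. For $c=\epsilon/\beta$ and $A=\{p>cq\}$ you need $p(A)\le\epsilon$. This holds because otherwise the test $\frac{\epsilon}{p(A)}\id_A$ would have type~II error strictly below $\beta$. You must also complete the truncation rather than renormalise it, since renormalising spoils the constant. Completion is possible because $\beta\le\epsilon$ (use the test $\epsilon\id$), so $cq$ has mass at least $1$.
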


\subsection{Compound i.i.d.\ null hypothesis}
The work~\cite{lami2026universalquantumresourcedistillation} studied an extension of the generalised quantum Stein's lemma to the case of a \emph{compound i.i.d.\ null hypothesis}, with an application to universal resource distillation. For any arbitrary subset $\pazocal R_1\subseteq \mathcal{D}(\mathcal{H})$, they considered the sequence of hypotheses
\bb
\pazocal{R}^{\rm i.i.d.} \coloneqq \big( \pazocal{R}_n^{\rm i.i.d.} \big)_n\, ,\quad \pazocal{R}_n^{\rm i.i.d.} \coloneqq \left\{ \rho^{\otimes n}:\ \rho\in \pazocal{R}_1\right\}.
\label{composite_iid}
\ee
The main theorem of~\cite{lami2026universalquantumresourcedistillation} on the hypothesis testing side is the following.
\begin{thm}[{\cite[Theorem~1]{lami2026universalquantumresourcedistillation}}] \label{black_box_Stein_iid_thm}
Let $\RR_1\subseteq \mathcal{D}(\mathcal{H})$ be any set of states on the finite-dimensional Hilbert space $\HH$. Assume that the sequence $\mathcal{S} = (\mathcal{S}^{(n)})_n$ of alternative hypotheses $\mathcal{S}^{(n)}\subseteq \mathcal{D}(\mathcal{H}^{\otimes n})$ satisfies all five Brand\~{a}o--Plenio axioms. Then
\bb
\lim_{n\to\infty} \frac1n\, \rel{D_H^\e}{\co\!\big(\pazocal{R}_n^{\rm i.i.d.}\big)}{\mathcal{S}^{(n)}} = \inf_{\rho\in \pazocal R_1} D^\infty(\rho \|\mathcal{S})
\label{black_box_Stein_iid_DH}
\ee
for all $\epsilon\in (0,1)$.
\end{thm}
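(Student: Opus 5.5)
We have to prove the compound i.i.d.\ result: for any $\RR_1$ and $\mathcal S$ satisfying the Brand\~ao--Plenio axioms,
\[
\lim_{n\to\infty}\frac1n\,\rel{D_H^\e}{\co(\RR_n^{\rm i.i.d.})}{\mathcal S^{(n)}}=\inf_{\rho\in\RR_1}D^\infty(\rho\|\mathcal S).
\]
The plan is to split the statement into a converse (upper bound) and an achievability (lower bound), and to reduce the achievability to Theorem~\ref{thm:W1_GQSL} by a compactness argument. Since the five Brand\~ao--Plenio axioms imply (REP), Assumption~\ref{ass:standing-Sn} holds, so Theorem~\ref{thm:W1_GQSL} and Corollary~\ref{cor:3.1} are both available.

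\textbf{Converse.} This part is immediate. Since $\rho^{\otimes n}\in\co(\RR_n^{\rm i.i.d.})$ for every $\rho\in\RR_1$, the infimum defining $D_H^\e$ over the null set gives
\[
\frac1n\,\rel{D_H^\e}{\co(\RR^{\rm i.i.d.}_n)}{\mathcal S^{(n)}}\le \frac1n\,\rel{D_H^\e}{\rho^{\otimes n}}{\mathcal S^{(n)}}.
\]
Taking $\limsup_n$ and applying Theorem~\ref{thm:gqsl_iid} yields the bound $D^\infty(\rho\|\mathcal S)$. Minimising over $\rho\in\RR_1$ gives the upper bound.

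\textbf{Achievability: reduction to a closed set.} First, by Corollary~\ref{cor:3.1}, $\rho\mapsto D^\infty(\rho\|\mathcal S)$ is continuous. Hence the infimum over $\RR_1$ equals the infimum over its closure $\overline{\RR_1}$, which is compact. Enlarging $\RR_1$ to $\overline{\RR_1}$ only enlarges the null set, so it only decreases $D_H^\e$. It therefore suffices to prove the lower bound for compact $\RR_1$.

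\textbf{Achievability: finite nets.} Fix $\delta>0$ and take a finite $\delta$-net $\{\rho_1,\dots,\rho_K\}\subseteq\RR_1$ in trace distance. Each $\rho\in\RR_1$ has some $\rho_j$ with $\tfrac12\|\rho-\rho_j\|_1\le\delta$. By \eqref{eq:W_tr}, $\tfrac1n\|\rho^{\otimes n}-\rho_j^{\otimes n}\|_{W_1}\le\delta$. By convexity of the $W_1$ norm, every state in $\co(\RR_n^{\rm i.i.d.})$ is a convex combination $\sum_j p_j\omega_j$, where each $\omega_j$ lies in the convex hull of the $\rho^{\otimes n}$ assigned to $\rho_j$, and so $\omega_j$ is within normalised $W_1$ distance $\delta$ of $\rho_j^{\otimes n}$.

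To handle the mixture, I use a union-of-tests argument. Let $\mathcal F^{(n)}_{j,\delta}$ be the compact convex $W_1$-ball of radius $\delta$ around $\rho_j^{\otimes n}$. For each $j$, I aim to find tests $E_{n,j}$ with
\[
\Tr[(\id-E_{n,j})\omega]\le\e/2\ \ \text{for all }\omega\in\mathcal F^{(n)}_{j,\delta},\qquad \Tr[E_{n,j}\sigma_n]\le e^{-n(D^\infty(\rho_j\|\mathcal S)-\eta(\delta))}\ \ \text{for all }\sigma_n\in\mathcal S^{(n)},
\]
where $\eta(\delta)\to0$. I then set $E_n:=\min\{\id,\sum_j E_{n,j}\}$, or more concretely use the projector onto the span of the supports. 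The type~II error then satisfies
\[
\Tr[E_n\sigma_n]\le K\,e^{-n(\min_j D^\infty(\rho_j\|\mathcal S)-\eta)},
\]
and the factor $K$ is irrelevant exponentially. For the type~I error, $\Tr[(\id-E_n)\omega_j]\le\Tr[(\id-E_{n,j})\omega_j]\le\e/2$ provided $E_n\ge E_{n,j}$ holds in the appropriate sense. Operator-min does not exist in general, so instead I will pass through the minimax Lemma~\ref{lem:minmax_fang}. By that lemma it is enough to bound $\beta_\e(\omega\|\sigma)$ for each single $\omega=\sum_jp_j\omega_j$ and each $\sigma$. For this I use a single test $\tilde E=\sum_j E_{n,j}/K$ adjusted with $\e$: pick the $j$ with $p_j\ge 1/K$… This does not directly control the error, so the cleaner route is to bound the quantity $D_H^\e(\sum_j p_j\omega_j\|\sigma)\ge\min_j D_H^{\e'}(\omega_j\|\sigma)-\log K$. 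Such a bound follows from the $D_{\max}$ duality in Lemma~\ref{lemma:wsc} together with quasi-convexity of $D_{\max}$ smoothing under mixtures, at the cost of changing $\e$.

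\textbf{Main obstacle.} The hardest step is obtaining the per-$j$ bound, which needs a quantitative, uniform-in-radius version of Theorem~\ref{thm:W1_GQSL}. That theorem as stated requires the radius to go to $0$, whereas here the radius is a fixed $\delta$. I would handle this with a diagonal argument: choose $\delta_n\to0$ slowly enough that, for each fixed $n$, the net approximation error stays below $\delta_n$. The number of net points $K_n$ grows at most polynomially in $1/\delta_n$, so $\log K_n=o(n)$. Theorem~\ref{thm:W1_GQSL} is then applied to the equiconvergent families $\mathcal F^{(n)}_{j_n,\delta_n}$. Uniformity over the index $j$ comes from compactness and the continuity in Corollary~\ref{cor:3.1}. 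If this uniformity fails, I would fall back on re-running the information-spectrum argument of Section~\ref{sec:info_spec} directly with explicit error rates.
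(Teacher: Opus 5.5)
Your proof is correct in outline, but its achievability part takes a different route from the paper. (The paper re-derives this statement as Theorem~\ref{thm:iid}, under Assumption~\ref{ass:standing-Sn}.) Your converse is the same as the paper's Lemma~\ref{lem:conv_iid}.

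The paper argues as follows. It uses Carath\'eodory's theorem in the space of permutation-symmetric operators, whose dimension is at most $(n+1)^{d^2-1}$. This writes a near-optimal null state \emph{exactly} as a mixture of $\mathrm{poly}(n)$ i.i.d.\ states $\rho_i^{\otimes n}$. Weak quasi-concavity (Lemma~\ref{lem:weak_q_conc}) plus the duality of Lemma~\ref{lemma:wsc} then reduce the problem to $\liminf_n\frac1n\inf_{\rho\in\pazocal R_1}D_H^{\epsilon^2/32}(\rho^{\otimes n}\|\mathcal S^{(n)})$ (Lemma~\ref{lem:ach2b_iid}). Theorem~\ref{thm:W1_GQSL} is invoked only at the end, on product states with a drifting single-site state, to swap $\inf_\rho$ and $\liminf_n$ (Lemma~\ref{lem:ach2_iid}).

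You replace Carath\'eodory by a trace-distance $\delta_n$-net of $\overline{\pazocal R_1}$ with $K_n=\mathrm{poly}(1/\delta_n)$ points. Your groups $\omega_j$ are then not i.i.d., only uniformly $W_1$-close to $\rho_j^{\otimes n}$. So you use Theorem~\ref{thm:W1_GQSL} in its genuinely non-i.i.d.\ form, and quasi-concavity costs $\log K_n$ instead of $\log\mathrm{poly}(n)$.

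What each route buys:
\begin{itemize}
\item The paper's route needs only the mildest instance of $W_1$-robustness, but it relies on permutation symmetry of the null mixtures.
\item Yours needs no symmetry or dimension count. It would work verbatim whenever the null states are merely uniformly $W_1$-close to $\pazocal R_1$-i.i.d.\ states. For instance, it appears to give Theorem~\ref{thm:almost_iid} without the permutation-twirl closure of $\mathcal S^{(n)}$, which the paper needs in order to symmetrise before applying Carath\'eodory.
\end{itemize}

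The step you must make precise is the last one. Theorem~\ref{thm:W1_GQSL} cannot be applied to the balls $\mathcal F^{(n)}_{j_n,\delta_n}$ as written, because their centres $\rho_{j_n}^{\otimes n}$ move with $n$, while equiconvergence is along a fixed state. Use the device of Lemma~\ref{lem:ach2_iid}:
\begin{itemize}
\item take the minimising group $\omega^{(n)}_{j_n}$;
\item pass to a subsequence realising the $\liminf$, then to a further one along which $\rho_{j_n}\to\bar\rho\in\overline{\pazocal R_1}$;
\item note that there $\frac1n\|\omega^{(n)}_{j_n}-\bar\rho^{\otimes n}\|_{W_1}\le\delta_n+\frac12\|\rho_{j_n}-\bar\rho\|_1\to0$;
\item pad with $\bar\rho^{\otimes n}$ at all other $n$;
\item apply part~2 of Theorem~\ref{thm:W1_GQSL} with the reduced smoothing parameter, which is allowed since it holds for every value in $(0,1)$.
\end{itemize}
This gives the bound $\inf_{\rho\in\overline{\pazocal R_1}}D^\infty(\rho\|\mathcal S)$. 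Corollary~\ref{cor:3.1} is needed only to replace $\overline{\pazocal R_1}$ by $\pazocal R_1$, not for uniformity in $j$. Any $\delta_n\to0$ with $\log(1/\delta_n)=o(n)$, e.g.\ $\delta_n=1/n$, works.

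With this in place, your information-spectrum fallback is unnecessary. A fixed-radius bound would in any case fail once $\delta$ exceeds the type-I tolerance: the ball contains $(1-\delta)\rho_j^{\otimes n}+\delta\sigma_n$ with $\sigma_n\in\mathcal S^{(n)}$, which has zero exponent. Drop the union-of-tests detour as well. The mixture bound you need is exactly Lemma~\ref{lem:weak_q_conc} (weak quasi-\emph{concavity}, not quasi-convexity) combined with Lemma~\ref{lemma:wsc}. It gives $D_H^\epsilon(\sum_jp_j\omega_j\|\sigma)\ge\min_jD_H^{\epsilon^2/32}(\omega_j\|\sigma)-\log K_n-c_\epsilon$, including an additive constant $c_\epsilon$ that you omitted.
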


The key tool in the proof of Theorem~\ref{black_box_Stein_iid_thm} given in~\cite{lami2026universalquantumresourcedistillation} is the weak quasi-concavity property of $D_{\max}^{\epsilon,P}$, which we formally recall below, combined with the duality between the smoothed max-relative entropy and the hypothesis testing relative entropy (Lemma~\ref{lemma:wsc}).

\begin{lemma}[{(Weak quasi-concavity of the smooth max-relative entropy~\cite[Lemma 4]{lami2026universalquantumresourcedistillation})}] \label{lem:weak_q_conc}
Let $(p(x),\rho_x)_{x\in \mathcal{X}}$ be a finite ensemble of states $\rho_x\in\mathcal{D}(\mathcal{H})$, and let $\sigma\in\mathcal{D}(\mathcal{H})$. Then, for all $\epsilon,\mu\in (0,1)$ such that $\epsilon+\mu\leq 1$, we have
\begin{align}
\min_x D_{\max}^{\epsilon+\mu,\,P}\!(\rho_x\|\sigma) &\leq \Rel{D_{\max}^{\epsilon,P}}{\sumno_{x\in \XX} p(x)\, \rho_x}{\sigma} + \log \frac{|\XX|}{(\epsilon+\mu)^2 - \epsilon^2}\, . \label{quasi_concavity_D_max_eps}
\end{align}
\end{lemma}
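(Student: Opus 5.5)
The plan is a purification argument that isolates, from an optimal smoothing of the mixture, one branch $x$ which is simultaneously heavy and close to $\rho_x$. Set $\bar\rho\coloneqq\sum_x p(x)\rho_x$ and $\lambda\coloneqq D_{\max}^{\epsilon,P}(\bar\rho\|\sigma)$. Let $\omega$ be an optimiser, so that $P(\omega,\bar\rho)\le\epsilon$ and $\omega\le e^{\lambda}\sigma$. Write $F(\cdot,\cdot)=\|\sqrt{\cdot}\sqrt{\cdot}\|_1$ for the fidelity, so that $F(\omega,\bar\rho)^2\geq 1-\epsilon^2$. The goal is to find a single $x$ and a normalised state $\omega_x'$ such that
\[
P(\omega_x',\rho_x)\le\epsilon+\mu,\qquad \omega_x'\le \frac{|\XX|}{(\epsilon+\mu)^2-\epsilon^2}\,e^{\lambda}\sigma .
\]
Taking logarithms then gives the claim, since it bounds $\min_x D_{\max}^{\epsilon+\mu,P}(\rho_x\|\sigma)$ by $D_{\max}^{\epsilon,P}(\bar\rho\|\sigma)+\log\frac{|\XX|}{(\epsilon+\mu)^2-\epsilon^2}$.

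\textbf{Construction of the candidate states.} Purify each $\rho_x$ as $|\psi_x\rangle_{AE}$. Then
\[
|\Psi\rangle\coloneqq\sum_x\sqrt{p(x)}\,|\psi_x\rangle_{AE}|x\rangle_X
\]
is a purification of $\bar\rho$ on $AEX$. By Uhlmann's theorem there is a purification $|\Phi\rangle_{AEX}$ of $\omega$ with $|\langle\Psi|\Phi\rangle|=F(\omega,\bar\rho)$. Project the classical register to obtain the branches $|\Phi_x\rangle\coloneqq(\id_{AE}\otimes|x\rangle\!\langle x|_X)|\Phi\rangle$, and set $q(x)\coloneqq\|\Phi_x\|^2$ and $\omega_x\coloneqq\Tr_{EX}|\Phi_x\rangle\!\langle\Phi_x|$. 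By construction $\sum_x\omega_x=\omega$, and each $\omega_x\ge0$, so $\omega_x\le\omega\le e^{\lambda}\sigma$. Consequently the normalised state $\omega'_x\coloneqq\omega_x/q(x)$ satisfies $\omega'_x\le e^{\lambda}\sigma/q(x)$.

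\textbf{Fidelity budget.} Let $F_x\coloneqq F(\omega'_x,\rho_x)$. Since $|\Phi_x\rangle/\sqrt{q(x)}$ and $|\psi_x\rangle|x\rangle$ are purifications of $\omega'_x$ and $\rho_x$, Uhlmann's theorem gives $|\langle\psi_x,x|\Phi_x\rangle|\le\sqrt{q(x)}\,F_x$. Applying this and then Cauchy--Schwarz with $\sum_x p(x)=1$,
\[
1-\epsilon^2\le|\langle\Psi|\Phi\rangle|^2\le\Big(\sum_x\sqrt{p(x)q(x)}\,F_x\Big)^2\le\sum_x q(x)F_x^2 .
\]
Now set $c\coloneqq(\epsilon+\mu)^2-\epsilon^2$ and suppose, for contradiction, that no $x$ is \emph{good}, meaning $q(x)\ge c/|\XX|$ and $F_x^2\ge 1-(\epsilon+\mu)^2$ simultaneously. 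Split the sum into the terms with $q(x)<c/|\XX|$ and the rest. The first group contributes less than $|\XX|\cdot c/|\XX|=c$, using $F_x\le1$. Every remaining term has $F_x^2<1-(\epsilon+\mu)^2$, and their weights sum to at most $1$, so this group contributes less than $1-(\epsilon+\mu)^2$. Together,
\[
\sum_x q(x)F_x^2<c+1-(\epsilon+\mu)^2=1-\epsilon^2,
\]
which contradicts the budget above. Hence some good $x$ exists. For it, $P(\omega'_x,\rho_x)=\sqrt{1-F_x^2}\le\epsilon+\mu$ and $\omega'_x\le(|\XX|/c)\,e^{\lambda}\sigma$, which is exactly the goal stated at the outset.

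\textbf{Main obstacle.} The delicate step is obtaining the sharp constant $(\epsilon+\mu)^2-\epsilon^2$. A naive bound applied to $F$ itself, rather than to $F^2$, only yields $\big(\sqrt{1-\epsilon^2}-\sqrt{1-(\epsilon+\mu)^2}\big)^2$. The remedy is to apply Cauchy--Schwarz against the weights $p(x)$, so that the distribution $p$ disappears entirely and the budget becomes linear in $q$. The only other point needing care is that $\omega_x\le\omega$, which holds because the $\omega_x$ are the positive branches of a decomposition of $\omega$ along the classical register.
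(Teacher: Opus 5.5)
Your proof is correct and gives exactly the stated constant $|\mathcal{X}|/((\epsilon+\mu)^2-\epsilon^2)$. The Uhlmann lift of an optimal smoothing $\omega$ to a purification aligned with $\sum_x\sqrt{p(x)}\,|\psi_x\rangle|x\rangle$, the domination $\omega_x\le\omega\le e^{\lambda}\sigma$ of the classical branches, Cauchy--Schwarz against $p$ giving $1-\epsilon^2\le\sum_x q(x)F_x^2$, and the pigeonhole step all check out.

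The paper does not prove this lemma; it imports it from the cited reference. So there is no in-paper proof to compare against, and your argument serves as a self-contained proof. The remaining loose ends are cosmetic. Set $F_x=0$ when $q(x)=0$; such $x$ fall in the small-weight group anyway. Dispose of the trivial case $D_{\max}^{\epsilon,P}(\bar\rho\|\sigma)=+\infty$ before choosing an optimiser.
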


The aim of this section is to give an alternative proof of Theorem~\ref{black_box_Stein_iid_thm} yielding a strict generalisation, which holds under less restrictive assumptions on the sequence of alternative hypotheses $\mathcal{S}$ (see Theorem~\ref{thm:iid}). Furthermore, our proof technique immediately allows for an extension to the compound \emph{almost i.i.d.\ setting} (see Theorem~\ref{thm:almost_iid}). In Table~\ref{table_2} we report a comparison between Theorem~\ref{black_box_Stein_iid_thm} (i.e.\ \cite[Theorem~1]{lami2026universalquantumresourcedistillation}) and our generalisations. 

\begin{boxedthm}{}
    \begin{thm}\label{thm:iid} Suppose that the family $(\mathcal S^{(n)})_{n\geq 1}$ satisfies Assumption~\ref{ass:standing-Sn}. Let $\pazocal R_1\subseteq \mathcal{D}(\mathcal{H})$ be an arbitrary subset of states. Then,
        \bb
            \lim_{n\to\infty} \frac1n\, \rel{D_H^\epsilon}{\co\!\big(\pazocal{R}_n^{\rm i.i.d.}\big)}{\mathcal{S}^{(n)}} = \inf_{\rho\in \pazocal R_1} D^\infty(\rho \|\mathcal{S})\qquad \forall \,\epsilon\in (0,1).
        \ee
    \end{thm}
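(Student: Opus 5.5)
Throughout, write $r^\ast \coloneqq \inf_{\rho\in\pazocal R_1} D^\infty(\rho\|\mathcal S)$. The plan has two halves: the converse ($\limsup \le r^\ast$) follows at once from the i.i.d.\ case, and the achievability ($\liminf \ge r^\ast$) reduces the compound convex hull to a Wasserstein equiconvergent family and then applies Theorem~\ref{thm:W1_GQSL}.

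\emph{Converse.} Since $\rho^{\otimes n}\in\co(\pazocal R_n^{\rm i.i.d.})$ for every $\rho\in\pazocal R_1$, the infimum defining $\rel{D_H^\epsilon}{\co(\pazocal R_n^{\rm i.i.d.})}{\mathcal S^{(n)}}$ is at most $\rel{D_H^\epsilon}{\rho^{\otimes n}}{\mathcal S^{(n)}}$. Theorem~\ref{thm:W1_GQSL}, applied to the constant source $\rho_n=\rho^{\otimes n}$, then gives $\limsup \le D^\infty(\rho\|\mathcal S)$. Taking the infimum over $\rho$ yields $\limsup\le r^\ast$.

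\emph{Achievability, reduction.} First I would note that $r^\ast=\inf_{\rho\in\overline{\pazocal R_1}}D^\infty(\rho\|\mathcal S)$ by Corollary~\ref{cor:3.1}. Moreover, replacing $\pazocal R_1$ by its closure only enlarges the null set. Hence we may assume $\pazocal R_1$ is compact. Fix $\delta>0$ and cover $\pazocal R_1$ by finitely many trace-norm balls $B_1,\dots,B_K$ of radius $\delta$, with centres $\rho_1,\dots,\rho_K\in\pazocal R_1$.

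A generic element of $\co(\pazocal R_n^{\rm i.i.d.})$ has the form $\omega_n=\sum_x p(x)\,\tau_x^{\otimes n}$. Grouping the terms according to the ball they fall into, we write $\omega_n=\sum_{k=1}^K q_k\,\omega_n^{(k)}$, where each $\omega_n^{(k)}\in\co\{\tau^{\otimes n}:\tau\in B_k\cap \pazocal R_1\}$.

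By convexity of the $W_1$ norm and \eqref{eq:W_tr}, $\frac1n\|\omega_n^{(k)}-\rho_k^{\otimes n}\|_{W_1}\le\delta/2$ uniformly in $n$. This is not yet vanishing, so Theorem~\ref{thm:W1_GQSL} cannot be applied directly. Instead I would use the quantitative content of its proof, or more simply pair it with Proposition~\ref{prop:3.1}, through the following route.

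\emph{Achievability, main step.} To handle the mixture over $k$, I would use the $D_{\max}$ duality. Fix $\epsilon$, and take any $\sigma_n\in\mathcal S^{(n)}$ and any $\omega_n$ as above. By Lemma~\ref{lemma:wsc}, a lower bound on $D_H^\epsilon(\omega_n\|\sigma_n)$ follows from a lower bound on $D_{\max}^{\epsilon',P}(\omega_n\|\sigma_n)$ for a suitable $\epsilon'$ close to $1$. By Lemma~\ref{lem:weak_q_conc}, at cost $\log K+O(1)$, which is negligible after dividing by $n$, this is bounded below by $\min_k D_{\max}^{\epsilon'',P}(\omega_n^{(k)}\|\sigma_n)$.

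Applying Lemma~\ref{lemma:wsc} again, it suffices to show that
\[
\liminf_{n\to\infty}\ \inf_{\omega_n^{(k)},\ \sigma_n}\ \frac1n D_H^{\epsilon'''}\big(\omega_n^{(k)}\big\|\sigma_n\big)\ \ge\ D^\infty(\rho_k\|\mathcal S)-f(\delta),\qquad f(\delta)\to 0 \text{ as } \delta\to 0,
\]
for every $\epsilon'''\in(0,1)$. Letting $\delta\to0$ would then give $\liminf\ge r^\ast$.

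\emph{Main obstacle.} The hard part is this last, uniform $\delta$-robust bound. The convex set $\co\{\tau^{\otimes n}:\tau\in B_k\}$ is not $W_1$-equiconvergent to a single i.i.d.\ state; it is only $\delta$-close to one. I would try to avoid this issue by working ball by ball, making $\delta=\delta_n\to0$ while $K=K_n$ grows. Since $\log K_n=O(d^2\log(1/\delta_n))$ is $o(n)$ as long as $\delta_n$ decays subexponentially, we can choose $\delta_n\to0$ slowly. Then the family $\mathcal F^{(n)}_k\coloneqq\co\{\tau^{\otimes n}:\tau\in B_k^{(n)}\}$ is Wasserstein equiconvergent along the varying centre $\rho_k^{(n)}$.

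What is needed is the universal part of Theorem~\ref{thm:W1_GQSL}, with uniformity in the centre. I expect its proof, which is driven by the information-spectrum and continuity estimates of Proposition~\ref{prop:3.1}, to be uniform over centres in a compact set once the equiconvergence rate is fixed. This uniformity is the point I would have to verify carefully, or else circumvent. One way to circumvent it is to pass to a convergent subsequence of worst-case centres and use Corollary~\ref{cor:3.1} to control $D^\infty$ along that subsequence.
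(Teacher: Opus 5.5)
Your converse is fine: it is the paper's Lemma~\ref{lem:conv_iid}, and you even obtain the $\limsup$. The achievability, however, is left open at the decisive step. Your primary plan relies on the unverified claim that the proof of Theorem~\ref{thm:W1_GQSL} is uniform in the centre $\rho$ once the equiconvergence rate is fixed. The paper gives you nothing of the sort. Only Proposition~\ref{prop:3.1} is quantitative. The information-spectrum inputs (Propositions~\ref{prop:info-spectrum-W1}--\ref{prop:padded-block-product-strong-converse}) are pure limit statements, and Proposition~\ref{prop:W1-achievability} is proved by contradiction. So, as written, this is a genuine gap.

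The gap is also unnecessary, because the circumvention in your last sentence works; it is exactly the paper's Lemma~\ref{lem:ach2_iid}. Concretely, keep $\delta_n\to0$ with $\log K_n=o(n)$. After the quasi-concavity and duality steps, let $k_n$ be a minimising index and set $\omega_n\coloneqq\omega_n^{(k_n)}$.
\begin{itemize}
\item Extract a subsequence realising $\liminf_n\frac1n D_H^{\epsilon'''}(\omega_n\|\mathcal S^{(n)})$.
\item Extract a further subsequence along which the centres converge, $\rho^{(n)}_{k_n}\to\bar\rho\in\overline{\pazocal R_1}$.
\item Along it, the triangle inequality and \eqref{eq:W_tr} give $\frac1n\|\omega_n-\bar\rho^{\otimes n}\|_{W_1}\le\delta_n/2+\frac12\|\rho^{(n)}_{k_n}-\bar\rho\|_1\to0$.
\item Fill the remaining indices with $\bar\rho^{\otimes n}$. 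This yields an individual Wasserstein almost i.i.d.\ source along the fixed state $\bar\rho$.
\item The individual part of Theorem~\ref{thm:W1_GQSL} then gives the lower bound $D^\infty(\bar\rho\|\mathcal S)$, which is at least $r^\ast$ by Corollary~\ref{cor:3.1}.
\end{itemize}
No uniformity in the centre is ever needed, because compactness removes the varying centre before the theorem is invoked. Once $\delta_n\to0$, the fixed-$\delta$ bound with $f(\delta)$ also becomes superfluous.

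Completed this way, your route differs from the paper's only in how the mixture is made finite. The paper uses the permutation invariance of the quasi-minimiser $\rho^{(n)}$ and Carath\'eodory's theorem in the symmetric subspace, of dimension at most $(n+1)^{d^2-1}$. This writes $\rho^{(n)}$ as a mixture of $\mathrm{poly}(n)$ exact i.i.d.\ states $\rho_i^{\otimes n}$ with $\rho_i\in\pazocal R_1$. After Lemma~\ref{lem:weak_q_conc}, the remaining quantity is just $\inf_{\rho\in\pazocal R_1}D_H^{\epsilon'}(\rho^{\otimes n}\|\mathcal S^{(n)})$, and $W_1$-robustness is needed only for i.i.d.\ states with converging single-copy marginals. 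Your shrinking-radius covering avoids Carath\'eodory but needs a covering-number estimate. It also uses $W_1$-robustness more essentially, since each component is a genuine mixture of nearby i.i.d.\ states. Both overheads ($\log\mathrm{poly}(n)$ versus $\log K_n$) are sublinear, so either route works.
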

\end{boxedthm}

Section~\ref{sec:proof_iid} is devoted to the proof of Theorem~\ref{thm:iid}. The novel tool that makes our proof more general and particularly elegant and simple is the $W_1$ robust generalised quantum Stein's lemma, i.e.\ Theorem~\ref{thm:W1_GQSL}.\smallskip

In Appendix~\ref{proof:alternative}, we are going to briefly mention a possible approach to the proof of Theorem~\ref{thm:iid}, which does not need Lemma~\ref{lem:weak_q_conc} in the fully quantum setting, but only a simple equivalent argument in the fully classical setting. The price for this simplified proof is a further assumption on the sequence of families $(\mathcal S^{(n)})_{n\geq 1}$, namely the \emph{closure under permutation twirl}:
\bb
    \frac{1}{n!}\sum_{\pi\in S_n} U_\pi^{\vphantom{\dagger}} \sigma_n U^\dagger_\pi \in \mathcal{S}^{(n)}\qquad \forall \sigma_n \in \mathcal{S}^{(n)}.
\ee
We refer the reder interested in this alternative approach to Appendix~\ref{proof:alternative}, while we now focus on the proof of Theorem~\ref{thm:iid} in full generality.

\subsubsection{Proof of Theorem~\ref{thm:iid}}\label{sec:proof_iid}
Our proof is essentially base on the follwing three steps.
\begin{enumerate}
    \item We first recall a simple converse bound, based on an elementary and well-known argument (Lemma~\ref{lem:conv_iid}).
    \item Following the strategy of the first part of the proof of~\cite[Theorem~1]{lami2026universalquantumresourcedistillation}, in Lemma~\ref{lem:ach2b_iid} we verify that our weaker assumptions on the alternative hypothesis are sufficient to lower bound
    \bb
    \liminf_{n\to\infty}\tfrac 1n \rel{D_H^\epsilon}{\co\!\big(\pazocal{R}_n^{\rm i.i.d.}\big)}{\mathcal S^{(n)}} \geq \lim_{\epsilon'\to 0}\liminf_{n\to\infty}\tfrac 1n  \inf_{\rho \in\pazocal R_1} \rel{D_H^{\epsilon'}}{\rho^{\otimes n}}{\mathcal S^{(n)}}\, .
    \ee
    \item In order to complete the proof we need the final identity
    \[\liminf_{n\to\infty}\tfrac 1n  \inf_{\rho \in\pazocal R_1} \rel{D_H^{\epsilon}}{\rho^{\otimes n}}{\mathcal S^{(n)}} = \inf_{\rho \in\pazocal R_1}D^\infty(\rho \| \mathcal S),\]
    which was initially proved in~\cite{lami2026universalquantumresourcedistillation} with a different strategy, and assuming the Brand\~ao~Plenio axioms in full generality. Our novel contribution consists in a completely different approach, holding under the weaker Assumption~\ref{ass:standing-Sn}: in Lemma~\ref{lem:ach2_iid} we reduce the commutation of $\displaystyle{\inf_{\rho\in\pazocal R_1}}$ and of $\displaystyle{\liminf_{n\to\infty}}$ to a simple application of Theorem~\ref{thm:W1_GQSL}.
\end{enumerate}

Let us start by recalling the following elementary converse statement. The proof is completely standard; we include it here for completeness.

\begin{lemma}\label{lem:conv_iid} Suppose that the family $(\mathcal S^{(n)})_{n\geq 1}$ satisfies Assumption~\ref{ass:standing-Sn}. Let $\pazocal R_1\subseteq \mathcal{D}(\mathcal{H})$ be an arbitrary subset of states. Then,
        \bb\label{eq:converse_iid}
            \liminf_{n\to\infty}\frac 1n\rel{D_H^\epsilon}{\co\!\big(\pazocal{R}_n^{\rm i.i.d.}\big)}{\mathcal{S}^{(n)}}\leq \inf_{\rho\in\pazocal R_1}D^{\infty}(\rho\|\mathcal{S})\qquad \forall \,\epsilon\in (0,1).
        \ee
    \end{lemma}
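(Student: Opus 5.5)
The converse follows from monotonicity of the composite quantity under shrinking the null set, combined with the weak converse of the i.i.d.\ Stein problem. Fix $\epsilon\in(0,1)$ and an arbitrary $\rho\in\pazocal R_1$. Since $\rho^{\otimes n}\in\co\!\big(\pazocal R_n^{\rm i.i.d.}\big)$, any test whose type I error is at most $\epsilon$ uniformly over the convex hull also satisfies this constraint for $\rho^{\otimes n}$. Hence, directly from the definition~\eqref{eq:beta-composite-null-and-alt} (or equivalently from the infimum defining $D_H^\epsilon$ of sets),
\bb
\rel{D_H^\epsilon}{\co\!\big(\pazocal{R}_n^{\rm i.i.d.}\big)}{\mathcal{S}^{(n)}}\leq \rel{D_H^\epsilon}{\rho^{\otimes n}}{\mathcal{S}^{(n)}}\leq D_H^\epsilon(\rho^{\otimes n}\|\sigma_n)\qquad\forall\,\sigma_n\in\mathcal S^{(n)}.
\ee
This reduces the claim to an upper bound for a single i.i.d.\ null state against a fixed alternative state.

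Next, fix $k\geq1$ and let $\sigma_k\in\mathcal S^{(k)}$ be a minimiser of $D(\rho^{\otimes k}\|\sigma_k)$. Write $n=mk+r$ with $0\le r<k$, and use (A2) to form
\bb
\sigma_n\coloneqq\sigma_k^{\otimes m}\otimes\sigma_{\rm full}^{\otimes r}\in\mathcal S^{(n)}.
\ee
By data processing of $D_H^\epsilon$ under the partial trace over the last $r$ sites, we obtain $D_H^\epsilon(\rho^{\otimes n}\|\sigma_n)\le D_H^\epsilon\big((\rho^{\otimes k})^{\otimes m}\|\sigma_k^{\otimes m}\big)$.

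We then apply the standard weak converse $D_H^\epsilon(\omega\|\tau)\le \frac{D(\omega\|\tau)+h(\epsilon)}{1-\epsilon}$, which follows from data processing of $D$ under the binary measurement $\{E,\id-E\}$. Together with additivity of $D$ under tensor products, this gives
\bb
\frac1n D_H^\epsilon(\rho^{\otimes n}\|\sigma_n)\le \frac{m\,D(\rho^{\otimes k}\|\mathcal S^{(k)})+h(\epsilon)}{n(1-\epsilon)}.
\ee
Taking $\liminf_{n\to\infty}$ yields the bound $\frac{1}{1-\epsilon}\cdot\frac1k D(\rho^{\otimes k}\|\mathcal S^{(k)})$, and then letting $k\to\infty$ gives $\frac{D^\infty(\rho\|\mathcal S)}{1-\epsilon}$. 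This weak-converse bound carries an extra factor $\frac1{1-\epsilon}$, so it does not close the argument for fixed $\epsilon$.

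The main obstacle is therefore to obtain the bound for every $\epsilon\in(0,1)$ without this factor. For that, I would invoke Theorem~\ref{thm:W1_GQSL} (part 2) with the trivial source $\rho_n=\rho^{\otimes n}$, which is a Wasserstein almost i.i.d.\ source along $\rho$; alternatively, Theorem~\ref{thm:gqsl_iid} in its Hayashi--Yamasaki form applies, since Assumption~\ref{ass:standing-Sn} implies (CC), (TEN) and (FR). Either gives
\bb
\lim_{n\to\infty}\tfrac1n\rel{D_H^\epsilon}{\rho^{\otimes n}}{\mathcal S^{(n)}}=D^\infty(\rho\|\mathcal S)\qquad\forall\,\epsilon\in(0,1).
\ee
Combining this with the first display gives $\liminf_{n\to\infty}\frac1n\rel{D_H^\epsilon}{\co\!\big(\pazocal{R}_n^{\rm i.i.d.}\big)}{\mathcal S^{(n)}}\le D^\infty(\rho\|\mathcal S)$. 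Since $\rho\in\pazocal R_1$ was arbitrary, taking the infimum over $\rho$ proves~\eqref{eq:converse_iid}.
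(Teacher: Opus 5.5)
Your proof is correct and, once the block/weak-converse detour is set aside, it is the paper's argument: since $\rho^{\otimes n}\in\co(\pazocal R_n^{\rm i.i.d.})$, the composite quantity is bounded by the single-state one. The i.i.d.\ GQSL (valid under Assumption~\ref{ass:standing-Sn}, e.g.\ via Theorem~\ref{thm:W1_GQSL} with $\rho_n=\rho^{\otimes n}$) then identifies that limit as $D^\infty(\rho\|\mathcal S)$, and you take the infimum over $\rho\in\pazocal R_1$.

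The intermediate weak-converse paragraph can simply be dropped. Alternatively, you could close it by replacing the weak converse with the Ogawa--Nagaoka strong converse for the blocked pair $(\rho^{\otimes k},\sigma_k)$, which would avoid invoking the full GQSL.
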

\begin{proof} For all $\rho\in \pazocal R_1$, we have $\rho^{\otimes n}\in \co\!\big(\pazocal{R}_n^{\rm i.i.d.}\big)$, whence
    \bb
        \liminf_{n\to\infty}\frac 1n\rel{D_H^\epsilon}{\co\!\big(\pazocal{R}_n^{\rm i.i.d.}\big)}{\mathcal{S}^{(n)}}\leq \liminf_{n\to\infty}\frac 1n\rel{D_H^\epsilon}{\rho^{\otimes n}}{\mathcal{S}^{(n)}}= D^{\infty}(\rho\|\mathcal{S}).
    \ee
    Then, by arbitrariness of $\rho\in\pazocal R_1$,
    we immediately have~\eqref{eq:converse_iid}.
\end{proof}

\begin{lemma}\label{lem:ach2b_iid} Suppose that the family $(\mathcal S^{(n)})_{n\geq 1}$ satisfies Assumption~\ref{ass:standing-Sn}. Let $\pazocal R_1\subseteq \mathcal{D}(\mathcal{H})$ be an arbitrary subset of states. Then,
      \bb 
      \liminf_{n\to\infty}\frac 1n \rel{D_H^\epsilon}{\co\!\big(\pazocal{R}_n^{\rm i.i.d.}\big)}{\mathcal S^{(n)}} \geq \lim_{\epsilon'\to 0}\liminf_{n\to\infty}\frac 1n  \inf_{\rho \in\pazocal R_1} \rel{D_H^{\epsilon'}}{\rho^{\otimes n}}{\mathcal S^{(n)}}\, .
        \ee
        for all $\epsilon\in (0,1)$.
    \end{lemma}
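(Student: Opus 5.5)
The plan is to reduce the convex hull to a single i.i.d.\ state, at a subexponential cost, by a Carathéodory argument. I would then move between $D_H$ and $D_{\max}$ using Lemma~\ref{lemma:wsc} and the weak quasi-concavity of Lemma~\ref{lem:weak_q_conc}. By definition,
\[
\rel{D_H^\epsilon}{\co(\pazocal R_n^{\rm i.i.d.})}{\mathcal S^{(n)}}=\inf_{\rho_n\in\co(\pazocal R_n^{\rm i.i.d.}),\,\sigma_n\in\mathcal S^{(n)}}D_H^\epsilon(\rho_n\|\sigma_n).
\]
So I fix an arbitrary $\rho_n\in\co(\pazocal R_n^{\rm i.i.d.})$ and $\sigma_n\in\mathcal S^{(n)}$, and lower bound $D_H^\epsilon(\rho_n\|\sigma_n)$ uniformly in both.

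\textbf{Step 1 (finite ensemble of polynomial size).} Every $\omega^{\otimes n}$ lies in the real span of operators of the form $X^{\otimes n}$, which is contained in the space of permutation-invariant self-adjoint operators on $\mathcal H^{\otimes n}$. That space has real dimension at most $\binom{n+d^2-1}{n}\le (n+1)^{d^2}$. By Carathéodory's theorem, $\rho_n=\sum_{x\in\mathcal X}p(x)\,\omega_x^{\otimes n}$ with $\omega_x\in\pazocal R_1$ and $|\mathcal X|\le (n+1)^{d^2}+1$. Hence $\frac1n\log|\mathcal X|\to0$.

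\textbf{Step 2 (chain of inequalities).} Set $a\coloneqq\sqrt{1-\epsilon}$ and $a'\coloneqq\sqrt{a(2-a)}<1$, and fix small $\mu,\mu'>0$ with $\epsilon'\coloneqq 1-a'-\mu-\mu'>0$. The chain runs as follows.
\begin{itemize}
\item The lower bound in Lemma~\ref{lemma:wsc}, applied with $1-\epsilon$ in place of $\epsilon$, gives $D_H^\epsilon(\rho_n\|\sigma_n)\ge D_{\max}^{a}(\rho_n\|\sigma_n)+\log\frac1{1-\epsilon}$.
\item By \eqref{eq:Dmax_ineq}, $D_{\max}^{a}(\rho_n\|\sigma_n)\ge D_{\max}^{a',P}(\rho_n\|\sigma_n)$.
\item Lemma~\ref{lem:weak_q_conc} gives $D_{\max}^{a',P}(\rho_n\|\sigma_n)\ge\min_x D_{\max}^{a'+\mu,P}(\omega_x^{\otimes n}\|\sigma_n)-\log\frac{|\mathcal X|}{(a'+\mu)^2-a'^2}$.
\item Using \eqref{eq:Dmax_ineq} again, $D_{\max}^{a'+\mu,P}\ge D_{\max}^{a'+\mu}$.
\item The upper bound in Lemma~\ref{lemma:wsc}, with $\epsilon\to a'+\mu+\mu'$ and $\mu\to\mu'$, gives $D_{\max}^{a'+\mu}(\omega_x^{\otimes n}\|\sigma_n)\ge D_H^{\epsilon'}(\omega_x^{\otimes n}\|\sigma_n)-\log\frac1{\mu'}$.
\end{itemize}
All additive constants are either $O(1)$ or $O(\log|\mathcal X|)=O(\log n)$.

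\textbf{Step 3 (infima and limits).} The previous step yields
\[
D_H^\epsilon(\rho_n\|\sigma_n)\ge\inf_{\omega\in\pazocal R_1}D_H^{\epsilon'}(\omega^{\otimes n}\|\sigma_n)-O(\log n).
\]
Taking the infimum over $\sigma_n$ and $\rho_n$, and using that the two infima over $\sigma_n$ and $\omega$ commute, gives
\[
\tfrac1n\rel{D_H^\epsilon}{\co(\pazocal R_n^{\rm i.i.d.})}{\mathcal S^{(n)}}\ge\tfrac1n\inf_{\omega\in\pazocal R_1}\rel{D_H^{\epsilon'}}{\omega^{\otimes n}}{\mathcal S^{(n)}}-o(1).
\]
Then take $\liminf_n$. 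Since $D_H^{\epsilon'}$ is nondecreasing in $\epsilon'$, the limit $\epsilon'\to0$ on the right-hand side exists and is at most the value at our fixed $\epsilon'>0$, which gives the claim.

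The main obstacle is Step 1. Lemma~\ref{lem:weak_q_conc} only handles finite ensembles, and $\pazocal R_1$ may be infinite. A naive $\eta$-net on $\pazocal R_1$ would not work, because replacing $\omega^{\otimes n}$ by a nearby net point costs trace distance of order $n\eta$. The Carathéodory bound via the polynomial dimension of the symmetric operator space avoids approximation altogether, so I expect it to be the key point. The rest is bookkeeping of smoothing parameters, making sure $a'+\mu+\mu'<1$.
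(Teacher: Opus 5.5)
Your proof is correct and follows essentially the same route as the paper. The paper also applies Carath\'eodory in the polynomial-dimensional space of permutation-symmetric operators, then runs the chain $D_H\to D_{\max}\to D_{\max}^{P}\to$ weak quasi-concavity $\to D_{\max}\to D_H$ using Lemma~\ref{lemma:wsc}, \eqref{eq:Dmax_ineq} and Lemma~\ref{lem:weak_q_conc}. The differences are only cosmetic: you bound uniformly over arbitrary pairs $(\rho_n,\sigma_n)$ instead of picking quasi-minimisers, and you keep the smoothing parameters generic rather than fixing them as $1-\epsilon^2/8$, $1-\epsilon^2/16$ and $\epsilon^2/32$.
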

    \begin{proof}
        Similarly to the first part of the proof of Lemma~\ref{lem:ach1a_iid}, choose $(\rho^{(n)})_{n\geq 1}$ and $(\sigma_n)_{n\geq 1}$ to be two sequences of quasi-minimisers $\rho^{(n)}\in\co\!\big(\pazocal{R}_n^{\rm i.i.d.}\big)$ and $\sigma_n\in\mathcal{S}^{(n)}$, i.e.\ they satisfy~\eqref{eq:quasi_min}.
Since  $(\rho^{(n)})_{n\geq 1}$ is invariant under permutations, it belongs to real vector space $H_{d,n}^{\rm sym}$ of permutationally symmetric Hermitian operators on $\mathcal{H}^{\otimes n}\simeq \big(\mathbb{C}^d\big)^{\otimes n}$, which has dimension upper bounded as $\dim H_{d,n}^{\rm sym}\leq (n+1)^{d^2-1}$~\cite{Hayashi2016-bh}.
 By Carath\'eodory's theorem, we can write
 \bb
    \rho^{(n)}=\sum_{i=1}^Np_i\rho_i^{\otimes n}, \qquad 0\leq p_i\leq 1,\qquad \rho_i\in\pazocal R_1 \quad \text{ for all } 1\leq i\leq N,
 \ee
 with $N\leq (n+1)^{d^2-1}+1={\rm poly}(n)$.  Then,
    \bb\label{eq:lower_bounds2}
    \rel{D_H^\epsilon}{\rho^{(n)}}{\sigma_n} &= \Rel{D_H^\epsilon}{\sumno_{i=1}^N p_i \rho_i^{\otimes n}}{\sigma_n} \\
        &\geqt{(i)} \Rel{D_{\max}^{1-\epsilon^2/8,P}}{\sumno_{i=1}^N p_i\rho_i^{\otimes n}}{\sigma_n} + \log \frac1{1-\epsilon}\\
        &\geqt{(ii)} \min_i \rel{D_{\max}^{1-\epsilon^2/16}}{\rho_i^{\otimes n}}{\sigma_n} - \log {\rm poly}_{\epsilon,d}(n)\\
        &\geqt{(iii)} \min_i \rel{D_H^{\epsilon^2/32}}{\rho_i^{\otimes n}}{\sigma_n} - \log {\rm poly}'_{\epsilon,d}(n)\\
        &\geq \inf_{\rho \in\pazocal R_1} \rel{D_H^{\epsilon^2/32}}{\rho^{\otimes n}}{\mathcal{S}^{(n)}} - \log {\rm poly}'_{\epsilon,d}(n)\, ,
    \ee
where (i) follows from the weak/strong-converse duality (Lemma~\ref{lemma:wsc}), combined with~\eqref{eq:Dmax_ineq}, the inequality $\big((2-\sqrt{1-\epsilon})\sqrt{1-\epsilon}\big)^{1/2}\leq 1-\epsilon^2/8$ and the monotonicity of $\epsilon\to D_{\max}^\epsilon$; (ii) stems from the weak quasi-concavity of the smooth max-relative entropy (Lemma~\ref{lem:weak_q_conc}), followed once more by~\eqref{eq:Dmax_ineq}; (iii) is another application of Lemma~\ref{lemma:wsc}. Now recalling the definition of $\rho^{(n)}$ and $\sigma_n$, we use~\eqref{eq:lower_bounds2} to lower bound
    \bb\label{eq:previous2}
    \liminf_{n\to\infty}\frac 1n \rel{D_H^\epsilon}{\co\!\big(\pazocal{R}_n^{\rm i.i.d.}\big)}{\mathcal S^{(n)}} &\geq \liminf_{n\to\infty}\frac 1n \rel{D_H^\epsilon}{\rho^{(n)}}{\sigma_n} \\
    &\geq\liminf_{n\to\infty}\frac 1n  \inf_{\rho \in\pazocal R_1} \rel{D_H^{\epsilon^2/32}}{\rho^{\otimes n}}{\mathcal S^{(n)}} \\
    &\geq\lim_{\epsilon'\to 0}\liminf_{n\to\infty}\frac 1n  \inf_{\rho \in\pazocal R_1} \rel{D_H^{\epsilon'}}{\rho^{\otimes n}}{\mathcal S^{(n)}}\, ,
    \ee
    and this completes the proof of Lemma~\ref{lem:ach2b_iid}.
    \end{proof}

    \begin{lemma}\label{lem:ach2_iid} Suppose that the family $(\mathcal S^{(n)})_{n\geq 1}$ satisfies Assumption~\ref{ass:standing-Sn}. Let $\pazocal R_1\subseteq \mathcal{D}(\mathcal{H})$ be an arbitrary subset of states. Then,
        \bb\label{eq:claim2}
         \liminf_{n\to\infty}\frac 1n  \inf_{\rho \in\pazocal R_1} \rel{D_H^{\epsilon}}{\rho^{\otimes n}}{\mathcal S^{(n)}} &=\inf_{\rho \in\pazocal R_1} D^\infty(\rho \| \mathcal S)\qquad \forall \epsilon\in (0,1).
    \ee
    \end{lemma}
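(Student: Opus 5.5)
The upper bound ($\leq$) is immediate and requires no new ideas. For any fixed $\rho'\in\pazocal R_1$ we have
\[
\inf_{\rho\in\pazocal R_1} D_H^\epsilon(\rho^{\otimes n}\|\mathcal S^{(n)})\leq D_H^\epsilon(\rho'^{\otimes n}\|\mathcal S^{(n)}).
\]
Dividing by $n$ and taking $\liminf$, Theorem~\ref{thm:W1_GQSL} (applied to the trivial source $\rho'^{\otimes n}$) gives $D^\infty(\rho'\|\mathcal S)$. Taking the infimum over $\rho'$ yields the bound. The real content is the lower bound: we must show that the $\liminf$ of the infimum cannot fall below $\inf_\rho D^\infty(\rho\|\mathcal S)$. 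The difficulty is that Theorem~\ref{thm:W1_GQSL} gives convergence for each fixed $\rho$, but not uniformly in $\rho$.

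For the lower bound I would argue by contradiction, using compactness together with the Wasserstein robustness of Theorem~\ref{thm:W1_GQSL}. Choose a subsequence $n_k$ along which the $\liminf$ is attained, and states $\rho_{n_k}\in\pazocal R_1$ that are quasi-minimisers:
\[
\frac{1}{n_k}D_H^\epsilon(\rho_{n_k}^{\otimes n_k}\|\mathcal S^{(n_k)})\leq \frac{1}{n_k}\inf_{\rho\in\pazocal R_1}D_H^\epsilon(\rho^{\otimes n_k}\|\mathcal S^{(n_k)})+\frac1{n_k}.
\]
Since $\mathcal D(\mathcal H)$ is compact, we may pass to a further subsequence with $\rho_{n_k}\to\bar\rho\in\overline{\pazocal R_1}$. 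Now build a single sequence $(\tau_n)_n$ by setting $\tau_{n_k}\coloneqq\rho_{n_k}^{\otimes n_k}$ on the subsequence and $\tau_n\coloneqq\bar\rho^{\otimes n}$ elsewhere. By~\eqref{eq:W_tr},
\[
\frac1n\|\tau_n-\bar\rho^{\otimes n}\|_{W_1}\leq\frac12\|\rho_{n_k}-\bar\rho\|_1\to0,
\]
so $(\tau_n)_n$ is a Wasserstein almost i.i.d.\ source along $\bar\rho$. The second part of Theorem~\ref{thm:W1_GQSL} then gives $\frac1n D_H^\epsilon(\tau_n\|\mathcal S^{(n)})\to D^\infty(\bar\rho\|\mathcal S)$. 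Restricting to the subsequence, the left-hand side of~\eqref{eq:claim2} equals $D^\infty(\bar\rho\|\mathcal S)$.

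It remains to compare $D^\infty(\bar\rho\|\mathcal S)$ with $\inf_{\rho\in\pazocal R_1}D^\infty(\rho\|\mathcal S)$. Because $\bar\rho$ lies only in the closure of $\pazocal R_1$, I would invoke Corollary~\ref{cor:3.1}: there are states of $\pazocal R_1$ (for instance the $\rho_{n_k}$ themselves) converging to $\bar\rho$ in trace norm, so
\[
D^\infty(\bar\rho\|\mathcal S)=\lim_k D^\infty(\rho_{n_k}\|\mathcal S)\geq\inf_{\rho\in\pazocal R_1}D^\infty(\rho\|\mathcal S).
\]
Note that (CC) and (REP) are exactly what Assumption~\ref{ass:standing-Sn} supplies, so the corollary applies. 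This closes the lower bound, and hence the equality.

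The step I expect to need the most care is the diagonal construction of $(\tau_n)_n$. One must check that Theorem~\ref{thm:W1_GQSL} genuinely applies to an arbitrary, non-i.i.d.-indexed sequence of product states with varying single-site marginals; it does, since the theorem only requires normalised $W_1$ convergence. This is precisely where the robust GQSL replaces the uniformity argument of~\cite{lami2026universalquantumresourcedistillation}. A minor point is that the minimum over $\mathcal S^{(n)}$ is attained and all quantities are finite, which follows from the full-rank state $\sigma_{\rm full}$ in (A3).
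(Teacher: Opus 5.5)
Your proposal is correct and follows the paper's proof essentially step by step. The lower bound uses a subsequence attaining the $\liminf$, quasi-minimisers in $\pazocal R_1$, a further subsequence converging to some $\bar\rho\in\bar{\pazocal R}_1$, the diagonal source that is $W_1$ almost i.i.d.\ along $\bar\rho$ fed into Theorem~\ref{thm:W1_GQSL}, and Corollary~\ref{cor:3.1} to pass from $\bar{\pazocal R}_1$ back to $\pazocal R_1$; the upper bound is handled exactly as in Lemma~\ref{lem:conv_iid}. One cosmetic point: you announce a proof by contradiction but then (correctly) argue directly, and you even record the equality with $D^\infty(\bar\rho\|\mathcal S)$ along the subsequence, where the paper only writes ``$\geq$''.
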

    \begin{proof}
    Take any arbitrary $\epsilon\in (0,1)$, and write
    \bb\label{eq:subsequence}
        \liminf_{n\to\infty}\frac 1n  \inf_{\rho \in\pazocal R_1} \rel{D_H^{\epsilon}}{\rho^{\otimes n}}{\mathcal S^{(n)}}
        &= \lim_{k\to\infty}\frac 1{\bar n(k)}\, \Rel{D_H^{\epsilon}}{\rho^{\otimes \bar n(k)}_{\bar n(k)}}{\mathcal S^{(\bar n(k))}}\, ,
    \ee
    where the subsequence $\bar n(k)$ is constructed as follows:
    \begin{itemize}
        \item first, we extract a subsequence $n(k)$ of $1,2,\dots$ such that
            \[\liminf_{n\to\infty}\frac 1n  \inf_{\rho \in\pazocal R_1} \rel{D_H^{\epsilon}}{\rho^{\otimes n}}{\mathcal S^{(n)}}
        = \lim_{k\to\infty}\frac 1{ n(k)} \inf_{\rho \in\pazocal R_1} \rel{D_H^{\epsilon}}{\rho^{\otimes  n(k)}}{\mathcal S^{(n(k))}}\,;\]
        \item then, we identify a subsequence of quasi-minimisers $\rho_{n(k)}\in\pazocal R_1$, e.g.\ 
        \[  \Big| \Rel{D_H^{\epsilon}}{\rho^{\otimes  n(k)}_{n(k)}}{\mathcal S^{(n(k))}} - \inf_{\rho \in\pazocal R_1} \rel{D_H^{\epsilon}}{\rho^{\otimes  n(k)}}{ \mathcal S^{(n(k))}}\Big|\leq 1\, ;\]
        \item finally, we extract a subsequence $\bar n(k)$ of $n(k)$ such that $\rho_{\bar n(k)}$ converges to an element $\bar \rho$ belonging to the closure $\bar{\pazocal R}_1$ of $\pazocal R_1$, namely $\displaystyle{\lim_{k\to\infty}\|\rho_{\bar n(k)}-\bar \rho\|_1=0}$.
    \end{itemize}
    This procedure yields~\eqref{eq:subsequence}. Define the sequence $(\bar \rho_n)_{n\geq 1}$ of states $\bar\rho_n\in\mathcal{D}(\mathcal{H}^{\otimes n})$ as
    \bb
        \bar \rho_n\coloneqq\begin{cases}
        \rho_{\bar n(k)}^{\otimes \bar n(k)} & \text{if } n=\bar n(k) \text{ for some } k\geq 1,\\
            \bar \rho^{\otimes n}&\text{otherwise.}
        \end{cases}
    \ee
    The source $(\bar\rho_n)_{n\geq 1}$ is $W_1$ almost i.i.d.\ along $\bar \rho$ since
    \bb
        \lim_{n\to\infty}\frac{1}{n}\|\bar\rho_n-\bar\rho^{\otimes n}\|_{W_1}\leq\lim_{k\to\infty}\frac 12\|\rho_{\bar n(k)} -\bar \rho\|_1=0.
    \ee
    Hence, by Theorem~\ref{thm:W1_GQSL}, we have
    \bb
        \lim_{k\to\infty}\frac 1{\bar n(k)} \Rel{D_H^{\epsilon}}{\rho^{\otimes \bar n(k)}_{\bar n(k)}}{\mathcal S^{(\bar n(k))}} \geq \liminf_{n\to\infty}\frac 1{ n} \rel{D_H^{\epsilon}}{\bar \rho_n}{\mathcal S^{( n)}} = D^{\infty}(\bar \rho\|\mathcal{ S})\geq \inf_{\rho\in\bar{\pazocal R}_1}D^\infty(\rho\|\mathcal{S}).
    \ee
    In order to restrict the optimisation from $\bar{\pazocal R}_1$ to $\pazocal R_1$ it is sufficient to leverage the continuity of the relative entropy of resource with respect to the normalised $W_1$ distance (Corollary~\ref{cor:3.1}). By~\eqref{eq:subsequence}, this concludes the proof of~\eqref{eq:claim2}. The converse inequality can be proved with the same approach of Lemma~\ref{lem:conv_iid}.
\end{proof}

\subsubsection{An extension to the case of a compound almost i.i.d.\ null hypothesis}
As a simple consequence of the previous proof strategy, we can extend Theorem~\ref{thm:iid} to the case of a compound almost i.i.d.\ null hypothesis. More precisely, let $\pazocal R_1\subseteq \mathcal{D}(\mathcal{H})$ be an arbitrary subset of states and let
\bb
\pazocal{R}^{\rm a-i.i.d.} \coloneqq \big( \pazocal{R}_n^{\rm a-i.i.d.} \big)_n\, ,\quad \text{with}\quad \pazocal{R}_n^{\rm a-i.i.d.} \subseteq \mathcal{D}(\mathcal{H}^{\otimes n}),
\ee
be a sequence of null hypotheses satisfying
\bb
    \lim_{n\to\infty}\sup_{\rho_n\in\pazocal{R}_n^{\rm a-i.i.d.}}\inf_{\rho\in\pazocal R_1}\frac 1n \|\rho_n-\rho^{\otimes n}\|_{W_1}=0,
\label{composite_almost_iid}
\ee
and such that, for all $\rho\in\pazocal R_1$, there exists at least one sequence of states $\rho_n\in \pazocal{R}_n^{\rm a-i.i.d.}$ ($n\geq 1$) which is a $W_1$ almost i.i.d.\ source along $\rho$.
In this case, we say that $\pazocal{R}^{\rm a-i.i.d.}$ is an \emph{compound almost i.i.d.\ null hypothesis with base set $\pazocal R_1$}.
Note that the particular case $\pazocal{R}_n^{\rm a-i.i.d.}=\{ \rho^{\otimes n}:\ \rho\in \pazocal{R}_1\}$, which satisfies~\eqref{composite_almost_iid}, is exactly the one discussed in Theorem~\ref{thm:iid}. 
\begin{boxedthm}{}
    \begin{thm}\label{thm:almost_iid} Suppose that the family $(\mathcal S^{(n)})_{n\geq 1}$ satisfies Assumption~\ref{ass:standing-Sn} and it is closed under permutation twirl. Let $\pazocal R_1\subseteq \mathcal{D}(\mathcal{H})$ be an arbitrary subset of states, and consider any arbitrary compound almost i.i.d.\ null hypothesis $\pazocal{R}^{\rm a-i.i.d.}$ with base set $\pazocal R_1$. Then,
        \bb
            \lim_{n\to\infty} \frac1n\, \rel{D_H^\epsilon}{\co\!\big(\pazocal{R}_n^{\rm a-i.i.d.}\big)}{\mathcal{S}^{(n)}} = \inf_{\rho\in \pazocal R_1} D^\infty(\rho \|\mathcal{S})\qquad \forall \,\epsilon\in (0,1).
        \ee
    \end{thm}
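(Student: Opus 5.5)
The proof splits into a converse bound and an achievability bound. The converse follows from Theorem~\ref{thm:W1_GQSL}. The achievability bound follows the scheme of Lemmas~\ref{lem:ach2b_iid} and~\ref{lem:ach2_iid}, with permutation twirling replacing the exact permutation symmetry of the i.i.d.\ case.

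\emph{Converse.} Fix $\rho\in\pazocal R_1$. By assumption there is a sequence $\rho_n\in\pazocal R_n^{\rm a-i.i.d.}$ that is a $W_1$ almost i.i.d.\ source along $\rho$. Since $\rho_n\in\co(\pazocal R_n^{\rm a-i.i.d.})$, we have $D_H^\epsilon(\co(\pazocal R_n^{\rm a-i.i.d.})\|\mathcal S^{(n)})\le D_H^\epsilon(\rho_n\|\mathcal S^{(n)})$. The second part of Theorem~\ref{thm:W1_GQSL} then gives $\limsup_n\frac1n D_H^\epsilon(\co(\pazocal R_n^{\rm a-i.i.d.})\|\mathcal S^{(n)})\le D^\infty(\rho\|\mathcal S)$. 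Taking the infimum over $\rho\in\pazocal R_1$ yields the upper bound.

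\emph{Achievability, step 1 (reduction to a polynomial mixture).} Take quasi-minimisers $\omega_n\in\co(\pazocal R_n^{\rm a-i.i.d.})$ and $\sigma_n\in\mathcal S^{(n)}$.
\begin{itemize}
\item Let $\pazocal P$ be the permutation twirl. Data processing gives $D_H^\epsilon(\omega_n\|\sigma_n)\ge D_H^\epsilon(\pazocal P(\omega_n)\|\pazocal P(\sigma_n))$, and $\pazocal P(\sigma_n)\in\mathcal S^{(n)}$ by closure under twirl. So it suffices to lower bound $D_H^\epsilon(\pazocal P(\omega_n)\|\mathcal S^{(n)})$.
\item The state $\pazocal P(\omega_n)$ lies in the convex hull of $\{\pazocal P(\tau):\tau\in\pazocal R_n^{\rm a-i.i.d.}\}$. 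This is a subset of the symmetric space $H^{\rm sym}_{d,n}$, whose dimension is at most $(n+1)^{d^2-1}$.
\item Carathéodory's theorem then gives $\pazocal P(\omega_n)=\sum_{i=1}^N p_i\,\pazocal P(\tau_i)$ with $N\le{\rm poly}(n)$ and $\tau_i\in\pazocal R_n^{\rm a-i.i.d.}$.
\item Set $\delta_n\coloneqq\sup_{\tau\in\pazocal R_n^{\rm a-i.i.d.}}\inf_{\rho\in\pazocal R_1}\frac1n\|\tau-\rho^{\otimes n}\|_{W_1}$, which tends to $0$ by~\eqref{composite_almost_iid}. For each $i$ pick $\rho_i\in\pazocal R_1$ with $\frac1n\|\tau_i-\rho_i^{\otimes n}\|_{W_1}\le\delta_n+\frac1n$.
\item Twirl contractivity of $W_1$ and permutation invariance of $\rho_i^{\otimes n}$ show that $\pazocal P(\tau_i)$ satisfies the same bound.
\item Running the chain~\eqref{eq:lower_bounds2} verbatim (Lemma~\ref{lemma:wsc}, \eqref{eq:Dmax_ineq} and Lemma~\ref{lem:weak_q_conc}) gives
\[
D_H^\epsilon(\pazocal P(\omega_n)\|\pazocal P(\sigma_n))\ge \min_i D_H^{\epsilon^2/32}(\pazocal P(\tau_i)\|\mathcal S^{(n)})-\log{\rm poly}'_{\epsilon,d}(n).
\]
\end{itemize}

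\emph{Achievability, step 2 (exchanging $\inf$ and $\liminf$).} Let $\mathcal T_n$ be the set of states $\tau$ with $\frac1n\|\tau-\rho^{\otimes n}\|_{W_1}\le\delta_n+\frac1n$ for some $\rho\in\pazocal R_1$. It remains to show
\[
\liminf_n\tfrac1n\inf_{\tau\in\mathcal T_n}D_H^{\epsilon'}(\tau\|\mathcal S^{(n)})\ge\inf_{\rho\in\pazocal R_1}D^\infty(\rho\|\mathcal S),\qquad \epsilon'=\epsilon^2/32.
\]
I would follow the proof of Lemma~\ref{lem:ach2_iid}.
\begin{itemize}
\item Pass to a subsequence $n(k)$ realising the $\liminf$, and pick quasi-minimising pairs $(\tau_{n(k)},\rho_{n(k)})$.
\item By compactness, refine to $\bar n(k)$ along which $\rho_{\bar n(k)}\to\bar\rho\in\bar{\pazocal R}_1$ in trace norm.
\item Define the source $\bar\rho_n\coloneqq\tau_{\bar n(k)}$ if $n=\bar n(k)$, and $\bar\rho_n\coloneqq\bar\rho^{\otimes n}$ otherwise.
\item By the triangle inequality and~\eqref{eq:W_tr}, $\frac1n\|\bar\rho_n-\bar\rho^{\otimes n}\|_{W_1}\le\delta_n+\frac1n+\frac12\|\rho_{\bar n(k)}-\bar\rho\|_1\to0$, so this is a $W_1$ almost i.i.d.\ source along $\bar\rho$.
\item The second part of Theorem~\ref{thm:W1_GQSL} bounds the subsequential limit from below by $D^\infty(\bar\rho\|\mathcal S)$.
\item Corollary~\ref{cor:3.1} shows $\inf_{\bar{\pazocal R}_1}D^\infty=\inf_{\pazocal R_1}D^\infty$.
\end{itemize}
Dividing by $n$ and noting that $\frac1n\log{\rm poly}'(n)\to0$ concludes.

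\emph{Expected main obstacle.} The hardest part is step 1. Carathéodory requires the mixture components to lie in a polynomial-dimensional space, but the elements $\tau\in\pazocal R_n^{\rm a-i.i.d.}$ need not be permutation invariant. This is precisely why the twirl-closure assumption on $\mathcal S^{(n)}$ is needed: it lets us symmetrise both hypotheses before applying Carathéodory. We must also check that twirling preserves the uniform $W_1$ closeness to some i.i.d.\ state from $\pazocal R_1$. It does, because $W_1$ contracts under $\pazocal P$ and i.i.d.\ states are fixed by it.
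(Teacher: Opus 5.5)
Your proposal is correct and follows essentially the same route as the paper's proof. The converse takes an individual $W_1$ almost i.i.d.\ source inside $\pazocal R_n^{\rm a-i.i.d.}$ and applies Theorem~\ref{thm:W1_GQSL}. The achievability twirls both hypotheses, applies Carath\'eodory in the symmetric space and the weak quasi-concavity chain, then builds an auxiliary $W_1$ almost i.i.d.\ source along a subsequential limit $\bar\rho$ and closes with Corollary~\ref{cor:3.1}. The only difference is cosmetic: you push the $W_1$ bound through the twirl first and take the infimum over the neighbourhood $\mathcal T_n$, whereas the paper takes it over $\pazocal P_n(\pazocal R_n^{\rm a-i.i.d.})$ and transfers the bound afterwards.
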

\end{boxedthm}

\begin{proof}
    The first steps of the achievability argument are almost identical to the strategy discussed in Section~\ref{sec:proof_iid}. Let $(\rho^{(n)})_{n\geq 1}$ and $(\sigma_n)_{n\geq 1}$ be two sequences of quasi-optimisers $\rho^{(n)}\in\co\!\big(\pazocal{R}_n^{\rm a-i.i.d.}\big)$ and $\sigma_n\in\mathcal{S}^{(n)}$ such that 
\bb\label{eq:q_min}
     \rel{D_H^\epsilon}{\co\!\big(\pazocal{R}_n^{\rm a-i.i.d.}\big)}{\mathcal{S}^{(n)}}\geq \rel{D_H^\epsilon}{\rho^{(n)}}{\sigma_n}-1.
\ee
By the data-processing inequality, calling $\pazocal P_n$ the permutation twirl over $S_n$, we 
\bb
    \rel{D_H^\epsilon}{\rho^{(n)}}{\sigma_n}\geq \rel{D_H^\epsilon}{\pazocal P_n(\rho^{(n)})}{\pazocal P_n(\sigma_n)}.
\ee
Now, since $\pazocal P(\rho^{(n)})$ is permutation-invariant by construction and belongs to $\pazocal P_n\big(\co\!\big(\pazocal{R}_n^{\rm a-i.i.d.}\big)\big)= \co\!\big(\pazocal P_n(\pazocal{R}_n^{\rm a-i.i.d.})\big)$, we proceed as in the proof of Lemma~\ref{lem:ach1a_iid} or Lemma~\ref{lem:ach2b_iid}. First, we write
 \bb
    \pazocal P_n(\rho^{(n)})=\sum_{i=1}^Np_i\pazocal P(\rho_{i,n}), \qquad 0\leq p_i\leq 1, \qquad \rho_{i,n}\in \pazocal{R}_n^{\rm a-i.i.d.}
 \ee
by Carathéodory's theorem, where $N\leq(n+1)^{d^2-1}+1={\rm poly}(n)$. Then, with either a quantum or a classical quasi-concavity argument for the smooth max-relative entropy (see Section~\ref{sec:proof_iid}), we can lower bound
\bb\label{eq:1}
    \rel{D_H^\epsilon}{\rho^{(n)}}{\sigma_n} &\geq \min_i \rel{D_H^{\epsilon^2/32}}{\pazocal P_n(\rho_{i,n})}{\pazocal P_n(\sigma_n)} - \log {\rm poly}_{\epsilon,d}(n)\\
    &\geq \inf_{\rho^{(n)}\in \pazocal{R}_n^{\rm a-i.i.d.}} \rel{D_H^{\epsilon^2/32}}{\pazocal P_n(\rho^{(n)})}{\mathcal{S}^{(n)}} - \log {\rm poly}_{\epsilon,d}(n)\, .
\ee
Now, take two quasi-optimisers $\bar \rho^{(n)}\in \pazocal{R}_n^{\rm a-i.i.d.} $ and $\bar \rho_{n}\in \pazocal R_1$ such that
\bb\label{eq:second}
    \inf_{\rho_n\in \pazocal{R}_n^{\rm a-i.i.d.}} \rel{D_H^{\epsilon^2/32}}{\pazocal P_n(\rho_{n})}{\mathcal{S}^{(n)}} &\geq \rel{D_H^{\epsilon^2/32}}{\pazocal P_n(\bar\rho^{(n)})}{\mathcal{S}^{(n)}} - 1,\\
    \frac{1}{n}\left\|\bar \rho^{(n)}-\bar \rho_n^{\otimes n}\right\|_{W_1}&\leq \sup_{\rho^{(n)}\in\pazocal{R}_n^{\rm a-i.i.d.}}\inf_{\rho\in\pazocal R_1}\frac 1n \left\|\rho^{(n)}-\rho^{\otimes n}\right\|_{W_1}+\frac 1n\, .
\ee
Extract a subsequence $n(k)$ such that
\bb\label{eq:2}
    \liminf_{n\to\infty}\frac 1n\,\rel{D_H^{\epsilon^2/32}}{\pazocal P_n(\bar \rho^{(n)})}{\mathcal{S}^{(n)}} = \lim_{k\to\infty}\frac 1{n(k)}\,\rel{D_H^{\epsilon^2/32}}{\pazocal P_{n(k)}(\bar \rho^{(n(k))})}{\mathcal{S}^{(n(k))}}\, ,
\ee
then extract from $n(k)$ a sub-subsequence $\bar n(k)$ such that 
\bb
    \lim_{k\to\infty} \left\|\bar\rho_{\bar n(k)}-\bar \rho\right\|_1 = 0
\ee
for some $\bar \rho\in\bar{\pazocal R}_1$, where $\bar{\pazocal R}_1$ denotes the closure of $\pazocal R_1$.  The source $(\tilde \rho_n)_{n\geq 1}$, defined as
    \bb
        \tilde \rho_n\coloneqq\begin{cases}
        \bar \rho^{(\bar n(k))} & \text{if } n=\bar n(k) \text{ for some } k\geq 1,\\
            \bar \rho^{\otimes n}&\text{otherwise,}
        \end{cases}\qquad \tilde\rho_n\in\mathcal{D}(\mathcal{H}^{\otimes n}),
    \ee
    is a $W_1$-almost i.i.d.\ along $\bar \rho$. Indeed,
    \bb\label{eq:itsW}
         \lim_{n\to\infty}\frac{1}{n}\left\|\tilde\rho_n-\bar\rho^{\otimes n}\right\|_{W_1} &\leq \lim_{k\to\infty}\frac{1}{\bar n(k)}\left\|\bar \rho^{(\bar n(k))}-\bar \rho^{\otimes \bar n(k)}\right\|_{W_1}\\
         &\leq \lim_{k\to\infty}\frac{1}{\bar n(k)} \left\|\bar \rho^{(\bar n(k))}-\bar \rho_{\bar n(k)}^{\otimes \bar n(k)}\right\|_{W_1}+\lim_{k\to\infty}\frac 12\left\|\bar \rho_{\bar n(k)}-\bar \rho\right\|_1 \\
         &\leq \lim_{k\to\infty}\sup_{\rho^{(\bar n(k))}\in\pazocal{R}_{\bar n(k)}^{\rm a-i.i.d.}}\inf_{\rho\in\pazocal R_1}\frac{1}{\bar n(k)} \left\|\rho^{(\bar n(k))}-\rho^{\otimes \bar n(k)}\right\|_{W_1} = 0\, ,
    \ee
    where the last inequality follows from the second bound in~\eqref{eq:second}.
    In particular, also $(\pazocal P_n(\tilde \rho_n))_{n\geq 1}$ is a $W_1$ almost i.i.d.\ source along $\bar\rho$. Then, combining the previous equations, we get
    \bb
        \liminf_{n\to\infty}\frac 1n\, \rel{D_H^\epsilon}{\rho^{(n)}}{\sigma_n} &\geqt{\eqref{eq:1}} \liminf_{n\to\infty}\frac 1n\inf_{\rho^{(n)}\in \pazocal{R}_n^{\rm a-i.i.d.}} \rel{D_H^{\epsilon^2/32}}{\pazocal P_n(\rho^{(n)})}{\mathcal{S}^{(n)}}\\
        &\eqt{\eqref{eq:2}}
        \liminf_{k\to\infty}\frac 1{n(k)}\, \rel{D_H^{\epsilon^2/32}}{\pazocal P_{n(k)}(\bar \rho^{(n(k))})}{\mathcal{S}^{(n(k))}}\\
        &= \liminf_{k\to\infty}\frac 1{\bar n(k)}\,\rel{D_H^{\epsilon^2/32}}{\pazocal P_{\bar n(k)}(\bar \rho^{(\bar n(k))})}{\mathcal{S}^{(\bar n(k))}}\\
        &= \lim_{n\to\infty}\frac 1n\, \rel{D_H^{\epsilon^2/32}}{\pazocal P_{ n}(\tilde \rho_n)}{\mathcal{S}^{( n)}} \\
        &= D^\infty(\bar \rho\|\mathcal{S})\\
        &\geq \inf_{\rho\in\bar {\pazocal R}_1}D^\infty(\rho\|\mathcal{S}),
    \ee
    where the last two equalities follow from Theorem~\ref{thm:W1_GQSL}. Then, we conclude the achievability inequality by noticing that $\displaystyle{\inf_{\rho\in\bar {\pazocal R}_1}D^\infty(\rho\|\mathcal{S})=\inf_{\rho\in\pazocal R_1}D^\infty(\rho\|\mathcal{S})}$, as already discussed in the final part of the proof of Lemma~\ref{lem:ach2_iid}. \bigskip
    
    For the converse, choose any arbitrary $\rho\in\pazocal R_1$ and identify a $W_1$ almost i.i.d.\ source $\rho_n\in \pazocal{R}_n^{\rm a-i.i.d.}$ ($n\geq 1$) along $\rho$. Then, for all $\epsilon\in(0,1)$
    \bb
        \limsup_{n\to\infty} \frac 1n\rel{D_H^\epsilon}{\co\!\big(\pazocal{R}_n^{\rm a-i.i.d.}\big)}{\mathcal{S}^{(n)}}&\leq \limsup_{n\to\infty} \frac 1n \rel{D_H^\epsilon}{\rho_n}{\mathcal{S}^{(n)}}
        = D^{\infty}(\rho\|\mathcal{S}),
    \ee
    where the last identity follows from Theorem~\ref{thm:W1_GQSL}. Then, by arbitrariness of $\rho\in\pazocal R_1$, we conclude that 
    \bb
        \limsup_{n\to\infty} \frac 1n\rel{D_H^\epsilon}{\co\!\big(\pazocal{R}_n^{\rm a-i.i.d.}\big)}{\mathcal{S}^{(n)}}&\leq \inf_{\rho\in\pazocal R_1} D^{\infty}(\rho\|\mathcal{S}),
    \ee
    which completes the proof.
\end{proof}

\subsection{Arbitrarily varying null hypothesis}
 
 In this section we prove a complementary result to Theorem~\ref{thm:iid}, based on a natural relaxation of the compound i.i.d.\ case: the \emph{arbitrarily varying} setting. Here, the null hypothesis allows for any arbitrary product of $n$ states from the base set $\pazocal R_1$. More formally, let us define
\bb
\pazocal{R}^{\rm a.v.} \coloneqq \big( \pazocal{R}_n^{\rm a.v.} \big)_n\, ,\quad \pazocal{R}_n^{\rm a.v.} \coloneqq \left\{ \rho^{(1)}\otimes\rho^{(2)}\otimes\cdots\otimes\rho^{(n)}:\ \rho^{(i)}\in \pazocal{R}_1\,\; \forall i\in[n]\right\}. \label{eq:arbitrary_product}
\ee
Then, the following result holds.
\begin{boxedthm}{}
    \begin{thm}\label{thm:av} Suppose that the family $(\mathcal S^{(n)})_{n\geq 1}$ satisfies Assumption~\ref{ass:standing-Sn} and is closed under permutation twirl. Let $\pazocal R_1\subseteq \mathcal{D}(\mathcal{H})$ be an arbitrary subset of states. Then,
        \bb
            \lim_{n\to\infty}\frac{1}{n}\,\rel{D_H^{\epsilon}}{{\rm conv}\big(\pazocal{R}_n^{\rm a.v.}\big)}{\mathcal{S}^{(n)}} = \inf_{\rho\in{\rm conv}(\pazocal{R}_1)}D^{\infty}(\rho\|\mathcal{S})\qquad \forall \,\epsilon\in (0,1).
        \ee
    \end{thm}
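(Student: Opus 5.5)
Both inequalities will be reduced to Theorem~\ref{thm:W1_GQSL} through a permutation twirl. The key fact is that a twirled product state $\pazocal P_n(\rho^{(1)}\otimes\cdots\otimes\rho^{(n)})$ depends only on the empirical ``type'' $\frac1n\sum_i\delta_{\rho^{(i)}}$. For large $n$ it is $W_1$-close to the i.i.d.\ power of the average state $\bar\rho=\frac1n\sum_i\rho^{(i)}\in\co(\pazocal R_1)$.

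\textbf{Converse.} Fix $\rho=\sum_{j=1}^m q_j\rho_j\in\co(\pazocal R_1)$ with $\rho_j\in\pazocal R_1$. For each $n$, pick integers $n_j$ with $\sum_j n_j=n$ and $|n_j-q_jn|\le 1$, and form $\omega_n=\pazocal P_n\big(\bigotimes_j\rho_j^{\otimes n_j}\big)\in\co(\pazocal R_n^{\rm a.v.})$, since each permuted product again lies in $\pazocal R_n^{\rm a.v.}$.

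I will show that $(\omega_n)_n$ is a $W_1$ almost i.i.d.\ source along $\rho$. Writing $\rho^{\otimes n}=\sum_{x\in[m]^n}q^{x}\bigotimes_i\rho_{x_i}$ and coupling it with the fixed type $(n_j)$, the Hamming-type cost is $\mathbb E|\text{type}(x)-(n_j)|_1\lesssim\sqrt{n}\,m$. This uses $W_1$ convexity, permutation invariance and additivity. In words, one rearranges the sites and then changes at most $\sum_j|N_j-n_j|$ of them, each at cost at most $1$, which gives $O(\sqrt n)=o(n)$.

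Then $D_H^\epsilon(\co(\pazocal R_n^{\rm a.v.})\|\mathcal S^{(n)})\le D_H^\epsilon(\omega_n\|\mathcal S^{(n)})$, and Theorem~\ref{thm:W1_GQSL} gives $\limsup\frac1n\le D^\infty(\rho\|\mathcal S)$. Taking the infimum over $\rho$ completes the converse.

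\textbf{Achievability.} I follow the proof of Theorem~\ref{thm:almost_iid}.
\begin{enumerate}
\item Take quasi-optimisers $\rho^{(n)}\in\co(\pazocal R_n^{\rm a.v.})$ and $\sigma_n$.
\item Twirl both. Data processing applies, and $\pazocal P_n(\sigma_n)\in\mathcal S^{(n)}$ by the twirl-closure assumption.
\item Use Carath\'eodory in the symmetric subspace to write $\pazocal P_n(\rho^{(n)})$ as a mixture of at most ${\rm poly}(n)$ states $\pazocal P_n(\tau_{i,n})$ with $\tau_{i,n}\in\pazocal R_n^{\rm a.v.}$.
\item Apply Lemma~\ref{lemma:wsc} and Lemma~\ref{lem:weak_q_conc} exactly as in \eqref{eq:lower_bounds2}. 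This gives
\[\tfrac1n D_H^\epsilon(\rho^{(n)}\|\sigma_n)\ge\tfrac1n\inf_{\tau\in\pazocal R_n^{\rm a.v.}}D_H^{\epsilon^2/32}(\pazocal P_n(\tau)\|\mathcal S^{(n)})-o(1).\]
\item Choose quasi-minimising products $\tau_n=\rho^{(1)}_n\otimes\cdots\otimes\rho^{(n)}_n$ with averages $\bar\rho_n\in\co(\pazocal R_1)$. Pass to a subsequence realising the liminf along which $\bar\rho_n\to\bar\rho\in\overline{\co(\pazocal R_1)}$.
\end{enumerate}

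The key claim, which I expect to be the main obstacle, is that
\[\tfrac1n\big\|\pazocal P_n(\tau_n)-\bar\rho_n^{\otimes n}\big\|_{W_1}\to0\]
uniformly over arbitrary products. I plan to prove it by a de Finetti/sampling-without-replacement coupling. Let $\pi$ be a uniformly random permutation and $J_1,\dots,J_n$ i.i.d.\ uniform on $[n]$. Then $\bar\rho_n^{\otimes n}=\mathbb E_J\bigotimes_i\rho^{(J_i)}_n$, while $\pazocal P_n(\tau_n)=\mathbb E_\pi\bigotimes_i\rho^{(\pi(i))}_n$.

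Use the standard coupling between the sequences $(\pi(i))_i$ and $(J_i)_i$. The index sequences then differ in at most the number of repeated draws among the $J_i$, which has expectation $n-\mathbb E|\{J_i\}|\approx n/e$. This is not $o(n)$, so the naive coupling fails and a refinement is needed.

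Instead I will compare the multisets: the $J$-sequence, viewed up to permutation, has an empirical measure over the indices $[n]$. For the $W_1$ bound, however, only the empirical distribution of the states matters. I will therefore first quantise $\pazocal R_1$ into finitely many $\delta$-cells, at a cost of $\delta n$ in $W_1$ by single-site trace-distance bounds. After quantisation, the finitely-many-types argument from the converse applies, with $O(\sqrt n)$ fluctuation. Letting $\delta\to0$ gives the claim.

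Triangle inequality with $\frac12\|\bar\rho_n-\bar\rho\|_1\to0$ then shows that $\pazocal P_n(\tau_n)$, extended off the subsequence by $\bar\rho^{\otimes n}$, is a $W_1$ almost i.i.d.\ source along $\bar\rho$. Theorem~\ref{thm:W1_GQSL} then yields the lower bound $D^\infty(\bar\rho\|\mathcal S)$. Finally, Corollary~\ref{cor:3.1} removes the closure, giving $\inf_{\overline{\co}}=\inf_{\co}$, which completes the proof.
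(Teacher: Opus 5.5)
Your proof is correct, but the achievability part takes a different route from the paper's.

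\textbf{The paper's route.} The paper first handles a finite $\pazocal R_1$. It argues by contradiction at the level of $D_{\max}^{\epsilon,P}$ and decomposes the twirled null state over the polynomially many types of a fixed finite alphabet. It picks one type via Lemma~\ref{lem:weak_q_conc}, then uses Proposition~\ref{prop:quantitative_Wass} with the classical--quantum channel $x\mapsto\rho_x$ to obtain a $W_1$ almost i.i.d.\ source along $\rho_{\bar t}$. General $\pazocal R_1$ is reached by three further steps: depolarising with $\Delta_{p,\sigma_{\rm full}}$, outer-approximating $\co(\pazocal R_{1,p})$ by a polytope of states, and sending $\delta\to0$ and then $p\to0$ via Corollary~\ref{cor:3.1}.

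\textbf{Your route.} You prove a single uniform estimate, $\sup_{\tau\in\pazocal R_n^{\rm a.v.}}\frac1n\|\pazocal P_n(\tau)-\bar\rho_\tau^{\otimes n}\|_{W_1}\to0$, where $\bar\rho_\tau$ is the average of the single-site factors of $\tau$. You do this by quantising the whole state space with a finite $\delta$-net, which costs at most $\delta$ per site both for the product and for its average. You then apply the finite-alphabet estimate, whose alphabet size depends only on $\delta$ and $d$, so the bound is uniform in $\tau$. You were right that the naive labelled coupling over the alphabet $[n]$ loses a fraction of about $1/e$ of the sites and cannot work.

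\textbf{What your route buys.} With this lemma, Carath\'eodory in the symmetric subspace plus the chain~\eqref{eq:lower_bounds2} applies to arbitrary $\pazocal R_1$ directly. The polytope approximation and the depolarisation trick become unnecessary, and continuity is used only once, to pass from $\overline{\co(\pazocal R_1)}$ to $\co(\pazocal R_1)$. Your uniform lemma and your sequence $\omega_n$ together show that the twirled family $\pazocal P_n(\pazocal R_n^{\rm a.v.})$ is a compound $W_1$ almost i.i.d.\ null hypothesis with base set $\co(\pazocal R_1)$. Since twirl data processing and twirl-closure of $\mathcal S^{(n)}$ leave the composite $D_H^\epsilon$ unchanged, Theorem~\ref{thm:av} is then essentially a corollary of Theorem~\ref{thm:almost_iid}.

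\textbf{Converse.} Your construction of $\omega_n$ is valid but more than you need. The state $\rho^{\otimes n}=\sum_x q^x\bigotimes_i\rho_{x_i}$ already lies in $\co(\pazocal R_n^{\rm a.v.})$, so the i.i.d.\ GQSL suffices, as in the paper.
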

\end{boxedthm}

Before presenting the proof of Theorem~\ref{thm:av}, we need a result from the theory of types, which we briefly recall. Let $\mathcal{X}_1$ be a finite set. We call $\pazocal T_n$ the set of the types of the sequences in $\mathcal{X}_1^n$, namely
\bb
    \pazocal T_n\coloneqq \big\{t\in\mathcal{P}(\mathcal{X_1}): nt(x)\in\mathbb{N}, \forall x\in\mathcal{X}_1\big\},
\ee
where $\mathcal P(\mathcal{X}_1)$ is the set of probability distributions on $\mathcal{X}_1$.
We say that a sequence $x^n\in\mathcal{X}_1^n$ has type $t$ if $|\{i:x_i=x, i\in[n]\}|=nt(x)$ for all $x\in\mathcal{X}_1$. The uniform distribution $u_{t,n}$ on all the sequences of type $t\in\pazocal{T}_n$ is defined as
\bb
    u_{t,n}(x^n)\coloneqq\frac{\prod_{x\in\mathcal{X}}(nt(x))!}{n!}\times \begin{cases}
        1 & \text{if $x^n$ has type $t$,} \\
        0 & \text{otherwise.}
    \end{cases}
\ee

\begin{prop}[{\cite[Proposition~23]{almost_iid}}]\label{prop:quantitative_Wass}
        Let $n\geq 1$ be any integer, $t\in\pazocal{T}_{n}$ any type on an alphabet of cardinality $N\coloneqq |\XX|$. We denote by $u_{t,n}$ the uniform distribution on sequences $x^n$ of type $t$. Then,
    \bb
        \frac{1}{n}\|u_{t,n}-t^{\otimes n}\|_{W_1}\leq \sqrt{\frac{(N-1) \log (n+1)}{2n}}\, .
    \ee
\end{prop}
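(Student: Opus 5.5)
The plan is to build an explicit classical coupling between $t^{\otimes n}$ and $u_{t,n}$ with small expected Hamming cost, and then control that cost with the method of types and Pinsker's inequality.

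\textbf{Reduction to a classical transport cost.} Both $u_{t,n}$ and $t^{\otimes n}$ are diagonal in a fixed product basis, so it suffices to bound the classical Hamming transport cost. Let $\pi$ be any coupling of the two distributions. Write
\[
u_{t,n}-t^{\otimes n}=\sum_{x^n,y^n}\pi(x^n,y^n)\big(\delta_{y^n}-\delta_{x^n}\big),
\]
with $y^n\sim u_{t,n}$ and $x^n\sim t^{\otimes n}$. If $x^n$ and $y^n$ differ in $k$ coordinates, connect them by a path of $k$ strings, each differing from the previous one in a single coordinate. Then $\delta_{y^n}-\delta_{x^n}$ telescopes into $k$ differences of neighbouring states, and the definition of $\|\cdot\|_{W_1}$ gives $\|\delta_{y^n}-\delta_{x^n}\|_{W_1}\le k$. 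Summing over the coupling yields
\[
\|u_{t,n}-t^{\otimes n}\|_{W_1}\le \mathbb E_\pi\big[d_H(x^n,y^n)\big].
\]

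\textbf{The coupling.} Sample $x^n\sim t^{\otimes n}$ and let $s$ be its empirical type. For each symbol $x$ with $s(x)>t(x)$, choose uniformly at random $n(s(x)-t(x))$ of the positions carrying $x$. Relabel these positions, again uniformly at random, so as to fill the deficits of the symbols with $s(x')<t(x')$. The output $y^n$ has type $t$ and $d_H(x^n,y^n)=\tfrac n2\|s-t\|_1$. The construction is permutation-equivariant and $t^{\otimes n}$ is permutation-invariant, so $y^n$ is exchangeable. Since $y^n$ is supported on the type class of $t$, it must be uniform on it, i.e.\ $y^n\sim u_{t,n}$. This symmetry argument is the point I would check most carefully. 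Combining with the reduction,
\[
\tfrac1n\|u_{t,n}-t^{\otimes n}\|_{W_1}\le \mathbb E\big[\tfrac12\|s-t\|_1\big],
\]
where $s$ is the type of an i.i.d.\ sample.

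\textbf{Bounding the expected deviation of the type.} By Pinsker's inequality and Jensen's inequality,
\[
\mathbb E\big[\tfrac12\|s-t\|_1\big]\le \mathbb E\sqrt{D(s\|t)/2}\le\sqrt{\mathbb E[D(s\|t)]/2}.
\]
To bound $\mathbb E[D(s\|t)]$, introduce the auxiliary distribution $q(x^n)\coloneqq u_{s(x^n),n}(x^n)/|\pazocal T_n|$. Two facts from the method of types are needed. First, $t^{\otimes n}(x^n)=e^{-n(H(s)+D(s\|t))}$. Second, the type class $T_s$ satisfies $|T_s|\le e^{nH(s)}$. Then
\[
0\le D\big(t^{\otimes n}\,\big\|\,q\big)=\mathbb E\big[-nH(s)-nD(s\|t)+\log|T_s|+\log|\pazocal T_n|\big]\le -n\,\mathbb E[D(s\|t)]+\log|\pazocal T_n|.
\]
Using $|\pazocal T_n|\le(n+1)^{N-1}$, this gives $\mathbb E[D(s\|t)]\le (N-1)\log(n+1)/n$, which plugged into the previous bound yields the claim.

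I expect the only delicate points to be rigorously verifying the uniformity of the coupled output and keeping logarithm bases consistent, with natural logarithms throughout.
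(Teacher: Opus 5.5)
Your proof is correct and complete. The coupling is legitimate: the kernel commutes with permutations, so the law of $y^n$ is $S_n$-invariant and supported on the type class of $t$, on which $S_n$ acts transitively, hence it equals $u_{t,n}$. Pinsker combined with $\mathbb{E}[D(s\|t)]\le (N-1)\log(n+1)/n$ then gives exactly the stated constant; that expectation bound follows from $D(t^{\otimes n}\|q)\ge 0$, $|T_s|\le e^{nH(s)}$ and $|\pazocal T_n|=\binom{n+N-1}{N-1}\le (n+1)^{N-1}$.

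The paper does not prove this proposition itself; it imports it from \cite{almost_iid}. It only offers the heuristic that $t^{\otimes n}$ concentrates on sequences that are Hamming-close to the type class of $t$. Your argument is a rigorous, quantitative version of that heuristic, so it follows essentially the same route.
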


Heuristically, Proposition~\ref{prop:quantitative_Wass} can be interpreted as follows: since the mass of $t^{\otimes n}$ is asymptotically concentrated on the $(n,\epsilon)$-typical set $T_{n}^\epsilon$ of $t$, and since the items of $T_{n}^\epsilon$ are close in Hamming distance to the support of $u_{t,n}$, then $t^{\otimes n}$ and $u_{t,n}$ are close in Wasserstein distance.\smallskip

Now we have all the ingredients to prove Theorem~\ref{thm:av}.

\begin{proof}[Proof of Theorem~\ref{thm:av}.]
 Let us start with the converse inequality. Since for all $\rho \in {\rm conv}(\pazocal R_1)$ and $n\geq 1$ the state $\rho^{\otimes n}$ belongs to ${\rm conv}\big(\pazocal{R}_n^{\rm a.v.}\big)$, we have
    \bb
    \limsup_{n\to\infty}\frac{1}{n}\, \rel{D_H^{\epsilon}}{{\rm conv}\big(\pazocal{R}_n^{\rm a.v.}\big)}{ \mathcal{S}^{(n)}} &\leq \limsup_{n\to\infty}\frac{1}{n}\,\rel{D_H^{\epsilon}}{\rho^{\otimes n}}{ \mathcal{S}^{(n)}} = D^{\infty}(\rho\|\mathcal{S})\, ,
    \ee
    whence, by arbitrariness of $\tau \in {\rm conv}(\pazocal R_1)$ and by~\eqref{eq:lower_bound}, we conclude that
    \bb
        \limsup_{n\to\infty}\frac{1}{n}\,\rel{D_H^{\epsilon}}{{\rm conv}\big(\pazocal{R}_n^{\rm a.v.}\big)}{ \mathcal{S}^{(n)}} \leq\inf_{\rho \in {\rm conv}(\pazocal R_1)}D^{\infty}(\rho\|\mathcal{S})\, .
    \ee
The proof of the a achievability is divided into two parts: first, we consider the claim for a finite set $\pazocal R_1$, then we extend the result to a possibly infinite $\pazocal R_1$.\smallskip

\noindent\textbf{Step 1. Proof in the case $|\pazocal{R}_1|<+\infty$.}
 We consider a parameterisation of $\pazocal{R}_1$ as follows:
\bb
    \pazocal{R}_1=\{\rho_x\}_{x\in\mathcal{X}_1}\qquad \text{where}\qquad \mathcal{X}_1=\{x_1,\dots, x_N\}\, .
\ee
Given $\epsilon\in(0,1)$, we want to prove
\bb
    \inf_{\rho\in{\rm conv}(\pazocal{R}_1)}D^{\infty}(\rho\|\mathcal{S})\leqt{?}\liminf_{n\to\infty}\frac{1}{n}\,\rel{D_{\max}^{\epsilon,P}}{{\rm conv}\big(\pazocal{R}_n^{\rm a.v.}\big)}{ \mathcal{S}^{(n)}} \leq \inf_{\rho\in{\rm conv}(\pazocal{R}_1)}D^{\infty}(\rho\|\mathcal{S})\, ,
\ee
where $\displaystyle{D_{\max}^{\epsilon, P}(\pazocal A\|\pazocal B)\coloneqq \inf_{\rho\in\pazocal A}\inf_{\sigma\in\pazocal B}D_{\max}^{\epsilon, P}(\rho\|\sigma)}$.
By contradiction, suppose that the desired lower bound for the liminf fails. Then there exist
\(\lambda<\inf_{\rho\in{\rm conv}(\pazocal{R}_1)}D^{\infty}(\rho\|\mathcal{S})\)
and a strictly increasing sequence of integers $(n_j)_{j\geq 1}$ such that
\bb
    \rel{D_{\max}^{\epsilon,P}}{{\rm conv}\big(\pazocal{R}_{n_j}^{\rm a.v.}\big)}{ \mathcal{S}^{(n_j)}} < n_j\lambda\qquad\forall j\geq 1\, .
\ee
We restrict the remainder of this step to the indices $n\in\{n_j:j\geq1\}$; thus every subsequent occurrence of $n$ in this step ranges along this fixed subsequence.
Then, there exist
\begin{itemize}
    \item a state $R_n\in {\rm conv}\big(\pazocal{R}_n^{\rm a.v.}\big)$ and a state $\Omega_n\in\mathcal{D}(\mathcal{H}^{\otimes n})$ such that $P(R_n,\Omega_n)\leq \epsilon$, where $P$ is the purified distance;
    \item a state $\Sigma_n\in\mathcal{S}^{(n)}$ satisfying $D_{\max}(\Omega_n\|\Sigma_n)\leq n\lambda$.
\end{itemize}
In a single line,
\bb\label{eq:line}
    R_n \approx_{\epsilon} \Omega_n \leq \exp(n\lambda)\Sigma_n.
\ee
Since both $R_n\in {\rm conv}\pazocal{R}_1^{\otimes n, {\rm av}}$ and $\mathcal{S}_n$ are closed under the action of the permutation twirl, by the data-processing inequality we can assume without loss of generality that the states appearing in~\eqref{eq:line} are permutation invariant. As an immediate consequence, we can write $R_n$ as a sum indexed by types in $\pazocal{T}_n$:
\bb
    R_n=\sum_{t\in\pazocal{T}_n}p(t) \rho_{n,t} \qquad\text{with}\qquad \rho_{n,t}\coloneqq \pazocal P \left(\rho_{x_1}^{\otimes nt(x_1)}\otimes\cdots\otimes \rho_{x_N}^{\otimes nt(x_N)} \right),
\ee
where $p\in\mathcal{P}(\pazocal{T}_n)$ and $\pazocal P$ is the permutation twirl, namely
\bb
    \pazocal P(\,\cdot\,) = \frac{1}{n!}\sum_{\pi\in S_n}U_\pi\,\cdot\, U_\pi^\dagger
\ee
Then, by the weak quasi-concavity of the smoothed max-relative entropy (Lemma~\ref{lem:weak_q_conc}), for every $n\in\{n_j:j\geq1\}$ there exists
a type $t_n\in \pazocal{T}_n$ such that, for all $\epsilon'\in (\epsilon,1)$, 
\bb\label{eq:weak_qc}
    D_{\max}^{\epsilon',P}(\rho_{n,t_n}\|\Sigma_n)&\leq D_{\max}^{\epsilon,P}(R_n\|\Sigma_n)+\log |\pazocal T_n|-\log(\epsilon'^2-\epsilon^2)\\
    &\leq n\lambda+\log {\rm poly}(n)+f(\epsilon,\epsilon'),
\ee
where we have called $f(\epsilon,\epsilon')\coloneqq -\log(\epsilon'^2-\epsilon^2)$. 
Now, fix $\epsilon'\in (\epsilon,1)$ and $0< \delta< 1-\epsilon'$. By the weak/strong converse duality (Lemma~\ref{lemma:wsc}), we get
\bb\label{eq:inequalities}
    D_H^{\delta}(\rho_{n,t_n}\|\Sigma_n)+\log(1-\epsilon'-\delta)\leq D_{\max}^{\epsilon'}(\rho_{n,t_n}\|\Sigma_n)\leq n\lambda+\log {\rm poly}(n)+f(\epsilon,\epsilon'),
\ee
where in the last inequality we have recalled that $D_{\max}^{\epsilon'}(\rho_{n,t_n}\|\Sigma_n)\leq D_{\max}^{\epsilon',P}(\rho_{n,t_n}\|\Sigma_n)$.
We claim that
    \bb\label{eq:claim}
        \liminf_{n\to\infty} \frac{1}{n}D_H^{\delta}(\rho_{n,t_n}\|\Sigma_n)\geqt{?}  \inf_{\rho\in{\rm conv}(\pazocal{R}_1)}D^{\infty}(\rho\|\pazocal{S}).
    \ee
Take a subsequence $n(k)$ such that
\bb
     \lim_{k\to\infty}\frac{1}{n(k)}D_H^{\delta}(\rho_{n(k),t_{n(k)}}\|\Sigma_{n(k)})=\liminf_{n\to\infty} \frac{1}{n}D_H^{\delta}(\rho_{n,t_n}\|\Sigma_n).
\ee

    When $\pazocal{R}_1$ is finite, by the compactness of $\mathcal{P}(\pazocal{R}_1)$ we can take a subsequence $\bar n(k)$ of $n(k)$ such that $\displaystyle{\exists \bar t\coloneqq \lim_{k\to \infty} t_{\bar n(k)}}$.
    Then, by Proposition~\ref{prop:quantitative_Wass}, we have
    \bb
        \frac{1}{\bar n(k)}\left\|u_{t_{\bar n(k)},\bar n(k)}-\bar t^{\otimes \bar n(k)}\right\|_{W_1}&\leq \frac{1}{\bar n(k)}\left\|u_{t_{\bar n(k)},\bar n(k)}-t_{\bar n(k)}^{\otimes \bar n(k)}\right\|_{W_1}+\frac{1}{\bar n(k)}\left\|t_{\bar n(k)}^{\otimes \bar n(k)}-\bar t^{\otimes \bar n(k)}\right\|_{W_1}\\
        &\leq \sqrt{\frac{(N-1) \log (\bar n(k)+1)}{2\bar n(k)}}+\frac 12\|t_{\bar n(k)}-\bar t\|_1\, .
    \ee
        In particular,
    \bb\label{eq:limit_nk}
        \lim_{k\to\infty}\frac{1}{\bar n(k)}\left\|u_{t_{\bar n(k)},\bar n(k)}-\bar t^{\otimes \bar n(k)}\right\|_{W_1}=0\, .
    \ee
    As a corollary, $(\rho_{\bar n(k),t_{\bar n(k)}})_{k\geq 1}$ is a $W_1$ almost i.i.d.\ source along 
    \bb
    \rho_{\bar t}\coloneqq\sum_{x\in\mathcal{X}_1} \bar t(x)\rho_x.
    \ee
    Indeed, by the data-processing inequality with the single-system classical-quantum channel $x \mapsto\rho_x$, and by~\eqref{eq:limit_nk}, we get
    \bb
        \lim_{k\to\infty}\frac{1}{\bar n(k)}\left\|\rho_{\bar n(k),t_{\bar n(k)}}-\rho_{\bar t}^{\otimes \bar n(k)}\right\|_{W_1}\leq \lim_{k\to\infty}\frac{1}{\bar n(k)}\left\|u_{t_{\bar n(k)},\bar n(k)}-\bar t^{\otimes \bar n(k)}\right\|_{W_1}=0\, .
    \ee
    Then, the source $(\bar \rho_n)_{n\geq 1}$, \ludo{where $\rho_n\in \mathcal{D}(\mathcal{H}^{\otimes n})$ is} defined as
    \bb
        \bar  \rho_n\coloneqq\begin{cases}
         \rho_{\bar n(k),t_{\bar n(k)}} & \text{if } n=\bar n(k) \text{ for some } k\geq 1,\\
             \rho_{\bar t}^{\otimes n}&\text{otherwise,}
        \end{cases}
    \ee
    is a $W_1$-almost i.i.d.\ source along $\rho_{\bar t}$. As a consequence,
    \bb\label{eq:star}
        \lim_{k\to \infty} \frac{1}{n(k)} D_H^{\delta}(\rho_{n(k),t_{n(k)}}\|\Sigma_{n(k)})&=\lim_{k\to \infty} \frac{1}{\bar n(k)} D_H^{\delta}(\rho_{\bar n(k),t_{\bar n(k)}}\|\Sigma_{\bar n(k)})\\
        &\geq \liminf_{k\to \infty} \frac{1}{\bar n(k)} D_H^{\delta}(\rho_{\bar n(k),t_{\bar n(k)}}\|\mathcal{S}^{(\bar n(k))})\\
        &\eqt{($\star$)} \lim_{n\to \infty} \frac{1}{ n} D_H^{\delta}(\bar \rho_n\|\mathcal{S}^{(n)})\\
        &\eqt{($\star$)} D^{\infty}(\rho_{\bar t}\|\mathcal{S})\\
        &\geq \inf_{\rho\in{\rm conv}(\pazocal{R}_1)}D^{\infty}(\rho\|\mathcal{S}),
    \ee
    where the two steps marked with ($\star$) follow from Theorem~\ref{thm:W1_GQSL}.
    This concludes the proof of the claim~\eqref{eq:claim}.
Now, combining~\eqref{eq:inequalities} with~\eqref{eq:claim}, we get 
\bb
    \inf_{\rho\in{\rm conv}(\pazocal{R}_1)}D^{\infty}(\rho\|\mathcal{S}) \leq \liminf_{n\to\infty} \frac{1}{n} D_H^{\delta}(\rho_{n,t_n}\|\Sigma_n)\leq \lambda
\ee
which contradicts the initial assumption $\displaystyle{\inf_{\rho\in{\rm conv}(\pazocal{R}_1)}D^{\infty}(\rho\|\mathcal{S}) >\lambda }$. Hence, 
\bb
    \liminf_{n\to\infty}\frac{1}{n}\, \rel{D_{\max}^{\epsilon,P}}{{\rm conv}\big(\pazocal{R}_n^{\rm a.v.}\big)}{ \mathcal{S}^{(n)}} \geq \inf_{\rho\in{\rm conv}(\pazocal{R}_1)}D^{\infty}(\rho\|\mathcal{S}),
\ee
and, once again by the weak/strong converse duality (Lemma~\ref{lemma:wsc}) this concludes the proof of Theorem~\ref{thm:av} in the case of a finite set $\pazocal R_1$.\bigskip

\noindent\textbf{Step 2. Extension to the case $|\pazocal{R}_1|=+\infty$.} We want to make an outer approximation of the set $\pazocal R_1$ with a polytope contained in the state space, in order to reduce this second case to the first one. However, already in the simplest case $\pazocal R_1=\mathcal{D}(\mathcal{H})$, this is clearly not possible, due to the fact that $\mathcal{D}(\mathcal{H})$ is not itself a polytope. However, we can bypass this problem as follows. Let
\bb
\Delta_{p,\sigma_{\rm full}}(X)\coloneqq (1-p)X + p \Tr[X] \sigma_{\rm full}
\ee
be the depolarising channel with respect to the full rank state $\sigma_{\rm full}\in \pazocal S^{(1)}$. It is easy to prove that, for all $n\geq 1$,
\bb\label{eq:sopra0}
   \Delta_{p,\sigma_{\rm full}}^{\otimes n}\big(\mathcal{S}^{(n)}\big)\subseteq \mathcal{S}^{(n)} 
   \qquad\text{and}\qquad
\Delta_{p,\sigma_{\rm full}}^{\otimes n}\big({\rm conv}\big(\pazocal{R}_n^{\rm a.v.}\big)\big)={\rm conv}\big(\pazocal{R}_{n,p}^{\rm a.v.}\big),
\ee
where $\pazocal{R}_{n,p}^{\rm a.v.} \coloneqq \{ \rho^{(1)}\otimes\cdots\otimes\rho^{(n)}:\ \rho^{(i)}\in \pazocal{R}_{1,p}\}$ and $\pazocal R_{1,p}\coloneqq\Delta_{p,\sigma_{\rm full}}(\pazocal{R}_1)$. By the data-processing inequality and by~\eqref{eq:sopra0}, we have
\bb\label{eq:sopra3}
    \rel{D_H^{\epsilon}}{{\rm conv}\big(\pazocal{R}_n^{\rm a.v.}\big)}{\mathcal{S}^{(n)}} &\geq \rel{D_H^{\epsilon}}{\Delta_{p,\sigma_{\rm full}}^{\otimes n}\big({\rm conv}\big(\pazocal{R}_n^{\rm a.v.}\big)\big)}{\Delta_{p,\sigma_{\rm full}}^{\otimes n}\big(\mathcal{S}^{(n)}\big)} \\
    &\geq \rel{D_H^{\epsilon}}{{\rm conv}\big(\pazocal{R}_{n,p}^{\rm a.v.}\big)}{\mathcal{S}^{(n)}}\, .
\ee

Fix an arbitrary $0<p<1$. Then the closure of the (pre-compact) set $\pazocal R_{1,p}$ lies entirely in the interior of $\mathcal{D}(\mathcal{H})$. For every arbitrary $0<\delta<1$, it is possible to construct an outer approximation of ${\rm conv}\big(\pazocal R_{1,p}\big)$ with trace distance error at most $\delta$ given by the convex hull of a finite number of elements $\{\rho_1,\rho_2,\dots, \rho_{N_{p,\delta}}\}\eqcolon \pazocal R_{1,p}^{(\delta)}\subseteq \mathcal{D}(\mathcal{H})$ (see~\cite[Lemma~2.20]{codenotti2021generalisedflatnessconstantsframework} and~\cite[Theorem~1.8.16]{Schneider2013-bt}), namely
\bb\label{eq:sopra4}
{\rm conv}\big(\pazocal R_{1,p}\big)\subseteq {\rm conv}\big(\pazocal R_{1,p}^{(\delta)}\big)
    \qquad\text{and}\qquad  \displaystyle{\sup_{\rho\in{\rm conv}(\pazocal R_{1,p}^{(\delta)})}\inf_{\rho'\in{\rm conv}(\pazocal R_{1,p})}}\tfrac{1}{2}\|\rho-\rho'\|_1\leq \delta.
\ee
In particular, on one hand, we have
    ${\rm conv}\big(\pazocal{R}_{n,p}^{\rm a.v.}\big)\subseteq {\rm conv}\big(\pazocal{R}_{n,p,\delta}^{\rm a.v.}\big)$ for all $n\geq 1$,
with $\pazocal{R}_{n,p,\delta}^{\rm a.v.} \coloneqq \{ \rho^{(1)}\otimes\cdots\otimes\rho^{(n)}:\ \rho^{(i)}\in \pazocal{R}_{1,p}^{(\delta)}\}$, which implies
\bb\label{eq:pre_asymptotic}
\liminf_{n\to\infty}\frac{1}{n}\,\rel{D_H^{\epsilon}}{{\rm conv}\big(\pazocal{R}_n^{\rm a.v.}\big)}{ \mathcal{S}^{(n)}} &\geqt{\eqref{eq:sopra3}}
\liminf_{n\to\infty}\frac{1}{n}\,\rel{D_H^{\epsilon}}{{\rm conv}\big(\pazocal{R}_{n,p}^{\rm a.v.}\big)}{\mathcal{S}^{(n)}} \\
&\geqt{\eqref{eq:sopra4}}\liminf_{n\to\infty}\frac{1}{n}\,\rel{D_{H}^{\epsilon}}{{\rm conv}\big(\pazocal{R}_{n,p,\delta}^{\rm a.v.}\big)}{ \mathcal{S}^{(n)}} \\
    &=\inf_{\rho\in{\rm conv}\left(\pazocal{R}_{1,p}^{(\delta)}\right)}D^{\infty}(\rho\|\mathcal{S}),
    \ee
    where the last identity follows from Theorem~\ref{thm:av} for the finite set $\pazocal R_{1,\delta}$. On the other hand, by Assumption~\ref{ass:standing-Sn}, we can apply the continuity of the regularised relative entropy of resource (Corollary~\ref{cor:3.1}): letting $\delta \to 0$ and, later, $p\to 0$ in~\eqref{eq:pre_asymptotic}, we have
    \bb\label{eq:lower_bound}
    \liminf_{n\to\infty}\frac{1}{n}\,\rel{D_H^{\epsilon}}{{\rm conv}\big(\pazocal{R}_n^{\rm a.v.}\big)}{\mathcal{S}^{(n)}} &\geq \inf_{\rho \in {\rm conv}(\pazocal R_1)}D^{\infty}(\rho\|\mathcal{S}),
    \ee
    which completes the proof of Theorem~\ref{thm:av}. 
    \end{proof}

\begin{rem}
    Note that the robust generalised quantum Stein's lemma for MSR almost i.i.d.\ sources~\cite[Theorem~4.3]{Mazzola_2026_2} is not sufficient to complete the computation in~\eqref{eq:star}, as the source $(\rho_{\bar n(k),t_{\bar n(k)}})_{k\geq 1}$ is not MSR almost i.i.d.\ in general~\cite[Proposition~25]{almost_iid}.
\end{rem}

\section{Proof of Theorem~\ref{thm:W1_GQSL}}\label{sec:big_proof}

The purpose of this section is to isolate the technical tools that allow the proof of the GQSL in Ref.~\cite{Hayashi2025} to be applied with a non-i.i.d.\ null source.  
More precisely, in our setting the null state is an arbitrary Wasserstein almost i.i.d.\ source $(\rho_n)_n$ along a fixed state $\rho\in\mathcal{D}(\mathcal H)$.
Thus every information-spectrum argument, pinching technique, and likelihood-ratio estimate have to be proved directly for \(\rho_n\).

Here, we isolate the three ingredients that make the argument stable
under Wasserstein almost i.i.d. perturbations of the null source.
\begin{itemize}
    \item Proposition~\ref{prop:info-spectrum-W1}. We prove an intrinsic information-spectrum asymptotic equipartition property for $W_1$ almost i.i.d.\ source \((\rho_n\): with respect to the spectral measure of $\rho_n$, the random variable $-\frac 1n\log\rho_n$  converges in probability to the von Neumann entropy of $\rho$; this result simultaneously strengthens the convergence in expectation of~\cite[Proposition~21]{almost_iid} and the convergence in probability of the random variable $-\frac 1n\log\rho^{\otimes n}$, which only acts locally, given in~\cite[Lemma~3]{datta2026entropyconcentrationuniversaltypicality}.
    \item Proposition~\ref{prop:pinching-extension}. We show that the previous result is stable under any pinching map with a sub-exponential number of outcomes. The pinching technique will be a crucial tool in the proof of the generalised quantum Stein's lemma in order to reduce the quantum setting to the classical case.
    \item Proposition~\ref{prop:padded-block-product-strong-converse}. We combine the preceding information-spectrum input with a padded block-product log-likelihood moment estimate to prove a $W_1$-stable strong converse for block-product alternative hypotheses.
    Note that this is the replacement for the one used in the Hayashi--Yamasaki proof~\cite[Lemma S6]{Hayashi2025}.
\end{itemize}

\subsection[Ouverture. $W_1$-information-spectrum stability and consequences]{Ouverture. \texorpdfstring{$\boldsymbol{W_1}$}{W1}-information-spectrum stability and consequences}

\label{sec:info_spec}

We start with some preliminaries needed in the information spectrum framework.
For a state \(\rho_n\) on \(\mathcal{H}_n\), we regard
\bb
    Z_n(\rho_n)\coloneqq -\frac1n\log \rho_n
\ee
as a random variable $Z_n$ with respect to the spectral measure of \(\rho_n\). Thus, if
\bb
    \rho_n=\sum_i \lambda_i^{(n)}P_i^{(n)}
\ee
is the spectral decomposition of $\rho_n$, with $\lambda_i^{(n)}\neq \lambda_j^{(n)}$ for all $i\neq j$, then 
\bb
    \PP{\rho_n}\left(Z_n=-\tfrac 1n \log\lambda_i^{(n)}\right)=\lambda_i^{(n)}\Tr P_i^{(n)}.
\ee
Clearly, zero eigenvalues do not contribute. We say that $Z_n$ \emph{converges in probability} to the constant $z_\infty\in\mathbb{R}$ with respect to the spectral measure of \(\rho_n\) if
\bb\label{eq:conv_prob}
    \lim_{n\to\infty}\PP{\rho_n}\left(\big|Z_n-z_\infty\big|> \delta\right)=0 \qquad \forall \delta>0,
\ee
and we write
\bb
    -\frac 1n \log \rho_n \xrightarrow{\mathsf P_{\rho_n}} z_\infty.
\ee
Calling $\Lambda_{>0}\coloneqq\{i:\lambda_i^{(n)}>0\}$, we have
\bb
    \PP{\rho_n}\left(\big|Z_n-z_\infty\big|> \delta\right)&=
    \sum_{i\in \Lambda_{>0}}\lambda_i^{(n)}\chi_{\big|-\tfrac 1n \log\lambda_i^{(n)}-z_\infty\big|>\delta}\Tr\big[P_i^{(n)}\big]\\
    &=\sum_{i\in \Lambda_{>0}}\lambda_i^{(n)}\Tr\left[P_i^{(n)}\left\{\left|-\tfrac 1n \log\rho_n-z_\infty\right|>\delta\right\}\right]\\
    &=\Tr\left[\rho_n \left\{
        \left\lvert-\tfrac1n\log \rho_n-z_\infty\right\rvert>\delta
    \right\}\right]
\ee
where $\chi_{a>b}$ is the characteristic function of $a>b$, i.e.\ $\chi_{a>b}=1$ if $a>b$, and $\chi_{a>b}=0$ otherwise. Therefore,~\eqref{eq:conv_prob} is equivalent to 
\bb
    \lim_{n\to\infty}\Tr\left[\rho_n \left\{
        \left\lvert-\tfrac1n\log \rho_n-z_\infty\right\rvert>\delta
    \right\}\right]=0.
\ee
The first key statement that we are going to prove in the remainder of the section is the following claim of stability of the information spectrum for Wasserstein almost i.i.d.\ states. The proof heavily relies on the techniques used in~\cite[Section~9]{De_Palma_2023b} to prove the tight continuity of the von Neumann entropy with respect to the quantum Wasserstein distance of order 1.

\begin{boxedprop}{}
\begin{prop}[(Information-spectrum stability for Wasserstein almost i.i.d.\ sources)]
\label{prop:info-spectrum-W1}
Let \(\rho\) be a state on \(\mathcal H\) and let $(\rho_n)_n$ be a Wasserstein almost i.i.d.\ source along $\rho$. Then
\bb
    -\frac1n\log \rho_n
    \xrightarrow{\mathsf P_{\rho_n}}
    S(\rho),
\ee
namely, for every \(\delta>0\),
\bb
    \lim_{n\to\infty}\Tr \left[\rho_n\,
    \left\{
        \left\lvert-\tfrac1n\log \rho_n-S(\rho)\right\rvert>\delta
    \right\}\right]=0.
\ee
\end{prop}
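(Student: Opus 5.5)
The plan is to treat the two tails of $Z_n=-\frac1n\log\rho_n$ separately and asymmetrically. The lower tail $\{Z_n<S(\rho)-\delta\}$ needs a genuinely new transport argument, and this is where I expect the main difficulty. The upper tail $\{Z_n>S(\rho)+\delta\}$ should then follow almost for free from convergence in expectation, $\EE{\rho_n}[Z_n]=S(\rho_n)/n\to S(\rho)$. That convergence is a direct consequence of Theorem~\ref{thm:SW} applied with $\sigma_n=\rho^{\otimes n}$, using $w_n\to0$ and $S(\rho^{\otimes n})=nS(\rho)$.

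\emph{Step 1 (upper tail from the lower tail).} Assume we already know $P_{\rm low}(\eta):=\PP{\rho_n}(Z_n<S(\rho)-\eta)\to0$ for every $\eta>0$. Since $Z_n\ge0$, split the expectation over the three regions $\{Z_n<S-\eta\}$, $\{S-\eta\le Z_n\le S+\delta\}$ and $\{Z_n>S+\delta\}$, writing $S=S(\rho)$ and $P_{\rm high}:=\PP{\rho_n}(Z_n>S+\delta)$. This gives
\bbb
\frac{S(\rho_n)}{n}\ \ge\ (S-\eta)\big(1-P_{\rm low}(\eta)-P_{\rm high}\big)+(S+\delta)P_{\rm high}.
\eee
Hence $(\delta+\eta)P_{\rm high}\le \frac{S(\rho_n)}{n}-S+\eta+S\,P_{\rm low}(\eta)$. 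Letting $n\to\infty$ and then $\eta\to0$ yields $\limsup_n P_{\rm high}=0$. No truncation of large values of $Z_n$ is needed, since only the lower bound $Z_n\ge 0$ is used.

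\emph{Step 2 (lower tail via typicality and blowing-up).} Let $Q_n:=\{\rho_n>e^{-n(S-\delta)}\}$. Then $\PP{\rho_n}(Z_n<S-\delta)\le\Tr[\rho_nQ_n]$ and $\rk Q_n\le e^{n(S-\delta)}$. For the i.i.d.\ state, the standard AEP (equivalently the strong converse for source compression) gives $\Tr[\rho^{\otimes n}\Pi]\to0$ uniformly over projectors $\Pi$ of rank at most $e^{n(S-\delta/2)}$. I will transfer this to $\rho_n$ with a Hamming-type blow-up. Let $Q_n^{(r)}$ be the projector onto the span of all vectors obtained from the range of $Q_n$ by acting with operators supported on at most $rn$ sites. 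A counting argument bounds its rank by
\bbb
\rk Q_n^{(r)}\le \rk Q_n\binom{n}{\lfloor rn\rfloor}d^{2rn}\le e^{n(S-\delta+h(r)+2r\log d)},
\eee
which is at most $e^{n(S-\delta/2)}$ for $r$ small enough. Therefore $\Tr[\rho^{\otimes n}Q_n^{(r)}]\to0$.

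\emph{Step 3 (the main obstacle).} It remains to show that $\Tr[\rho_nQ_n]\ge c>0$ along a subsequence, together with $\frac1n\|\rho_n-\rho^{\otimes n}\|_{W_1}\to0$, forces $\Tr[\rho^{\otimes n}Q_n^{(r)}]$ to stay bounded away from zero. This is a quantum analogue of Marton's blowing-up lemma in $W_1$ form.

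My first attempt will work directly from the definition of the norm. Decompose $\rho_n-\rho^{\otimes n}=\sum_i c_i(\tau^{(i)}-\eta^{(i)})$ with $\sum_ic_i=o(n)$. The intended mechanism is that a mass $c$ on $Q_n$ can only be moved off $Q_n^{(r)}$ by changing at least $rn$ sites, which would cost $W_1$-mass of order $c\,rn\gg o(n)$. Making this rigorous requires a quantum coupling of $\rho_n$ and $\rho^{\otimes n}$ that moves a typical component on only $o(n)$ sites.

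If this direct route fails, my fallback is to adapt the chain-rule and conditional-entropy technique of~\cite[Section~9]{De_Palma_2023b}. There, $\rho_n$ is interpolated to $\rho^{\otimes n}$ by a sequence of neighbouring steps, and one would track the quantity $\Tr[\,\cdot\,Q_n^{(r)}]$ along this interpolation. This is where the proof's real work lies.

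Once Step 3 is in place, combining it with Step 2 gives $P_{\rm low}\to0$, and Step 1 then completes the proof of the claimed convergence in probability.
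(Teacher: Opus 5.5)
Your Steps 1 and 2 are sound. Step 3, however, is the entire substance of the proposition, and you leave it unproved; neither route you sketch can deliver it as described. What you need is the quantum blowing-up inequality
\[
\Tr[\rho_n\Pi_{\pazocal V}]\le\Tr[\rho^{\otimes n}\Pi_{\pazocal V_r}]+\frac1r\,\|\rho_n-\rho^{\otimes n}\|_{W_1},
\]
applied with $\pazocal V={\rm Ran}\,Q_n$ and $r=\lfloor \epsilon n\rfloor$.

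Your first route does not work because the quantum $W_1$ norm is defined through decompositions into differences of neighbouring states, not through couplings. There is no transport plan along which \emph{mass on $Q_n$ is moved site by site}, so that picture has no formal counterpart.

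Your fallback fails for a more concrete reason. Suppose you evaluate a decomposition $\rho_n-\rho^{\otimes n}=\sum_i c_i(\tau^{(i)}-\eta^{(i)})$, or an interpolation through neighbouring states, against the projector $Q_n$ or $Q_n^{(r)}$ itself. Each neighbouring pair can shift its expectation by up to $1$. All you get is $|\Tr[(\rho_n-\rho^{\otimes n})Q_n^{(r)}]|\le\sum_i c_i=o(n)$, which is vacuous for a probability. The gain factor $rn$ in your heuristic only appears if you test against an observable with two properties: it rises from $0$ outside $\pazocal V_r$ to $r$ on $\pazocal V$, and it depends on every single site by at most $1$.

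That observable is the missing idea, and it is what the paper uses. Build the nested fattenings $\pazocal V_0\subseteq\pazocal V_1\subseteq\cdots$ with projectors $\Pi_k$, and set $B_{\pazocal V,r}=\sum_{k=0}^{r-1}\Pi_k=(r-D_{\pazocal V})_+$. It satisfies $r\Pi_{\pazocal V}\le B_{\pazocal V,r}\le r\Pi_{\pazocal V_r}$ and $\|B_{\pazocal V,r}\|_L\le1$.

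For the Lipschitz bound, fix a site $x$. Let $\pazocal V_{k,x}$ be generated by operators supported on at most $k$ sites \emph{including} $x$. Then $\pazocal V_{k-1}\subseteq\pazocal V_{k,x}\subseteq\pazocal V_k$, and $\pazocal V_{k,x}$ is invariant under operators on $x$. So $B_{\pazocal V,r}$ is an operator acting trivially on $x$ plus $C_x=\sum_{k<r}(\Pi_k-\Pi_{k,x})$ with $0\le C_x\le\id$, giving $\partial_xB_{\pazocal V,r}\le1$.

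The dual formula then yields $r\Tr[\rho_n\Pi_{\pazocal V}]-r\Tr[\rho^{\otimes n}\Pi_{\pazocal V_r}]\le\Tr[B_{\pazocal V,r}(\rho_n-\rho^{\otimes n})]\le\|\rho_n-\rho^{\otimes n}\|_{W_1}$. This closes Step 3 immediately (Lemma~\ref{lem:rho-small-subspaces}).

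With this in place your plan goes through, and your Step 1 is a correct and genuinely different route to the upper tail. The paper instead fattens the typical projector of $\rho^{\otimes n}$ and uses the complementary inequality, with $A_{\pazocal V,r}=\min(D_{\pazocal V},r)$, to show it captures almost all of the mass of $\rho_n$. It then concludes by a Ky Fan rank count. Your version, using $Z_n\ge0$ and $S(\rho_n)/n\to S(\rho)$, is shorter. It does import Theorem~\ref{thm:SW}, whose proof in the cited reference relies on the same fattening techniques.

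One minor point: the rank of $Q_n^{(r)}$ is bounded by $\rk Q_n\sum_{j\le rn}\binom nj(d^2-1)^j$, not by a single binomial term. Your exponential estimate still holds for $r\le1/2$.
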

\end{boxedprop}
\begin{proof}
    See Section~\ref{proof:info-spectrum-W1}.
\end{proof}

As a strengthening of Proposition~\ref{prop:info-spectrum-W1} we are going to prove in Proposition~\ref{prop:pinching-extension} a stability result when pinching the source by a subexponential number of projectors, as discussed in Section~\ref{sec:pinching-extension}, which turns out to be a crucial tool in the proof of our main result.

\begin{boxedprop}{}
\begin{prop}[(Information spectrum after subexponential pinching)]
\label{prop:pinching-extension}
Let \(\rho\) be a state on \(\mathcal H\) and let $(\rho_n)_n$ be a Wasserstein almost i.i.d.\ source along $\rho$.
For each $n$, let \(\{E_{n,j}\}_{j=1}^{m_n}\) be a family of mutually orthogonal projectors summing to the identity such that
\bb
    \lim_{n\to\infty}\frac{\log m_n}n=0.
\ee
Define the pinched state
\bb
    \widehat{\rho}_n\coloneqq\sum_{j=1}^{m_n} E_{n,j}\rho_nE_{n,j}.
\ee
Then
\bb
    -\frac1n\log \widehat{\rho}_n
    \xrightarrow{\mathsf P_{\widehat{\rho}_n}}
    S(\rho),
\ee
namely, for every \(\delta>0\),
\bb
    \lim_{n\to\infty}\Tr \widehat{\rho}_n\left\{
        \left\lvert-\tfrac1n\log \widehat{\rho}_n-S(\rho)\right\rvert>\delta
    \right\}
    =0.
\ee
\end{prop}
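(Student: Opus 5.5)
The plan is to split the statement into a lower-tail and an upper-tail estimate for $\widehat Z_n\coloneqq-\frac1n\log\widehat\rho_n$ under $\mathsf P_{\widehat\rho_n}$. The lower tail will come from Proposition~\ref{prop:info-spectrum-W1} via a rank-counting argument. The upper tail will come from a first-moment (Markov-type) argument, using that the mean of $\widehat Z_n$ converges to $S(\rho)$.

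\emph{Step 1 (mean).} The mean is $\mathbb E_{\widehat\rho_n}[\widehat Z_n]=\frac1n S(\widehat\rho_n)$. Pinching does not decrease entropy, and it increases it by at most $\log m_n$. Hence
\[
S(\rho_n)\le S(\widehat\rho_n)\le S(\rho_n)+\log m_n .
\]
By Theorem~\ref{thm:SW} applied to $\rho_n$ and $\rho^{\otimes n}$, we have $\frac1nS(\rho_n)\to S(\rho)$, since $w_n\to0$. Together with $\frac{\log m_n}{n}\to0$, this gives $\frac1nS(\widehat\rho_n)\to S(\rho)$.

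\emph{Step 2 (lower tail).} Fix $\delta>0$ and let $\Pi_n\coloneqq\{\widehat\rho_n\ge e^{-n(S(\rho)-\delta)}\}$. Since $\Tr\widehat\rho_n\le1$, we get $\rk\Pi_n\le e^{n(S(\rho)-\delta)}$. The projector $\Pi_n$ is a spectral projector of the block-diagonal operator $\widehat\rho_n$, so it commutes with every $E_{n,j}$. Therefore $\Tr[\widehat\rho_n\Pi_n]=\Tr[\rho_n\Pi_n]$. Now use the elementary bound $\Tr[\rho\Pi]\le \Tr[\rho\{\rho>c\}]+c\,\rk\Pi$, which follows from $\rho\le \rho\{\rho>c\}+c\id$. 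With $c=e^{-n(S(\rho)-\delta/2)}$ this gives
\[
\Tr[\widehat\rho_n\Pi_n]\le \Tr\big[\rho_n\{-\tfrac1n\log\rho_n<S(\rho)-\delta/2\}\big]+e^{-n\delta/2}.
\]
Both terms vanish, the first by Proposition~\ref{prop:info-spectrum-W1}. Hence $p^{\rm low}_n(\delta)\coloneqq\mathsf P_{\widehat\rho_n}(\widehat Z_n<S(\rho)-\delta)\to0$ for every $\delta>0$. This is the only step where the $W_1$ hypothesis enters beyond Step 1.

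\emph{Step 3 (upper tail).} Let $p^{\rm up}_n\coloneqq\mathsf P_{\widehat\rho_n}(\widehat Z_n>S(\rho)+\delta)$ and fix $\delta'\in(0,\delta)$. Since $\widehat Z_n\ge0$ almost surely, splitting the expectation over the three regions gives
\[
\tfrac1nS(\widehat\rho_n)\ge (S(\rho)-\delta')\big(1-p^{\rm low}_n(\delta')-p^{\rm up}_n\big)+(S(\rho)+\delta)\,p^{\rm up}_n .
\]
Rearranging and taking the $\limsup$ with Steps 1–2 yields $\limsup_n p^{\rm up}_n\le \delta'/(\delta+\delta')$. Letting $\delta'\to0$ gives $p^{\rm up}_n\to0$, which concludes the proof.

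The main obstacle is that $\widehat\rho_n$ and $\rho_n$ define different spectral measures. Step 2 resolves this through the commutation of spectral projectors of $\widehat\rho_n$ with the pinching, and Step 3 sidesteps it entirely by working only with the mean.
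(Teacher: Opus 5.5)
Your proof is correct, and it departs from the paper's mainly in the upper tail. The paper treats both tails by dimension counting. For the upper tail it takes the projector $Q_n$ onto the $\lfloor \epsilon n\rfloor$-fattening of the typical subspace of $\rho^{\otimes n}$, so that $\Tr\rho_nQ_n\to1$ by~\eqref{eq:BU2} while $\rk Q_n\le e^{n(S(\rho)+\gamma+o(1))}$. It then enlarges this to the smallest subspace invariant under all the $E_{n,j}$. The corresponding projector $\widetilde Q_n$ has rank at most $m_n\rk Q_n$ and satisfies $\Tr\widehat\rho_n\widetilde Q_n=\Tr\rho_n\widetilde Q_n\ge\Tr\rho_nQ_n$, and the paper concludes with the Ky Fan bound on the small eigenvalues of $\widehat\rho_n$. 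For the lower tail the paper uses, exactly as you do, that spectral projectors of $\widehat\rho_n$ commute with the pinching. It then invokes Lemma~\ref{lem:rho-small-subspaces} directly, where you pass through the lower tail of Proposition~\ref{prop:info-spectrum-W1} and the inequality $\Tr[\rho\Pi]\le\Tr[\rho\{\rho>c\}]+c\,\rk\Pi$; the two are interchangeable. Your upper-tail step replaces the invariant-hull construction by a first-moment argument: since $\widehat Z_n\ge0$, an upper bound on the mean together with a vanishing lower tail forces the upper tail to vanish. Only $\limsup_n\frac1nS(\widehat\rho_n)\le S(\rho)$ is actually used, so the lower half of your entropy sandwich is not needed. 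Your route is shorter and avoids building the $E_{n,j}$-invariant subspace, but it imports the De Palma--Trevisan continuity bound (Theorem~\ref{thm:SW}). The paper's argument instead stays within the blowing-up framework already set up for Proposition~\ref{prop:info-spectrum-W1} and controls each tail separately, with no information on the mean of $\widehat Z_n$. As a byproduct, your argument with $m_n=1$ shows that, given Theorem~\ref{thm:SW}, the upper tail in Proposition~\ref{prop:info-spectrum-W1} already follows from its lower tail. So the fattened typical projector and~\eqref{eq:BU2} could be dispensed with there as well.
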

\end{boxedprop}

\begin{proof}
    See Section~\ref{sec:pinching-extension}.
\end{proof}

Finally, the following statement is a key result which will be needed both in the converse and in the achievability part of the proof of the $W_1$ robust generalised quantum Stein's lemma, as schematically represented in Figure~\ref{fig:converse} and~\ref{fig:achievability}, respectively.

\begin{boxedprop}{}
\begin{prop}[(\(W_1\)-stable strong converse for block alternative)]
\label{prop:padded-block-product-strong-converse}
Fix a block length \(m\ge 1\) and a full-rank state
\(\sigma_m\in\mathcal S^{(m)}\). For \(n=m\ell+r\), \(0\le r<m\), define
\bb\label{eq:block_pad}
  \sigma_n^{(m)}\coloneqq\sigma_m^{\otimes\ell}\otimes\sigma_{\rm full}^{\otimes r}.
\ee
Let \((\rho_n)_n\) be Wasserstein almost i.i.d.\ along \(\rho\).
Then, for every \(\eta\in(0,1)\),
\bb
  \limsup_{n\to\infty}
  \frac1n \rel{D_H^\eta}{\rho_n}{\sigma_n^{(m)}}
  \le
  \frac1mD(\rho^{\otimes m}\|\sigma_m).
  \label{eq:stable_strong_converse}
\ee
\end{prop}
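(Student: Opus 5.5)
The approach is to reduce the quantum strong converse to a classical information-spectrum estimate. First pinch with respect to the spectral decomposition of the block alternative $\sigma_n^{(m)}$. Then control the two resulting log-likelihood terms separately: the term $-\tfrac1n\log\widehat\rho_n$ via Proposition~\ref{prop:pinching-extension}, and the term $-\tfrac1n\log\sigma_n^{(m)}$ via a transfer of the i.i.d.\ law of large numbers through the Wasserstein distance.

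\emph{Step 1 (pinching and a Hayashi--Nagaoka bound).} Since $\sigma_n^{(m)}=\sigma_m^{\otimes\ell}\otimes\sigma_{\rm full}^{\otimes r}$, its distinct eigenvalues are products of $\ell$ eigenvalues of $\sigma_m$ and $r$ eigenvalues of $\sigma_{\rm full}$. Hence their number satisfies $m_n\le(\ell+1)^{d^m}(r+1)^{d}={\rm poly}(n)$, so $\tfrac1n\log m_n\to0$. Let $\widehat\rho_n$ be the pinching of $\rho_n$ onto these eigenspaces, so that $\widehat\rho_n$ commutes with $\sigma_n^{(m)}$. By the pinching inequality $\rho_n\le m_n\widehat\rho_n$ and the standard information-spectrum converse (for any test $0\le T\le\id$ and any $a\in\mathbb R$), I would use
\[
\Tr[T\sigma_n^{(m)}]\ \ge\ e^{-a}\Big(\Tr[T\rho_n]-m_n\Tr\big[\widehat\rho_n\{\widehat\rho_n> e^{a}m_n^{-1}\sigma_n^{(m)}\}\big]\Big)\,,
\]
or an equivalent variant with a slightly different $m_n$-dependent threshold; only the polynomial size of $m_n$ matters. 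Set $a=n\lambda$ with $\lambda=\tfrac1mD(\rho^{\otimes m}\|\sigma_m)+4\delta$. It then suffices to show that $\Tr[\widehat\rho_n\{\widehat\rho_n>e^{n(\lambda-o(1))}\sigma_n^{(m)}\}]$ tends to zero, even after multiplication by $m_n$. To handle that factor, I would actually split the event more carefully, putting a margin $\delta$ inside the threshold: the excess factor $e^{-n\delta}$ then absorbs $m_n$, with the usual Hayashi argument. Everything then commutes, so the event is classical.

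\emph{Step 2 (splitting the event).} On the joint eigenbasis, $\{\widehat\rho_n>e^{n\lambda'}\sigma_n^{(m)}\}$ is contained in
\[
\Big\{-\tfrac1n\log\widehat\rho_n< S(\rho)-\delta\Big\}\ \cup\ \Big\{-\tfrac1n\log\sigma_n^{(m)}> S(\rho)+\lambda'-\delta\Big\}.
\]
The first event has vanishing $\widehat\rho_n$-probability by Proposition~\ref{prop:pinching-extension}. For the second, write $X_n\coloneqq-\tfrac1n\log\sigma_n^{(m)}$. This operator commutes with the pinching, so its law under $\widehat\rho_n$ equals its law under $\rho_n$. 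Its natural centre is
\[
c\coloneqq-\tfrac1m\Tr[\rho^{\otimes m}\log\sigma_m]=\tfrac1mD(\rho^{\otimes m}\|\sigma_m)+S(\rho),
\]
and the threshold $S(\rho)+\lambda'-\delta$ exceeds $c$ by a fixed positive margin. So it remains to prove $\PP{\rho_n}(X_n>c+\delta)\to0$.

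\emph{Step 3 (tail of $X_n$ via $W_1$; main obstacle).} This is the step I expect to be hardest, because $W_1$-closeness controls expectations of Lipschitz observables but not tails directly. Note that $X_n$ is a sum of commuting block terms $-\tfrac1n\log\sigma_m$ on disjoint $m$-blocks, plus $r<m$ padding terms $-\tfrac1n\log\sigma_{\rm full}$, each of operator norm $O(1/n)$ since $\sigma_m$ and $\sigma_{\rm full}$ are full rank. I would apply the channel $\Phi_n$ that measures each $m$-block in the eigenbasis of $\sigma_m$ and each padding site in that of $\sigma_{\rm full}$. Composing $m$ single-site steps, $\Phi_n$ does not increase $W_1$ by more than a factor depending on $m$ only. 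Under $\Phi_n$, $X_n$ becomes a classical function of the block outcomes with Hamming-Lipschitz constant $L/n$, where $L=2\max\{\|\log\sigma_m\|_\infty,\|\log\sigma_{\rm full}\|_\infty\}$. Under $\Phi_n(\rho^{\otimes n})$ the block outcomes are i.i.d., up to at most $m$ padding sites, so the weak law of large numbers gives $\PP{}(X_n>c+\delta/2)\to0$. Then take an optimal Ornstein $\bar d$-coupling of $\Phi_n(\rho_n)$ and $\Phi_n(\rho^{\otimes n})$. Markov's inequality on the Hamming discrepancy yields
\[
\PP{\rho_n}(X_n>c+\delta)\le o(1)+\tfrac{2L}{\delta}\cdot\tfrac{C_m}{n}\|\rho_n-\rho^{\otimes n}\|_{W_1}\to0 .
\]
Inserting this into Steps 1--2 gives $\tfrac1nD_H^\eta(\rho_n\|\sigma_n^{(m)})\le\lambda+o(1)$ for every $\eta\in(0,1)$. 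Letting $\delta\to0$ proves \eqref{eq:stable_strong_converse}. The delicate points are the exact form of the pinched Hayashi--Nagaoka bound with the $m_n$ factor, and the block-measurement contraction of $W_1$.
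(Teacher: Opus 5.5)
\textbf{Gap in Step~1.} The overall plan is sound and differs from the paper's: pinch with respect to $\sigma_n^{(m)}$ itself, then control the $\widehat\rho_n$-term by Proposition~\ref{prop:pinching-extension} and the $\sigma_n^{(m)}$-term by a law of large numbers. Step~1 as written does not close, though. In your displayed bound the factor $m_n$ multiplies the tail $\Tr[\widehat\rho_n\{\cdots\}]$, and none of the available tail estimates has a rate. Lemma~\ref{lem:rho-small-subspaces} (hence Proposition~\ref{prop:pinching-extension}) and your own coupling bound in Step~3 both decay only like $\frac1n\|\rho_n-\rho^{\otimes n}\|_{W_1}$, which can tend to zero arbitrarily slowly. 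So $m_n\cdot(\text{tail})$ with $m_n=\mathrm{poly}(n)$ need not vanish. A margin in the threshold does not help either. It buys an exponential factor only on the $\sigma$-side, whereas the tail estimates depend on the margin merely through constants such as the $1/\delta$ in Markov's inequality. So what matters is where $m_n$ sits, not only its size.

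\textbf{Fix.} Move $m_n$ onto the $\sigma$-term. Let $Q_n\coloneqq\{\widehat\rho_n\le e^{a}\sigma_n^{(m)}\}$, which commutes with $\sigma_n^{(m)}$ and with the pinching projectors. Three facts combine:
\begin{itemize}
\item the binary pinching inequality $\rho_n\le 2Q_n\rho_nQ_n+2(\id-Q_n)\rho_n(\id-Q_n)$;
\item $Q_n\rho_nQ_n\le m_nQ_n\widehat\rho_nQ_n\le m_ne^{a}(\sigma_n^{(m)})^{1/2}Q_n(\sigma_n^{(m)})^{1/2}\le m_ne^{a}\sigma_n^{(m)}$;
\item $\Tr[\rho_n(\id-Q_n)]=\Tr[\widehat\rho_n(\id-Q_n)]$.
\end{itemize}
Together they give, for every test $0\le T\le\id$,
\[
\Tr[T\rho_n]\le 2m_ne^{a}\,\Tr[T\sigma_n^{(m)}]+2\,\Tr\big[\widehat\rho_n\{\widehat\rho_n>e^{a}\sigma_n^{(m)}\}\big].
\]
Take $a=n\lambda$. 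Your Steps~2--3 then kill the last term, and $\frac1n\log m_n\to0$ yields $\frac1n D_H^\eta(\rho_n\|\sigma_n^{(m)})\le\lambda+o(1)$. The paper avoids the same issue differently. It restricts to projector tests $A_n$, binary-pinches and rounds $\sigma_n^{(m)}$ with respect to $A_n$ (Lemma~\ref{lem:binary-rounding}), so that $\Tr[A_n\pazocal P_{\mathcal G_n}(\rho_n)]=\Tr[A_n\rho_n]$ exactly, and then applies Han's classical inequality.

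\textbf{Step~3 justification.} The claim is true but the reason you give is wrong. The eigenbasis measurement of $\sigma_m$ is generally entangled across the block, so it is not a composition of single-site channels. Those would be $W_1$-contractive anyway, so that reasoning proves too much. The correct argument: let $\Phi_n$ be a tensor product of channels on disjoint blocks of size at most $m$. It maps a neighbouring pair differing at a site of block $B$ to two states with equal marginals on $B^c$. Replacing the sites of $B$ one at a time by a fixed state shows these are at $W_1$-distance at most $2m$. Hence $\|\Phi_n(X)\|_{W_1}\le 2m\|X\|_{W_1}$. Quantum $W_1$ on states diagonal in a product basis equals the classical Hamming $W_1$, so the coupling you need exists.

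\textbf{Comparison with the paper.} With these repairs your route is valid and in places leaner than the paper's. Since $-\frac1n\log\sigma_n^{(m)}$ commutes with your pinching, its law under $\widehat\rho_n$ equals its law under $\rho_n$. You therefore need neither logarithmic rounding nor the exponential-moment transfer through $\sigma_n^{(m)}\le 2e^{\kappa}\sigma_n'$ (Lemma~\ref{lem:slow-scale-exp-moment}, Corollary~\ref{cor:padded-moment}). Your block-measurement coupling also replaces the chain of external results behind Lemma~\ref{lemma:new_block} with a self-contained Markov estimate. That chain is: passing to $(\rho^{(r)}_{m\ell})_\ell$, $W_1$ implies weakly almost i.i.d., and the law of large numbers of~\cite{datta2026entropyconcentrationuniversaltypicality}.
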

\end{boxedprop}

\begin{rem}
The fact that the limit of the left-hand side of~\eqref{eq:stable_strong_converse} as $\eta \to 0^+$ is at most equal to the right-hand side can be shown elementarily by using the data-processing inequality on the hypothesis testing relative entropy. The main point of the above statement, however, is to show that the same holds for all $\eta\in (0,1)$.
\end{rem}

\begin{proof}
    See Section~\ref{subsec:padded-block-product} and Section~\ref{proof:padded-block}.
\end{proof}

\subsubsection{The dual formulation of the quantum Wasserstein distance of order $1$}

Let us recall the dual formulation of the quantum Wasserstein distance of order 1 (see~\cite{De_Palma_2021} for a detailed discussion). For a self-adjoint operator \(H\) on \(\mathcal{H}_n\), define its dependence on site \(x\) by
\begin{equation}
    \partial_x H
    :=
    2\inf_{K\in \pazocal{O}_{[n]\setminus\{x\}}} \lVert H-K\rVert_\infty,
    \label{eq:site-dependence}
\end{equation}
where \(\pazocal{O}_{[n]\setminus\{x\}}\) denotes the set of self-adjoint operators acting trivially on site \(x\in[n]\). The Lipschitz constant is
\bb
    \lVert H\rVert_L:=\max_{x=1,\ldots,n}\partial_xH.
\ee
The quantum \(W_1\) distance is equivalently characterised, with this normalization of \(\partial_x\), by the dual formula
\begin{equation}
    \|\rho-\sigma\|_{W_1}
    =
    \sup_{\lVert H\rVert_L\le1}\Tr\big[(\rho-\sigma)H\big].
    \label{eq:W1-duality}
\end{equation}
We shall use the following elementary consequence of the definition of \(\partial_x\).

\begin{lemma}[(Positive-operator dependence bound)]
\label{lem:positive-dependence-bound}
If \(C\ge0\), then for every site \(x\),
\bb
    \partial_x C\le \lVert C\rVert_\infty.
\ee
\end{lemma}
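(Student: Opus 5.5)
The idea is to bound the infimum in the definition~\eqref{eq:site-dependence} of $\partial_x C$ by evaluating it at one explicit competitor $K$. We take $K$ to be a multiple of the identity: such an operator is self-adjoint and acts trivially on every site, so in particular $K\in\pazocal O_{[n]\setminus\{x\}}$ for every $x$.

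\textbf{Step 1 (choice of competitor).} Set $c\coloneqq\lVert C\rVert_\infty$ and choose
\bb
    K\coloneqq \frac{c}{2}\,\id .
\ee

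\textbf{Step 2 (norm of $C-K$).} Since $C\ge 0$ and $C$ is self-adjoint, $0\le C\le c\,\id$ in Löwner order, so the spectrum of $C$ lies in $[0,c]$. Subtracting $\frac c2\,\id$ shifts every eigenvalue by $-\frac c2$, so the spectrum of $C-K$ lies in $[-\frac c2,\frac c2]$. Because $C-K$ is self-adjoint, its operator norm equals its largest eigenvalue in absolute value, hence
\bb
    \lVert C-K\rVert_\infty\le \frac{c}{2}.
\ee

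\textbf{Step 3 (conclusion).} By the definition of $\partial_x$,
\bb
    \partial_x C\le 2\lVert C-K\rVert_\infty\le \lVert C\rVert_\infty ,
\ee
which is the claimed bound. No step here is a real obstacle; the only points to check are that a scalar multiple of $\id$ is admissible in $\pazocal O_{[n]\setminus\{x\}}$ and that positivity is exactly what places the spectrum in $[0,c]$, so that centring at $c/2$ halves the norm. For a general self-adjoint operator the same choice $K=0$ would only give the weaker bound $\partial_x H\le 2\lVert H\rVert_\infty$.
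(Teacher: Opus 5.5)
Your proof is correct and follows the paper's own argument exactly. You plug the admissible site-independent competitor $K=\tfrac{1}{2}\lVert C\rVert_\infty\,\id$ into the infimum defining $\partial_x C$ and use $0\le C\le \lVert C\rVert_\infty\id$ to get $\lVert C-K\rVert_\infty\le\tfrac{1}{2}\lVert C\rVert_\infty$, hence $\partial_x C\le\lVert C\rVert_\infty$.
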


\begin{proof}
Since \(0\le C\le \lVert C\rVert_\infty \id\), we have
\bb
    -\frac{\lVert C\rVert_\infty}{2}\id
    \le
    C-\frac{\lVert C\rVert_\infty}{2}\id
    \le
    \frac{\lVert C\rVert_\infty}{2}\id.
\ee
Therefore, taking the site-independent approximant \(\lVert C\rVert_\infty \frac \id2\) in~\eqref{eq:site-dependence} gives \(\partial_xC\le \lVert C\rVert_\infty\).
\end{proof}

\subsubsection{Fattening, distance cutoffs and blowing-up}

\begin{figure}[t]
  \centering
  \def\svgwidth{0.92\linewidth}
  \small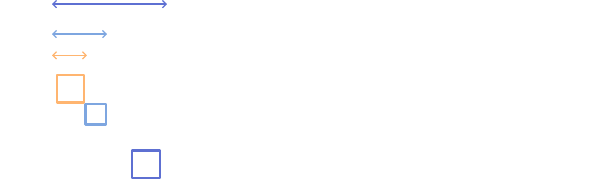
  \mycaption{A pictorial representation of the action of $D_{\pazocal V}$, $A_{\pazocal V,r}$ and $B_{\pazocal V,r}.$}{} 
  \label{fig:fattening}
\end{figure}

\begin{Def}[{(Fattening of a subspace~\cite[Section~2]{quantumTalagrand},~\cite[Definition~15]{Eldar_2017},~\cite[Definition~9.1]{De_Palma_2023b})}]
Let \(\pazocal V\subseteq \mathcal{H}_n\) be a subspace. For \(k=0,1,\ldots,n\), we define the $k$-\emph{fattening} $\pazocal V_k$ of the subspace $\pazocal V$ as
\bb
    \pazocal V_k
    \coloneqq
    \operatorname{span}
    \left\{
        A_S\ket{\psi}:
        \ket{\psi}\in \pazocal V,
        \ A\in \mathcal{L}(\mathcal{H}_S),
        \ S\subseteq\{1,\ldots,n\},
        |S|\le k
    \right\}.
\ee
Thus 
\bb
\pazocal V=\pazocal V_0\subseteq \pazocal V_1\subseteq\cdots\subseteq \pazocal V_n=\mathcal{H}_n.
\ee
Let \(\Pi_k\) be the orthogonal projector onto \(\pazocal V_k\), with \(\Pi_{-1}=0\). The \emph{distance operator} from \(\pazocal V\) is
\bb
    D_{\pazocal V}\coloneqq \sum_{k=0}^n k(\Pi_k-\Pi_{k-1}).
\ee
For an integer \(r\) with \(1\le r\le n\), define
\bb\label{eq:cutoff0}
    A_{{\pazocal V},r}\coloneqq\min(D_{\pazocal V},r),
    \qquad
    B_{{\pazocal V},r}\coloneqq(r-D_{\pazocal V})_+.
\ee
Equivalently,
\begin{equation}
    A_{{\pazocal V},r}=\sum_{k=0}^{r-1}(\id-\Pi_k),
    \qquad
    B_{{\pazocal V},r}=\sum_{k=0}^{r-1}\Pi_k.
    \label{eq:cutoffs}
\end{equation}
\end{Def}
The operators $D_{\pazocal V}$, $A_{\pazocal V,r}$ and $B_{\pazocal V,r}$ are pictorially represented in Figure~\ref{fig:fattening}. The reason why we are interested in these operators is that -- as stated in the following lemma -- their quantum Lipschitz constant is upper bounded by 1, hence they can be used in the dual formulation of the $W_1$ distance as guesses of $H$ in~\eqref{eq:W1-duality} in order to provide a lower bound. We follow a similar approach to~\cite[Proposition 9.1]{De_Palma_2023b}.

\begin{lemma}[(Lipschitz cutoffs of the distance operator)]
\label{lem:distance-cutoffs-Lipschitz}
For every subspace \(\pazocal V\subseteq \mathcal{H}_n\) and every integer \(1\le r\le n\), we have
\bb
    \lVert A_{V,r}\rVert_L\le1,
    \qquad
    \lVert B_{V,r}\rVert_L\le1.
\ee
\end{lemma}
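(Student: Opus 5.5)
The plan is to bound $\partial_x$ of each cutoff operator by exhibiting, for each site $x$, an explicit operator $K$ acting trivially on $x$ with $\lVert A_{\pazocal V,r}-K\rVert_\infty\le \tfrac12$, and similarly for $B_{\pazocal V,r}$. Since $A_{\pazocal V,r}+B_{\pazocal V,r}=r\,\id$ by \eqref{eq:cutoffs}, and since $\partial_x$ is invariant under adding multiples of the identity and under sign change, we have $\partial_x A_{\pazocal V,r}=\partial_x B_{\pazocal V,r}$. It therefore suffices to treat $B_{\pazocal V,r}=\sum_{k=0}^{r-1}\Pi_k$.

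The key structural fact I would use is the one-step fattening property. For each site $x$, let $\pazocal W_k^{(x)}$ be the span of $A_{\{x\}}\ket{\phi}$ with $\ket\phi\in\pazocal V_k$ and $A$ acting on site $x$; this is the subspace $\pazocal V_k\otimes$(full space at $x$), rearranged. Then
\bb
\pazocal V_k\subseteq \pazocal W_k^{(x)}\subseteq \pazocal V_{k+1}.
\ee
Moreover $\pazocal W_k^{(x)}$ has the form $\mathcal H_x\otimes \pazocal U$ with $\pazocal U\subseteq\mathcal H_{[n]\setminus\{x\}}$, so its projector $Q_k$ acts trivially on site $x$. The $Q_k$ are nested as well, since the $\pazocal V_k$ are. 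Following \cite[Proposition~9.1]{De_Palma_2023b}, I would take
\bb
K\coloneqq \tfrac12\sum_{k=0}^{r-1}\big(Q_{k-1}+Q_k\big),\qquad Q_{-1}\coloneqq 0,
\ee
or a similar interpolation. The natural first guess $K=\sum_k Q_{k-1}$ satisfies $\sum_k Q_{k-1}\le B\le \sum_k Q_k$ only up to an additive error of $1$, so averaging is needed to reach error $\tfrac12$.

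The main obstacle is that the projectors $\Pi_k$ and $Q_j$ need not commute, so the operator inequality $\sum_k Q_{k-1}\le B\le\sum_k Q_k$ does not immediately give a norm bound of $\tfrac12$ on $B-K$. I would handle this by bounding instead the symmetric difference
\bb
B-\sum_{k} Q_{k-1}=\sum_k\big(\Pi_k-Q_{k-1}\big).
\ee
Each summand is a difference of projectors. The chain
\bb
Q_{-1}\le\Pi_0\le Q_0\le \Pi_1\le\cdots
\ee
is a totally ordered chain of projectors in the Löwner order (nested subspaces). For nested subspaces all these projectors commute, since $P\le P'$ for projectors implies $PP'=P$. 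Hence everything is simultaneously diagonalisable. In the common eigenbasis, $B$ counts how many $\Pi_k$ ($k<r$) contain the vector and $\sum_k Q_{k-1}$ counts how many $Q_{k-1}$ do. By the interleaving, these two integers differ by $0$ or $1$, so $0\le B-\sum_k Q_{k-1}\le\id$. Subtracting $\tfrac12\id$ as well gives $\lVert B-\sum_kQ_{k-1}-\tfrac12\id\rVert_\infty\le \tfrac12$. Because $\sum_kQ_{k-1}+\tfrac12\id$ acts trivially on $x$, this yields $\partial_xB\le1$.

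The step to verify most carefully is therefore the interleaving $\pazocal V_k\subseteq\pazocal W^{(x)}_k\subseteq\pazocal V_{k+1}$, together with the product form of $\pazocal W^{(x)}_k$. The first inclusion is immediate. The second holds because applying an operator on $x$ to $A_S\ket\psi$ gives $(A'_{S\cup\{x\}})\ket\psi$ with $|S\cup\{x\}|\le k+1$. Taking the maximum over $x$ then concludes both bounds.
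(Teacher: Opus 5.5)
Your proof is correct and is essentially the paper's argument. Your $\pazocal W^{(x)}_{k-1}$ coincides with the paper's $\pazocal V_{k,x}$, because operators on a set containing $x$ with at most $k$ sites are spanned by products of an operator on $x$ with an operator on at most $k-1$ other sites. Hence your approximant $\sum_k Q_{k-1}+\tfrac12\id$ is exactly the one implicit in the paper's $C_x=B_{\pazocal V,r}-\sum_k\Pi_{k,x}$ combined with Lemma~\ref{lem:positive-dependence-bound}. The differences are only cosmetic: you obtain $0\le C_x\le\id$ from the nested (hence commuting) interleaved chain rather than by telescoping operator inequalities, and you get the bound for $A_{\pazocal V,r}$ from $A_{\pazocal V,r}=r\id-B_{\pazocal V,r}$.
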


\begin{proof}
Fix a site \(x\in[n]\). We first isolate the part of the cutoff observables which may depend on the site \(x\in[n]\). For each \(k=0,\dots, n\) and $x\in[n]$, define the intermediate subspace
\bb
    \pazocal V_{k,x}
    \coloneqq
    \operatorname{span}
    \left\{
        A_S\ket{\psi}:
        \ket{\psi}\in \pazocal V,
        \ A\in \mathcal{L}(\mathcal{H}_S),
        \ S\subseteq\{1,\ldots,n\},
        \ |S|\le k,
        \ S\ni x
    \right\},
\ee
and let \(\Pi_{k,x}\) be the orthogonal projector onto \(\pazocal V_{k,x}\).  We record the inclusions explicitly: for all \(k\ge0\),
\bb\label{eq:subspaces}
    \pazocal V_{k-1}\subseteq \pazocal V_{k,x}\subseteq \pazocal V_k.
\ee
Indeed,
\begin{itemize}
    \item For \(k=0\), we use the convention \(\pazocal V_{-1}=\{0\}\), and \(\pazocal V_{-1}\subseteq \pazocal V_{0,x}\subseteq\pazocal V_0\) is immediate.
    \item For \(k\ge1\), if a vector belongs to \(\pazocal V_{k-1}\), then it is a finite linear combination of vectors \(A|\psi\rangle\) with \(A\) acting on a set \(Y\) of at most \(k-1\) sites.  Replacing \(Y\) by \(Y\cup\{x\}\) and tensoring with the identity on site \(x\), the same vector is generated by an operator acting on a set of at most \(k\) sites containing \(x\).  Hence \(\pazocal V_{k-1}\subseteq \pazocal V_{k,x}\).  The inclusion \(\pazocal V_{k,x}\subseteq \pazocal V_k\) is immediate from the definition.
\end{itemize}  
Eq.~\eqref{eq:subspaces} implies
\bb
    \Pi_{k-1}\leq \Pi_{k,x}\leq \Pi_k,
\ee
whence
\begin{equation}
    0\le \Pi_k-\Pi_{k,x}\le \Pi_k-\Pi_{k-1}.
    \label{eq:projection-order}
\end{equation}

Moreover, \(\pazocal V_{k,x}\) is invariant under all unitaries acting only on site \(x\).  Indeed, if \(U_x\) acts on site \(x\) and \(A\) is supported on a set \(X\ni x\), then \(U_xA|\psi\rangle=(U_xA)|\psi\rangle\), and \(U_xA\) is again supported on \(X\). So, if $\{\ket{\psi_i}\}$ is an orthonormal basis for the subspace $\pazocal V_{k,x}$, then also $\{\ket{\phi_i}\}\coloneqq \{U_x\ket{\psi_i}\}$ is an orthonormal basis for $\pazocal V_{k,x}$. This proves that $\Pi_{k,x}$ commutes with all unitaries acting only on $x$:
\bb
    U_x \Pi_{k,x} U^\dagger_x = \sum_iU_x\ketbra{\psi_i}U^\dagger_x=\sum_i\ketbra{\phi_i}=\Pi_{k,x}.
\ee
Since the unitaries on $x$ span all the linear operators $\mathcal{L}(\mathcal{H}_x)$, \(\Pi_{k,x}\) belongs to \(\pazocal{O}_{\{1,\ldots,n\}\setminus\{x\}}\).  Thus each \(\Pi_{k,x}\) is site-independent with respect to \(x\).

We shall now use the following immediate consequence of the definition of \(\partial_x\): if \(K\in\pazocal{O}_{\{1,\ldots,n\}\setminus\{x\}}\), then
\begin{equation}
    \partial_x(H+K)=\partial_xH.
    \label{eq:dependence-translation-invariance}
\end{equation}
Indeed, in the infimum defining \(\partial_x(H+K)\), the change of variables \(K'\mapsto K'-K\) leaves the set \(\pazocal{O}_{\{1,\ldots,n\}\setminus\{x\}}\) invariant.
Define
\bb
    C_x:=\sum_{k=0}^{r-1}(\Pi_k-\Pi_{k,x}).
\ee
By~\eqref{eq:projection-order},
\bb
    0\le C_x
    \le
    \sum_{k=0}^{r-1}(\Pi_k-\Pi_{k-1})
    =
    \Pi_{r-1}
    \le \id.
\ee
Hence \(\lVert C_x\rVert_\infty\le1\). Since \(C_x\ge0\), Lemma~\ref{lem:positive-dependence-bound} gives
\begin{equation}
    \partial_x C_x\le1.
    \label{eq:Cx-dependence-bound}
\end{equation}
We now apply this estimate to the two cutoff observables. For \(B_{V,r}\), using~\eqref{eq:cutoffs},
\bb
    B_{\pazocal V,r}
    =
    \sum_{k=0}^{r-1}\Pi_k
    =
    \sum_{k=0}^{r-1}\Pi_{k,x}+C_x.
\ee
The first sum is site-independent with respect to \(x\). Therefore, by~\eqref{eq:dependence-translation-invariance} and~\eqref{eq:Cx-dependence-bound},
\bb
    \partial_x B_{\pazocal V,r}=
    \partial_x C_x
    \le1.
\ee
Similarly,
\bb
    A_{\pazocal V,r}
    =
    \sum_{k=0}^{r-1}(\id-\Pi_k)
    =
    \sum_{k=0}^{r-1}(\id-\Pi_{k,x})-C_x.
\ee
Again the first sum is site-independent with respect to \(x\). Since \(\partial_x(-C_x)=\partial_xC_x\), we get
\bb
    \partial_x A_{\pazocal V,r}
    =
    \partial_x(-C_x)
    =
    \partial_xC_x
    \le1.
\ee
Since \(x\) was arbitrary, \(\lVert A_{\pazocal V,r}\rVert_L\le1\) and \(\lVert B_{\pazocal V,r}\rVert_L\le1\).
\end{proof}

\begin{lemma}[(Blowing-up estimates)]
\label{lem:quantum-blowing-up}
Let \(\pazocal V\subseteq \mathcal{H}_n\) be a subspace and let \(\pazocal V_r\) be its \(r\)-fattening, where \(1\le r\le n\). For any states \(\rho,\sigma\) on \(\mathcal{H}_n\), we have
\begin{align}
    \Tr \big[\rho\,\Pi_{\pazocal V}\big]
    &\le
    \Tr \big[\sigma\,\Pi_{\pazocal V_r}\big]
    +
    \frac{1}{r}\|\rho-\sigma\|_{W_1},
    \label{eq:BU1}
    \\
    \Tr\big[ \rho\,(\id-\Pi_{\pazocal V_r})\big]
    &\le
    \Tr \big[ \sigma\,(\id-\Pi_{\pazocal V})\big]+
    \frac{1}{r}\|\rho-\sigma\|_{W_1}.
    \label{eq:BU2}
\end{align}
\end{lemma}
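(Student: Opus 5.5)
We will prove both inequalities by testing the dual formula~\eqref{eq:W1-duality} against the cutoff observables of Lemma~\ref{lem:distance-cutoffs-Lipschitz}. Both cutoffs have Lipschitz constant at most $1$, so for any $H\in\{A_{\pazocal V,r},B_{\pazocal V,r}\}$ we have $\Tr[(\rho-\sigma)H]\le\|\rho-\sigma\|_{W_1}$ (and likewise with $\rho$ and $\sigma$ exchanged). The remaining work is to sandwich each cutoff between suitable multiples of $\Pi_{\pazocal V}=\Pi_0$ and $\Pi_{\pazocal V_r}=\Pi_r$.

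\textbf{Inequality~\eqref{eq:BU1}.} We use $B_{\pazocal V,r}=(r-D_{\pazocal V})_+$. All the operators involved are functions of the same commuting family of projectors $\{\Pi_k\}$, so spectral calculus applies directly. On $\pazocal V_0$ the eigenvalue of $B_{\pazocal V,r}$ is $r$. On $\pazocal V_k\ominus\pazocal V_{k-1}$ with $k\le r-1$ it is $r-k\ge1$. It vanishes for $k\ge r$. This gives the operator inequalities
\begin{equation*}
r\,\Pi_0\le B_{\pazocal V,r}\le r\,\Pi_{r-1}\le r\,\Pi_r .
\end{equation*}
Combining these with duality,
\begin{equation*}
r\Tr[\rho\Pi_0]\le\Tr[\rho B_{\pazocal V,r}]\le\Tr[\sigma B_{\pazocal V,r}]+\|\rho-\sigma\|_{W_1}\le r\Tr[\sigma\Pi_r]+\|\rho-\sigma\|_{W_1}.
\end{equation*}
Dividing by $r$ yields~\eqref{eq:BU1}.

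\textbf{Inequality~\eqref{eq:BU2}.} We argue in the same way with $A_{\pazocal V,r}=\min(D_{\pazocal V},r)$. Its eigenvalue is $r$ on $\id-\Pi_{r-1}$, which contains the range of $\id-\Pi_r$. It lies in $[1,r]$ on $\id-\Pi_0$, and it equals $0$ on $\pazocal V_0$. Hence
\begin{equation*}
r(\id-\Pi_r)\le A_{\pazocal V,r}\le r(\id-\Pi_0),
\end{equation*}
and duality gives
\begin{equation*}
r\Tr[\rho(\id-\Pi_r)]\le\Tr[\rho A_{\pazocal V,r}]\le\Tr[\sigma A_{\pazocal V,r}]+\|\rho-\sigma\|_{W_1}\le r\Tr[\sigma(\id-\Pi_0)]+\|\rho-\sigma\|_{W_1}.
\end{equation*}
Dividing by $r$ yields~\eqref{eq:BU2}.

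\textbf{Main obstacle.} The only step needing care is the operator sandwiching, since the cutoffs are piecewise defined. We will handle it by writing everything in the common eigenbasis given by the orthogonal decomposition $\bigoplus_k(\pazocal V_k\ominus\pazocal V_{k-1})$. Both the cutoffs and the projectors $\Pi_0$, $\Pi_r$ are diagonal in this decomposition, so each inequality reduces to an eigenvalue-by-eigenvalue comparison. The Lipschitz bound itself, which is the genuinely nontrivial input, is already supplied by Lemma~\ref{lem:distance-cutoffs-Lipschitz}.
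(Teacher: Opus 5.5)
Your proposal is correct and follows essentially the same route as the paper: test the dual formula~\eqref{eq:W1-duality} against the cutoffs $B_{\pazocal V,r}$ and $A_{\pazocal V,r}$, whose Lipschitz bound comes from Lemma~\ref{lem:distance-cutoffs-Lipschitz}. You then use the sandwiches $r\Pi_{\pazocal V}\le B_{\pazocal V,r}\le r\Pi_{\pazocal V_r}$ and $r(\id-\Pi_{\pazocal V_r})\le A_{\pazocal V,r}\le r(\id-\Pi_{\pazocal V})$, checked layer by layer on the decomposition into the ranges of $\Pi_k-\Pi_{k-1}$. Your intermediate bound $B_{\pazocal V,r}\le r\Pi_{r-1}$ is a harmless sharpening that the argument does not need.
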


\begin{proof}
The identities in~\eqref{eq:cutoffs} show explicitly how the cutoffs  of~\eqref{eq:cutoff0} act on the layers \mbox{\({\rm Ran}(\Pi_k-\Pi_{k-1})\)}: the operator \(B_{\pazocal V,r}\) has eigenvalue \(r-k\) on this layer for \(k<r\) and vanishes for \(k\ge r\), while \(A_{\pazocal V,r}\) has eigenvalue \(\min\{k,r\}\) on the same layer (see also Figure~\ref{fig:fattening}).
By Lemma~\ref{lem:distance-cutoffs-Lipschitz} and the dual formula~\eqref{eq:W1-duality},
\bb\label{eq:bu0}
    \Tr\big[ B_{\pazocal V,r}( \rho- \sigma)\big]
    \le \|\rho-\sigma\|_{W_1}.
\ee
Using the inequality \(r\Pi_{\pazocal V}\le B_{\pazocal V,r}\le r\Pi_{\pazocal V_r}\) in~\eqref{eq:bu0}, we get
\bb
    r\Tr \rho\,\Pi_{\pazocal V}
    \le
    r\Tr \sigma\,\Pi_{\pazocal V_r}+\|\rho-\sigma\|_{W_1},
\ee
which gives~\eqref{eq:BU1}.
Similarly, using \(A_{\pazocal V,r}\) in~\eqref{eq:W1-duality}, we get
\bb\label{eq:bu0}
    \Tr\big[ A_{\pazocal V,r}(\rho-\sigma)\big]
    \le \|\rho-\sigma\|_{W_1},
\ee
and noticing that \(r(\id -\Pi_{\pazocal V_r})\le A_{\pazocal V,r}\le r(\id-\Pi_{\pazocal V})\), we conclude that
\bb
    r\Tr \rho(\id-\Pi_{\pazocal V_r})
    \le
    r\Tr \sigma(\id-\Pi_{\pazocal V})+\|\rho-\sigma\|_{W_1},
\ee
which gives~\eqref{eq:BU2}.
\end{proof}

\begin{lemma}[(Dimension growth under fattening)]
\label{lem:dimension-growth-fattening}
For every subspace \(\pazocal V\subseteq \mathcal{H}_n\) and every \(r=0,1,\ldots,n\),
\bb\label{eq:first_bound}
    \dim \pazocal V_r
    \le
    \dim \pazocal V \sum_{j=0}^r \binom nj(d^2-1)^j.
\ee
Consequently, for every \(0<\epsilon<1-\frac 1{d^2}\),
\bb
    \frac1n\log \dim \pazocal V_{r_n}
    \le
    \frac1n\log \dim \pazocal V
    +h_2(\epsilon)+\epsilon\log(d^2-1)+o(1),
\ee
where \(r_n\coloneqq \lfloor \epsilon n\rfloor\) and \(h_2\) is the binary entropy.
\end{lemma}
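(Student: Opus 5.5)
The plan is to prove the first bound by showing directly that
\bb
    \pazocal V_r=\operatorname{span}\bigl\{A_S\ket{\psi}:\ \ket{\psi}\in\pazocal V,\ S\subseteq[n],\ |S|=\min(r,n)\bigr\},
\ee
and then counting a spanning set. Since $\pazocal V_{r}$ only grows with $r$ and any operator on a set of at most $r$ sites extends to one on a set of exactly $r$ sites, the span is generated by $A_S\ket{\psi}$ with $|S|\le r$. Now fix a basis $\{\ket{\psi_a}\}_{a=1}^{\dim\pazocal V}$ of $\pazocal V$. For each site fix a Hermitian operator basis of $\mathcal L(\mathcal H)$ consisting of the identity together with $d^2-1$ traceless elements $\{G_\alpha\}$. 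Every $A\in\mathcal L(\mathcal H_S)$ with $|S|\le r$ expands into tensor products of such local elements. Each term is, after discarding identity factors, a product $G_{\alpha_1}\otimes\cdots\otimes G_{\alpha_j}$ supported on a set $T\subseteq S$ with $|T|=j\le r$, where every factor is non-identity.

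It follows that $\pazocal V_r$ is spanned by the vectors $(G_{\vec\alpha})_T\ket{\psi_a}$ with $|T|=j\le r$ and $\vec\alpha\in[d^2-1]^j$. The number of such vectors is exactly
\bb
    \dim\pazocal V\sum_{j=0}^r\binom nj (d^2-1)^j,
\ee
which gives \eqref{eq:first_bound}. Linearity of $A\mapsto A_S\ket\psi$ makes this step immediate.

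For the asymptotic statement, take $r_n=\lfloor\epsilon n\rfloor$ and take logarithms. The standard estimate is
\bb
    \sum_{j\le \epsilon n}\binom nj x^j\le \exp\bigl(n[h_2(\epsilon)+\epsilon\log x]\bigr)\qquad\text{whenever } \frac{\epsilon}{1-\epsilon}\,\frac1x\le 1,
\ee
with $x=d^2-1$. The hypothesis $\epsilon<1-1/d^2$ is exactly $\epsilon<\frac{x}{1+x}$, and this is what makes the summand $\binom nj x^j$ increasing in $j$ up to $\epsilon n$. To prove the estimate I would use the Chernoff-type trick: multiply each term by $\bigl(\tfrac{\epsilon}{(1-\epsilon)x}\bigr)^{j-\epsilon n}\ge1$ for $j\le\epsilon n$, and then sum the full binomial expansion, which gives
\bb
    \Bigl(1+\tfrac{\epsilon}{1-\epsilon}\Bigr)^n\Bigl(\tfrac{(1-\epsilon)x}{\epsilon}\Bigr)^{\epsilon n}=e^{n(h_2(\epsilon)+\epsilon\log x)}.
\ee
Alternatively, one can bound the sum by $(r_n+1)$ times its largest term $\binom{n}{r_n}x^{r_n}$ and use $\binom{n}{r_n}\le e^{nh_2(r_n/n)}$ together with monotonicity of $h_2$ on $[0,1/2]$. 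The Chernoff route is cleaner, but it needs care because $\epsilon$ may exceed $1/2$; that is the case where the monotonicity condition coming from $\epsilon<1-1/d^2$ is used. If it holds, the error is even $0$ rather than $o(1)$, and the $o(1)$ term absorbs any rounding from $\lfloor\epsilon n\rfloor$. Dividing by $n$ then yields the claim.

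The only mildly delicate point is this exponent bound in the regime $\epsilon>1/2$, where the naive monotonicity of $h_2$ fails. The multiplicative trick handles it, provided the ratio $\tfrac{\epsilon}{(1-\epsilon)x}$ is at most $1$, which is precisely the hypothesis on $\epsilon$.
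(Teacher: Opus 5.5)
Your proof is correct and follows the paper's argument: both expand operators on at most $r$ sites in a product operator basis whose local elements include the identity, and bound $\dim\pazocal V_r$ by counting the $\dim\pazocal V\sum_{j\le r}\binom nj(d^2-1)^j$ resulting generators. For the asymptotic part the paper only cites standard binomial estimates; your Chernoff-type bound $\sum_{j\le\epsilon n}\binom nj x^j\le e^{n(h_2(\epsilon)+\epsilon\log x)}$ with $x=d^2-1$ holds exactly when $\epsilon\le x/(1+x)=1-1/d^2$, and it correctly fills in that step, even with no $o(1)$ term.
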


\begin{proof}
Choose a basis \(\{A_1,\ldots,A_{d^2}\}\) of \(\pazocal{B}(\mathbb C^d)\) with \(A_1=\id\). For \(a=(a_1,\ldots,a_n)\), set \(A_a=A_{a_1}\otimes\cdots\otimes A_{a_n}\) and \(\#a=|\{i:a_i\ne 1\}|\). Operators acting on at most \(r\) sites are spanned by \(A_a\) with \(\#a\le r\). Therefore,
\bb
    \pazocal V_r= \operatorname{span}\{A_a|\psi\rangle: |\psi\rangle\in \pazocal V,
    \#a\le r\}.
\ee
The first bound~\eqref{eq:first_bound} follows by counting words $a$ such that \(\#a\leq r\). Indeed, the words with \(\#a= j\leq r\) are exactly
\bb
    \underbrace{\binom{n}{j}}_{\substack{\text{positions $i$ of the}\\ \text{symbols $a_i>1$}}}\times \underbrace{(d^2-1)^j}_{\substack{\text{choice of the symbols}\\ \text{in the selected positions}}},
\ee
and summing over $j\in\{0,\dots, r\}$ yields~\eqref{eq:first_bound}.
The asymptotic bound follows from standard entropic estimates on binomial coefficients. 
\end{proof}

\begin{lemma}
\label{lem:typical-subspace}
Let \(\sigma_n=\sigma^{\otimes n}\). For every \(\delta>0\), there are projectors \(T_{n,\delta}\) such that
\bb\label{eq:typical}
    \lim_{n\to\infty}\Tr \sigma_nT_{n,\delta}=1,
    \qquad
    \dim T_{n,\delta}\le e^{n(S(\sigma)+\delta)}
\ee
for all sufficiently large $n$. Moreover, if \(M_n\subseteq \mathcal{H}_n\) satisfies \(\dim M_n\le e^{n(S(\sigma)-\gamma)}\) for some \(\gamma>0\), then
\bb
    \lim_{n\to\infty }\Tr \sigma_n\Pi_{M_n}=0.
\ee
\end{lemma}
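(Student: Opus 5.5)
The plan is to reduce both claims to the quantum asymptotic equipartition property for the i.i.d.\ state $\sigma^{\otimes n}$, obtained from the classical law of large numbers after diagonalising $\sigma$.

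Write the spectral decomposition $\sigma=\sum_{x=1}^d q(x)\ketbra{e_x}$. Then $\sigma^{\otimes n}=\sum_{x^n}q^n(x^n)\ketbra{e_{x^n}}$ with $q^n(x^n)=\prod_i q(x_i)$, so the eigenvalues of $\sigma^{\otimes n}$ form the i.i.d.\ distribution $q^{\otimes n}$. Under $q^{\otimes n}$, the random variable $-\frac1n\log q^n(X^n)=\frac1n\sum_i(-\log q(X_i))$ is an average of i.i.d.\ bounded variables with mean $H(q)=S(\sigma)$. Restricting to $q(x)>0$ keeps them bounded, and zero-probability symbols never occur.

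For the first claim, define $T_{n,\delta}$ as the projector onto $\operatorname{span}\{\ket{e_{x^n}}:|-\frac1n\log q^n(x^n)-S(\sigma)|\le\delta\}$. By the weak law of large numbers (or Chebyshev, since the variance is finite), $\Tr\sigma^{\otimes n}T_{n,\delta}\to1$. Every eigenvector in the range has eigenvalue $q^n(x^n)\ge e^{-n(S(\sigma)+\delta)}$. Hence
\[
1\ge\Tr\sigma^{\otimes n}T_{n,\delta}\ge \dim T_{n,\delta}\,e^{-n(S(\sigma)+\delta)},
\]
which gives the dimension bound for every $n$.

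For the second claim, fix $\gamma>0$ and use $\delta=\gamma/2$. Let $P_n$ be the projector onto the eigenvectors with $q^n(x^n)\le e^{-n(S(\sigma)-\delta)}$, and note that $P_n$ commutes with $\sigma^{\otimes n}$. Split
\[
\Tr\sigma^{\otimes n}\Pi_{M_n}=\Tr\big[\sigma^{\otimes n}P_n\Pi_{M_n}\big]+\Tr\big[\sigma^{\otimes n}(\id-P_n)\Pi_{M_n}\big].
\]

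For the first term, write it as $\Tr[\Pi_{M_n}\sigma^{\otimes n}P_n\Pi_{M_n}]$, using that $\sigma^{\otimes n}P_n\ge0$ commutes. Since $\sigma^{\otimes n}P_n\le e^{-n(S(\sigma)-\delta)}\id$, this term is at most
\[
e^{-n(S(\sigma)-\delta)}\dim M_n\le e^{-n\gamma/2}.
\]

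For the second term, the operator inequality $\Pi_{M_n}\le\id$ together with the commutation of $(\id-P_n)$ and $\sigma^{\otimes n}$ gives
\[
\Tr\big[\sigma^{\otimes n}(\id-P_n)\Pi_{M_n}\big]=\Tr\big[(\id-P_n)^{1/2}\sigma^{\otimes n}(\id-P_n)^{1/2}\Pi_{M_n}\big]\le\Tr\sigma^{\otimes n}(\id-P_n).
\]
The right-hand side is $\PP{q^{\otimes n}}\big(-\frac1n\log q^n<S(\sigma)-\delta\big)$, which tends to $0$ by the law of large numbers.

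Both terms vanish, which proves the claim. The argument is routine; the one point that needs care is this cross term, which is why I split with a projector commuting with $\sigma^{\otimes n}$ rather than with $\Pi_{M_n}$.
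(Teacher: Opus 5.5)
Your proof is correct and is essentially the standard quantum-typicality argument that the paper itself only cites (Wilde's textbook, Chapter 15) instead of proving. You diagonalise $\sigma^{\otimes n}$, apply the law of large numbers, and split with a spectral projector of $\sigma^{\otimes n}$ so the cross term is handled cleanly.
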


\begin{proof}
See e.g.\ \cite[Chapter 15]{MARK}.
\end{proof}

\begin{lemma}[(Uniform small-subspace bound for \(\rho_n\))]
\label{lem:rho-small-subspaces}
Assume that $(\rho_n)_n$ is a Wasserstein almost i.i.d.\ source along $\rho$. Let \(\gamma>0\). If \(M_n\subseteq \mathcal{H}_n\) is any sequence of subspaces satisfying
\bb
    \dim M_n\le e^{n(S(\rho)-\gamma)},
\ee
then
\bb
    \lim_{n\to \infty}\Tr\rho_n\Pi_{M_n} = 0.
\ee
\end{lemma}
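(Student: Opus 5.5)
We combine the blowing-up estimate of Lemma~\ref{lem:quantum-blowing-up}, the dimension bound of Lemma~\ref{lem:dimension-growth-fattening} and the i.i.d.\ small-subspace statement of Lemma~\ref{lem:typical-subspace}. The idea is to transfer the smallness of $\Tr\rho^{\otimes n}\Pi$ on slightly enlarged subspaces back to $\rho_n$, paying only the normalised $W_1$ distance divided by the fattening fraction.

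First fix $\epsilon\in(0,1-1/d^2)$ small enough that
\[
h_2(\epsilon)+\epsilon\log(d^2-1)\le \gamma/2 .
\]
This is possible because the left-hand side vanishes as $\epsilon\to0$. Set $r_n\coloneqq\lfloor\epsilon n\rfloor$, which is at least $1$ for large $n$. Apply~\eqref{eq:BU1} with $\pazocal V=M_n$, with $\rho_n$ in place of $\rho$ and $\rho^{\otimes n}$ in place of $\sigma$. This gives
\[
\Tr\rho_n\Pi_{M_n}\le \Tr\rho^{\otimes n}\Pi_{(M_n)_{r_n}}+\frac{1}{r_n}\|\rho_n-\rho^{\otimes n}\|_{W_1}.
\]
The second term equals $\frac{n}{r_n}\cdot\frac1n\|\rho_n-\rho^{\otimes n}\|_{W_1}$. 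Since $n/r_n\to1/\epsilon$ is bounded and $\rho_n$ is Wasserstein almost i.i.d., it tends to zero.

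For the first term, Lemma~\ref{lem:dimension-growth-fattening} gives
\[
\frac1n\log\dim (M_n)_{r_n}\le S(\rho)-\gamma+\gamma/2+o(1)\le S(\rho)-\gamma/4
\]
for large $n$. The second part of Lemma~\ref{lem:typical-subspace}, applied with $\sigma=\rho$ and $\gamma/4$, then yields $\Tr\rho^{\otimes n}\Pi_{(M_n)_{r_n}}\to0$. Combining the two estimates gives $\Tr\rho_n\Pi_{M_n}\to0$.

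The only delicate point is the order of choices: $\epsilon$ must be fixed independently of $n$. The fattening must stay linear in $n$, so that the $W_1$ term is $o(1)$ with a constant prefactor $1/\epsilon$. At the same time it must be small enough that the entropy cost of fattening does not consume the gap $\gamma$. The $o(1)$ in Lemma~\ref{lem:dimension-growth-fattening} is uniform in $M_n$, since it comes only from the binomial estimate, so no further obstacle arises. If $S(\rho)-\gamma<0$ the statement is trivial, because then $M_n=\{0\}$ for large $n$.
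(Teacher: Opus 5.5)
Your proposal is correct and follows the paper's proof essentially step for step. You fix $\epsilon$ so that $h_2(\epsilon)+\epsilon\log(d^2-1)<\gamma/2$ and set $r_n=\lfloor\epsilon n\rfloor$. Lemma~\ref{lem:dimension-growth-fattening} and the second part of Lemma~\ref{lem:typical-subspace} then give $\Tr\rho^{\otimes n}\Pi_{(M_n)_{r_n}}\to 0$. The blowing-up estimate~\eqref{eq:BU1} transfers this to $\rho_n$, at the cost of a vanishing term $\frac{1}{r_n}\|\rho_n-\rho^{\otimes n}\|_{W_1}$.
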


\begin{proof}
Choose a sufficiently small \(\epsilon>0\) such that \(h_2(\epsilon)+\epsilon\log(d^2-1)<\gamma/2\), and set \(r_n=\lfloor \epsilon n\rfloor\). By Lemma~\ref{lem:dimension-growth-fattening}, the dimension of the $r_n$-fattening of $M_n$ is upper bounded as
\modifica{
\bb
    \dim \big((M_n)_{r_n}\big)
    \le
    e^{n(S(\rho)-\gamma +h_2(\epsilon)+\epsilon\log(d^2-1)+o(1)\}}
\ee}
Thus, for some \(\gamma'>0\) and all sufficiently large $n$,
\bb
    \dim (M_n)_{r_n}\le e^{n(S(\rho)-\gamma')}.
\ee
The second claim of Lemma~\ref{lem:typical-subspace} 
gives
\bb
    \lim_{n\to \infty}\Tr\rho^{\otimes n}\Pi_{(M_n)_{r_n}} = 0.
\ee
Applying the blowing-up estimate~\eqref{eq:BU1} with \(\pazocal V=M_n\) and \(r=r_n\), we obtain
\bb
    \Tr\rho_n\Pi_{M_n}
    \le
    \Tr\rho^{\otimes n}\Pi_{(M_n)_{r_n}}
    +
    \frac{1}{r_n}\left\|\rho_n-\rho^{\otimes n}\right\|_{W_1}
    \xrightarrow{n\to\infty}0,
\ee
where the last term vanishes since $r_n$ grows linearly in $n$.
\end{proof}

\subsubsection{Proof of Proposition~\ref{prop:info-spectrum-W1}}\label{proof:info-spectrum-W1}

We prove Proposition~\ref{prop:info-spectrum-W1} by showing that its upper and lower spectral tails are bounded, where the upper and lower spectral tails are treated differently.
For convenience, we write
    $w_n\coloneqq \|\rho_n-\rho^{\otimes n}\|_{W_1}$,
and by assumption, \(\displaystyle{\lim_{n\to \infty }\tfrac{w_n}n = 0}\). We separate the two contributions
\bb
    &\Tr \rho_n\left\{
        \left\lvert-\tfrac1n\log\rho_n-S(\rho)\right\rvert>\alpha
    \right\}\\[0.3em]
    &\qquad\qquad\qquad=\underbrace{\Tr \rho_n \left\{-\tfrac1n\log\rho_n>S(\rho)+\alpha\right\}}_{\text{upper information tail}}+\underbrace{\Tr \rho_n\left\{-\tfrac1n\log\rho_n<S(\rho)-\alpha\right\}}_{\text{lower information tail}}
\ee
and we bound them separately.

\paragraph{Upper information tail.} 
Fix \(\alpha>0\), choose \(\delta>0\) and a sufficiently small \(\epsilon>0\) such that 
\bb
\delta+h_2(\epsilon)+\epsilon\log(d^2-1)<\alpha.
\ee
Let \(T_{n,\delta}\) be the typical projector of \(\rho^{\otimes n}\), set \(r_n=\lfloor \epsilon n\rfloor\), and call $Q_n$ the projector onto the $r_n$-fattening of the $(n,\delta)$-typical subspace, namely
\bb
    Q_n\coloneqq\Pi_{({\rm Ran} T_{n,\delta})_{r_n}}.
\ee
Combining the dimension bounds of Lemma~\ref{lem:dimension-growth-fattening} and Lemma~\ref{lem:typical-subspace}, 
\bb
    R_n\coloneqq {\rm rk} Q_n
    \le
    e^{n(S(\rho)+\delta+h_2(\epsilon)+\epsilon\log(d^2-1)+o(1))},
    \label{eq:Rn-bound}
\ee
and by~\eqref{eq:BU2},
\bb
    \eta_n\coloneqq\Tr \rho_n(\id-Q_n)
    \le
    \Tr \rho^{\otimes n}(\id-T_{n,\delta})+\frac{w_n}{r_n}
    \xrightarrow{n\to\infty} 0.
\ee
due to~\eqref{eq:typical} and to the assumption that $(\rho_n)_n$ is a Wasserstein almost i.i.d.\ source along $\rho$.
Now, let \(\lambda_1^{(n)}\ge\lambda_2^{(n)}\ge\cdots\) be the $d^n$ eigenvalues of \(\rho_n\), counted with multiplicity. Then,\footnote{This can also be seen as an application of Ky Fan's maximum principle~\cite[Exercise II.1.13]{BHATIA-MATRIX}.}
\bb\label{eq:trace}
    \sum_{i=1}^{R_n}\lambda_i^{(n)}=\max_{\substack{0\leq P\leq \id\\ \rk P\leq R_n}}\Tr \rho_n P\ge \Tr \rho_nQ_n=1-\eta_n\qquad \implies \qquad \sum_{i>R_n}\lambda_i^{(n)}\le\eta_n.
\ee
Whence,
\bb\label{eq;ineq_lambda}
    \Tr \rho_n \left\{\rho_n<e^{-n(S(\rho)+\alpha)}\right\}
    &=
    \sum_{i}\lambda_i^{(n)}\chi_{\lambda_i^{(n)}< e^{-n(S(\rho)+\alpha)}}\\
    &\le 
    \sum_{i>R_n}\lambda_i^{(n)} + \sum_{1\le i\le R_n} \modifica{\min}\{\lambda_i^{(n)},e^{-n(S(\rho)+\alpha)}\}\\
    &\le 
    \eta_n+R_ne^{-n(S(\rho)+\alpha)},
\ee
where $\chi_{a>b}$ is the characteristic function of $a>b$.
By~\eqref{eq:Rn-bound},
\bb
    R_ne^{-n(S(\rho)+\alpha)}
    \le
    e^{-n\big(\alpha-\delta-h_2(\epsilon)-\epsilon\log(d^2-1)-o(1)\big)}
    \xrightarrow{n\to\infty}0.
\ee
Therefore,
\bb
    \lim_{n\to\infty}\Tr \rho_n \left\{-\frac1n\log\rho_n>S(\rho)+\alpha\right\} = 0,
\ee
where $\alpha>0$ was arbitrary since the beginning.

\paragraph{Lower information tail.} The idea is that the spectral projector of \(\rho_n\) corresponding to
eigenvalues larger than \(e^{-n(S(\rho)-\alpha)}\) has dimension at most
\(e^{n(S(\rho)-\alpha)}\). The uniform small-subspace bound (Lemma~\ref{lem:rho-small-subspaces}) then implies that \(\rho_n\)-mass vanishes to this projector.
If \(S(\rho)<\alpha\), then \(e^{-n(S(\rho)-\alpha)}>1\): the lower tail is empty. Thus, we may assume \(S(\rho)\geq \alpha\) and get
\bb
    \text{rk} \left\{\rho_n>e^{-n(S(\rho)-\alpha)}\right\}\le e^{n(S(\rho)-\alpha)}.
\ee
By Lemma~\ref{lem:rho-small-subspaces}, we get
\bb
    \lim_{n\to\infty}\Tr \rho_n \left\{\rho_n>e^{-n(S(\rho)-\alpha)}\right\}
    =0,
\ee
which implies
\bb
    \lim_{n\to\infty}\Tr \rho_n\left\{-\frac1n\log\rho_n<S(\rho)-\alpha\right\}=0.
\ee
Combining the upper and lower tail, for every \(\alpha>0\),
\bb
    \lim_{n\to\infty}\Tr \rho_n\left\{
        \left\lvert-\frac1n\log\rho_n-S(\rho)\right\rvert>\alpha
    \right\}
    =0.
\ee
This completes the proof of Proposition~\ref{prop:info-spectrum-W1}.

\subsubsection{Proof of Proposition~\ref{prop:pinching-extension}}
\label{sec:pinching-extension}
The proof of the generalised quantum Stein's lemma in~\cite{Hayashi2025} repeatedly replaces a state by a pinched version with respect to a rounded auxiliary state. For $W_1$ sources, we cannot assume that this global pinching is $W_1$-contractive, as the quantum Wasserstein distance of order 1 is guaranteed to satisfy the data-processing inequality \emph{only for single-system channels}. Instead, we prove a different statement: if the pinching has only subexponentially many outcomes, then it cannot change the information-spectrum rate. 
This is exactly the regime produced by logarithmic spectral rounding, where the number of rounded
eigenvalue bins is polynomial.\smallskip

 The proof uses the upper- and lower-tail decomposition as above, together with the fact that pinching by \(m_n\) blocks can increase the relevant supporting dimension by at most the factor \(m_n\).

First, let us consider the upper information tail and fix \(\alpha>0\). Similarly to the proof of the bound on the upper tail in Proposition~\ref{prop:info-spectrum-W1}, we choose $\delta,\epsilon>0$ small enough so that
\bb
    \gamma\coloneqq \delta+h_2(\epsilon)+\epsilon\log(d^2-1)<\alpha;
\ee
then, there exists a sequence of projectors \(Q_n\) such that
\bb
    \lim_{n\to\infty}\Tr \rho_n Q_n=1,
    \qquad
    {\rm rk} Q_n\le e^{n\big(S(\rho)+\gamma+o(1)\big)}.
\ee
Let \(\widetilde{Q}_n\) be the projector onto the smallest subspace invariant under all the projectors \(E_{n,j}\) and containing \({\rm Ran} Q_n\). Equivalently,
\bb\label{eq:construction}
    {\rm Ran} \widetilde{Q}_n
    = \left\{
    \sum_{j=1}^{m_n} E_{n,j}\ket{\psi_j}:\ket{\psi_1},\dots, \ket{\psi_{m_n}} \in {\rm Ran} Q_n \right\}\supseteq {\rm Ran} Q_n.
\ee
The summands are mutually orthogonal because the projectors \(E_{n,j}\) are mutually orthogonal. This subspace is reducing for every \(E_{n,j}\); hence its orthogonal projector \(\widetilde{Q}_n\) commutes with every \(E_{n,j}\). It contains \(Q_n\), and its rank, by construction~\eqref{eq:construction}, satisfies
\bb
    {\rm rk} \widetilde{Q}_n\le m_n {\rm rk} Q_n
    \le
    e^{n\big(S(\rho)+\gamma+o(1)\big)},
\ee
because \(\log m_n=o(n)\). Since \(\widetilde{Q}_n\) commutes with the pinching projectors, the trace identity used below can be checked directly:
\bb
\begin{aligned}
    \Tr \widehat{\rho}_n \widetilde{Q}_n
    =
    \sum_{j=1}^{m_n}\Tr E_{n,j}\rho_nE_{n,j}\widetilde{Q}_n  
    =
    \sum_{j=1}^{m_n}\Tr \rho_nE_{n,j}\widetilde{Q}_nE_{n,j}  
    =
    \Tr \rho_n\widetilde{Q}_n.
\end{aligned}
\ee
Therefore
\bb
    \Tr \widehat{\rho}_n \widetilde{Q}_n
    =
    \Tr \rho_n \widetilde{Q}_n
    \ge
    \Tr \rho_n Q_n
    \xrightarrow{n\to\infty}1.
\ee
Let \(\widetilde{R}_n={\rm rk} \widetilde{Q}_n\), and let \(\widehat{\lambda}_1^{(n)}\ge \widehat{\lambda}_2^{(n)}\ge\cdots\) be the eigenvalues of \(\widehat{\rho}_n\). Similarly to~\eqref{eq:trace},
\bb
    \sum_{i=1}^{\widetilde{R}_n}\widehat{\lambda}_i^{(n)}
    \ge
    \Tr \widehat{\rho}_n \widetilde{Q}_n
    =1-o(1).
\ee
Proceeding as in~\eqref{eq;ineq_lambda}, we get
\bb
    \Tr \widehat{\rho}_n\left\{\widehat{\rho}_n<e^{-n(S(\rho)+\alpha)}\right\}
    \le
    o(1)+\widetilde{R}_ne^{-n(S(\rho)+\alpha)},
\ee
and recalling that \(\gamma<\alpha\) and \(R_n\le e^{n\big(S(\rho)+\gamma+o(1)\big)}\), gives
\bb
    \lim_{n\to\infty}\Tr \widehat{\rho}_n\left\{-\frac1n\log \widehat{\rho}_n>S+\alpha\right\}=0.
\ee
We next prove the bound on the lower information tail. Let
\bb
    \widehat{P}_n^-\coloneqq\left\{\widehat{\rho}_n>e^{-n(S(\rho)-\alpha)}\right\},
    \qquad
    \widehat{L}_n\coloneqq{\rm Ran} \widehat{P}_n^-.
\ee
If \(S(\rho)<\alpha\), then \(\widehat{P}_n^-=0\), and there is nothing to prove. Thus we may assume \(S(\rho)\geq\alpha\). Then \({\rm rk} \widehat{L}_n\le e^{n(S(\rho)-\alpha)}\). Since \(\widehat{\rho}_n\) is block diagonal with respect to the decomposition \(\sum_jE_{n,j}=\id\), every spectral projector of \(\widehat{\rho}_n\) is also block diagonal. In particular, \(\widehat{P}_n^-\) commutes with every \(E_{n,j}\). Therefore,
\bb
    \Tr \widehat{\rho}_n \widehat{P}_n^-
    =
    \Tr \rho_n \widehat{P}_n^-.
\ee
By Lemma~\ref{lem:rho-small-subspaces}, applied to the subspace \(\widehat{L}_n\),
\bb
    \Tr \rho_n \widehat{P}_n^-
    =
    \Tr \rho_n\Pi_{\widehat{L}_n}
    \xrightarrow{n\to\infty}0.
\ee
Thus,
\bb
    \lim_{n\to\infty}\Tr \widehat{\rho}_n\left\{-\frac1n\log \widehat{\rho}_n<S-\alpha\right\}=0.
\ee
Combining the two tail estimates proves the claim.

\subsubsection{Technical ingredients needed for the proof of Proposition~\ref{prop:padded-block-product-strong-converse}}\label{subsec:padded-block-product}

The last technical tool needed in the proof of Theorem~\ref{thm:W1_GQSL} is the strong converse for padded block-product alternatives given by Proposition~\ref{prop:padded-block-product-strong-converse}. This is the \(W_1\)-stable replacement of the padded i.i.d.\ block-product strong converse used in Ref.~\cite[Lemma S6]{Hayashi2025}.\smallskip

The proof has two main ingredients.
\begin{enumerate}
    \item Lemma~\ref{lem:binary-rounding} gives a binary-rounded pinching to reduce the problem to a commuting pair with only polynomially many spectral values. 
    \item Lemma~\ref{lemma:new_block} provides a noncommutative law of large numbers for padded block-product alternative hypothesis, and it is essentially based on a simple application of~\cite[Lemma~2]{datta2026entropyconcentrationuniversaltypicality}.
\end{enumerate}
Combining these ingredients, we first prove a block strong converse result for tests given by projectors (see~\eqref{eq:projector}); then, we extend the statement to general POVMs using Lemma~\ref{lem:projection-to-DH}.

\begin{figure}[t]
  \centering
  \includegraphics[width=0.92\textwidth]{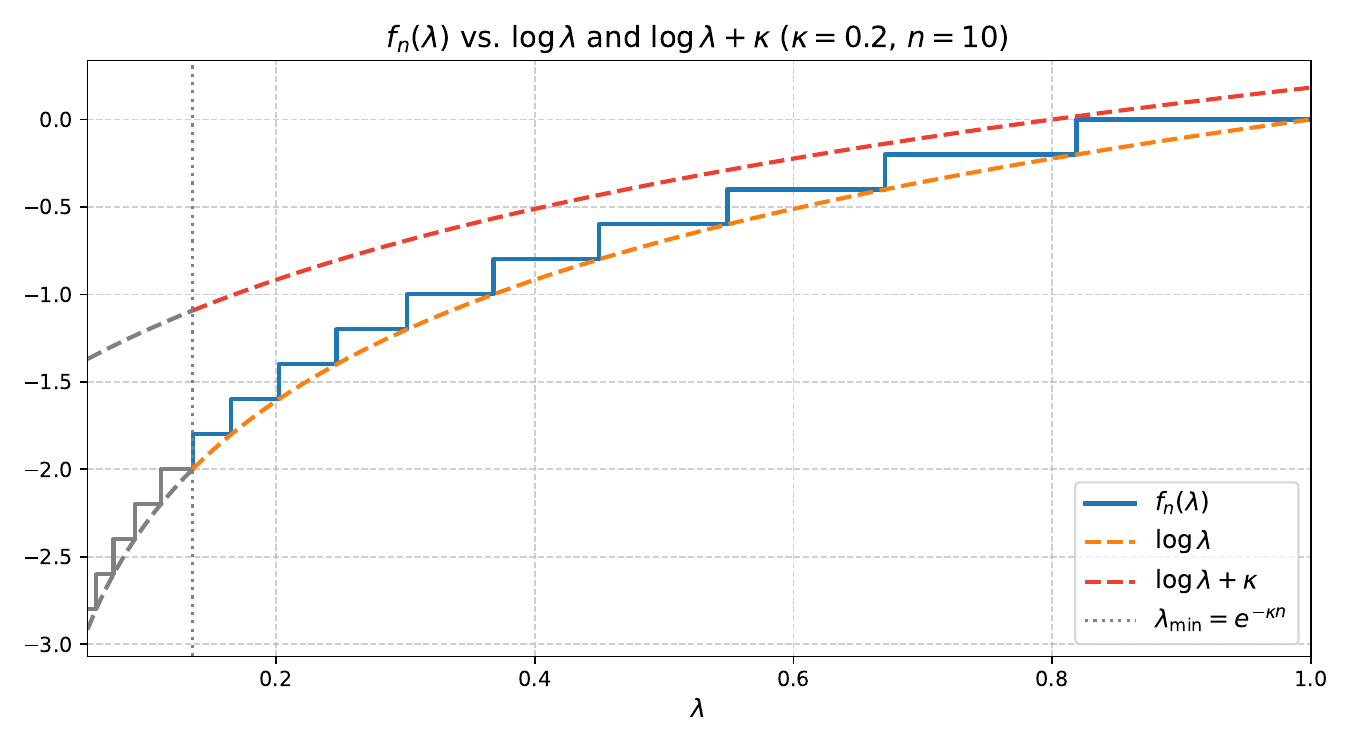}
  \mycaption{An example of logarithmic rounding.}{} 
  \label{fig:f_n}
\end{figure}

\begin{lemma}[(Binary pinching and logarithmic rounding)]
\label{lem:binary-rounding}
Let \(\sigma_n\) be a full-rank state on $\mathcal H^{\otimes n}$ satisfying $\sigma_n\ge e^{-\kappa n}\id$ for some constant \(\kappa<\infty\) independent of $n$. Let $A_n$ be a projector, and define the binary pinching map
\bb\label{eq:binary_pinching_map}
  \pazocal P(\sigma_n)
  \coloneqq
  A_n\sigma_n A_n +(\id -A_n)\sigma_n (\id-A_n).
\ee
Then, there exists a full-rank state $\sigma_n'$ such that $\pazocal P (\sigma_n')=\sigma_n'$ and $|{\rm spec}(\sigma_n')|\leq n+1$.
Moreover,
\bb
  \frac {e^{-\kappa}}2\sigma_n\leq e^{-\kappa}\pazocal P(\sigma_n)
  \le
  \sigma_n'
  \le
  e^\kappa\pazocal P(\sigma_n). \label{eq:binary-pinching-porder}
\ee
and
\bb\label{eq:bound_II}
  -\frac1n \log\Tr[A_n\sigma_n]
  \le
  -\frac1n \log\Tr[A_n\sigma_n']
  +\frac \kappa n.
\ee
\end{lemma}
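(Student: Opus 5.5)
The plan is to build $\sigma_n'$ by rounding the spectrum of the pinched state $\pazocal P(\sigma_n)$ down to a geometric grid of ratio $e^{-\kappa}$ and then renormalising. All the required properties will then follow from the operator inequalities this rounding preserves.

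\textbf{Step 1: the pinched state and its spectrum.} The binary pinching is unital, $\pazocal P(\id)=\id$, and positive. Hence $\sigma_n\ge e^{-\kappa n}\id$ gives $\pazocal P(\sigma_n)\ge e^{-\kappa n}\id$. Also $\pazocal P(\sigma_n)$ is a state, so its eigenvalues lie in $[e^{-\kappa n},1]$. Moreover, $\pazocal P(\sigma_n)$ commutes with $A_n$. We then write its spectral decomposition as $\pazocal P(\sigma_n)=\sum_i \mu_i P_i$, where each $P_i$ commutes with $A_n$.

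\textbf{Step 2: logarithmic rounding.} For each $\mu_i$, let $k_i\coloneqq\lceil -\log\mu_i/\kappa\rceil\in\{0,1,\dots,n\}$. Then $e^{-\kappa}\mu_i\le e^{-\kappa k_i}\le\mu_i$. Define $\tau_n\coloneqq\sum_i e^{-\kappa k_i}P_i$. Since $\tau_n$ is a function of $\pazocal P(\sigma_n)$, we get $e^{-\kappa}\pazocal P(\sigma_n)\le\tau_n\le\pazocal P(\sigma_n)$ in Löwner order. Taking traces gives $e^{-\kappa}\le\Tr\tau_n\le1$. Set $\sigma_n'\coloneqq\tau_n/\Tr\tau_n$. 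Then:
\begin{itemize}
\item $\sigma_n'\le e^{\kappa}\tau_n\le e^\kappa\pazocal P(\sigma_n)$;
\item $\sigma_n'\ge\tau_n\ge e^{-\kappa}\pazocal P(\sigma_n)$;
\item $\sigma_n'$ is full rank, since $\sigma_n'\ge e^{-\kappa(n+1)}\id$;
\item its spectrum is contained in $\{e^{-\kappa k}/\Tr\tau_n\}_{k=0}^n$, so $|{\rm spec}(\sigma_n')|\le n+1$;
\item $\sigma_n'$ is a function of $\pazocal P(\sigma_n)$, so it commutes with $A_n$, which gives $\pazocal P(\sigma_n')=\sigma_n'$.
\end{itemize}

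\textbf{Step 3: remaining inequalities.} For the leftmost inequality in~\eqref{eq:binary-pinching-porder}, we use the standard pinching inequality for a two-outcome pinching, $\sigma_n\le 2\,\pazocal P(\sigma_n)$. One quick proof is to write $\pazocal P(X)=\frac12(X+UXU^\dagger)$ with $U=2A_n-\id$, and use $U\sigma_nU^\dagger\ge0$. For~\eqref{eq:bound_II}, note that $A_n$ commutes with the pinching, so
\[
\Tr[A_n\sigma_n]=\Tr[A_n\pazocal P(\sigma_n)]\le e^{\kappa}\Tr[A_n\sigma_n'].
\]
Taking $-\frac1n\log$ then yields the claim.

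The only delicate point is checking that normalisation costs at most one factor $e^{\pm\kappa}$ on each side. This is guaranteed by $e^{-\kappa}\le\Tr\tau_n\le 1$, so I do not expect a genuine obstacle here.
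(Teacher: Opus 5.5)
Your construction of $\sigma_n'$ is correct and is essentially the paper's. The only difference is the rounding direction: the paper rounds the spectrum of $\pazocal P(\sigma_n)$ up, so the unnormalised operator has trace in $[1,e^{\kappa}]$, while you round down, giving trace in $[e^{-\kappa},1]$. Either way one gets the full chain~\eqref{eq:binary-pinching-porder} together with full rank, at most $n+1$ eigenvalues, and $\pazocal P(\sigma_n')=\sigma_n'$. Your derivation of $\sigma_n\le 2\pazocal P(\sigma_n)$ via $U=2A_n-\id$ is a fine self-contained substitute for the citation the paper uses.

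The final step, however, is backwards. You derive
\[
\Tr[A_n\sigma_n]=\Tr[A_n\pazocal P(\sigma_n)]\le e^{\kappa}\Tr[A_n\sigma_n'].
\]
Applying the decreasing map $x\mapsto-\frac1n\log x$ to this gives
\[
-\frac1n\log\Tr[A_n\sigma_n]\ \ge\ -\frac1n\log\Tr[A_n\sigma_n']-\frac{\kappa}{n},
\]
which is the reverse of~\eqref{eq:bound_II}. What~\eqref{eq:bound_II} needs is a \emph{lower} bound $\Tr[A_n\sigma_n]\ge e^{-\kappa}\Tr[A_n\sigma_n']$. This comes from the other half of your sandwich, $\sigma_n'\le e^{\kappa}\pazocal P(\sigma_n)$, which yields
\[
\Tr[A_n\sigma_n']\le e^{\kappa}\Tr[A_n\pazocal P(\sigma_n)]=e^{\kappa}\Tr[A_n\sigma_n].
\]
This is exactly the inequality the paper uses. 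The direction matters downstream: in~\eqref{eq:sopra1} it is what transfers an upper bound on the type-II exponent against $\sigma_n'$ into one against $\sigma_n^{(m)}$. With this one-line correction, which uses only what you already proved in Step 2, your proof is complete.
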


\begin{proof}
Since $\sigma_n\ge e^{-\kappa n}\id$, we also have $\pazocal P(\sigma_n)\ge e^{-\kappa n}\id$.
Let
\bb
  \pazocal P(\sigma_n) = \sum_i\lambda_i P_i
\ee
be the spectral decomposition of \(\pazocal P(\sigma_n)\). For $\lambda\in[e^{-\kappa n},1]$, define
\bb\label{eq:r1}
  f_n(\lambda)
  \coloneqq
  \kappa\left\lceil
    \frac{\log\lambda +\kappa n}{\kappa}
  \right\rceil -\kappa n.
\ee
Then
\bb
  \log\lambda\le f_n(\lambda)\le\log\lambda +\kappa,\label{eq:log-runding}
\ee
and $f_n$ takes at most $n+1$ distinct values (see Figure~\ref{fig:f_n}). Define
\bb
  \widetilde\sigma_n \coloneqq \sum_i e^{f_n(\lambda_i)}P_i,
  \qquad
  \sigma_n' \coloneqq \frac{\widetilde\sigma_n}{\Tr\widetilde\sigma_n}.
\ee
By Eq.~\eqref{eq:log-runding}
\(
  \pazocal P(\sigma_n)
  \le
  \widetilde\sigma_n
  \le
  e^\kappa\pazocal P(\sigma_n)\), whence \(1\le\Tr\widetilde\sigma_n\le e^\kappa
\). Then
we get~\eqref{eq:binary-pinching-porder}.
Furthermore, by construction, 
\begin{itemize}
    \item since \(\pazocal P(\sigma_n)\) commutes with $A_n$, so does
$\sigma_n'$; hence $\pazocal P(\sigma_n')=\sigma_n'$;
    \item $\sigma_n'$ has at most $n+1$ distinct eigenvalues by construction;
    \item the binary pinching inequality\cite{Hayashi2002} gives
\bb
  \sigma_n\le2\pazocal P(\sigma_n)\le2e^\kappa\sigma_n'.
\ee
\end{itemize}
Given that $\sigma_n'\le e^\kappa\pazocal P(\sigma_n)$, we have $\Tr A_n\sigma_n = \Tr A_n\pazocal P(\sigma_n)  \ge e^{-\kappa}\Tr A_n\sigma_n' $, we finally get
\bb
  -\frac 1n \log\Tr A_n\sigma_n
  &\le
  -\frac 1n \log\Tr A_n\sigma_n' + \frac{\kappa}{n},
\ee
which concludes the proof.
\end{proof}

The following lemma is a swift application of~\cite[Lemma 2]{datta2026entropyconcentrationuniversaltypicality} to our case.

\begin{lemma}[(Noncommutative law of large numbers for padded block-product observable)]\label{lemma:new_block}
    Fix a block length $m$ and a full-rank state $\sigma_m$ on $\mathcal H^{\otimes m}$.
For $n=m\ell + r$, $0\le r<m$, define
\bb\label{eq:above1}
  \sigma_n^{(m)}\coloneqq\sigma_m^{\otimes\ell}\otimes\sigma_{\rm full}^{\otimes r},
  \qquad 
  h_m\coloneqq-\frac 1m \Tr\big[ \rho^{\otimes m}\log\sigma_m\big].
\ee
If $(\rho_n)_n$ is a $W_1$ almost i.i.d.\ source
along $\rho$, then
\bb
    \limsup_{n\to\infty}\Tr\Big[\rho_n \Big\{-\tfrac 1n \log \sigma_n^{(m)}>h_m+ \delta\Big\}\Big]=0\qquad \forall \delta>0.
\ee
\end{lemma}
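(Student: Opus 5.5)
The observable $-\log\sigma_n^{(m)}$ is a sum of commuting local terms:
\[
-\log\sigma_n^{(m)}=\sum_{b=1}^{\ell} G_b+\sum_{j=1}^{r} g_j ,
\]
where $G_b\coloneqq-\log\sigma_m$ acts on the $b$-th block of $m$ sites and $g_j\coloneqq-\log\sigma_{\rm full}$ acts on one of the $r$ trailing sites. Both $\sigma_m$ and $\sigma_{\rm full}$ are full rank, so $\|G_b\|_\infty\le\kappa_m\coloneqq-\log\lambda_{\min}(\sigma_m)$ and $\|g_j\|_\infty\le\kappa_1\coloneqq-\log\lambda_{\min}(\sigma_{\rm full})$. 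The trailing contribution is therefore at most $m\kappa_1=O(1)$, and after dividing by $n$ it is negligible.

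The plan is to compare the spectral distribution of $X_n\coloneqq\frac1n\sum_b G_b$ under $\rho_n$ with that under $\rho^{\otimes n}$, using Lipschitz observables and the dual formula~\eqref{eq:W1-duality}. The projector $\{X_n>h_m+\delta\}$ itself is not Lipschitz, so I would smooth it. Take a piecewise-linear function $\phi$ that equals $0$ below $h_m+\delta/2$, equals $1$ above $h_m+\delta$, and has slope $2/\delta$ in between. Then
\[
\Tr[\rho_n\{X_n>h_m+\delta\}]\le\Tr[\rho_n\phi(X_n)] .
\]

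The key claim is that $\|\phi(X_n)\|_L\le C_m/(n\delta)$. Here I would invoke~\cite[Lemma 2]{datta2026entropyconcentrationuniversaltypicality}, which (as I understand it) bounds the quantum Lipschitz constant of a Lipschitz function of a sum of commuting bounded local observables. Heuristically, changing site $x$ modifies only the single block containing it, hence perturbs $X_n$ by at most $2\kappa_m/n$ in norm. The commutativity of the $G_b$ is what lets this transfer to $\phi(X_n)$ with a factor $\mathrm{Lip}(\phi)=2/\delta$. If that lemma needs an independent justification, I would diagonalise all the $G_b$ simultaneously in a product eigenbasis and run the argument of Lemma~\ref{lem:distance-cutoffs-Lipschitz} on the resulting level sets. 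This Lipschitz estimate is the main obstacle. Given it, \eqref{eq:W1-duality} yields
\[
\Tr[\rho_n\phi(X_n)]\le\Tr[\rho^{\otimes n}\phi(X_n)]+\frac{C_m}{n\delta}\,\|\rho_n-\rho^{\otimes n}\|_{W_1},
\]
and the last term tends to $0$ by the $W_1$ almost i.i.d.\ assumption.

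It remains to treat the i.i.d.\ term. Under $\rho^{\otimes n}=(\rho^{\otimes m})^{\otimes\ell}\otimes\rho^{\otimes r}$, the blocks are independent. Each block observable $G_b$ commutes with the others, and its measurement statistics in its eigenbasis have mean $\Tr[\rho^{\otimes m}(-\log\sigma_m)]=mh_m$. The $G_b$ are i.i.d.\ bounded random variables (the pinching in the eigenbasis of $\sigma_m^{\otimes\ell}$ leaves $\Tr[\rho^{\otimes n}f(X_n)]$ unchanged). By the weak law of large numbers, $X_n=\frac{\ell}{n}\cdot\frac1\ell\sum_b G_b\to h_m$ in probability, since $\ell m/n\to1$. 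Hence $\Tr[\rho^{\otimes n}\phi(X_n)]\le\Pr\bigl(X_n>h_m+\delta/2\bigr)\to0$.

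Finally, I would add back the padding term, absorbing its $O(1/n)$ contribution by shrinking $\delta$ slightly. This gives the claimed $\limsup$ equal to zero for every $\delta>0$.
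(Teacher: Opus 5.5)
Your proof is correct, but it takes a different route from the paper's. The paper traces out the $r$ padding sites and uses \cite[Lemma~31]{a_tale} and \cite[Corollary~16]{almost_iid} to see that the blocked marginals $\rho^{(r)}_{m\ell}$ form a weakly almost i.i.d.\ source along $\rho^{\otimes m}$. It absorbs the padding by a commuting operator inequality, and then applies \cite[Lemma~2]{datta2026entropyconcentrationuniversaltypicality} as a black-box law of large numbers. In the paper that lemma is a concentration statement for weakly almost i.i.d.\ sources, not the Lipschitz estimate you attribute to it, so do not cite it for that step.

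You instead stay on the original $n$ sites. You test the smoothed observable $\phi(X_n)$ against $\rho_n-\rho^{\otimes n}$ through~\eqref{eq:W1-duality}, and finish with the classical weak law under $\rho^{\otimes n}$.

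The step you flag as the main obstacle is elementary, provided the competitor commutes with $X_n$. For a site $x$ in block $b_0$, take $K\coloneqq\phi\big(X_n-\tfrac1n G_{b_0}\big)$, which acts trivially on the whole block $b_0$. Since $G_{b_0}$ commutes with every $G_b$, both arguments are diagonal in a common product eigenbasis, with eigenvalues differing by at most $\kappa_m/n$. Hence $\partial_x\phi(X_n)\le 2\cdot\tfrac2\delta\cdot\tfrac{\kappa_m}{n}$, and $\partial_x\phi(X_n)=0$ on padding sites. Do not replace $G_{b_0}$ by a finer operator on the other $m-1$ sites of the block. That operator need not commute with $G_{b_0}$, and for a merely Lipschitz ramp the scalar bound does not transfer to non-commuting arguments without dimension-dependent losses. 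The detour through Lemma~\ref{lem:distance-cutoffs-Lipschitz} is also unnecessary.

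What your route buys is a self-contained, quantitative estimate. For $n$ large, running your argument at level $\delta/2$ after absorbing the padding gives
\[
\Tr\Big[\rho_n\Big\{-\tfrac1n\log\sigma_n^{(m)}>h_m+\delta\Big\}\Big]\le\Tr\Big[\rho^{\otimes n}\Big\{X_n>h_m+\tfrac\delta4\Big\}\Big]+\frac{8\kappa_m}{\delta}\cdot\frac1n\big\|\rho_n-\rho^{\otimes n}\big\|_{W_1}.
\]
No re-blocking is needed, because the single-site Lipschitz constant of the $m$-local observable is already $O(1/n)$. The paper's route is shorter given the cited results, and its concentration step only needs the blocked marginals to be weakly almost i.i.d. The price is relying on three external black boxes.
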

\begin{proof}
    Let us decompose $\mathcal{H}^{\otimes n}$ in $\ell + 1$ blocks as in Figure~\ref{fig:padding}:
    \bb\label{eq:block_structure}
        \mathcal{H}^{\otimes n}=\underbrace{\mathcal{H}^{\otimes m}}_{B_1}\otimes \underbrace{\mathcal{H}^{\otimes m}}_{B_2}\otimes \cdots \otimes \underbrace{\mathcal{H}^{\otimes m}}_{B_\ell}\otimes\underbrace{\mathcal{H}^{\otimes r}}_{B_{\rm padd.}}.
    \ee
    \modifica{The idea is that, up to an asymptotically negligible correction due to padding, we want to apply the result of~\cite[Lemma 2]{datta2026entropyconcentrationuniversaltypicality} to the source $(\rho_n)_{n\geq 1}$ using the block structure of~\eqref{eq:block_structure}. Note that $(\rho_{m\ell}^{(r)})_{\ell\geq 1}$, defined as
    \bb
        \rho_{m\ell}^{(r)}\coloneqq \Tr_{m\ell+1,\dots, m\ell+r}\rho_n\in\mathcal{D}(\mathcal{H}^{\otimes m\ell})\simeq\mathcal{D}\big((\mathcal{H}^{\otimes m})^{\otimes \ell}\big),\quad \text{with} \quad n=m\ell + r,
    \ee
    is  a Wasserstein almost i.i.d.\ source along $\rho^{\otimes m}$ by~\cite[Lemma~31]{a_tale}. In particular, $(\rho_{m\ell}^{(r)})_{\ell\geq 1}$ is also a weakly almost i.i.d.\ source along $\rho^{\otimes m}$ by~\cite[Corollary~16]{almost_iid}.}
    Then, we have
    \bb
        \log \sigma_n^{(m)} = \sum_{i=1}^\ell \big(\log \sigma_m\big)_{B_i}\otimes\id_{B_i^c}+\big(\log\sigma_{\rm full}^{\otimes r}\big)_{B_{\rm padd.}}\otimes\id_{B_{\rm padd.}^c},
    \ee
    where $B^c$ denotes the complementary set of $B$. Let 
    \bb
    \lambda\coloneqq -(m-1)\log\lambda_{\min}(\sigma_{\rm full})\geq \max_{1\leq r\leq m-1}\|\log\sigma_{\rm full}^{\otimes r}\|_\infty, \qquad \lambda<\infty,
    \ee
    and $\mu\coloneqq\|\log\sigma_m\|_\infty<\infty$. Given any arbitrary $\delta>0$, for $n\geq n_0\coloneqq  \frac 2\delta(\mu+\lambda)$, we can upper bound
    \bb\label{eq:operator_ineq}
        -\frac 1n\log \sigma_n^{(m)} &\leq -\frac 1n\sum_{i=1}^\ell \big(\log \sigma_m\big)_{B_i}\otimes\id_{B_i^c}+\frac\lambda n\id\\
        &\leq -\frac 1{m\ell}\sum_{i=1}^\ell \big(\log \sigma_m\big)_{B_i}\otimes\id_{B_i^c}+\frac{\mu+\lambda} n\id\\
        &\leq -\frac 1{m\ell}\sum_{i=1}^\ell \big(\log \sigma_m\big)_{B_i}\otimes\id_{B_i^c}+\frac\delta 2\id,
    \ee
    whence, for all $h\in\mathbb{R}$, $\delta>0$ and $n\geq n_0$,
    \bb\label{eq:4.100}
         \Big\{-\tfrac 1n \log \sigma_n^{(m)}>h+ \delta\Big\}&\leq \Big\{-\tfrac{1}{m\ell} \textstyle\sum_{i=1}^\ell \big(\log \sigma_m\big)_{B_i}\otimes\id_{B_i^c}>h+ \frac\delta 2\Big\}\\
         &= \Big\{-\tfrac{1}{m\ell} \textstyle\sum_{i=1}^\ell \big(\log \sigma_m\big)_{B_i}\otimes\id_{B_i^c}>h+ \frac\delta 2\Big\}_{B_{\rm padd.}^c}\otimes \id_{B_{\rm padd.}}.
    \ee
    Note that the operator inequality in~\eqref{eq:operator_ineq} can be legitimately lifted to projectors as in~\eqref{eq:4.100} because the operators involved commute.
    Choosing $h=h_m$ as in~\eqref{eq:above1}, note that
    \bb 
        \lim_{\ell\to\infty}\Tr\Big[\rho_{m \ell}^{(r)} \Big\{-\tfrac{1}{m\ell} \textstyle\sum_{i=1}^\ell \big(\log \sigma_m\big)_{B_i}\otimes\id_{B_i^c}>h_m+ \frac\delta 2\Big\}\Big]=
        0\quad \text{for all} \quad 0\leq r <m, \delta>0,
    \ee
    due to~\cite[Lemma 2]{datta2026entropyconcentrationuniversaltypicality} applied to the weakly almost i.i.d.\ source $(\rho_{m\ell}^{(r)})_{\ell \geq 1}$ along $\rho^{\otimes m}\in\mathcal{D}(\mathcal{H}^{\otimes m})$.
    Then, by~\eqref{eq:4.100}, this implies
    \bb
        &\limsup_{n\to\infty}\Tr\Big[\rho_n \Big\{-\tfrac 1n \log \sigma_n^{(m)}>h_m+ \delta\Big\}\Big]=
        0\qquad \forall \delta>0,
    \ee
    which completes the proof.
\end{proof}

\begin{lemma}
\label{lem:slow-scale-exp-moment}
Let $\rho_n$ be states and let $X_n$ be self-adjoint operators satisfying $\displaystyle{\sup_n\|X_n\|_\infty<\infty}$.
Suppose that, for every \(\varepsilon>0\),
\bb\label{eq:limit}
  \lim_{n\to\infty}\Tr[\rho_n \{X_n>h+\varepsilon\}] = 0.
\ee
Then there exists a sequence \(0<s_n\leq 1\) such that
\bb\label{eq:three_claims}
  \lim_{n\to\infty}ns_n=\infty,\quad \lim_{n\to\infty}s_n=0 \quad\text{and}\quad \Tr\rho_n e^{ns_nX_n}
  \le
  \exp\big(ns_n\big(h +o(1)\big)\big).
\ee
\end{lemma}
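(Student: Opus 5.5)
Set $M\coloneqq\sup_n\|X_n\|_\infty<\infty$. The plan is to split $\Tr\rho_n e^{sX_n}$ spectrally according to whether $X_n$ exceeds $h+\varepsilon$. Then choose $s=ns_n$ slowly enough that the tail probability kills the exponentially large contribution on the bad event. Since $X_n$ commutes with its own spectral projectors, for any $s>0$ and $\varepsilon>0$ we have the operator inequality
\bb
  e^{sX_n}\le e^{s(h+\varepsilon)}\{X_n\le h+\varepsilon\}+e^{sM}\{X_n>h+\varepsilon\}.
\ee
Hence, with $p_n(\varepsilon)\coloneqq\Tr[\rho_n\{X_n>h+\varepsilon\}]$,
\bb
  \Tr\rho_n e^{sX_n}\le e^{s(h+\varepsilon)}+e^{sM}p_n(\varepsilon).
\ee
By assumption~\eqref{eq:limit}, $p_n(\varepsilon)\to0$ for every fixed $\varepsilon$.

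Next comes a diagonal choice of $\varepsilon$. Take $\varepsilon_k=1/k$. By~\eqref{eq:limit}, for each $k$ there is $N_k$ such that $p_n(1/k)\le e^{-k}$ for all $n\ge N_k$, and we may take $N_k$ strictly increasing with $N_k\ge k^2$. Define $k(n)\coloneqq\max\{k:N_k\le n\}$, which tends to infinity, and set $\varepsilon_n\coloneqq1/k(n)\to0$. Then
\bb
  p_n(\varepsilon_n)\le e^{-k(n)}.
\ee

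Now choose the scale $s_n\coloneqq\min\{1,\sqrt{k(n)}/n\}$. Then $ns_n\ge\min\{n,\sqrt{k(n)}\}\to\infty$. Also $s_n\to0$, because $N_k\ge k^2$ gives $k(n)\le\sqrt n$, hence $\sqrt{k(n)}/n\le n^{-3/4}$. With $s=ns_n\le\sqrt{k(n)}$,
\bb
  e^{sM}p_n(\varepsilon_n)\le e^{M\sqrt{k(n)}-k(n)}\to0.
\ee
Since $e^{s(h+\varepsilon_n)}\ge e^{-s(|h|+1)}$, this second term is at most $e^{s(h+\varepsilon_n)}\cdot e^{s(|h|+1)+M\sqrt{k(n)}-k(n)}$. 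The exponent satisfies $s(|h|+1)+M\sqrt{k(n)}-k(n)\le(|h|+1+M)\sqrt{k(n)}-k(n)$, which is negative for large $n$. Therefore
\bb
  \Tr\rho_ne^{ns_nX_n}\le 2e^{ns_n(h+\varepsilon_n)}=\exp\big(ns_n(h+\varepsilon_n+\tfrac{\log2}{ns_n})\big),
\ee
and $\varepsilon_n+\log2/(ns_n)=o(1)$, which proves~\eqref{eq:three_claims}.

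The main obstacle is precisely this balancing: the speed at which $p_n(\varepsilon)\to0$ is unknown and non-uniform in $\varepsilon$. The bad-event term $e^{ns_nM}p_n$ must therefore be kept subdominant by letting $ns_n$ grow only like $\sqrt{\log(1/p_n)}$ along the diagonal sequence. This is why only a slowly growing $ns_n$, and no explicit rate, can be guaranteed.
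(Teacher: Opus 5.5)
Your proof is correct and takes essentially the same route as the paper: you split $e^{ns_nX_n}$ spectrally at $h+\varepsilon_n$ with a diagonally chosen $\varepsilon_n\to0$, and you take $ns_n$ of order $\sqrt{\log(1/p_n(\varepsilon_n))}$ so that the bad-event term $e^{ns_nM}p_n(\varepsilon_n)$ is negligible. The differences are cosmetic: you make explicit the diagonal extraction of $\varepsilon_n$ that the paper leaves implicit, and you force $s_n\to0$ via $N_k\ge k^2$ instead of the paper's cap $s_n\le 1/\sqrt n$; the finitely many $n<N_1$, where $k(n)$ is undefined, just need a trivial separate choice such as $s_n=1$.
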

In particular, setting $X_n=-\frac 1n \log\sigma_n^{(m)}$, the hypothesis~\ref{eq:limit} is satisfied by Lemma~\ref{lemma:new_block}; hence, Lemma~\eqref{lem:slow-scale-exp-moment} immediately gives
\begin{cor}[(Moment estimate)]
\label{cor:padded-moment}
Fix a block length $m$ and a full-rank state $\sigma_m$ on $\mathcal H^{\otimes m}$.
For $n=m\ell + r$, $0\le r<m$, define
\bb
  \sigma_n^{(m)}\coloneqq\sigma_m^{\otimes\ell}\otimes\sigma_{\rm full}^{\otimes r},
  \qquad 
  h_m\coloneqq-\frac 1m \Tr\big[ \rho^{\otimes m}\log\sigma_m\big].
\ee
If $(\rho_n)_n$ is a $W_1$ almost i.i.d.\ source along $\rho$, then there exists a sequence $s_n>0$ such that
\bb\label{eq:three_claims2}
  \lim_{n\to\infty}s_n=0, \quad \lim_{n\to\infty}ns_n=\infty\quad \text{and}\quad\Tr \big[\rho_n\big(\sigma_n^{(m)}\big)^{-s_n}\big]
  \le
  \exp\big(ns_n \big(h_m+o(1)\big)\big).
\ee
\end{cor}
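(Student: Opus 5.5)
The corollary should follow directly from Lemma~\ref{lem:slow-scale-exp-moment} applied with $X_n\coloneqq-\frac1n\log\sigma_n^{(m)}$ and $h\coloneqq h_m$. The conclusion then transfers by the identity
\[
e^{ns_nX_n}=e^{-s_n\log\sigma_n^{(m)}}=\big(\sigma_n^{(m)}\big)^{-s_n}.
\]
This identity is legitimate because $\sigma_n^{(m)}$ is full rank: $\sigma_m$ is full rank by assumption, and so is $\sigma_{\rm full}$. Hence $\log\sigma_n^{(m)}$ is a well-defined self-adjoint operator, and functional calculus gives the equality.

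The hypotheses of Lemma~\ref{lem:slow-scale-exp-moment} need to be checked in turn. For the uniform bound, I will use the additive block decomposition of $\log\sigma_n^{(m)}$ already written in the proof of Lemma~\ref{lemma:new_block}. Its norm is at most $\ell\mu+\lambda$, with $\mu=\|\log\sigma_m\|_\infty$ and $\lambda=-(m-1)\log\lambda_{\min}(\sigma_{\rm full})$. Dividing by $n=m\ell+r\ge m\ell$ gives
\[
\|X_n\|_\infty\le \frac{\mu}{m}+\frac{\lambda}{n}\le \frac{\mu}{m}+\lambda,
\]
which is uniform in $n$. The tail condition~\eqref{eq:limit}, with $h=h_m$ and every $\varepsilon>0$, is exactly the conclusion of Lemma~\ref{lemma:new_block}. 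That lemma is stated with a $\limsup$ equal to zero, and since the quantities are nonnegative this means the limit exists and is zero.

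Invoking Lemma~\ref{lem:slow-scale-exp-moment} then yields $0<s_n\le1$ with $s_n\to0$, $ns_n\to\infty$, and
\[
\Tr\big[\rho_n e^{ns_nX_n}\big]\le\exp\big(ns_n(h_m+o(1))\big).
\]
Rewriting the exponential via the identity above gives~\eqref{eq:three_claims2}.

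The only delicate point is the dependence on the unproved Lemma~\ref{lem:slow-scale-exp-moment}. If it had to be justified, I would split the spectrum of $X_n$ at $h+\varepsilon$. Below the threshold, $e^{ns X_n}\le e^{ns(h+\varepsilon)}$. On the tail, using $\|X_n\|_\infty\le C$, the contribution is at most
\[
p_n(\varepsilon)\,e^{nsC},\qquad p_n(\varepsilon)\coloneqq\Tr\big[\rho_n\{X_n>h+\varepsilon\}\big].
\]
One then chooses $s_n$ so slowly decaying that $e^{ns_n(C-h)}p_n(\varepsilon_n)\to0$ along some $\varepsilon_n\to0$, which is a diagonal argument. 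For the corollary itself, though, this is not needed, since the lemma is taken as given.
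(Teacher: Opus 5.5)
Your proposal is correct and follows the paper's argument: the paper also obtains the corollary by applying Lemma~\ref{lem:slow-scale-exp-moment} to $X_n=-\frac1n\log\sigma_n^{(m)}$ with $h=h_m$, and takes the tail hypothesis from Lemma~\ref{lemma:new_block}. Your explicit checks of $\sup_n\|X_n\|_\infty<\infty$ and of the identity $e^{ns_nX_n}=(\sigma_n^{(m)})^{-s_n}$ supply details the paper leaves implicit, and your sketch of the lemma (spectral split at $h+\varepsilon$ plus a diagonal choice of $s_n$) matches the paper's own proof of it.
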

\begin{proof}[Proof of Lemma~\ref{lem:slow-scale-exp-moment}]
Let \(\displaystyle{M\coloneqq \sup_n\|X_n\|_\infty}\). By~\eqref{eq:limit}, we can choose a sequence $\varepsilon_n\downarrow 0$ such that
\bb
\delta_n \coloneqq \Tr\rho_n\{X_n>h+\varepsilon_n\}
\ee
asymptotically vanishes, i.e.\ $\displaystyle{\lim_{n\to \infty} \delta_n = 0}$.
Using the convention $\log 0 = -\infty$, we choose
\bb
  s_n\coloneqq\min\left\{\frac{1}{\sqrt n},\frac 1n\sqrt{-\log \delta_n}\right\}\leq 1,
\ee
which clearly satisfies the first two claims in~\eqref{eq:three_claims}.
Now, we can upper bound
\bb
  \Tr\rho_n e^{ns_nX_n}
  &=
  \Tr\rho_n \{X_n\le h +\varepsilon_n\}e^{ns_nX_n} +\Tr\rho_n \{X_n> h +\varepsilon_n\} e^{ns_nX_n} \\
  &\le
  e^{ns_n(h+\varepsilon_n)} +\delta_n e^{ns_nM}\\
  &\leq
  e^{ns_nh}
  \left(
    e^{ns_n\varepsilon_n} + \delta_n e^{(M-h)\sqrt{-\log\delta_n}}
  \right).
\ee
Notice that the second term vanishes in $n$, as
\bb
    \lim_{x\to 0^+}xe^{c\sqrt{-\log x}}=0 \qquad \forall c\in \mathbb{R}.
\ee
Therefore,
\bb
  \log\Tr \rho_n e^{ns_nX_n}
  &\le
  ns_nh +
  \log\left(
    e^{ns_n\varepsilon_n} + o(1)
  \right) 
  =ns_n\big(h + o(1)\big),
\ee
since $\displaystyle{\lim_{n\to\infty} \epsilon_n=0}$.
\end{proof}

\begin{figure}[t]
  \centering
  \def\svgwidth{\linewidth}
  \small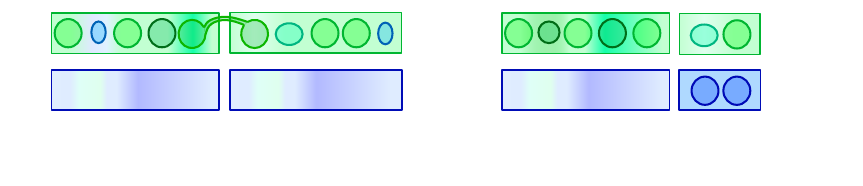
  \mycaption{Blocking and padding the systems}{A pictorial representation of $\ell =3$, $m=5$, $r=2$.} 
  \label{fig:padding}
\end{figure}

\begin{lemma}
\label{lem:projection-to-DH}
Let \(\rho_n,\sigma_n\) be states. Suppose that $R\geq 0$ is a number with the following property: for every sequence of projectors $A_n$ satisfying
\bb\label{eq:i}
\liminf_{n\to\infty}\Tr A_n\rho_n >0,
\ee
we are guaranteed to have
\bb\label{eq:ii}
 \limsup_{n\to\infty}
  -\frac1n\log\Tr A_n\sigma_n
  \le R
\ee
Then, for every \(\eta\in (0,1)\),
\bb
  \limsup_{n\to\infty}
  \frac1n D_H^\eta(\rho_n\|\sigma_n)
  \le R.
\ee
\end{lemma}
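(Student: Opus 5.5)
The plan is to argue by contradiction, converting near-optimal POVM tests into projective tests by spectral thresholding. This keeps a constant fraction of the $\rho_n$-mass and loses only a constant factor on the $\sigma_n$ side.

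\emph{Setting up the contradiction.} Fix $\eta\in(0,1)$ and suppose that
\[
\limsup_{n\to\infty}\tfrac1n D_H^\eta(\rho_n\|\sigma_n)>R .
\]
Then there exist $\gamma>0$ and a strictly increasing subsequence $(n_k)_k$ with $\tfrac1{n_k}D_H^\eta(\rho_{n_k}\|\sigma_{n_k})\ge R+2\gamma$. This also covers the case where the left-hand side is $+\infty$. Since $D_H^\eta=-\log\beta_\eta$ and the minimum in~\eqref{eq:beta-composite} is attained, for each $n=n_k$ there is a test $0\le E_n\le\id$ satisfying
\[
\Tr[E_n\rho_n]\ge 1-\eta \qquad\text{and}\qquad \Tr[E_n\sigma_n]\le e^{-n(R+2\gamma)} .
\]

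\emph{Thresholding.} Set $c\coloneqq(1-\eta)/2\in(0,\tfrac12)$ and define the spectral projector $A_n\coloneqq\{E_n\ge c\}$ for $n=n_k$. Since $A_n$ is a function of $E_n$, the two commute, and functional calculus gives the operator inequalities
\[
c\,A_n\le E_n \qquad\text{and}\qquad E_n\le A_n+c(\id-A_n)\le A_n+c\,\id .
\]
The first inequality yields $\Tr[A_n\sigma_n]\le c^{-1}e^{-n(R+2\gamma)}$. The second yields
\[
\Tr[A_n\rho_n]\ge \Tr[E_n\rho_n]-c\ge 1-\eta-c=c>0 .
\]
For indices $n$ outside the subsequence, set $A_n\coloneqq\id$, so that $\Tr[A_n\rho_n]=1$ there. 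Hence
\[
\liminf_{n\to\infty}\Tr[A_n\rho_n]\ge c>0,
\]
so hypothesis~\eqref{eq:i} holds for the whole sequence $(A_n)_n$.

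\emph{Conclusion.} By assumption~\eqref{eq:ii}, we must have $\limsup_n -\tfrac1n\log\Tr[A_n\sigma_n]\le R$. Along $(n_k)_k$, however,
\[
-\tfrac1{n}\log\Tr[A_n\sigma_n]\ge R+2\gamma+\tfrac1n\log c\longrightarrow R+2\gamma>R .
\]
This contradicts~\eqref{eq:ii}, and the claim follows for every $\eta\in(0,1)$.

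There is no serious obstacle. The only delicate point is choosing the threshold $c$ strictly between $0$ and $1-\eta$, so that both a positive fraction of the $\rho_n$-mass survives and the type II error degrades by at most a constant factor $1/c$, which is negligible at exponential scale. Padding the sequence with $A_n=\id$ off the subsequence is what makes the $\liminf$ hypothesis applicable.
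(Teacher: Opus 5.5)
Your proof is correct and uses the same key step as the paper: thresholding an optimal test at a constant level $c\in(0,1-\eta)$, so that $cA_n\le E_n$ controls the type II error and $\Tr[A_n\rho_n]$ stays bounded away from zero. The paper applies this directly at every $n$, so hypothesis~\eqref{eq:i} holds for the whole sequence and no contradiction, subsequence or padding with $\id$ is needed; your version is an equally valid repackaging of the same argument.
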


\begin{proof}
Recall that
\bb
  \beta_\eta(\rho_n\|\sigma_n)
  \coloneqq
  \inf\left\{
    \Tr T_n\sigma_n :
    0\le T_n\le \id,\ 
    \Tr T_n\rho_n \ge 1-\eta
  \right\}.
\ee
Since the feasible set is compact in finite dimension, the infimum is attained. Fix an optimal test \(T_n\), and a constant $c\in (0,1-\eta)$ so that
\bb
  \Tr T_n\rho_n \ge1-\eta > c,
  \qquad
  \Tr T_n\sigma_n
  =
  \beta_\eta(\rho_n\|\sigma_n).
\ee
Define the threshold projector 
$A_n\coloneqq \left\{T_n\ge c\right\}$ which satisfies $T_n \le A_n + c(\id -A_n)$.
Then we have
\bb
  1 - \eta
  &\le
  \Tr T_n\rho_n  \le
  \Tr A_n\rho_n + c\Tr (\id-A_n)\rho_n =
  c+(1-c)\Tr A_n\rho_n.
\ee
Hence
\bb
  \Tr A_n\rho_n
  \ge
  \frac{1 - \eta - c}{1-c}
  >0.
\ee
On the other hand,
  $T_n\ge c A_n$
  gives
  $\Tr T_n\sigma_n
  \ge
  c\Tr A_n\sigma_n$.
By assumption (ii) and since $c>0$ is independent of $n$, we have 
\bb
  \limsup_{n\to\infty}
  -\frac 1n \log\beta_\eta(\rho_n\|\sigma_n)
  \le R.
\ee
This proves the claim.
\end{proof}

We now combine the previous lemmas to prove the padded block-product strong
converse.

\subsubsection{Proof of Proposition~\ref{prop:padded-block-product-strong-converse}}\label{proof:padded-block}
Let
\bb
  h_m \coloneqq -\frac1m \Tr \rho^{\otimes m}\log\sigma_m ,
  \qquad
  h_m - S(\rho) =\frac1m D(\rho^{\otimes m}\|\sigma_m).
  \label{eq:m-exponent}
\ee
Let $A_n$ be an arbitrary sequence of projectors such that
\bb\label{eq:assumption}
  \liminf_{n\to\infty}\Tr  A_n\rho_n > 0.
\ee
Given two full-rank states $\sigma_m\in\mathcal{S}^{(m)}$ and $\sigma_{\rm full}\in \mathcal S^{(1)}$, we define $\sigma_n^{(m)}\coloneqq\sigma_m^{\otimes\ell}\otimes\sigma_{\rm full}^{\otimes r}$, so that there exists \(0<\kappa<\infty\) such that $\sigma_n^{(m)}\ge e^{-\kappa n}\id$.
Applying Lemma~\ref{lem:binary-rounding} to the full-rank state $\sigma_n=\sigma_n^{(m)}$ there exists a full-rank state $\sigma_n'$ such that
\bb\label{eq:sopra1}
    -\frac 1n \log \Tr A_n\sigma_n^{(m)}
    &\le
    -\frac 1n \log\Tr A_n\sigma_n' + \frac{\kappa}{n}
\ee
and $\pazocal P(\sigma_n')=\sigma_n'$, where $\pazocal P$ is the binary pinching map defined in~\eqref{eq:binary_pinching_map}. In particular, the spectral PVM $\{\Pi_i\}$ of $\sigma_n'$ commutes with $A_n$, and the set of projectors 
\bb
\mathcal G_n\coloneqq \bigcup_i \{\Pi_iA_n,\Pi_i(\id-A_n)\}
\ee
is a PVM -- called \emph{common refinement} of \(\{\Pi_i\}\) and \(\{A_n,\id -A_n\}\) -- which defines a \emph{refined pinching map}
\bb
  \pazocal P_{\mathcal G_n}(\rho_n)\coloneqq \sum_{E\in \mathcal{G}_n}E\rho_n E.
\ee
Clearly,
\bb
  \Tr \big[\pazocal P_{\mathcal G_n}(\rho_n) A_n\big]
  =
  \Tr\big[\rho_n \pazocal P_{\mathcal G_n}(A_n)\big]=\Tr\big[\rho_n  A_n\big],
\ee
whence, by~\eqref{eq:assumption},
\bb\label{eq:assumption2}
    \liminf_{n\to\infty}\Tr\big[ A_n \pazocal P_{\mathcal G_n}(\rho_n)\big] > 0.
\ee
Furthermore, since $\sigma_n'$ has at most $n+1$ distinct eigenvalues (cf.\ Lemma~\ref{lem:binary-rounding}), \(\mathcal G_n\) has at most \(2(n+1)\) projectors, hence $\displaystyle{\lim_{n\to\infty}\tfrac{1}{n}\log |\mathcal{G}_n}|=0$. Then, by Proposition~\ref{prop:pinching-extension}, we have
\bb\label{eq:prop13}
    \lim_{n\to\infty}\Tr\Big[\pazocal P_{\mathcal G_n}(\rho_n)\Big\{
        \left\lvert-\tfrac1n\log\pazocal P_{\mathcal G_n}(\rho_n)-S(\rho)\right\rvert>\delta
    \Big\}\Big]
    =0
\ee
for every \(\delta>0\).
By~\eqref{eq:binary-pinching-porder}, $\sigma_n^{(m)}$ and $\sigma_n'$ also satisfy the operator inequality
\bb\label{eq:op_ineq}
    \sigma_n^{(m)}\leq 2e^{\kappa}\sigma_n'.
\ee
Furthermore, by the very definition of $\pazocal P_{\mathcal G_n}$, we have
\bb\label{eq:repinching}
    \pazocal{P}_{\mathcal{G}_n}(\sigma_n')= \sigma_n'.
\ee
In particular, $\pazocal{P}_{\mathcal{G}_n}(\rho_n)$ and $\sigma'_n$ commute, so that we can classically apply Markov's inequality in order to upper bound
\bb\label{eq:4.130}
    \Tr\Big[\pazocal P_{\mathcal{G}_n}(\rho_n) \Big\{-\tfrac 1n \log \sigma_n'>h+ \delta\Big\}\Big]&=\PP{P_{\mathcal{G}_n}(\rho_n)}\Big(-\tfrac 1n \log \sigma_n'>h+ \delta\Big)\\
    &=\PP{\pazocal P_{\mathcal{G}_n}(\rho_n)}\big( \sigma_n'< e^{-n(h+\delta)}\big)\\
    &=\PP{\pazocal P_{\mathcal{G}_n}(\rho_n)}\big( \sigma_n'^{- s}> e^{ns(h+\delta)}\big)\\
    &< e^{-ns(h+\delta)}\EE{\pazocal P_{\mathcal{G}_n}(\rho_n)}\big[ \sigma_n'^{-s}\big]\\
    &=e^{-ns(h+\delta)}\Tr\big[\pazocal P_{\mathcal{G}_n}(\rho_n) \sigma_n'^{-s}\big]\\
    &=e^{-ns(h+\delta)}\Tr\big[\rho_n \sigma_n'^{-s}\big],
\ee
for $h\in \mathbb{R}$, $s\in(0,1]$ and $\delta>0$ arbitrary, where the last inequality follows from~\eqref{eq:repinching} combined with $\pazocal P_{\mathcal{G}_n}=\pazocal P_{\mathcal{G}_n}^\dagger$.
Now we want to apply Corollary~\ref{cor:padded-moment}. Let $0<s_n\leq 1$ be the sequence as in~\eqref{eq:three_claims2}. Then, by operator (anti)monotonicity of $x\mapsto x^{-s}$ (with $0<s\leq 1$), Eq.~\eqref{eq:op_ineq} yields
\bb
     e^{-s_n(\kappa+\log 2)}(\sigma_n')^{-s_n}\leq \big(\sigma_n^{(m)}\big)^{-s_n},
\ee
whence, by~\eqref{eq:three_claims2}, 
\bb
    \Tr\big[\rho_n\sigma_n'^{\,-s_n}\big]\leq e^{s_n(\kappa+\log 2)}\Tr\big[\rho_n\big(\sigma_n^{(m)}\big)^{-s_n}\big]\leq e^{ns_n (h_m+o(1))}.
\ee
Thus, setting $h=h_m$ and $s=s_n$ in~\eqref{eq:4.130}, by the previous bound we get
\bb\label{eq:paddle-upper-tails}
    &\limsup_{n\to\infty}\Tr\Big[\pazocal P_{\mathcal{G}_n}(\rho_n) \Big\{-\tfrac 1n \log \sigma_n'>h_m+ \delta\Big\}\Big]\leq \lim_{n\to\infty}e^{-ns_n(\delta+o(1))}=0\qquad \forall \delta>0.
\ee
Since \(\pazocal P_{\mathcal{G}_n}\rho_n\) and $\sigma_n'$ commute, we can combine~\eqref{eq:paddle-upper-tails} with~\eqref{eq:prop13} in order to obtain
\bb\label{eq:log_likelihood}
  &\limsup_{n\to\infty} \Tr\Big[ \pazocal P_{\mathcal{G}_n}(\rho_n)
    \Big\{
      \tfrac{1}{n}
      \big(\log\pazocal P_{\mathcal{G}_n}(\rho_n) - \log\sigma_n'\big)
      >
      h_m - S(\rho) + \delta
    \Big\}\Big]\\
    &\qquad\leqt{(a)} \limsup_{n\to\infty}\Tr\Big[\pazocal P_{\mathcal{G}_n}(\rho_n) \Big\{-\tfrac 1n \log \sigma_n'>h_m+ \tfrac\delta 2\Big\}\Big]\\
    &\qquad \qquad +\lim_{n\to\infty}\Tr\Big[\pazocal P_{\mathcal G_n}(\rho_n)\Big\{
        \left\lvert-\tfrac1n\log\pazocal P_{\mathcal G_n}(\rho_n)-S(\rho)\right\rvert>\tfrac \delta 2
    \Big\}\Big]=0,
\ee
where $h_m - S(\rho)=\frac1m D(\rho^{\otimes m}\|\sigma_m)$. In particular, in (a) we have leveraged the fact that $\pazocal P_{\mathcal G_n}(\rho_n)$ and $\sigma_n'$ commute to use the classical inequality
\bb
    \mathbb{P}\big(A+B> a+b+\delta\big)&\leq \mathbb{P}\Big(A> a+\tfrac{\delta}{2} \lor B> b+\tfrac{\delta}{2}\Big)\\
    &\leq \mathbb{P}\Big(A> a+\tfrac{\delta}{2}\Big)+\mathbb{P}\Big(B-b>\tfrac{\delta}{2}\Big)\\
    &\leq \mathbb{P}\Big(A> a+\tfrac{\delta}{2}\Big)+\mathbb{P}\Big(|B-b|> \tfrac{\delta}{2}\Big) .
\ee
Let us briefly recall a fundamental inequality in classical hypothesis testing given by~\cite[Lemma~4.1.2]{HAN}, which is essentially equivalent to the Neyman-Pearson lemma~\cite{Neyman_Pearson} in the asymptotic limit $n\to\infty$. Let $X^n$ and $Y^n$ be two random variables taking values on a set $\mathcal{X}^n$, where $\mathcal{X}$ is finite, and having laws $p_n$ and $q_n$, respectively. Then, for every $S_n\subseteq \mathcal{X}^n$, the following inequality holds:
\bb\label{eq:Han}
    \PP{X^n\sim p_n}\big(X^n\notin S_n\big)+e^{nt}\PP{Y^n\sim q_n}\big(Y^n\in S_n\big)\geq \PP{X^n\sim p_n}\Big(\tfrac 1n \log \tfrac{p_n(X^n)}{q_n(X^n)}\leq t\Big)\qquad \forall t\in \mathbb R.
\ee
Let us compare~\eqref{eq:Han} with our setting, where we identify $X^n= \pazocal P_{\mathcal{G}_n}(\rho_n)$ and  $Y^n= \sigma_n'$, both intended as random variables with respect to the spectral measure of $\pazocal P_{\mathcal{G}_n}(\rho_n)$. We set $t=\frac 1m D(\rho^{\otimes m}\|\sigma_m) + \delta$, where $\delta>0$ is arbitrary.
\begin{itemize}
    \item On one hand,~\eqref{eq:log_likelihood} can be written as a bound on the log likelihood-ratio:
\bb
    \lim_{n\to\infty} \PP{\pazocal P_{\mathcal{G}_n}(\rho_n)}
    \Big[
      \tfrac{1}{n}
      \log\tfrac{\pazocal P_{\mathcal{G}_n}(\rho_n)}{\sigma_n'}
      >
     \tfrac 1m D(\rho^{\otimes m}\|\sigma_m) + \delta
    \Big]=0 \qquad\forall \delta >0,
\ee
namely 
\bb
\lim_{n\to\infty}\PP{X^n\sim p_n}\Big(\tfrac 1n \log \tfrac{p_n(X^n)}{q_n(X^n)}\leq t\Big)=1.
\ee
\item On the other hand, under~\eqref{eq:assumption} we have showed that~\eqref{eq:assumption2} holds, which can be rewritten as
\bb
    0< \liminf_{n\to\infty}\Tr\big[ A_n \pazocal P_{\mathcal G_n}(\rho_n)\big]=\liminf_{n\to\infty}\PP{X^n\sim p_n}\big(X^n\in S_n\big)
\ee
if we interpret the sequence of projectors $(A_n)_n$ as a sequence of classical events $(S_n)_n$.
\end{itemize}
 Therefore,  by~\eqref{eq:Han}, we have
\bb\label{eq:Han2}
    \liminf_{n\to\infty}e^{nt}\PP{Y^n\sim q_n}\big(Y^n\in S_n\big)&\geq \liminf_{n\to\infty}\left(\PP{X^n\sim p_n}\Big(\tfrac 1n \log \tfrac{p_n(X^n)}{q_n(X^n)}\leq t\Big)-\PP{X^n\sim p_n}\big(X^n\notin S_n\big)\right)\\
    &=\liminf_{n\to\infty}\left(1-\PP{X^n\sim p_n}\big(X^n\notin S_n\big)\right)\\
    &=\liminf_{n\to\infty}\PP{X^n\sim p_n}\big(X^n\in S_n\big)>0.
\ee
Hence, there exist constants $c>0$ and $n_0$ such that $\displaystyle{e^{nt}\PP{Y^n\sim q_n}\big(Y^n\in S_n\big)\geq c}$ for every $n\geq n_0$.
Therefore, for every $n\geq n_0$,
\bb
    -\frac {1}n\log \PP{Y^n\sim q_n}\big(Y^n\in S_n\big)\leq t-\frac1n\log c,
\ee
and consequently
\bb
    \limsup_{n\to\infty}-\frac {1}n\log \PP{Y^n\sim q_n}\big(Y^n\in S_n\big)\leq t.
\ee
Rephrasing the previous inequality in terms of our quantities, we have proved the implication
\bb\label{eq:implication}
  \liminf_{n\to\infty}\Tr A_n\rho_n > 0
  \quad\implies\quad
  \limsup_{n\to\infty} -\frac 1n \log \Tr A_n\sigma_n'
  \le \frac 1m D(\rho^{\otimes m}\|\sigma_m),
\ee
where we have used that $\delta>0$ can be taken arbitrarily small.

Recalling that $\sigma_n'$ satisfies the inequality~\eqref{eq:sopra1}, we can carry on the implication in~\eqref{eq:implication} to
\bb\label{eq:projector}
  \liminf_{n\to\infty}\Tr A_n\rho_n > 0
  \quad\implies\quad
  \limsup_{n\to\infty} -\frac 1n \log \Tr A_n\sigma_n^{(m)}  \le \frac 1m D(\rho^{\otimes m}\|\sigma_m),
\ee
which is in a form that ensures the validity of the hypothesis of Lemma~\ref{lem:projection-to-DH}. Whence, for every \(\eta\in(0,1)\),
\bb
  \limsup_{n\to\infty}
  \frac1n D_H^\eta(\rho_n\|\sigma_n^{(m)})
  \le \frac 1m D(\rho^{\otimes m}\|\sigma_m).
\ee
This completes the proof.

\subsection{First act. Proof of Theorem~\ref{thm:W1_GQSL} for individual sources}\label{sec:individual}

We first prove the individual-source statement. Throughout this section, we fix a Wasserstein almost i.i.d.\ source $(\rho_n)_{n \ge 1}$ along a state $\rho\in\mathcal{D}(\mathcal H)$.
The universal statement for equiconvergent sources will be derived in Section~\ref{sec:universality} from the individual-source statement by a worst-case minimax reduction.
The individual-source proof has two parts:
\begin{align}
  \limsup_{n\to\infty} \frac 1n \,\rel{D_H^\epsilon}{\rho_n}{\mathcal S^{(n)}}
  \le
  D^\infty(\rho\|\mathcal S)\, , \tag{Converse, see Proposition~\ref{prop:W1-stable-strong-converse}}
  \\
  \liminf_{n\to\infty} \frac 1n\, \rel{D_H^\epsilon}{\rho_n}{\mathcal S^{(n)}}
  \ge
  D^\infty(\rho\|\mathcal S)\, .\tag{Achievability, see Proposition~\ref{prop:W1-achievability}}
\end{align}

The main ingredients for these parts are the following: the converse bound uses the $W_1$-stable padded block-product strong converse; the achievability bound is based on a combination of Lemma~\ref{lem:one-step}, which we call \emph{gap contraction lemma}, with the Wasserstein asymptotic continuity of the relative entropy of resource (Proposition~\ref{prop:3.1}), namely
\bb
  \frac1n\,\rel{D}{\rho_n}{\mathcal S^{(n)}}
  =
  \frac1n\,\rel{D}{\rho^{\otimes n}}{\mathcal S^{(n)}} + o(1)\,.
\ee

\begin{rem}[(Relation with the proof of the GQSL for an i.i.d.\ null hypothesis)]
The proof follows the formal skeleton of the generalised quantum Stein lemma proof of Hayashi--Yamasaki~\cite{Hayashi2025}, but with a different null model. In the i.i.d.\ proof, the null state is exactly $\rho^{\otimes n}$. Here, the null is the non-i.i.d.\ sequence $(\rho_n)_n$, assumed only to be normalised $W_1$-close to $\rho^{\otimes n}$. Therefore the padded block-product strong converse used in the i.i.d.\ proof cannot be imported directly; it is replaced by Proposition~\ref{prop:padded-block-product-strong-converse}. The key steps for the proof of the strong converse and the comparison with the strategy of~\cite{Hayashi2025} are represented in Figure~\ref{fig:converse}.
In the achievability (direct) part, all testing, pinching, and information-spectrum estimates are carried out by $\rho_n$. The replacement of $\rho_n$ by $\rho^{\otimes n}$ is made only after minimisation over $\mathcal S^{(n)}$, through the Wasserstein continuity of the minimised relative entropy. In Figure~\ref{fig:achievability}, we represent the key steps in the i.i.d.\ setting and in the $W_1$ relaxation.
\end{rem}

\subsubsection{Converse}
\label{sub:converse}
\begin{figure}[t]
  \centering
  \def\svgwidth{\linewidth}
  \scriptsize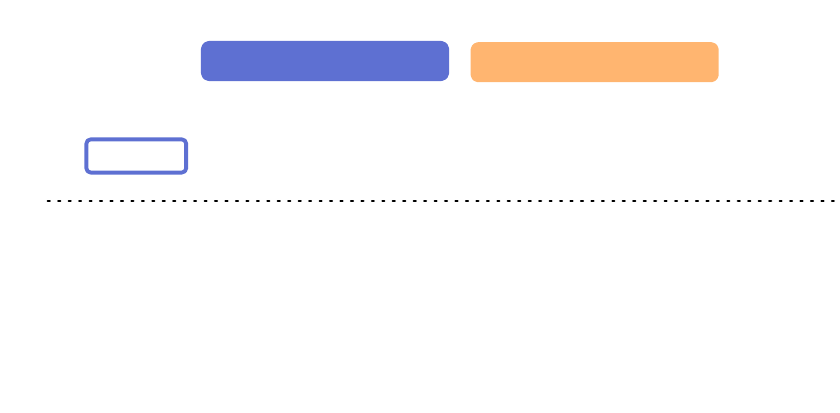
  \mycaption{Comparison of the converse proof for the i.i.d.\ GQSL and $W_1$ GQSL.}{For the i.i.d. GQSL, the key step is the block converse~\cite[Lemma S6]{Hayashi2025}, which directly provides a converse proof by extending the block length $m$ to infinity. Whereas, in the present setting, the non-i.i.d.\ null requires a block product that contains $W_1$ information-spectrum stability, pinching stability, and a block law of large number. These are the main ingredients to derive the $W_1$ block strong converse in Proposition~\ref{prop:padded-block-product-strong-converse}. Together with the full-rank approximation and extending the block length, Proposition~\ref{prop:padded-block-product-strong-converse} yields the desired $W_1$ converse proof.} 
  \label{fig:converse}
\end{figure}
The argument is the same as the strong-converse part of the generalised Stein lemma (cf.~\cite[Proposition S5]{Hayashi2025}), except that the i.i.d.\ block argument is replaced by the $W_1$-stable padded block-product strong converse (Proposition~\ref{prop:padded-block-product-strong-converse}). The idea here is to test only against a sequence of alternatives inside $\mathcal S^{(n)}$: we choose a nearly optimal block state in $\mathcal S^{(m)}$, make it full rank, and form a padded block-product alternative. Since this alternative belongs to $\mathcal S^{(n)}$, the composite alternative cannot have a larger exponent than this particular one. Finally, Proposition~\ref{prop:padded-block-product-strong-converse} then gives the desired upper bound. The crucial steps and ideas of the proof, compared with the i.i.d.\ case, are represented in Figure~\ref{fig:converse}.

Given that the padded block-product strong converse in Proposition~\ref{prop:padded-block-product-strong-converse} is stated for full-rank block alternatives, we now show that it is possible to mix any block alternative with the fixed full-rank free state $\sigma_{\rm full}^{\otimes m}$.

\begin{lemma}[(Full-rank approximation)]
\label{lem:full-rank-approx-S-m}
Let $m\ge 1$ and $\sigma_m\in\mathcal S^{(m)}$. For \(\delta\in(0,1)\), define
\bb
    \sigma_{m,\delta}
    \coloneqq
    (1-\delta)\sigma_m+
    \delta\sigma_{\rm full}^{\otimes m}.
\ee
Then $\sigma_{m,\delta}\in\mathcal S^{(m)}$ and $\sigma_{m,\delta}$ is full rank. Moreover, if $D\left(\rho^{\otimes m}\middle\|\sigma_m\right)<\infty$
then
\bb\label{eq:conv_above}
    \lim_{\delta\to 0^+}
    D\left(\rho^{\otimes m}\middle\|\sigma_{m,\delta}\right)
    =
    D\left(\rho^{\otimes m}\middle\|\sigma_m\right).
\ee

\end{lemma}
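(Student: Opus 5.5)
Membership in $\mathcal S^{(m)}$ and full rank are immediate; the only real work is the continuity statement~\eqref{eq:conv_above}, which we will obtain from lower semicontinuity together with joint convexity.

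First we check membership. By (A2) applied $m-1$ times, starting from $\sigma_{\rm full}\in\mathcal S^{(1)}$, we get $\sigma_{\rm full}^{\otimes m}\in\mathcal S^{(m)}$. Convexity in (A1) then gives $\sigma_{m,\delta}\in\mathcal S^{(m)}$. For full rank, note that $\sigma_{m,\delta}\ge\delta\,\sigma_{\rm full}^{\otimes m}\ge \delta\,\lambda_{\min}(\sigma_{\rm full})^m\,\id>0$, because $\sigma_m\ge0$ and $\delta>0$.

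For the limit we prove two inequalities. The lower bound comes from lower semicontinuity of $D(\rho^{\otimes m}\|\cdot)$. Since $\sigma_{m,\delta}\to\sigma_m$ in trace norm as $\delta\to0^+$, we get
\[
\liminf_{\delta\to0^+}D(\rho^{\otimes m}\|\sigma_{m,\delta})\ge D(\rho^{\otimes m}\|\sigma_m).
\]
The upper bound comes from joint convexity of the relative entropy. Writing $\rho^{\otimes m}=(1-\delta)\rho^{\otimes m}+\delta\rho^{\otimes m}$, we obtain
\[
D(\rho^{\otimes m}\|\sigma_{m,\delta})\le (1-\delta)D(\rho^{\otimes m}\|\sigma_m)+\delta D(\rho^{\otimes m}\|\sigma_{\rm full}^{\otimes m}).
\]
The last term is finite, since $D(\rho^{\otimes m}\|\sigma_{\rm full}^{\otimes m})\le m\log\frac1{\lambda_{\min}(\sigma_{\rm full})}$. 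The first term is finite by the hypothesis $D(\rho^{\otimes m}\|\sigma_m)<\infty$. Hence
\[
\limsup_{\delta\to0^+}D(\rho^{\otimes m}\|\sigma_{m,\delta})\le D(\rho^{\otimes m}\|\sigma_m).
\]
Together with the lower bound, this proves~\eqref{eq:conv_above}.

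A more elementary alternative for the upper bound would use operator monotonicity of the logarithm: $\log\sigma_{m,\delta}\ge\log\big((1-\delta)\sigma_m\big)$ on ${\rm supp}\,\sigma_m$, which contains ${\rm supp}\,\rho^{\otimes m}$ because $D(\rho^{\otimes m}\|\sigma_m)<\infty$. This gives $D(\rho^{\otimes m}\|\sigma_{m,\delta})\le D(\rho^{\otimes m}\|\sigma_m)-\log(1-\delta)$. Restricting to the support needs some care in this route, so the joint-convexity argument above is the one to use. No step is a genuine obstacle.
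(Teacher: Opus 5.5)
Your proof is correct, and for the limit it takes a different route from the paper. Membership and full rank are handled exactly as in the paper. For the limit, however, the paper simply asserts that $\Tr\rho^{\otimes m}\log\sigma_{m,\delta}\to\Tr\rho^{\otimes m}\log\sigma_m$, because $\sigma_{m,\delta}\to\sigma_m$ and $\supp\rho^{\otimes m}\subseteq\supp\sigma_m$. This is true, but it is a continuity statement for the logarithm at a possibly singular operator: the spectrum of $\log\sigma_{m,\delta}$ diverges on $\ker\sigma_m$, and $\sigma_{m,\delta}$ need not commute with $\sigma_m$. The paper leaves that step unjustified.

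Your squeeze between lower semicontinuity and convexity of $D(\rho^{\otimes m}\|\cdot)$ avoids the issue and uses only standard properties. It also gives you two small extras. First, the convexity step yields the explicit bound $D(\rho^{\otimes m}\|\sigma_{m,\delta})\le D(\rho^{\otimes m}\|\sigma_m)+\delta\, m\log\frac{1}{\lambda_{\min}(\sigma_{\rm full})}$. This is precisely the direction used later, in Proposition~\ref{prop:W1-stable-strong-converse} and Lemma~\ref{lem:one-step}, where $\delta$ is chosen so that $D(\rho^{\otimes m}\|\sigma_{m,\delta})\le D(\rho^{\otimes m}\|\sigma_m)+\eta$; so the semicontinuity half is not even needed there. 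Second, lower semicontinuity makes the finiteness hypothesis superfluous, since it forces the limit to be $+\infty$ when $D(\rho^{\otimes m}\|\sigma_m)=+\infty$.

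The paper's one-line argument is shorter but leaves the analytic step implicit. To make it rigorous along its own lines, use the sandwich $(1-\delta)\sigma_m\le\sigma_{m,\delta}\le(1-\delta)\sigma_m+\delta\id$ together with operator monotonicity of the logarithm. Both bounds commute with $\sigma_m$, so their traces against $\rho^{\otimes m}$ converge by the support inclusion. This is close to the alternative you sketched at the end.
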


\begin{proof}
By definition, $\sigma_{m,\delta}\geq\delta\sigma_{\rm full}^{\otimes m}>0$; furthermore, by tensor-product closure and convexity of $\mathcal S^{(m)}$, we have $\sigma_{m,\delta} \in \mathcal S^{(m)}$.
Now, \(\sigma_{m,\delta}\) converges to \(\sigma_m\), and, assuming \(D(\rho^{\otimes m}\|\sigma_m)<\infty\), i.e.\
\(\supp\rho^{\otimes m}\subseteq\supp\sigma_m\), we conclude that
\bb
    \lim_{\delta\to 0^+}\Tr \rho^{\otimes m}\log\sigma_{m,\delta}
    =
    \Tr \rho^{\otimes m}\log\sigma_m.
\ee
The remaining term \(\Tr\rho^{\otimes m}\log\rho^{\otimes m}\) in~\eqref{eq:conv_above} is independent of \(\delta\), and this completes the proof.
\end{proof}

\begin{boxedstep}{}
\begin{prop}[(Strong converse part)]
\label{prop:W1-stable-strong-converse}
Let $(\rho_n)_n$ be a Wasserstein almost i.i.d.\ source along a state $\rho\in\mathcal{D}(\mathcal{H})$. Then, for every \(\epsilon\in(0,1)\), we have
\begin{align}
\limsup_{n\to\infty}
\frac1n\,\rel{D_H^\epsilon}{\rho_n\|\mathcal S^{(n)}}
  \le
  D^\infty(\rho\|\mathcal S).
\label{eq:W1-stable-strong-converse}
\end{align}
\end{prop}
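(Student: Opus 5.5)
The plan is to test only against one padded block-product alternative inside $\mathcal S^{(n)}$ and then invoke Proposition~\ref{prop:padded-block-product-strong-converse}. Fix $\epsilon\in(0,1)$, a block length $m\ge1$ and $\gamma>0$. By compactness of $\mathcal S^{(m)}$ and lower semicontinuity of the relative entropy, pick $\sigma_m\in\mathcal S^{(m)}$ attaining $D(\rho^{\otimes m}\|\mathcal S^{(m)})$. This quantity is finite because $\sigma_{\rm full}^{\otimes m}\in\mathcal S^{(m)}$ is full rank, by (A2). Lemma~\ref{lem:full-rank-approx-S-m} then gives $\delta>0$ such that $\sigma_{m,\delta}\in\mathcal S^{(m)}$ is full rank and
\[
\tfrac1m D(\rho^{\otimes m}\|\sigma_{m,\delta})\le \tfrac1m D(\rho^{\otimes m}\|\mathcal S^{(m)})+\gamma .
\]

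Next, for $n=m\ell+r$ with $0\le r<m$, define $\sigma_n^{(m)}\coloneqq\sigma_{m,\delta}^{\otimes\ell}\otimes\sigma_{\rm full}^{\otimes r}$. Repeated use of (A2) shows $\sigma_n^{(m)}\in\mathcal S^{(n)}$. Since $D_H^\epsilon(\rho_n\|\mathcal S^{(n)})$ is an infimum over $\mathcal S^{(n)}$, we have $D_H^\epsilon(\rho_n\|\mathcal S^{(n)})\le D_H^\epsilon(\rho_n\|\sigma_n^{(m)})$. Proposition~\ref{prop:padded-block-product-strong-converse}, applied with $\eta=\epsilon$ to the full-rank block state $\sigma_{m,\delta}$, then yields
\[
\limsup_{n\to\infty}\tfrac1n D_H^\epsilon(\rho_n\|\mathcal S^{(n)})\le \tfrac1m D(\rho^{\otimes m}\|\mathcal S^{(m)})+\gamma .
\]

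Finally, let $\gamma\to0$ and then $m\to\infty$. The left-hand side does not depend on $m$ or $\gamma$. By the Fekete argument following the definition of $D^\infty$, which uses subadditivity from (A2), $\tfrac1m D(\rho^{\otimes m}\|\mathcal S^{(m)})\to D^\infty(\rho\|\mathcal S)$. This gives \eqref{eq:W1-stable-strong-converse}.

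There is no real obstacle, since all the analytic difficulty sits in Proposition~\ref{prop:padded-block-product-strong-converse}. The only points to check are bookkeeping ones. First, the padding must use exactly the $\sigma_{\rm full}$ fixed in Assumption~\ref{ass:standing-Sn}, so that membership in $\mathcal S^{(n)}$ follows from tensor closure alone. Second, full rank must be secured before the proposition is invoked, since the proposition requires a full-rank block state.
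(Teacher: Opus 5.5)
Your proposal is correct and follows the paper's proof essentially step by step. Both pick a minimiser in $\mathcal S^{(m)}$, make it full rank via Lemma~\ref{lem:full-rank-approx-S-m}, pad to the alternative $\sigma_{m,\delta}^{\otimes\ell}\otimes\sigma_{\rm full}^{\otimes r}\in\mathcal S^{(n)}$, invoke Proposition~\ref{prop:padded-block-product-strong-converse}, and then let the slack vanish and $m\to\infty$. Your explicit check that $D(\rho^{\otimes m}\|\mathcal S^{(m)})<\infty$, which is needed to apply Lemma~\ref{lem:full-rank-approx-S-m}, is a small detail the paper leaves implicit.
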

\end{boxedstep}

\begin{rem}
    This result is new also for the case of a full-rank i.i.d.\ alternative hypothesis $\mathcal{S}^{(n)}=\{\sigma^{\otimes n}\}$: indeed, for individual $W_1$ sources $(\rho_n)_n$ along $\rho$, all the previous works~\cite{a_tale, datta2026entropyconcentrationuniversaltypicality} proved the inequality
\bb
    \liminf_{n\to\infty}
\frac1n\,\rel{D_H^\epsilon}{\rho_n}{\sigma^{\otimes n}}
  \geq
  D(\rho\|\sigma)\, .
\ee
Now, Proposition~\ref{prop:W1-stable-strong-converse} ensures that, if $\sigma\in\mathcal{D}(\mathcal{H})$ has full rank, then 
\bb
    \limsup_{n\to\infty}
\frac1n\,\rel{D_H^\epsilon}{\rho_n}{\sigma^{\otimes n}}
  =
  D(\rho\|\sigma).
\ee
Note that the assumption of full rank is essential: in~\cite[Remark~15]{a_tale} a counterexample is given when $\supp (\rho_n)\not\subseteq \supp (\sigma^{\otimes n})$. Another counterexample in~\cite[Remark~15]{a_tale} shows that Proposition~\ref{prop:W1-stable-strong-converse} cannot be extended to weak sources $(\rho_n)_n$, even assuming that $\sigma$ has full support.
\end{rem}

\begin{proof}[Proof of Proposition~\ref{prop:W1-stable-strong-converse}.]
Fix a block length $m\ge 1$ and let $\eta >0$. Choose, by compactness,
$\sigma_m \in \mathcal S^{(m)}$ such that
\bb
  \frac 1m D(\rho^{\otimes m}\|\sigma_m)
  =
  \frac 1m D(\rho^{\otimes m}\|\mathcal S^{(m)}).
\ee
For $\delta\in(0,1)$, define
\(
  \sigma_{m,\delta} \coloneqq(1-\delta)\sigma_m +\delta\,\sigma_{\rm full}^{\otimes m}
\), which has full rank.
Moreover, by Lemma~\ref{lem:full-rank-approx-S-m}, choosing $\delta>0$ sufficiently small such that $D(\rho^{\otimes m}\|\sigma_{m,\delta})\le D(\rho^{\otimes m}\|\sigma_{m})+\eta$, then
\bb
  \frac 1m D(\rho^{\otimes m}\|\sigma_{m,\delta})
  \le
  \frac 1m D(\rho^{\otimes m}\|\mathcal S^{(m)}) +\frac{\eta}{m} .
  \label{eq:non-full-rank-approx}
\ee
Now, let us take the padded block-product alternative
\(
  \sigma_n^{(m)}
  \coloneqq
  \sigma_{m,\delta}^{\otimes \ell}\otimes\sigma_{\rm full}^{\otimes r},
\)
which is full-rank and belongs to $\mathcal S^{(n)}$. Then,
\bb
  \rel{D_H^\epsilon}{\rho_n}{\mathcal S^{(n)}}
  \le
  \rel{D_H^\epsilon}{\rho_n}{\sigma_n^{(m)}}\, .
\ee
Applying Proposition~\ref{prop:padded-block-product-strong-converse}, followed by Eq.~\eqref{eq:non-full-rank-approx} we obtain
\bb
  \limsup_{n\to\infty}
  \frac 1n\, \rel{D_H^\epsilon}{\rho_n}{\mathcal S^{(n)}}
  \le
  \frac 1m\, D(\rho^{\otimes m} \|\sigma_{m,\delta})
  \le
  \frac 1m\,\rel{D}{\rho^{\otimes m}}{\mathcal S^{(m)}} +\frac \eta m\, .
\ee
Letting first $\eta\to0^+$ and then $m\to\infty$, and using the definition
of $D^\infty(\rho\|\mathcal S)$, we have the converse bound~\eqref{eq:W1-stable-strong-converse}.
\end{proof}

\subsubsection{Achievability}
\label{sub:achievabiltiy}
We now turn to the achievability part. The central ingredient -- which we call \emph{gap contraction lemma} (Lemma~\ref{lem:one-step}) -- is a one-step version of the key 'update' lemma~\cite[Lemma S8]{Hayashi2025} of the proof in the i.i.d.\ case. In contrast to the original iterative construction~\cite[Proposition S7]{Hayashi2025}, we use Lemma~\ref{lem:one-step} only once, in a proof by contradiction.

The skeleton of gap contraction lemma is similar to~\cite[Lemma~S8]{Hayashi2025}. The construction of the three-state mixture, logarithmic rounding, pinching, threshold projectors, and the three-region decomposition is independent of the type of null.
In the i.i.d.\ setting, the testing bound for the padded block-product anchor, cf.~\cite[Eq.\ S81]{Hayashi2025}, is bounded by~\cite[Lemma S6]{Hayashi2025}; here, it follows from Proposition~\ref{prop:padded-block-product-strong-converse}.
The main steps and ideas of the proof are represented in Figure~\ref{fig:achievability}, together with a comparison with the i.i.d.\ case.

\begin{figure}[t]
  \centering
  \def\svgwidth{\linewidth}
  \scriptsize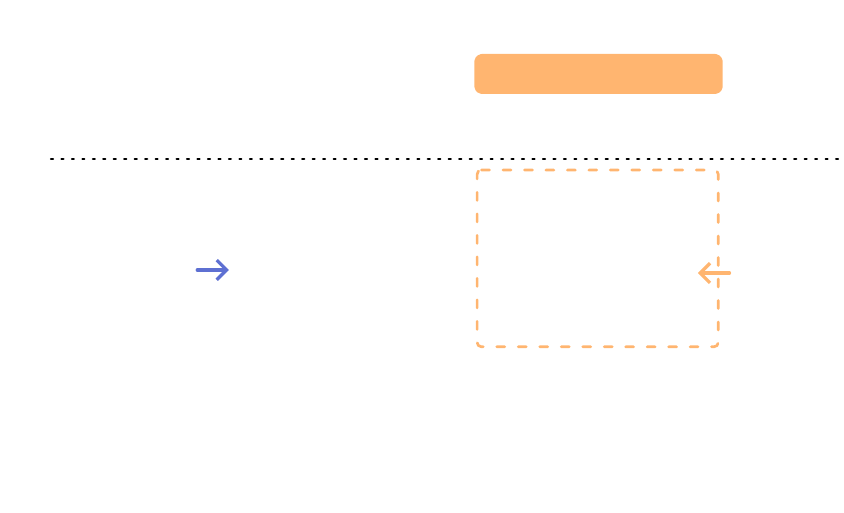
  \mycaption{Comparison of the achievability proof for the i.i.d.\ GQSL and $W_1$ GQSL.}{The proof of $W_1$ GQSL follows the central ideas of the i.i.d.\ one in~\cite{Hayashi2025}: the anchor state is formed by three distinct components, followed by logarithmic rounding, pinching, two testing bounds, and a three-region decomposition. The main new statement is the $W_1$ block strong converse in Proposition~\ref{prop:padded-block-product-strong-converse} which replaces the original block strong converse in~\cite[Lemma S6]{Hayashi2025}. The iterative gap-closing argument in the i.i.d.\ proof is replaced by a single gap-contraction lemma together with $W_1$ continuity, yielding a contradiction.
  } 
  \label{fig:achievability}
\end{figure}

\begin{lemma}[(Log-likelihood projector)] 
\label{lem:threshold-commuting}
Let $p_n$ and $q_n$ be two classical probability distributions on $\mathcal{X}^n$, where $\mathcal{X}$ is a finite set, and let $\epsilon\in(0,1)$. The following implications hold.
\begin{itemize}
\item Suppose that
\bb\label{eq:a}
\liminf_{n\to\infty}\frac1nD_H^\epsilon(p_n\|q_n)\le a.
\ee
Then, for every $\delta>0$,
  \bb\label{eq:typeI}
    \liminf_{n\to\infty}\Tr [p_n \{p_n\ge e^{n(a+\delta)}q_n\}]
    \le 1-\epsilon.
    \ee
    \item \modifica{Suppose that
    \bb\label{eq:a1}
\limsup_{n\to\infty}\frac1nD_H^{1-\epsilon}(p_n\|q_n)\le b.
\ee
Then, for every $\delta>0$, 
\bb\label{eq:typeI1}
    \limsup_{n\to\infty}\Tr [p_n \{p_n\ge e^{n(b+\delta)}q_n\}]
    \le \epsilon.
    \ee}
    \end{itemize}
\end{lemma}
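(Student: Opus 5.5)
Both implications follow from the classical Neyman--Pearson lemma: a likelihood-ratio threshold test gives a lower bound on $D_H$. We argue by contraposition, producing such a test from the threshold projector and checking its type I and type II errors. Since $p_n,q_n$ commute, the projector $\{p_n\ge e^{nt}q_n\}$ is the indicator of the classical set $S_n(t)\coloneqq\{x^n: p_n(x^n)\ge e^{nt}q_n(x^n)\}$. The only estimate needed is the elementary bound
\bb
q_n(S_n(t))=\sum_{x^n\in S_n(t)}q_n(x^n)\le e^{-nt}\sum_{x^n\in S_n(t)}p_n(x^n)\le e^{-nt}.
\ee
So whenever $p_n(S_n(t))\ge 1-\epsilon'$, the indicator of $S_n(t)$ is a feasible test for $\beta_{\epsilon'}$, and we get $D_H^{\epsilon'}(p_n\|q_n)\ge nt$.

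\textbf{First implication.} Suppose, for contradiction, that $\liminf_n p_n(S_n(a+\delta))>1-\epsilon$ for some $\delta>0$. Then $p_n(S_n(a+\delta))\ge 1-\epsilon$ for all large $n$. The indicator of $S_n(a+\delta)$ is then feasible for $\beta_\epsilon$, so $\frac1n D_H^\epsilon(p_n\|q_n)\ge a+\delta$ eventually. This gives $\liminf_n\frac1n D_H^\epsilon\ge a+\delta>a$, contradicting~\eqref{eq:a}. To avoid a boundary issue with the exact value $1-\epsilon$, we could instead run the argument with $\delta/2$, using monotonicity of $S_n(t)$ in $t$, so that the contradiction is strict.

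\textbf{Second implication.} Suppose $\limsup_n p_n(S_n(b+\delta))>\epsilon$. Then there is $\epsilon''>\epsilon$ and a subsequence along which $p_n(S_n(b+\delta))\ge\epsilon''$. Along it, the indicator test has type I error at most $1-\epsilon''<1-\epsilon$, so it is feasible for $\beta_{1-\epsilon}$ and $\frac1n D_H^{1-\epsilon}(p_n\|q_n)\ge b+\delta$. Hence $\limsup_n\frac1n D_H^{1-\epsilon}\ge b+\delta>b$, contradicting~\eqref{eq:a1}.

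\textbf{Main obstacle.} Little here is difficult. The step that needs care is the first implication, where the hypothesis is a $\liminf$ and the conclusion concerns a $\liminf$. Strict inequality $\liminf>1-\epsilon$ gives feasibility for all sufficiently large $n$, which is stronger than needed: it controls every term, so it certainly controls the $\liminf$ of $D_H$. In the second implication a subsequence suffices because the hypothesis is on a $\limsup$. I will check both quantifier directions carefully. Apart from that, the only ingredient is the Markov-type bound $q_n(S_n(t))\le e^{-nt}$ above.
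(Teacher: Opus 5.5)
Your proof is correct. The threshold set $S_n(t)$ is a feasible test with type II error at most $e^{-nt}$, and your contrapositive handles both quantifier patterns properly; the $\delta/2$ adjustment and the auxiliary $\epsilon''$ are not needed, since feasibility only requires type I error at most $\epsilon$ (resp.\ $1-\epsilon$). The paper gives no separate proof and just invokes the quantum version \cite[Lemma~S10]{Hayashi2025}, which rests on the same bound $\Tr\sigma_n\{\rho_n\ge e^{nt}\sigma_n\}\le e^{-nt}$, so yours is essentially the same argument, written out self-containedly for the commuting case.
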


\begin{proof}
This is the classical formulation of~\cite[Lemma S10]{Hayashi2025}, which was there proved in the quantum setting.
\end{proof}

\begin{figure}[t]
  \centering
  \def\svgwidth{0.55\linewidth}
  \small
\begingroup%
  \makeatletter%
  \providecommand\color[2][]{%
    \errmessage{(Inkscape) Color is used for the text in Inkscape, but the package 'color.sty' is not loaded}%
    \renewcommand\color[2][]{}%
  }%
  \providecommand\transparent[1]{%
    \errmessage{(Inkscape) Transparency is used (non-zero) for the text in Inkscape, but the package 'transparent.sty' is not loaded}%
    \renewcommand\transparent[1]{}%
  }%
  \providecommand\rotatebox[2]{#2}%
  \newcommand*\fsize{\dimexpr\f@size pt\relax}%
  \newcommand*\lineheight[1]{\fontsize{\fsize}{#1\fsize}\selectfont}%
  \ifx\svgwidth\undefined%
    \setlength{\unitlength}{183.82800005bp}%
    \ifx\svgscale\undefined%
      \relax%
    \else%
      \setlength{\unitlength}{\unitlength * \real{\svgscale}}%
    \fi%
  \else%
    \setlength{\unitlength}{\svgwidth}%
  \fi%
  \global\let\svgwidth\undefined%
  \global\let\svgscale\undefined%
  \makeatother%
  \begin{picture}(1,0.35161603)%
    \lineheight{1}%
    \setlength\tabcolsep{0pt}%
    \put(0,0){\includegraphics[width=\unitlength,page=1]{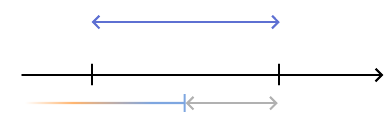}}%
    \put(0.55997563,0.32610062){\color[rgb]{0,0,0}\makebox(0,0)[rt]{\lineheight{2.81751251}\smash{\begin{tabular}[t]{r}$R'-R_\epsilon$\end{tabular}}}}%
    \put(0.25805214,0.21624213){\color[rgb]{0,0,0}\makebox(0,0)[rt]{\lineheight{2.81751251}\smash{\begin{tabular}[t]{r}$R_\epsilon$\end{tabular}}}}%
    \put(0.74989272,0.21688458){\color[rgb]{0,0,0}\makebox(0,0)[rt]{\lineheight{2.81751251}\smash{\begin{tabular}[t]{r}$R'$\end{tabular}}}}%
    \put(0.71239593,0.00686879){\color[rgb]{0,0,0}\makebox(0,0)[rt]{\lineheight{2.81751251}\smash{\begin{tabular}[t]{r}$\eta(R'-R_\epsilon)$\end{tabular}}}}%
    \put(0.33374783,0.00766166){\color[rgb]{0,0,0}\makebox(0,0)[rt]{\lineheight{2.81751251}\smash{\begin{tabular}[t]{r}$D^\infty(\rho\|\mathcal S)$\end{tabular}}}}%
  \end{picture}%
\endgroup%

  \mycaption{The gap contraction lemma.}{} 
  \label{fig:one_step}
\end{figure}

\begin{lemma}[(Gap contraction lemma)]
\label{lem:one-step}
Let $(\sigma_n)_n$ be any sequence of states with $\sigma_n\in\mathcal S^{(n)}$. Let us define
\begin{align}
  & R^{\epsilon} \coloneqq \liminf_{n\to\infty}\frac 1n\,\rel{D_H^\epsilon}{\rho_n}{\mathcal S^{(n)}}\, , \tag{Stein's exponent}
  \\
  &R'\coloneqq
  \liminf_{n\to\infty}
  \frac1n D(\rho^{\otimes n}\|\sigma_n), \tag{Auxiliary exponent}
\end{align}
and assume that
  $R^{\epsilon} < R' <\infty$.
Then, for every $\eta\in(0,\epsilon)$, there exists a sequence
$\sigma_n'\in\mathcal S^{(n)}$ such that
\bb
  \liminf_{n\to\infty}
  \frac1n D(\rho_n\|\sigma_n') - R^{\epsilon}
  \le
  (1-\eta)\left(R'-R^{\epsilon}\right).
\ee
\end{lemma}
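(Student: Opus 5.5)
The plan is to run a single step of the Hayashi--Yamasaki update~\cite[Lemma S8]{Hayashi2025}. The null-dependent estimates are carried out directly on $\rho_n$, and the i.i.d.\ block strong converse is replaced by Proposition~\ref{prop:padded-block-product-strong-converse}.

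\textbf{Step 1: a three-state mixture.} Fix $\eta\in(0,\epsilon)$ and a small $\gamma>0$. Since $\mathcal S^{(n)}$ is convex and compact, Lemma~\ref{lem:minmax_fang} gives a $\hat\sigma_n\in\mathcal S^{(n)}$ with $D_H^\epsilon(\rho_n\|\hat\sigma_n)=D_H^\epsilon(\rho_n\|\mathcal S^{(n)})$. Pass to a subsequence along which $\tfrac1n D_H^\epsilon(\rho_n\|\hat\sigma_n)\to R^\epsilon$; the $\liminf$ in the claim only needs to be bounded along it. Next choose $m$ and a full-rank $\sigma_{m,\delta}\in\mathcal S^{(m)}$, via Lemma~\ref{lem:full-rank-approx-S-m}, such that $\tfrac1m D(\rho^{\otimes m}\|\sigma_{m,\delta})\le D^\infty(\rho\|\mathcal S)+\gamma$. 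Note that $D^\infty(\rho\|\mathcal S)\le R'$, because $\tfrac1n D(\rho^{\otimes n}\|\sigma_n)\ge \tfrac1n D(\rho^{\otimes n}\|\mathcal S^{(n)})$. Let $\sigma_n^{(m)}$ be the padded block product as in~\eqref{eq:block_pad}, and set
\[
\sigma'_n\coloneqq \tfrac13\big(\sigma_n+\hat\sigma_n+\sigma_n^{(m)}\big)\in\mathcal S^{(n)} .
\]
Since $\sigma'_n\ge \tfrac13\sigma_n^{(m)}\ge e^{-\kappa n}\id$, Lemma~\ref{lem:binary-rounding} (with trivial $A_n$) gives a rounded state $\omega_n$ with at most $n+1$ eigenvalues and $e^{-\kappa}\sigma'_n\le\omega_n\le e^{\kappa}\sigma'_n$. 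Let $\widehat\rho_n$ be the pinching of $\rho_n$ in the eigenbasis of $\omega_n$; it has $m_n\le n+1$ outcomes.

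\textbf{Step 2: reduction to a classical expectation.} Write $D(\rho_n\|\sigma'_n)\le -S(\rho_n)-\Tr\rho_n\log\omega_n+\kappa$. Since $\omega_n$ commutes with the pinching, $\Tr\rho_n\log\omega_n=\Tr\widehat\rho_n\log\omega_n$. Pinching raises entropy by at most $\log m_n$, so
\[
D(\rho_n\|\sigma'_n)\le D(\widehat\rho_n\|\omega_n)+\log(n+1)+\kappa ,
\]
and $D(\widehat\rho_n\|\omega_n)$ is the expectation of the classical log-likelihood $L_n\coloneqq\tfrac1n\log(\widehat\rho_n/\omega_n)$ under $\widehat\rho_n$. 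Split this expectation into three regions.
\begin{itemize}
\item \emph{Lower region} $L_n\le R^\epsilon+\gamma$. Its contribution is at most $R^\epsilon+\gamma$ times its mass. Because $\omega_n\ge \tfrac{e^{-\kappa}}{3}\hat\sigma_n$, data processing under the pinching together with the first part of Lemma~\ref{lem:threshold-commuting} applied to $\hat\sigma_n$ should show that this region carries mass at least $\epsilon-o(1)$.
\item \emph{Upper region} $L_n>D^\infty(\rho\|\mathcal S)+2\gamma$. Using $\omega_n\ge\tfrac{e^{-\kappa}}{3}\sigma_n^{(m)}$, Proposition~\ref{prop:padded-block-product-strong-converse} in the form of its proof (Eq.~\eqref{eq:log_likelihood}, with Proposition~\ref{prop:pinching-extension} for the pinched entropy term) shows that this region has vanishing $\widehat\rho_n$-mass.
\item \emph{Middle region.} The log-likelihood there is at most $D^\infty(\rho\|\mathcal S)+2\gamma\le R'+2\gamma$.
\end{itemize}
The upper region is bounded pointwise, since $-\tfrac1n\log\omega_n\le\kappa$ and $\tfrac1n\log\widehat\rho_n\le0$, so vanishing mass means vanishing contribution. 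Combining the three regions gives
\[
\liminf_n\tfrac1n D(\rho_n\|\sigma'_n)\le \epsilon R^\epsilon+(1-\epsilon)R'+O(\gamma)=R^\epsilon+(1-\epsilon)(R'-R^\epsilon)+O(\gamma).
\]
Since $\eta<\epsilon$ and $R'>R^\epsilon$, choosing $\gamma$ small yields the claimed bound with factor $(1-\eta)$.

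\textbf{Main obstacle.} The step I expect to be delicate is the lower-region mass bound. Lemma~\ref{lem:threshold-commuting} is classical, so it must be applied to the commuting pair $(\widehat\rho_n,\omega_n)$. This requires transferring $D_H^\epsilon(\rho_n\|\hat\sigma_n)\approx nR^\epsilon$ to $D_H^\epsilon(\widehat\rho_n\|\omega_n)\le nR^\epsilon+O(\log n)+O(n\kappa/n)$, which needs data processing and the factor-$3e^\kappa$ comparison, and it forces one to keep the $\liminf$ and subsequence bookkeeping consistent. I would first try to get the mass-$\ge\epsilon$ statement directly from the Neyman--Pearson form~\eqref{eq:Han}. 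That route requires showing that the set where $\widehat\rho_n\ge e^{n(R^\epsilon+\gamma)}\omega_n$ has $\widehat\rho_n$-probability at most $1-\epsilon+o(1)$ along the subsequence.
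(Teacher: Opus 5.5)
Your proposal is correct and follows essentially the paper's proof. It uses the same three-state mixture (worst-case alternative plus padded block product), logarithmic rounding, and pinching with at most $n+1$ outcomes. The lower region has mass at least $\epsilon$ by data processing and the comparison $\omega_n\ge \tfrac{e^{-\kappa}}{3}\hat\sigma_n$, so the transfer you flag as delicate costs only the constant $\kappa+\log 3$. The upper region has vanishing mass by Proposition~\ref{prop:padded-block-product-strong-converse}, and the three-region decomposition closes the argument. The differences are cosmetic: the paper uses $\sigma_{\rm full}^{\otimes n}$ rather than $\sigma_n$ as the third component. For the upper region it applies Proposition~\ref{prop:padded-block-product-strong-converse} as stated to $D_H^{1-\epsilon_1}$ for every $\epsilon_1\in(0,1)$, then the second part of Lemma~\ref{lem:threshold-commuting}, instead of re-running the moment argument from the proposition's proof.
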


\begin{rem}
The quantity $R'$ is an auxiliary rate associated with the sequence $(\sigma_n)_n$ and should not be confused with the target rate $D^\infty(\rho\|\mathcal S)$. In general, $R'\ge D^\infty(\rho\|\mathcal S)$, and the achievability statement aims to prove only $R^\varepsilon\ge D^\infty(\rho\|\mathcal S)$ rather than $R^\varepsilon\ge R'$.

\end{rem}

\begin{proof}[Proof of Lemma~\ref{lem:one-step}]
By definition, $D^\infty(\rho\|\mathcal S) \le R'$. For any fixed $\eta\in(0,\epsilon)$, define
\bb\label{eq:def_epsilon0}
  \epsilon_0
  \coloneqq
  \frac{\epsilon-\eta}{1-\epsilon}\left( R' - R^{\epsilon} \right) >0.
\ee
For all sufficiently large and fixed $m$, there is a state $\bar \sigma_m\in\mathcal S^{(m)}$ such that
\bb
  \frac1mD(\rho^{\otimes m}\|\bar \sigma_m)
  \le
  D^\infty(\rho\|\mathcal S)+\frac{\epsilon_0}{2}
  \le
  R'+\frac{\epsilon_0}{2}.
\ee
We can use Lemma~\ref{lem:full-rank-approx-S-m} to replace
$\bar \sigma_m$ by a full-rank state $\bar\sigma_{m,\delta}\in\mathcal S^{(m)}$, where $\delta$ is sufficiently small so that
\bb\label{eq:R'}
  \frac 1m D(\rho^{\otimes m}\|\bar \sigma_{m,\delta})
  \le
  R'+\epsilon_0.
\ee
Let us consider the padded block-product state $\bar\sigma_{n,\delta}^{(m)}\coloneqq\bar \sigma_{m,\delta}^{\otimes \ell}\otimes \sigma_{\rm full}^{\otimes r}$ and a worst alternative state
\(
  \displaystyle{\sigma_n^*
  \in
  \underset{\omega_n\in\mathcal S^{(n)}}{\operatorname{arg\,max}}\; 
  \beta_\epsilon(\rho_n\|\omega_n)}
\) -- which exists due to compactness of $\mathcal{S}^{(n)}$ --
in order to define
\bb
  \sigma_n'
  \coloneqq
  \frac 13
  \left(
    \sigma_n^* + \bar\sigma_{n,\delta}^{(m)} + \sigma_{\rm full}^{\otimes n}
  \right)\in\mathcal S^{(n)}.
\ee
 Let $\kappa\coloneqq-\log \big(\frac 13\lambda_{\min}(\sigma_{\rm full})\big)>0$, so that we can lower bound
\bb\label{eq:lb1}
  \sigma_n'
  \ge
  \frac 13 \sigma_{\rm full}^{\otimes n}
  \ge
  \frac 13 \lambda_{\min}(\sigma_{\rm full})^n \id\geq e^{-n\kappa}\id.
\ee

\paragraph{Step 1: Logarithmic rounding.} We now logarithmically round $\sigma_n'$ as in the proof of Lemma~\ref{lem:binary-rounding}, Eqs.~\eqref{eq:r1} and~\eqref{eq:log-runding} (see also~\cite[Lemma S8]{Hayashi2025}).
 Consider the spectral decomposition of $\sigma'_n$
 \bb
   \sigma_n'=\sum_i\lambda_{n,i}'P_{n,i}'
 \ee
 and, for $\lambda\in[e^{-\kappa n},1]$, define
\bb
  f_n(\lambda)
  \coloneqq
  \kappa\left\lceil
    \frac{\log\lambda +\kappa n}{\kappa}
  \right\rceil -\kappa n.
\ee
Recall that $f_n$ satisfies $\log\lambda\le f_n(\lambda)\le\log\lambda +\kappa$
and takes at most $n+1$ distinct values (see Figure~\ref{fig:f_n}). Setting
 \bb
   \widetilde \Sigma_n'
   \coloneqq
   \sum_i e^{f_n(\lambda_{n,i}')}P_{n,i}',
   \qquad \text{and}\qquad 
   \widetilde \sigma_n'
   \coloneqq
   \frac{\widetilde \Sigma_n'}{\Tr\widetilde \Sigma_n'},
 \ee
we have
 \bb\label{eq:ineq_kappa}
   e^{-\kappa} \sigma_n'
   \le
   \widetilde \sigma_n'
   \le
   e^{\kappa} \sigma_n'.
 \ee

\paragraph{Step 2: Reduction to a commuting pair via pinching.}
Eq.~\eqref{eq:ineq_kappa}, together with the operator monotonicity of the logarithm, gives
\bb
  \left|
  D(\rho_n\|\sigma_n') - D(\rho_n\|\widetilde\sigma_n')
  \right|
  \le \kappa.
\ee
Therefore,
\bb
  \liminf_{n\to\infty} \frac 1n D(\rho_n\|\sigma_n')
  =
  \liminf_{n\to\infty} \frac 1n D(\rho_n\|\widetilde\sigma_n').
  \label{eq:gap1}
\ee
Let $\pazocal P_n$ be the pinching map associated with the spectral projectors of $\widetilde\sigma_n'$. We have
\bb
  &\rel{D}{\rho_n}{\widetilde\sigma_n'} - \rel{D}{\pazocal P_n(\rho_n)}{\widetilde\sigma_n'}\\
  &\qquad=\Tr[\rho_n(\log\rho_n-\log\widetilde\sigma_n')]-\Tr\big[\pazocal P_n(\rho_n)\big(\log\pazocal P_n(\rho_n)-\log\widetilde\sigma_n'\big)\big]\\
  &\qquad=\Tr[\rho_n(\log\rho_n-\log\widetilde\sigma_n')]-\Tr\big[\rho_n\big(\pazocal P_n(\log\pazocal P_n(\rho_n))-\pazocal P_n(\log\widetilde\sigma_n')\big)\big]\\
  &\qquad=\Tr[\rho_n(\log\rho_n-\log\widetilde\sigma_n')]-\Tr\big[\rho_n\big(\log\pazocal P_n(\rho_n)- \log\widetilde\sigma_n'\big)\big]\\
  &\qquad=
  D(\rho_n\|\pazocal P_n(\rho_n))
\ee
where $0\leq D(\rho_n\|\pazocal P_n(\rho_n))\leq \log (n+1)$ due to the pinching inequality $\rho_n\le (n+1)\pazocal P_n(\rho_n)$~\cite{Hayashi2002}.
Together with~\eqref{eq:gap1}, we obtain
\bb\label{eq:step2}
  \liminf_{n\to\infty}
  \frac1n D(\rho_n\|\sigma_n')
  =
  \liminf_{n\to\infty}
  \frac1n\, \rel{D}{\pazocal P_n(\rho_n)}{\widetilde\sigma_n'}\, .
\ee

\paragraph{Step 3: Two testing exponent bounds and log-likelihood projectors.}
We have
\bb\label{eq:hp_bound1}
  \liminf_{n\to\infty}
  \frac 1n\,\rel{D_H^\epsilon}{\pazocal P_n(\rho_n)}{\widetilde\sigma_n'} &= \liminf_{n\to\infty}
  \frac 1n\,\rel{D_H^\epsilon}{\pazocal P_n(\rho_n)}{\pazocal P_n(\widetilde\sigma_n')} \\
  &\leqt{(i)}\liminf_{n\to\infty} \frac 1n D_H^\epsilon \left(\rho_n\|\widetilde\sigma_n'\right)\\
  &\leqt{(ii)}
  \liminf_{n\to\infty}\frac 1n D_H^\epsilon(\rho_n\|\sigma_n^*)\\
  &=\liminf_{n\to\infty}\frac 1n D_H^\epsilon(\rho_n\|\mathcal S^{(n)})
  = R^{\epsilon},
\ee
where the identities follow from the definitions of $\pazocal P_n$ and $\sigma_n^*$, while (i) is data-processing, and (ii) is a consequence of the first inequality in~\eqref{eq:ineq_kappa} combined with the definition of $\sigma_n'$:
\bb
  \widetilde\sigma_n'
  \geq e^{-\kappa}\sigma_n'
  \geq \frac{e^{-\kappa}}{3}\sigma_n^*.
\ee
Now, fix \modifica{$0<\epsilon_1<1$}. We have
\bb\label{eq:hp_bound2}
  \limsup_{n\to\infty}
  \frac 1n\,\rel{D_H^{1-\epsilon_1}}{\pazocal P_n(\rho_n)}{\widetilde\sigma_n'} 
  &\leqt{(iii)} \limsup_{n\to\infty}
  \frac 1n\,D_H^{1-\epsilon_1}
  \left(\rho_n\|\widetilde\sigma_n'\right)\\
  &\leqt{(iv)} \limsup_{n\to\infty}
  \frac 1n\,\rel{D_H^{1-\epsilon_1}}{\rho_n}{\bar\sigma_{n,\delta}^{(m)}} \\
  &\leqt{(v)}
  \frac 1m\, \rel{D}{\rho^{\otimes m}}{\modifica{\bar\sigma_{m,\delta}}}\\
  &\leqt{(vi)}
  R'+\epsilon_0\, ,
\ee
where (iii) is data-processing, (iv) follows from the inequality $\widetilde\sigma_n' \geq \frac 13 e^{-\kappa}\bar\sigma_{n,\delta}^{(m)}$, (v) is given by Proposition~\ref{prop:padded-block-product-strong-converse}, and \modifica{(vi)} is~\eqref{eq:R'}.
Let $\delta'>0$, and define
\bb
  \Pi_{n,1}
  &\coloneqq
  \left\{\pazocal P_n(\rho_n)\ge e^{n(R^{\epsilon}+\delta')}\widetilde\sigma_n'\right\},\\
  \Pi_{n,2}
  &\coloneqq
  \left\{\pazocal P_n(\rho_n)\ge e^{n(R'+\epsilon_0+\delta')}\widetilde\sigma_n'\right\}.
\ee
Using Lemma~\ref{lem:threshold-commuting}, we get the implications
\bb
 ~\eqref{eq:hp_bound1}\qquad &\implies \qquad\liminf_{n\to\infty}\Tr[ \pazocal P_n(\rho_n)\Pi_{n,1}]
  \le
  1-\epsilon\\
  ~\eqref{eq:hp_bound2}\qquad &\implies\qquad\limsup_{n\to\infty}\Tr[ \pazocal P_n(\rho_n)\Pi_{n,2}]
  \le
  \epsilon_1,
  \label{eq:step3-two-projs}
\ee
whence, by arbitrariness of $\epsilon_1>0$ \modifica{and by positivity of $\Tr[ \pazocal P_n(\rho_n)\Pi_{n,2}]$},
\bb\label{eq:epsilon1}
    \modifica{\lim_{n\to\infty}}\Tr[ \pazocal P_n(\rho_n)\Pi_{n,2}]=0
\ee
\begin{figure}[t]
  \centering
  \def\svgwidth{0.95\linewidth}
  \small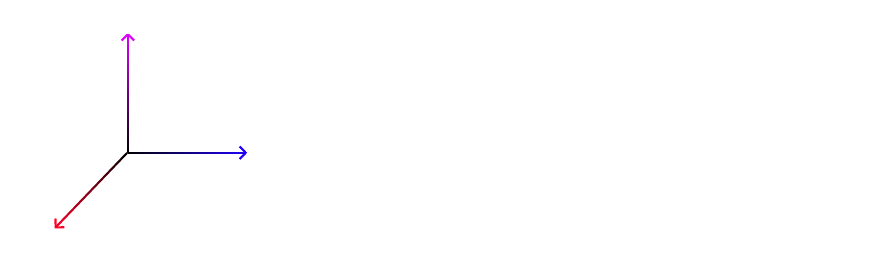
  \mycaption{The three-region decomposition of the Hilbert space $\mathcal{H}^{\otimes n}$.}{} 
  \label{fig:three}
\end{figure}
\paragraph{Step 4: Three-region decomposition.} By hypothesis, we have $R'> R^\epsilon$, whence $\Pi_{n,2}\le \Pi_{n,1}$. Define the three commuting projectors (see Figure~\ref{fig:three})
\bb
  E_{n,1}\coloneqq\id-\Pi_{n,1},
  \qquad
  E_{n,2}\coloneqq\Pi_{n,1}-\Pi_{n,2},
  \qquad
  E_{n,3}\coloneqq\Pi_{n,2}.
\ee
By the definition of $\Pi_{n,1}$ and $\Pi_{n,2}$,
\bb
  \frac1n E_{n,1}\left(\log \pazocal P_n(\rho_n)-\log \widetilde\sigma_n'\right)
  &\le
  (R^{\epsilon}+\delta')\,E_{n,1},\\
  \frac1n E_{n,2}\left(\log \pazocal P_n(\rho_n)-\log \widetilde\sigma_n'\right)
  &\le
  (R'+\epsilon_0+\delta')\,E_{n,2}.
\ee
For the third region, the full-rank component in $\sigma_n'$ gives a uniform lower bound on $\widetilde\sigma_n'$. Indeed,~\eqref{eq:ineq_kappa} combined with~\eqref{eq:lb1} gives
\bb
  \widetilde\sigma_n'\geq e^{-\kappa}\sigma_n' \geq e^{-(n+1)\kappa}\id\geq e^{-2n\kappa}\id,
\ee 
whence
\bb
  &\frac 1n E_{n,3}\left(\log \pazocal P_n(\rho_n)-\log \widetilde\sigma_n'\right)\\
  &\qquad =\frac 1n \underbrace{E_{n,3}\log \pazocal P_n(\rho_n)E_{n,3}}_{\leq 0}+\frac 1n E_{n,3}\,(\underbrace{-\log \widetilde\sigma_n'}_{\leq 2n\kappa \id})\,E_{n,3}\le 2\kappa E_{n,3}.
\ee
Using the decomposition of the space in these three regions, namely $\id=E_{n,1}+E_{n,2}+E_{n,3}$, we obtain
\bb
  &\frac1n D(\pazocal P_n(\rho_n)\|\widetilde\sigma_n')\\
  &\qquad = \frac 1n \Tr\big[\pazocal P_n(\rho_n)(E_{n,1}+ E_{n,2}+ E_{n,3})\big(\log \pazocal P_n(\rho_n)-\log \widetilde\sigma_n'\big)\big]\\
  &\qquad\le
  (R^{\epsilon}+\delta')\Tr[ \pazocal P_n(\rho_n)E_{n,1}] +
  (R'+\epsilon_0+\delta')\Tr[ \pazocal P_n(\rho_n)E_{n,2}] +
  2\kappa \Tr [\pazocal P_n(\rho_n)E_{n,3}] \\
  &\qquad = R^{\epsilon}+\delta'+(R'+\epsilon_0-R^{\epsilon})\Tr [\pazocal P_n(\rho_n)\Pi_{n,1}] +\bigl(2\kappa-(R'+\epsilon_0+\delta')\bigr)\Tr [\pazocal P_n(\rho_n)\Pi_{n,2}].
\ee
By Eq.~\eqref{eq:step3-two-projs}, there exists a subsequence $(n_j)_{j\geq1}$ such that 
\bb
\limsup_{j\to\infty}\Tr[\pazocal P_{n_j}(\rho_{n_j})\Pi_{n_j,1}]\leq1-\epsilon.
\ee
Along the same subsequence, Eq.~\eqref{eq:epsilon1} gives
$\Tr[\pazocal P_{n_j}(\rho_{n_j})\Pi_{n_j,2}]\to0$. Applying the previous three-region estimate along this subsequence, and using that the liminf of the full sequence is not larger than the limsup along any subsequence, yields
\bb
  \liminf_{n\to\infty}
  \frac1n D(\pazocal P_n(\rho_n)\|\widetilde\sigma_n')
  \leq
  R^{\epsilon}
  +\delta'
  +(1-\epsilon)(R'+\epsilon_0-R^{\epsilon}).
\ee
Letting $\delta'\to 0$ and using the definition~\eqref{eq:def_epsilon0} of $\epsilon_0$, we get
\bb
  \liminf_{n\to\infty}
  \frac 1n D(\pazocal P_n(\rho_n)\|\widetilde\sigma_n')
  -
  R^{\epsilon}
  &\leq  (1-\epsilon)(R'-R^\epsilon)+(1-\epsilon)\frac{\epsilon-\eta}{1-\epsilon}\left( R' - R^{\epsilon} \right)\\
  &=
  (1-\eta)(R'-R^{\epsilon}).
\ee
Finally, by~\eqref{eq:step2}, the previous upper bound gives
\bb
  \liminf_{n\to\infty} \frac 1n D(\rho_n\|\sigma_n') - R^{\epsilon}
  \le
  (1-\eta)(R'-R^{\epsilon}),
\ee
completing the proof of Lemma~\ref{lem:one-step}.
\end{proof}

\begin{boxedstep}{}
\begin{prop}[(Achievability bound)]\label{prop:W1-achievability}
For every $\epsilon\in(0,1)$,
\bb
  \liminf_{n\to\infty}
  \frac1n\,\rel{D_H^\epsilon}{\rho_n}{\mathcal S^{(n)}}
  \ge
  D^\infty(\rho\|\mathcal S).
\ee
\end{prop}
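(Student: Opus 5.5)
We argue by contradiction, using the gap contraction lemma (Lemma~\ref{lem:one-step}) exactly once, together with the Wasserstein continuity of the relative entropy of resource (Proposition~\ref{prop:3.1}). Fix $\epsilon\in(0,1)$ and write $R^\epsilon\coloneqq\liminf_{n}\frac1n D_H^\epsilon(\rho_n\|\mathcal S^{(n)})$, $D^\infty\coloneqq D^\infty(\rho\|\mathcal S)$. By the converse (Proposition~\ref{prop:W1-stable-strong-converse}) we already know $R^\epsilon\le D^\infty<\infty$. Suppose, for contradiction, that $R^\epsilon<D^\infty$.

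\textbf{Step 1 (choice of the auxiliary sequence).} We take $\sigma_n\in\mathcal S^{(n)}$ to be minimisers of $D(\rho^{\otimes n}\|\sigma_n)$, so that $R'=\liminf_n\frac1n D(\rho^{\otimes n}\|\sigma_n)=D^\infty$. This value is finite (e.g.\ it is bounded by $D(\rho\|\sigma_{\rm full})$ via tensor closure), and $R^\epsilon<R'$ by hypothesis. Hence Lemma~\ref{lem:one-step} applies. For any fixed $\eta\in(0,\epsilon)$ it yields $\sigma_n'\in\mathcal S^{(n)}$ with
\[
\liminf_{n\to\infty}\tfrac1n D(\rho_n\|\sigma_n')\le R^\epsilon+(1-\eta)(D^\infty-R^\epsilon)<D^\infty .
\]

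\textbf{Step 2 (transfer from $\rho_n$ back to $\rho^{\otimes n}$).} Trivially $\frac1n D(\rho_n\|\mathcal S^{(n)})\le\frac1n D(\rho_n\|\sigma_n')$. Since $w_n=\frac1n\|\rho_n-\rho^{\otimes n}\|_{W_1}\to0$, Proposition~\ref{prop:3.1}, which holds under (A1) and (A3), gives
\[
\tfrac1n D(\rho_n\|\mathcal S^{(n)})=\tfrac1n D(\rho^{\otimes n}\|\mathcal S^{(n)})+o(1).
\]
Moreover, $\frac1n D(\rho^{\otimes n}\|\mathcal S^{(n)})\to D^\infty$ by the Fekete argument, using (A2). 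Combining these, we get
\[
D^\infty=\liminf_n\tfrac1n D(\rho_n\|\mathcal S^{(n)})\le\liminf_n\tfrac1n D(\rho_n\|\sigma_n')<D^\infty,
\]
which is a contradiction. Therefore $R^\epsilon\ge D^\infty$.

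\textbf{Main obstacle.} The delicate point is only bookkeeping: ensuring that the hypotheses of Lemma~\ref{lem:one-step} are satisfied. These are the finiteness of $R'$ and the strict inequality $R^\epsilon<R'$, and the role of $\rho^{\otimes n}$ in $R'$ versus $\rho_n$ in the conclusion must be kept straight. Closedness in (A1) has to be read as compactness, so that the minimisers exist; alternatively one can use quasi-minimisers, since $R'$ only enters through a liminf. All the genuinely hard analytic work, namely the $W_1$-stable block strong converse inside Lemma~\ref{lem:one-step} and the continuity bound, is already available. Note that the contraction factor $(1-\eta)<1$ alone suffices here, so no iteration is needed.
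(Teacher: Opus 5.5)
Your proposal is correct and follows essentially the same route as the paper. You assume $R^\epsilon<D^\infty(\rho\|\mathcal S)$, take minimisers of $D(\rho^{\otimes n}\|\cdot)$ over $\mathcal S^{(n)}$ so that $R'=D^\infty(\rho\|\mathcal S)$, apply the gap contraction lemma (Lemma~\ref{lem:one-step}) once, and contradict the Wasserstein continuity of Proposition~\ref{prop:3.1}. Your explicit check that $R'\le D(\rho\|\sigma_{\rm full})<\infty$ is a detail the paper leaves implicit, while the appeal to the converse is unnecessary, since $R^\epsilon<D^\infty(\rho\|\mathcal S)$ is already the contradiction hypothesis.
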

\end{boxedstep}

\begin{proof}
Suppose, by contradiction, that
\bb
  R^{\epsilon}= \liminf_{n\to\infty}\frac 1n\,\rel{D_H^\epsilon}{\rho_n}{\mathcal S^{(n)}} < D^\infty(\rho\|\mathcal S)\, .
\ee
For each $n$, choose $\displaystyle{\sigma_n^*\in
  \underset{\sigma_n\in\mathcal S^{(n)}}{\operatorname{arg\,min}}\; 
  D(\rho^{\otimes n}\|\sigma_n)}$, so that
\bb
  R'= \liminf_{n\to\infty} \frac 1n D(\rho^{\otimes n}\|\sigma_n^*)
  =D^\infty(\rho\|\mathcal S)>R^{\epsilon}.
\ee
Fixing
$\eta \in (0,\epsilon)$ and applying Lemma~\ref{lem:one-step}, we have
\bb
  \liminf_{n\to\infty}
  \frac1n D(\rho_n\|\sigma_n')
  &\le
  R^{\epsilon}
  +(1-\eta)
  \left(D^\infty(\rho\|\mathcal S)-R^{\epsilon}\right)\\
  &= 
  D^\infty(\rho\|\mathcal S) -\eta\left(D^\infty(\rho\|\mathcal S)-R^\epsilon\right) \\
  &<
  D^\infty(\rho\|\mathcal S).
\ee
Since $ D(\rho_n\|\mathcal S^{(n)}) \le D(\rho_n\|\sigma_n')$, we conclude that
\bb\label{eq:contr}
  \liminf_{n\to\infty} \frac1n D(\rho_n\|\mathcal S^{(n)})
  <
  D^\infty(\rho\|\mathcal S).
\ee
On the other hand, by Assumption~\ref{ass:standing-Sn}, we can apply the continuity of the relative entropy of resource in Wasserstein distance (Proposition~\ref{prop:3.1}), obtaining
\bb
  \lim_{n\to \infty}\frac1n\,\rel{D}{\rho_n}{\mathcal S^{(n)}}
  =
  \lim_{n\to \infty} \frac1n\, \rel{D}{\rho^{\otimes n}}{\mathcal S^{(n)}} = \modifica{D^\infty(\rho\|\mathcal S)},
\ee
which contradicts~\eqref{eq:contr}.
\end{proof}

\subsection{Second act. Universality for equiconvergent sources}\label{sec:universality}

We finally prove the existence of a universal sequence of \modifica{tests} for equiconvergent sources. The key idea is that the composite null problem can be reduced to the worst individual null state by a minimax identity.

Let $(\mathcal F^{(n)})_{n\ge1}$ be a Wasserstein equiconvergent source along $\rho$, namely 
\bb
\lim_{n\to\infty}
\sup_{\rho_n\in\mathcal F^{(n)}}\frac 1n\left\|\rho_n-\rho^{\otimes n}\right\|_{W_1}=0,
\ee
where each $\mathcal F^{(n)}$ is assumed to be compact and convex by hypothesis. For each $n$, choose $\rho_n^*
  \in
  \underset{\rho_n\in \mathcal F^{(n)}}{\operatorname{arg\,max}}\; 
  \beta_\epsilon(\rho_n\|\mathcal S^{(n)})$, so that $ D_H^\epsilon(\rho_n^*\|\mathcal S^{(n)})= \rel{D_H^\epsilon}{\mathcal F^{(n)}}{\mathcal S^{(n)}}$. 
Then,
\bb
 \lim_{n\to\infty}\frac 1n \left\|\rho_n^*-\rho^{\otimes n}\right\|_{W_1}
  \le
  \lim_{n\to\infty}\sup_{\rho_n\in \mathcal F^{(n)}}
  \frac 1n \left\|\rho_n-\rho^{\otimes n}\right\|_{W_1}
  =0.
\ee
Thus $(\rho_n^*)_n$ is an individual Wasserstein almost i.i.d.\ source along $\rho$. Applying the individual-source statement proved above, we immediately obtain
\bb
  \lim_{n\to\infty}
  \frac1n\, \rel{D_H^\epsilon}{\mathcal F^{(n)}}{\mathcal S^{(n)}} = \lim_{n\to\infty}
  \frac1n\,\rel{D_H^\epsilon}{\rho_n^*}{\mathcal S^{(n)}}
  =
  D^\infty(\rho\|\mathcal S)\, .
\ee
This proves the existence of a sequence of POVMs $(E_n)_{n\geq 1}$ that is universal within the sequence of families $\mathcal{F}=(\mathcal{F}^{(n)})_{n\geq 1}$, completing the proof of Theorem~\ref{thm:W1_GQSL}.

\subsection*{Acknowledgements} FG, K.-Y.L and LL acknowledge financial support from the European Union (ERC StG ETQO, Grant Agreement no.\ 101165230).

\begin{note}
The strategy and the key tools to prove Theorem~\ref{thm:W1_GQSL} were identified by the authors and techinically developed with the support of AI (Microsoft Copilot). The authors improved, simplified and carefully proofread the results. The applications in Section~\ref{sec:applications} were conceived and proved by the authors without the assistance of LLMs. ChatGPT was used for identifying imprecisions and typos.
\end{note}

\bibliography{biblio}

\appendix

\section{Alternative proof of Theorem~\ref{thm:iid}}\label{proof:alternative}

     Under the additional assumption that the sequence $(\mathcal S^{(n)})_{n\geq 1}$ is closed under the permutation twirl, the a classical formulation of the weak quasi-concavity of the hypothesis testing relative entropy -- which replaces the quantum formulation in Lemma~\ref{lem:weak_q_conc} -- can be used to give a simpler proof of Theorem~\ref{thm:iid}. For the convenience of the reader, here we state the result we are going to prove in this section.

    \begin{thm}\label{thm:iid2} Suppose that the sequence of families $(\mathcal S^{(n)})_{n\geq 1}$ satisfy Assumption~\ref{ass:standing-Sn}, and, in addition, suppose every family is closed under the permutation twirl, namely
    \bb
        \frac{1}{n!}\sum_{\pi\in S_n} U_\pi^{\vphantom{\dagger}} \sigma_n U^\dagger_\pi \in \mathcal{S}^{(n)}\qquad \forall \sigma_n \in \mathcal{S}^{(n)}.
    \ee
    Then, for any arbitrary subset of states $\pazocal R_1\subseteq \mathcal{D}(\mathcal{H})$, we have 
        \bb
            \lim_{n\to\infty} \frac1n\, \rel{D_H^\epsilon}{\co\!\big(\pazocal{R}_n^{\rm i.i.d.}\big)}{\mathcal{S}^{(n)}} = \inf_{\rho\in \pazocal R_1} D^\infty(\rho \|\mathcal{S})\qquad \forall \,\epsilon\in (0,1).
        \ee
    \end{thm}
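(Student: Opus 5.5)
The plan is to reuse the three-step structure of Section~\ref{sec:proof_iid}. The only change is in the middle step, where the quantum weak quasi-concavity (Lemma~\ref{lem:weak_q_conc}) is replaced by a classical argument after twirling. The converse is exactly Lemma~\ref{lem:conv_iid}, which only uses $\rho^{\otimes n}\in\co(\pazocal R_n^{\rm i.i.d.})$ and Theorem~\ref{thm:W1_GQSL}. The final identity commuting $\inf_{\rho\in\pazocal R_1}$ with $\liminf_n$ is Lemma~\ref{lem:ach2_iid}, which already holds under Assumption~\ref{ass:standing-Sn}. So it remains to prove the analogue of Lemma~\ref{lem:ach2b_iid}, namely
\bb
\liminf_{n\to\infty}\tfrac1n D_H^\epsilon\big(\co(\pazocal R_n^{\rm i.i.d.})\big\|\mathcal S^{(n)}\big)\geq \lim_{\epsilon'\to0}\liminf_{n\to\infty}\tfrac1n\inf_{\rho\in\pazocal R_1}D_H^{\epsilon'}\big(\rho^{\otimes n}\big\|\mathcal S^{(n)}\big).
\ee

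\textbf{Step 1: symmetrise and decompose.} Take quasi-minimisers $\rho^{(n)}\in\co(\pazocal R_n^{\rm i.i.d.})$ and $\sigma_n\in\mathcal S^{(n)}$. By data processing under the permutation twirl $\pazocal P_n$ and the twirl-closure assumption, I may replace $\sigma_n$ by $\pazocal P_n(\sigma_n)\in\mathcal S^{(n)}$; $\rho^{(n)}$ is already invariant. By Carath\'eodory on $H^{\rm sym}_{d,n}$, write $\rho^{(n)}=\sum_{i=1}^N p_i\rho_i^{\otimes n}$ with $N={\rm poly}(n)$.

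\textbf{Step 2: reduce to a commuting problem.} Since $\pazocal P_n(\sigma_n)$ is permutation invariant, it is block diagonal in the Schur--Weyl decomposition. I will pinch with respect to $\pazocal P_n(\sigma_n)$; by Schur--Weyl this pinching has only ${\rm poly}(n)$ distinct eigen-blocks, and if necessary I first apply the logarithmic rounding of Lemma~\ref{lem:binary-rounding}. Data processing gives $D_H^\epsilon$ of the pinched pair at most the original, and the pinching inequality shows the loss in the reverse direction is only $O(\log n)$ for each $\rho_i^{\otimes n}$. After pinching, all the states $\pazocal E_n(\rho_i^{\otimes n})$ and $\sigma_n'$ commute, so the problem is classical.

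\textbf{Step 3: classical quasi-concavity.} In the classical setting I use an elementary argument. Let $T$ be the Neyman--Pearson likelihood test for the mixture $\sum_ip_iq_i$ against $s$ with type I error at most $\epsilon$. Averaging shows that some $i$ with non-negligible weight has $\Tr[q_iT]\geq 1-\epsilon'$, for instance $\epsilon'=(1+\epsilon)/2$, once the components with $p_i<(1-\epsilon)/(2N)$ are discarded. This gives
\bb
D_H^\epsilon\big(\textstyle\sum_i p_iq_i\big\|s\big)\geq\min_i D_H^{\epsilon'}(q_i\|s)-\log{\rm poly}(n).
\ee
This yields a lower bound by $\inf_\rho D_H^{\epsilon'}(\rho^{\otimes n}\|\mathcal S^{(n)})$ for some $\epsilon'<1$ (not one tending to $0$). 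That suffices, because by Theorem~\ref{thm:W1_GQSL} and Lemma~\ref{lem:ach2_iid} the right-hand side is $\inf_\rho D^\infty(\rho\|\mathcal S)$ for every $\epsilon'\in(0,1)$.

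\textbf{Main obstacle.} I expect Step~2 to be the delicate part: passing from the quantum $D_H^\epsilon(\rho^{(n)}\|\sigma_n)$ to the classical pinched quantity while keeping the thresholds $\epsilon,\epsilon'$ strictly inside $(0,1)$ and losing only ${\rm poly}(n)$ factors. First I will try to use the pinching inequality $\rho_i^{\otimes n}\leq{\rm poly}(n)\,\pazocal E_n(\rho_i^{\otimes n})$ together with an $\epsilon$-shift as in Lemma~\ref{lemma:wsc} (commuting case). Concluding as in Lemma~\ref{lem:ach2_iid} then finishes the proof.
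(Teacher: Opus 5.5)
Your overall architecture is the same as the paper's proof of Lemma~\ref{lem:ach1a_iid}: the converse via Lemma~\ref{lem:conv_iid}, twirl and Carath\'eodory, pinching with respect to the spectral blocks of the twirled $\sigma_n$, classical quasi-concavity, removal of the pinching, and finally Lemma~\ref{lem:ach2_iid}. You are also right that any single fixed threshold in $(0,1)$ suffices at the end. However, the two steps that carry the actual content do not work as written.

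\textbf{Step 3 proves the opposite inequality.} Let $T$ be optimal for $\bar q=\sum_ip_iq_i$ against $s$ at level $\epsilon$, and suppose $\Tr[q_iT]\ge 1-\epsilon'$. Then $T$ is feasible for $q_i$ against $s$ at level $\epsilon'$, so $\beta_{\epsilon'}(q_i\|s)\le\Tr[sT]$. This means $D_H^{\epsilon}(\bar q\|s)\le D_H^{\epsilon'}(q_i\|s)$, which is an \emph{upper} bound on the mixture.

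The lower bound you claim, with $\epsilon'=(1+\epsilon)/2>\epsilon$, is false already for $N=1$. Take $q=(\tfrac12,\tfrac12)$, $s=(1-e^{-n},e^{-n})$ and $\epsilon=\tfrac14$. Then $D_H^{5/8}(q\|s)\ge n$, but $D_H^{1/4}(q\|s)\le\log\frac{2}{1-e^{-n}}$.

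The correct argument runs the other way. Combine optimal component tests $A_i$ into $A=\max_iA_i$. This test has type I error at most $\epsilon$ on the mixture and satisfies $\Tr[sA]\le N\max_i\beta_\epsilon(q_i\|s)$. Hence $D_H^\epsilon(\bar q\|s)\ge\min_iD_H^\epsilon(q_i\|s)-\log N$, which is Lemma~\ref{lem:wqc_class} and step (ii) in the paper; the threshold does not go up.

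Also, the pinched states commute with $\sigma_n$ but not with one another, so the problem is not literally classical. The combination still works if you make the tests block-diagonal projectors, thresholded as in Lemma~\ref{lem:projection-to-DH}. In each block you then take the projector onto the span of their ranges, which works because $\sigma_n$ is scalar on each block.

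\textbf{Removing the pinching in Step 2 is the crux and is not justified.} The inequality $\rho\le v\,\pazocal E(\rho)$ only gives $\Tr[Q\,\pazocal E(\rho)]\ge\Tr[Q\rho]/v$. This turns acceptance $1-\epsilon''$ into acceptance $(1-\epsilon'')/v\to 0$, which is useless at a fixed threshold. The commuting case of Lemma~\ref{lemma:wsc} applies only to the pinched pair, so it cannot move the smoothing from $\pazocal E(\rho)$ back to $\rho$.

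The paper bridges this in four moves:
\begin{itemize}
\item it applies the commuting duality $D_H\to D_{\max}$;
\item it switches to purified-distance smoothing;
\item it uses the Li--Yao pinching inequality for smooth $D_{\max}$~\cite{Li2024}, which costs $\log|P|$;
\item it returns to $D_H$ via the duality, ending at a threshold of order $\epsilon^2$.
\end{itemize}

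There is also an elementary substitute that fits your plan. Set $\tau=\pazocal E(\rho)$ and let $a$ be the $\epsilon'$-quantile of $\log\tau-\log\sigma$ under $\tau$, so that $D_H^{\epsilon'}(\tau\|\sigma)\ge a$. Put $T=\{\tau\le e^{a}\sigma\}$. Since $T$ is block diagonal, $\Tr[T\rho]=\Tr[T\tau]\ge\epsilon'$ and $T\rho T\le v\,T\tau T\le v e^{a}\sigma$. For any $0\le Q\le\id$ with $\Tr[Q\rho]\ge 1-\epsilon''$, Cauchy--Schwarz gives $\Tr[QT\rho T]\ge\epsilon'(\sqrt{\epsilon'}-\sqrt{\epsilon''})^2$. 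Therefore
\[
D_H^{\epsilon''}(\rho\|\sigma)\le D_H^{\epsilon'}(\tau\|\sigma)+\log v+\log\frac{1}{\epsilon'(\sqrt{\epsilon'}-\sqrt{\epsilon''})^2}\qquad\text{for all }\epsilon''<\epsilon' .
\]

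In one-shot generality the threshold can only go down here; take $\rho$ commuting with $\sigma$ to see this. That is compatible with the corrected Step 3, but not with your shift $\epsilon\to(1+\epsilon)/2$. With both repairs, your chain ends at $\inf_\rho D_H^{\epsilon''}(\rho^{\otimes n}\|\mathcal S^{(n)})-O(\log n)$ for a fixed $\epsilon''\in(0,\epsilon)$, and Lemma~\ref{lem:ach2_iid} concludes.
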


\begin{proof}
    The proof follows exactly the same structure we presented  for Theorem~\ref{thm:iid} above (Section~\ref{sec:proof_iid}), where Lemma~\ref{lem:ach2b_iid} is now replaced by Lemma~\ref{lem:ach1a_iid}, which we state and prove below.
\end{proof}

\begin{lemma}\label{lem:ach1a_iid} Suppose that the sequence of families $(\mathcal S^{(n)})_{n\geq 1}$ satisfies Assumption~\ref{ass:standing-Sn} and it is closed under the permutation twirl. Let $\pazocal R_1\subseteq \mathcal{D}(\mathcal{H})$ be an arbitrary subset of states. Then,
      \bb 
      \liminf_{n\to\infty}\frac 1n \,\rel{D_H^\epsilon}{\co\!\big(\pazocal{R}_n^{\rm i.i.d.}\big)}{\mathcal S^{(n)}} \geq \lim_{\epsilon'\to 0}\liminf_{n\to\infty}\frac 1n \inf_{\rho \in\pazocal R_1} \rel{D_H^{\epsilon'}}{\rho^{\otimes n}}{\mathcal S^{(n)}}\, .
        \ee
        for all $\epsilon\in (0,1)$.
    \end{lemma}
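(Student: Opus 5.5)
The plan is to mirror the proof of Lemma~\ref{lem:ach2b_iid}, replacing the fully quantum weak quasi-concavity of $D_{\max}^{\epsilon,P}$ (Lemma~\ref{lem:weak_q_conc}) with a classical argument that becomes available once everything is made permutation invariant.

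\textbf{Step 1: pick quasi-optimisers and twirl.} Fix $\epsilon\in(0,1)$. Choose quasi-minimisers $\rho^{(n)}\in\co\!\big(\pazocal R_n^{\rm i.i.d.}\big)$ and $\sigma_n\in\mathcal S^{(n)}$ satisfying~\eqref{eq:quasi_min}, i.e.
\[
\rel{D_H^\epsilon}{\co(\pazocal R_n^{\rm i.i.d.})}{\mathcal S^{(n)}}\ge \rel{D_H^\epsilon}{\rho^{(n)}}{\sigma_n}-1 .
\]
Since $\rho^{(n)}$ is already permutation invariant, data processing under the permutation twirl $\pazocal P_n$ gives $\rel{D_H^\epsilon}{\rho^{(n)}}{\sigma_n}\ge \rel{D_H^\epsilon}{\rho^{(n)}}{\pazocal P_n(\sigma_n)}$. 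By the closure assumption, $\pazocal P_n(\sigma_n)\in\mathcal S^{(n)}$, so we may assume without loss of generality that $\sigma_n$ is permutation invariant.

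\textbf{Step 2: Carath\'eodory decomposition.} As in Lemma~\ref{lem:ach2b_iid}, write $\rho^{(n)}=\sum_{i=1}^N p_i\rho_i^{\otimes n}$ with $\rho_i\in\pazocal R_1$ and $N\le (n+1)^{d^2-1}+1$.

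\textbf{Step 3: classical quasi-concavity.} Let $T$ be an optimal test for $\rel{D_H^\epsilon}{\rho^{(n)}}{\sigma_n}$, with $\sum_i p_i\Tr[(\id-T)\rho_i^{\otimes n}]\le\epsilon$. Averaging over $i$ with weights $p_i$, Markov's inequality shows that the indices $i$ with $\Tr[(\id-T)\rho_i^{\otimes n}]\le\epsilon'$ (for some $\epsilon'\in(\epsilon,1)$) carry total weight at least $1-\epsilon/\epsilon'>0$. This alone does not suffice, because such an index might have tiny $p_i$. The remedy is to note that the same $T$ is feasible for every such $i$, so for each of them
\[
D_H^{\epsilon'}(\rho_i^{\otimes n}\|\sigma_n)\ge -\log\Tr[T\sigma_n]=\rel{D_H^\epsilon}{\rho^{(n)}}{\sigma_n}.
\]
The type II error term $\Tr[T\sigma_n]$ does not depend on $i$ at all. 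Hence
\[
\rel{D_H^\epsilon}{\rho^{(n)}}{\sigma_n}\le \min_{i\ \text{good}} D_H^{\epsilon'}(\rho_i^{\otimes n}\|\sigma_n),
\]
but this points in the wrong direction. We actually need the reverse: a lower bound on the mixture's exponent by $\min_i$. For that, pick for each $i$ an optimal test $T_i$ for $D_H^{\epsilon'}(\rho_i^{\otimes n}\|\sigma_n)$ with a small $\epsilon'$. After twirling, the $T_i$ can be taken permutation invariant. Then form $T=\pazocal P_n\big(\sum_i T_i\big)$, or better a projector majorising all $T_i$, capped at $\id$. This gives $\Tr[(\id-T)\rho^{(n)}]\le\epsilon'\le\epsilon$ and $\Tr[T\sigma_n]\le\sum_i\Tr[T_i\sigma_n]\le N\max_i\beta_{\epsilon'}(\rho_i^{\otimes n}\|\sigma_n)$. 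Taking logarithms,
\[
\rel{D_H^\epsilon}{\rho^{(n)}}{\sigma_n}\ge\min_i D_H^{\epsilon'}(\rho_i^{\otimes n}\|\sigma_n)-\log N ,
\]
with $\log N=O(\log n)$. Operator convexity is not even needed here, since $\id\ge T:=\min(\id,\sum_i T_i)$ in the operator sense is handled by the bound $\Tr[\min(\id,X)\sigma]\le\Tr[X\sigma]$ for $X\ge0$. This is exactly the point where the permutation-twirl closure buys a union-bound argument in place of Lemma~\ref{lem:weak_q_conc}. The main obstacle is checking that $\id-\min(\id,\sum_iT_i)\le \id-T_i$ for every $i$, which holds by operator monotonicity of $X\mapsto\min(\id,X)$ on commuting objects. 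If commutativity fails, I would instead pinch in the common symmetric (Schur--Weyl) basis, which costs only a polynomial factor.

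\textbf{Step 4: conclude.} Bounding $\min_i D_H^{\epsilon'}(\rho_i^{\otimes n}\|\sigma_n)\ge\inf_{\rho\in\pazocal R_1}\rel{D_H^{\epsilon'}}{\rho^{\otimes n}}{\mathcal S^{(n)}}$, dividing by $n$, and taking $\liminf_{n\to\infty}$ and then $\epsilon'\to0$ yields the claim.
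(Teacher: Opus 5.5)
There is a genuine gap in Step~3, at exactly the point you flag as the main obstacle. The merged test $T=\min(\id,\sum_j T_j)$ dominates each $T_i$ only when the $T_j$ commute, because $x\mapsto\min(1,x)$ is not operator monotone. Take $T_1=|v\rangle\langle v|$ and $T_2=|w\rangle\langle w|$ with real overlap $c=\langle v|w\rangle\in(0,1)$. Then $X=T_1+T_2$ has eigenvalues $1\pm c$ with eigenvectors proportional to $v\pm w$, so $\min(\id,X)=P_++(1-c)P_-$. This gives $\langle v|\min(\id,X)|v\rangle=1-\tfrac{c(1-c)}{2}$: the merged test has strictly positive type-I error on $|v\rangle\langle v|$, although $T_1$ had none.

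The permutation twirl does not rescue this. It only makes each test block diagonal, $T_i=\bigoplus_\lambda T_{i,\lambda}\otimes\id_{\pazocal V_\lambda}$, while the blocks $T_{i,\lambda}$ on $\pazocal U_\lambda$ remain arbitrary and mutually non-commuting. So there is no ``common Schur--Weyl basis'' that diagonalises them. The other natural merge is the projector onto the span of the supports of the $T_i$. It does dominate every $T_i$, but its type-II error is not bounded by $\sum_i\Tr[T_i\sigma_n]$ unless $\sigma_n$ acts as a scalar on the relevant subspaces. The inequality you arrive at, $D_H^\epsilon(\rho^{(n)}\|\sigma_n)\ge\min_i D_H^{\epsilon'}(\rho_i^{\otimes n}\|\sigma_n)-\log N$ with no loss in the smoothing parameter, would be a fully quantum weak quasi-concavity of $D_H$ stronger than Lemma~\ref{lem:weak_q_conc}; that should have been a warning sign.

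Your fallback sentence points only at the first of two ingredients the paper uses.

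\emph{Pinching and merging.} The right pinching is with respect to the eigenprojectors of each block $\sigma_n^\lambda$ inside each Schur--Weyl sector. This gives $|P|\le(n+1)^{d^2-1}$ outcomes, $\pazocal P(\sigma_n)=\sigma_n$, and $\sigma_n$ proportional to the identity on every outcome. Data processing then gives $D_H^\epsilon(\rho^{(n)}\|\sigma_n)\ge D_H^\epsilon\big(\sum_ip_i\pazocal P(\rho_i^{\otimes n})\|\sigma_n\big)$, after which the union-bound merge (Lemma~\ref{lem:wqc_class}) yields $\min_iD_H^\epsilon(\pazocal P(\rho_i^{\otimes n})\|\sigma_n)-\log N$.

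\emph{Un-pinching, which is missing from your plan.} You must return from the pinched states $\pazocal P(\rho_i^{\otimes n})$ to $\rho_i^{\otimes n}$, and for $D_H$ at fixed smoothing this is the hard direction. Pinching only decreases $D_H$, and the pinching inequality $\rho\le|P|\,\pazocal P(\rho)$ would multiply the type-I error by $|P|$, which is useless at constant $\epsilon$. So it does not ``cost only a polynomial factor''. The paper instead passes to the smooth max-relative entropy via Lemma~\ref{lemma:wsc}, applies the Li--Yao pinching inequality for $D_{\max}$, which costs $\log|P|$ in the exponent, and converts back to $D_H$. This degrades the smoothing parameter to order $\epsilon^2$, which is precisely why the right-hand side of the statement contains $\lim_{\epsilon'\to0}$. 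Without this step, your Step~4 bounds the wrong quantity.
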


Differently from Lemma~\ref{lem:ach2_iid}, now we only need a simple and fully classical weak quasi-concavity result, due to~\cite{quasi-concavity}.

\begin{lemma}[{(Classical weak quasi-concavity of $D_H^\epsilon$~\cite{quasi-concavity})}]\label{lem:wqc_class} Let $\{t_i\}_{i\in[N]}$ be a probability distribution on $[N]$ and let $\mathcal X$ be a discrete space. Then, for all families $\{p_i\}_{i\in[N]}\subseteq \mathcal{P}(\mathcal{X})$ and $q\in \mathcal{P}(\mathcal{X})$, we have
\bb
    D_H^\epsilon (p\|q)\geq \min_{1\leq i\leq N}D_H^\epsilon-\log N,
\ee
where $p\coloneqq\sum_{i=1}^Nt_ip_i$ and $\epsilon\in (0,1)$ is arbitrary.
    
\end{lemma}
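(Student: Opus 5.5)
The plan is to argue directly at the level of type II errors, reading the right-hand side as $\min_{1\le i\le N} D_H^\epsilon(p_i\|q)-\log N$ (the argument of $D_H^\epsilon$ is evidently omitted in the display). Since $D_H^\epsilon=-\log\beta_\epsilon$, with $\beta_\epsilon(p\|q)=\min\{\sum_x T(x)q(x):0\le T\le 1,\ \sum_x T(x)p(x)\ge 1-\epsilon\}$, the claim is equivalent to
\[
\beta_\epsilon(p\|q)\le N\max_{1\le i\le N}\beta_\epsilon(p_i\|q).
\]
Taking $-\log$ of this inequality gives exactly $D_H^\epsilon(p\|q)\ge \min_i D_H^\epsilon(p_i\|q)-\log N$. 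So it suffices to exhibit one feasible test for $p$ whose type II error is at most $N$ times the worst individual optimum.

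For each $i\in[N]$, I fix an optimal test $T_i:\mathcal X\to[0,1]$ for the pair $(p_i,q)$. Existence follows by compactness, or one can take $\delta$-optimal tests and let $\delta\to0$ if $\mathcal X$ is countably infinite. These satisfy $\sum_x T_i(x)p_i(x)\ge1-\epsilon$ and $\sum_x T_i(x)q(x)=\beta_\epsilon(p_i\|q)$. I then define the pointwise maximum $T(x)\coloneqq\max_i T_i(x)$, which is again a valid test with $0\le T\le1$.

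\textbf{Type I constraint.} Since $T\ge T_i$ pointwise for every $i$, we have $\sum_x T(x)p_i(x)\ge1-\epsilon$ for every $i$. Averaging with weights $t_i$ gives $\sum_x T(x)p(x)\ge1-\epsilon$, so $T$ is feasible for $p$.

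\textbf{Type II error.} Pointwise, $\max_i T_i\le\sum_i T_i$, so
\[
\sum_x T(x)q(x)\le\sum_i\beta_\epsilon(p_i\|q)\le N\max_i\beta_\epsilon(p_i\|q).
\]
This is the required bound.

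There is essentially no analytic obstacle. The only point needing care is that the pointwise maximum of tests is again a test, which is exactly where classicality is used. In the quantum case the maximum of noncommuting effects is not available, and this is why the paper needs the more delicate Lemma~\ref{lem:weak_q_conc}. Note also that the bound does not depend on the weights $t_i$, which may even vanish.
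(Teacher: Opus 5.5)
Your proof is correct and is essentially the paper's own argument. The paper likewise takes optimal acceptance functions $A_i$ for each pair $(p_i,q)$, uses their pointwise maximum as the test for the mixture, bounds the type I error by averaging over $i$, and bounds the type II error via $\max_i A_i\le\sum_i A_i\le N\max_i A_i$ after summing against $q$; your reading of the right-hand side as $\min_i D_H^\epsilon(p_i\|q)-\log N$ matches the intended statement.
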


\begin{proof}
Let $i\in[N]$ and let $A_i:\mathcal{X}\to [0,1]$ an optimal acceptance function in the definition of $D_H^\epsilon(p_i\|q)$. Define the acceptance function $A:\mathcal{X}\to [0,1]$ as the  pointwise maximum of the individual acceptance functions, namely,
\bb
    A \coloneqq \max_i A_i \geq A_i \qquad \forall i\in [N].
\ee
Then, the type I error probability under $A$ for the hypothesis testing problem $p$ vs $q$ satisfies
\bb
    \sum_{x\in\mathcal X}p(x)(1-A(x))=\sum_{x\in\mathcal X}\sum_{i=1}^Nt_ip_i(x)(1-A(x))\leq \sum_{i=1}^Nt_i\sum_{x\in\mathcal X}p_i(x)(1-A_i(x))\leq \epsilon,
\ee
while the type II error is bounded as
\bb
    \sum_{x\in\mathcal X}q(x)A(x)\leq \sum_{x\in\mathcal X}q(x)\sum_{i=1}^NA_i(x)\leq N \max_{1\leq i\leq N}\sum_{x\in\mathcal X}q(x)A_i(x),
\ee
which immediately implies the claim.
\end{proof}

Now we have all the ingredients to prove Lemma~\ref{lem:ach1a_iid}.

    \begin{proof}[Proof of Lemma~\ref{lem:ach1a_iid}.]
        
Let $(\rho^{(n)})_{n\geq 1}$ and $(\sigma_n)_{n\geq 1}$ be two sequences of states $\rho^{(n)}\in\co\!\big(\pazocal{R}_n^{\rm i.i.d.}\big)$ and $\sigma_n\in\mathcal{S}^{(n)}$ such that 
\bb\label{eq:quasi_min}
     \rel{D_H^\epsilon}{\co\!\big(\pazocal{R}_n^{\rm i.i.d.}\big)}{\mathcal{S}^{(n)}}\geq \rel{D_H^\epsilon}{\rho^{(n)}}{\sigma_n}-1.
\ee
Since the elements of $\pazocal{R}_n^{\rm i.i.d.}$ are permutation invariant and $\pazocal{R}_n^{\rm i.i.d.}$ is closed under the permutation twirl, by the data-processing inequality we can assume without loss of generality that $\sigma_n$ is permutation invariant.
The real vector space $H_{d,n}^{\rm sym}$ of permutationally symmetric Hermitian operators on $\mathcal{H}^{\otimes n}\simeq \big(\mathbb{C}^d\big)^{\otimes n}$, has dimension upper bounded as $\dim H_{d,n}^{\rm sym}\leq (n+1)^{d^2-1}$~\cite{Hayashi2016-bh}.
 By Carath\'eodory's theorem, since $\rho^{(n)}\in H_{d,n}^{\rm sym}$, we can write it as a convex combination of at most $N=(n+1)^{d^2-1}+1={\rm poly}(n)$ terms of the form $\rho_i^{\otimes n}$, where $\rho_i\in\pazocal R_1$, namely
 \bb
    \rho^{(n)}=\sum_{i=1}^Np_i\rho_i^{\otimes n}, \qquad 0\leq p_i\leq 1.
 \ee
Let us write $\sigma_n$ in the Schur--Weyl basis (see, e.g.,~\cite{Hayashi2016-bh}), and then let us consider the diagonalisation of each block:
\bb
\sigma_n=\bigoplus_{\lambda\in\pazocal{Y}_n^d} \sigma_n^\lambda \otimes\id_{\pazocal{V}_\lambda},\qquad \sigma_n^\lambda=\sum_{k\in K_n^  \lambda}\mu_n^{\lambda,k}\Pi_n^{\lambda,k} \in \mathcal{L}(\pazocal{U}_\lambda), \qquad |K_n^\lambda|\leq \dim \pazocal{U}_\lambda,
\ee
where the index $\lambda$ ranges on the set $\pazocal{Y}_d^n$ of Young diagrams with size $n$ and depth at most $d$; $\pazocal{U}_\lambda$ and $\pazocal{V}_\lambda$ are irreducible representations of the special unitary group ${\rm SU}(d)$ and of the symmetric group $S_n$, respectively. Let $\pazocal P$ be the pinching channel with projectors
\bb
    P=\bigcup_{\lambda\in\pazocal{Y}_n^d}\left\{\left(\Pi_n^{\lambda,k}\otimes \id_{\pazocal V_\lambda}\right)\oplus 0_{(\pazocal U^\lambda\pazocal \ludo{\otimes} V^\lambda)^\perp}\right\}_{k\in K^\lambda_n},
\ee
i.e.
\bb
    \pazocal P(\,\cdot\,) \coloneqq\sum_{\Pi\in P}\Pi\, \cdot\,\Pi.
\ee
Note that each projector $P\in \Pi$ acts non-trivially on one particular $\lambda$-sector $\pazocal{U}^\lambda \otimes \pazocal{V}^\lambda$, and vanishes on the orthogonal space $\big(\pazocal U^\lambda\pazocal \otimes V^\lambda\big)^\perp$. The action of the pinching channel $\pazocal{P}$ is however stronger than that of the permutation twirl, because, inside each $\lambda$-sector, a further pinching is applied, in order to force the output to commute with $\sigma_n$. By definition $\pazocal P(\sigma_n)=\sigma_n$, and,
since $\dim \pazocal{U}_\lambda\leq (n+1)^{d(d-1)/2}$ and $|\pazocal{Y}_n^d|\leq (n+1)^{d-1}$, we can upper bound $|P|\leq (n+1)^{d-1}\times (n+1)^{d(d-1)/2} \leq (n+1)^{d^2-1}={\rm poly}(n)$. Then,

\begin{align}\label{eq:lower_bounds}
\rel{D_H^\epsilon}{\rho^{(n)}}{\sigma_n} &= \Rel{D_H^\epsilon}{\sumno_{i=1}^N p_i \rho_i^{\otimes n}}{\sigma_n} \notag \\
&\geqt{(i)} \Rel{D_H^\epsilon}{\sumno_{i=1}^N p_i\pazocal P(\rho_i^{\otimes n})}{\sigma_n} \notag \\
&\geqt{(ii)} \min_i \rel{D_H^\epsilon}{\pazocal P(\rho_i^{\otimes n})}{\sigma_n} - \log N \notag \\
&\geqt{(iii)} \min_i \rel{D_{\max}^{\sqrt{1-\epsilon^2}}}{\pazocal P(\rho_i^{\otimes n})}{\sigma_n} - \log \frac{1-\epsilon}N\\
&\geqt{(iv)} \min_i \rel{D_{\max}^{\sqrt{1-\epsilon^2/2},P}}{\rho_i^{\otimes n}}{\sigma_n} - \log {\rm poly}_{\epsilon,d}(n) - \log|P| \notag \\
&\geqt{(v)} \min_i \rel{D_H^{\epsilon^2/4}}{\rho_i^{\otimes n}}{\sigma_n} - \log {\rm poly}'_{\epsilon,d}(n) \notag \\
&\geq \inf_{\rho \in\pazocal R_1} \rel{D_H^{\epsilon^2/4}}{\rho^{\otimes n}}{ \mathcal{S}^{(n)}} - \log {\rm poly}'_{\epsilon,d}(n), \notag
\end{align}
where (i) is the data-processing inequality under the action of $\pazocal P$, (ii) is a direct application of Lemma~\ref{lem:wqc_class}, and (iii) follows from the weak/strong-converse duality (Lemma~\ref{lemma:wsc}); in (iv) we change the smoothing from trace distance to purified distance as in~\eqref{eq:Dmax_ineq}, and then we apply the Li--Yao pinching inequality~\cite[Proposition~13]{Li2024}; finally, (v) is given by~\eqref{eq:Dmax_ineq}, followed again by the weak/strong-converse duality (where we have used the elementary inequality $\sqrt{1-x}< 1-x/2$ for $x\in(0,1)$ and the monotonicity of $\epsilon\mapsto D_H^\epsilon$). Then,  by definition of $\rho^{(n)}$ and $\sigma_n$,~\eqref{eq:lower_bounds} yields
\bb\label{eq:previous}
\liminf_{n\to\infty}\frac 1n \,\rel{D_H^\epsilon}{\co\!\big(\pazocal{R}_n^{\rm i.i.d.}\big)}{\mathcal S^{(n)}} &\geq \liminf_{n\to\infty}\frac 1n\,\rel{D_H^\epsilon}{\rho^{(n)}}{\sigma_n}\\
&\geq\liminf_{n\to\infty}\frac 1n  \inf_{\rho \in\pazocal R_1} \rel{D_H^{\epsilon^2/16}}{\rho^{\otimes n}}{\mathcal S^{(n)}}\\
&\geq \lim_{\epsilon'\to 0}\liminf_{n\to\infty}\frac 1n \inf_{\rho \in\pazocal R_1} \rel{D_H^{\epsilon'}}{\rho^{\otimes n}}{\mathcal S^{(n)}}\, ,
\ee
which completes the proof of Lemma~\ref{lem:ach1a_iid}.
\end{proof}
    
\end{document}